\documentclass[11pt]{article}

\usepackage[T1]{fontenc}

\usepackage[margin=1in]{geometry}
\usepackage{mathpazo}

\usepackage{lipsum} 
\usepackage{bm}
\usepackage[lined,boxed,commentsnumbered,ruled,vlined,linesnumbered,boxed,longend,noend]{algorithm2e}
\usepackage[hang,flushmargin]{footmisc}
\usepackage{amssymb}
\usepackage{tcolorbox}
\tcbuselibrary{skins}
\usepackage{xcolor}
\usepackage{amsthm}
\usepackage{mathtools}
\usepackage{thmtools}
\usepackage{framed}
\usepackage{mdframed}
\usepackage{amsmath}
\usepackage{tabularx}
\usepackage{float}

\usepackage[noend]{algpseudocode}

\usepackage{bbm}

\usepackage[dvipsnames]{xcolor}
\usepackage{xparse}

\usepackage{booktabs}
\usepackage{makecell}
\usepackage{array}
\usepackage{adjustbox}

\usepackage{amsfonts}

\usepackage{enumitem}
\newlist{Properties}{enumerate}{2}
\setlist[Properties]{label=\textbf{Property} \arabic*.,leftmargin=*}

\usepackage{tcolorbox}
\tcbuselibrary{skins}
\usepackage{xcolor}

\usepackage{subcaption}
\colorlet{mix}{red!50!black}

\usepackage{titlesec}
\titleclass{\subsubsubsection}{straight}[\subsection]

\usepackage{threeparttable}

\usepackage{hyperref}
\hypersetup{colorlinks={true},linkcolor={mix},citecolor=blue, breaklinks=true}

\usepackage{nameref}
\usepackage{footnotehyper}

\usepackage{cleveref}
\crefname{subsection}{subsection}{subsections}
\Crefname{subsection}{Subsection}{Subsections}

\newtheorem{theorem}{Theorem}[section]
\newtheorem{corollary}{Corollary}[theorem]
\newtheorem{lemma}[theorem]{Lemma}
\newtheorem*{remark}{Remark}

\theoremstyle{definition}
\newtheorem{definition}{Definition}[section]

\crefname{assumption}{Assumption}{Assumptions}
\Crefname{assumption}{Assumption}{Assumptions}



\newcounter{subsubsubsection}[subsubsection]
\renewcommand\thesubsubsubsection{\thesubsubsection.\arabic{subsubsubsection}}

\titleformat{\subsubsubsection}
  {\normalfont\normalsize\bfseries}{\thesubsubsubsection}{1em}{}
\titlespacing*{\subsubsubsection}
{0pt}{3.25ex plus 1ex minus .2ex}{1.5ex plus .2ex}

\makeatletter
\renewcommand\paragraph{\@startsection{paragraph}{5}{\z@}%
  {3.25ex \@plus1ex \@minus.2ex}%
  {-1em}%
  {\normalfont\normalsize\bfseries}}
\renewcommand\subparagraph{\@startsection{subparagraph}{6}{\parindent}%
  {3.25ex \@plus1ex \@minus .2ex}%
  {-1em}%
  {\normalfont\normalsize\bfseries}}
\def\toclevel@subsubsubsection{4}
\def\toclevel@paragraph{5}
\def\toclevel@subparagraph{6}
\def\l@subsubsubsection{\@dottedtocline{4}{7em}{4em}}
\def\l@paragraph{\@dottedtocline{5}{10em}{5em}}
\def\l@subparagraph{\@dottedtocline{6}{14em}{6em}}
\makeatother

\AtBeginDocument{%
  }

\SetKwComment{Comment}{/* }{ */} 

\DeclarePairedDelimiter\angleb{\langle}{\rangle}

\newcommand{\estimateZ}{\mathsf{EstimateF_0}}

\newcommand{\estimateK}{\mathsf{EstimateF_k}}
\newcommand{\estimateN}{\mathsf{EstimateF_n}}

\newcommand{\estimateSR}{\mathsf{EstimateSR}}
\newcommand{\estimateSLFA}{\mathsf{EstimateSLFA}}
\newcommand{\estimateGB}{\mathsf{EstimateGB}}

\newcommand{\evaluate}{\mathsf{Evaluate}}
\newcommand{\helpevaluate}{\mathsf{HelpEvaluate}}

\newcommand{\throwprob}{\alpha}
\newcommand{\poly}{q_{\mathcal{S}}}

\newcommand{\conf}{\delta}
\newcommand{\error}{\varepsilon}

\newcommand{\freq}[1]{\mathsf{f}_{#1}}

\newcommand{\F}[1]{\mathsf{F}_{#1}}
\newcommand{\stream}{\mathcal{S}}

\newcommand{\SR}{\mathsf{SR}}
\newcommand{\SLFA}{\mathsf{SLFA}}

\newcommand{\streamlen}{m}
\newcommand{\univ}{\Omega}
\newcommand{\f}[1]{\mathsf{f}_{#1}}

\newcommand{\maxfreq}{\tau}

\newcommand{\Ber}{\mathsf{Bernoulli}}

\newcommand{\GBS}{\mathsf{S}}

\NewDocumentCommand{\ddt}{ m o }{%
  \IfNoValueTF{#2}
    {%
      \frac{\mathrm d}{\mathrm d#1}%
    }
    {%
      \frac{\mathrm d^{#2}}{\mathrm d#1^{#2}}%
    }%
}

\newcommand{\expt}{t}
\newcommand{\espoly}{Q_{\mathcal S}}

\makeatletter
\newcommand{\algorithmfootnote}[2][\footnotesize]{%
  \let\old@algocf@finish\@algocf@finish
  \def\@algocf@finish{\old@algocf@finish
    \leavevmode\rlap{\begin{minipage}{\linewidth}
    #1#2
    \end{minipage}}%
  }%
}

\newlist{axioms}{enumerate}{1}
\setlist[axioms]{
  label=\textbf{Axiom A\arabic*}:, 
  leftmargin=*, 
  resume 
}

\newcommand{\appropto}{\mathrel{\vcenter{
  \offinterlineskip\halign{\hfil$##$\cr
    \propto\cr\noalign{\kern2pt}\sim\cr\noalign{\kern-2pt}}}}}

\def\@testdef #1#2#3{%
  \def\reserved@a{#3}\expandafter \ifx \csname #1@#2\endcsname
 \reserved@a  \else
\typeout{^^Jlabel #2 changed:^^J%
\meaning\reserved@a^^J%
\expandafter\meaning\csname #1@#2\endcsname^^J}%
\@tempswatrue \fi}

\title{
Unlocking Fractional Moments in Delphic Set Streams
}
\author{
  Aranya Kumar Bal\thanks{Indian Statistical Institute, Kolkata, India. Email: hayatea90@gmail.com} 
  \and 
  Sourav Chakraborty\thanks{Indian Statistical Institute, Kolkata, India. Email: chakraborty.sourav@gmail.com} 
  \and 
  Arijit Ghosh\thanks{Indian Statistical Institute, Kolkata, India. Email: arijitkgpster@gmail.com}
  \and
  Rudrayan Kundu\thanks{Indian Statistical Institute, Kolkata, India. Email: rudrayan574@gmail.com}
}

\date{} 

\begin{document}

\maketitle

\begin{abstract}
We consider estimation of non-integer frequency moments $\F{k}$ and related Bernstein-type statistics (functions whose derivatives are monotone) in the Delphic set stream model under a mild bounded-frequency assumption: every universe element appears at most $\maxfreq$ times. The main challenge of this model is to keep space low while also keeping update time low, which is not trivial because the sets can be exponential in size compared to their representations. Our core insight is that by sampling the stream at different rates and observing the resulting distinct-counts, we can 'probe' the frequency distribution and numerically integrate these probes to reconstruct a broad class of statistics. Building on that, we crucially observe that the distinct-count of a randomly sampled substream, viewed as a function of the sampling rate, is a single analytic object whose evaluations determine a broad class of statistics via a complementary Laplace–type integral. Algorithmically we exploit this by 
\begin{enumerate}
    \item estimating those evaluations using only standard $\F{0}$ (distinct-count) algorithms on sampled substreams and
    \item recovering target statistics by controlled numerical integration on a judiciously chosen grid.
\end{enumerate}
For $\F{k}$ with $k\in (0,1)$ (and for several other statistics such as Saturated Richness and Smoothed Log-frequency Aggregate) we obtain the \textit{first} one-pass streaming algorithms for Delphic set streams whose space and per-set update time are $\mathrm{poly}(\log|\univ|,\log\streamlen,\error^{-1},\log(1/\conf))$ in the practically relevant regime $\maxfreq=\mathrm{polylog}(|\univ|,\streamlen)$; in general the bounds are polynomial in $\maxfreq$ and $\error^{-1}$ and logarithmic in $\conf^{-1}$. While the general bounds are polynomial in $\maxfreq$, our primary contribution is establishing the feasibility of efficient estimation in this challenging model. The method is a new approach via integral transform and discretization, and unifies estimating a range of statistics under the sampling-plus-integration paradigm.

We also give a complexity-theoretic barrier explaining why lower bounds for removing the bounded-frequency assumption appear difficult: ruling out polylogarithmic algorithms for unrestricted Delphic $\F{k}$ would imply a linear-space threshold-counting separation.
\end{abstract}

    

\thispagestyle{empty}


\newpage
\setcounter{page}{1}
\section{Introduction}
Let us start with a motivating example. Consider a large geographical region monitored by a set of wireless sensors or cameras.
Each sensor $j$ covers an (axis-parallel rectangular) region $B_j \subseteq [0,1]^3$, representing its sensing range.  As sensors are deployed or activated over time, the coverage regions $B_1, B_2, \dots, B_n$ arrive as a stream.
Let $f_i$ denote the number of sensors whose coverage includes a given grid cell $i$. The following measures correspond to characteristics of the sensor network.


\begin{itemize}
    \item $F_0 = |\{ i : f_i > 0 \}|$ corresponds to the total monitored area,
    i.e., the measure of the union of all sensing regions;
    \item $F_1 = \sum_i f_i$ gives the total sensing area counting overlaps;
    \item other moments $\F{k}$ (for $k > 0$) quantify the degree of 
    redundant coverage - regions covered by multiple sensors contribute polynomially more.
\end{itemize}

Estimating $\F{k}$ in a streaming manner thus enables real-time monitoring of coverage and redundancy without storing all sensor locations, which is crucial in large-scale Internet-of-Things (IoT) or environmental surveillance systems.

The above problem is formally modeled as follows:

\begin{definition}[Streaming $\F{k}$ Estimation in the Klee's Measure Setting]
Let $\mathcal{B} = \{B_1, B_2, \dots, B_n\}$ be a collection of axis-parallel boxes in $[0,1]^d$. 
The boxes arrive sequentially as a stream
\[
B_1, B_2, \dots, B_n,
\]
where each $B_j \subseteq [0,1]^d$ is specified by its lower and upper coordinates 
$(\ell_j^{(1)}, u_j^{(1)}), \dots, (\ell_j^{(d)}, u_j^{(d)})$.

We discretize the domain $[0,1]^d$ into a fine grid $\univ = [m]^d$ of resolution $1/m$, 
so that each grid cell (or point) corresponds to a unique index $x \in \univ$.
For each grid cell $x$, define
\[
\freq{x} = \bigl|\{\, j \in [n] : x \in B_j \,\}\bigr|,
\]
that is, $\freq{x}$ is the number of boxes covering cell $x$.
For a fixed integer $k \ge 0$, define the $k$-th frequency moment of the coverage function 
$f : \mathcal{G} \to \mathbb{N}$ as
\[
 \F{k}(\stream):=\sum_{x\in \univ}\freq{x}^k
\]
The \emph{streaming $\F{k}$ estimation problem in the Klee's measure setting} is to maintain 
a compact summary (or sketch) of the stream of boxes $B_1, \dots, B_n$ such that, at the 
end of the stream, one can output an estimate $\widehat{\F{k}}$ satisfying
\[
(1 - \varepsilon) \F{k} \le \widehat{\F{k}} \le (1 + \varepsilon) \F{k},
\]
with probability at least $1-\conf$.
\end{definition}

The goal is to optimize two measures:
\begin{itemize}
    \item \textbf{Space complexity}: that is the amount of space used by the algorithm 
    \item \textbf{Update time complexity}: the worst-case time required to update the memory for any item in the stream, that is, for any set $B_i$.
\end{itemize}


In \cite{CVM}, the notion of axis-parallel boxes was generalized to the notion of Delphic sets, which allow efficient size, membership, and sampling oracles (see \Cref{def:Delphicsetdefinition} for the precise definition). Note that Delphic set streams immediately generalize the notion of axis-parallel box streams and data streams (where the latter is the case when the sets are singletons). So solving the $\F{k}$ estimation problem for Delphic sets immediately solves the $\F{k}$ estimation problem in the Klee's measure setting (and several other natural settings like arithmetic progressions and solutions to DNF formulae).

Now note that the above problem of $\F{k}$ estimation for Delphic set streams (\Cref{def:Fkdefinition}) can be reduced to a problem of data streams by listing out all the grid points in a set $S_i$ as elements in the stream. However, the number of grid points in each set $S_i$ can be exponential in the size of the representation of the set (for example, in the Klee's measure setting, boxes come in the stream as a pair of diagonal points, and the number of points inside the box is exponential in the number of bits representing the box). Which means that listing all the elements in a set $S_i$ would imply the update time becomes $\sim |\Omega|$ in the worst case. 

A big open problem in this area was to design algorithms for estimating $\F{0}$ for the Klee's measure problem using space and update time complexity polynomial in $d$. Recently, it was solved, generalized, and optimized respectively in \cite{CVM, CVMapprox, NandiVGMP024} in the Delphic set setting where (in \cite{NandiVGMP024}) they gave an algorithm for obtaining an $(\error,\conf)$-estimate (see \Cref{def:epsilondelta}) for $\F{0}$, using space and update time complexity $\tilde O(\error^{-2}\log^2|\univ|)$. This directly gives a $(\error, \conf)$-approximation algorithm for the $\F{0}$ Klee's measure problem.

For $k > 2$, the algorithm of \cite{alon1996space} can be adapted to estimate $\F{k}$ of Delphic set streams using space and update time complexity of $O(|\univ|^{1-2/k})$, and this is optimal. However, $0<k<2$, not much is known in the Klee's Measure or the Delphic set setting. The status is the same even when one assumes that the maximum frequency of any element is at most  $\mathrm{poly}(\log|\Omega|,\log m)$. In this work, we consider the case of $k\in (0,1)$ for Delphic set streams.  

\paragraph{}In this paper, we design an efficient algorithm for estimating the $\F{k}$ for a stream of sets for $k\in (0,1)$ \emph{when the frequency is bounded above by $\maxfreq$}. We call such streams $\maxfreq$-frequency bounded set streams (\Cref{def:freqbound}). More concretely, in this paper, we show the following (informally stated) core result. 


\begin{center}
    \emph{Given a stream $\stream$ of $\streamlen$ Delphic sets such that each element of the universe is present in at most $\mathrm{poly}(\log|\Omega|,\log \streamlen)$ sets in the stream, there exists a one-pass streaming algorithm that estimates $\F{k}$ of $\stream$ [within $(1\pm \varepsilon)$ factor and with probability of correctness $(1-\conf$)] using  $\mathrm{poly}(\log|\Omega|,\log \streamlen, 1/\error,\log(1/\conf))$ space and update time per set for $k\in (0,1)$.}
\end{center}

We use analytic techniques to do this, which also allows us to estimate several other statistics of such streams efficiently. In \Cref{sub:ourapproach}, we outline our approach and talk about why porting some of the known techniques for estimating $\F{k}$ when $k\in (0,1)$ (\cite{cohen2017hyperlogloghyperextendedsketches}) or $(0,2]$ ($k$-stable distribution sketches \cite{indyk2006stable}) for data streams do not work in the setting of Delphic set streams.

\paragraph{}At the end, we also complement our algorithmic results with a complexity-theoretic explanation for why proving lower bounds in the unrestricted Delphic model appears to be difficult. We show that a sufficiently strong lower bound ruling out polylogarithmic-space and polylogarithmic-update algorithms for unrestricted Delphic $\F{k}$ would imply a threshold-counting time-space separation. More concretely, if a natural linear-space PP-type class $\mathrm{LinPP}$ were contained in $\mathrm{DTISP}(\mathrm{poly},\mathrm{LINSPACE})$\footnote{For the precise definitions, see \Cref{sec:complexity-barrier}.}, then finite-precision stable sketches could be simulated for Delphic set streams using aggregate queries over the implicit sets. Thus lower bounds for unrestricted Delphic $\F{k}$ would imply $\mathrm{LinPP}\not\subseteq\mathrm{DTISP}(\mathrm{poly},\mathrm{LINSPACE})$. This is an $\F{k}$ analogue of the $\F{0}$ oracle barrier of \cite{NandiVGMP024}, but with nondeterminism replaced by threshold counting.

\subsection{Prior work}


The moment estimation problem in streaming algorithms has been studied mostly in the data stream model, where every update is an element of some universe. The optimal results for $\F{0}$ estimation were given by Flajolet and Martin \cite{flajolet1985probabilistic} and optimized by Alon, Matias, and Szegedy (AMS) in \cite{alon1996space} to give a $\Theta(\error^{-2}+\log|\univ|)$ algorithm. The problem is so crucial that several practical algorithms were developed for $\F{0}$ estimation, achieving the same complexity bound, one of the most famous ones being HyperLogLog \cite{flajolet2007hyperloglog}. Alon, Matias and Szegedy~\cite{alon1996space} showed that it is possible to compute $\F{k}$ for $k>1$ in $\widetilde{O} (\error^{-2}|\univ|^{1-1/k})$. Kane, Nelson, and Woodruff~\cite{KNW2010} then gave algorithms for $\F{k}$ with $k\in (0,2]$ which were optimal for $k\in (0,1)$ up to some factors (in the turnstile model). Cohen \cite{cohen2017hyperlogloghyperextendedsketches} gave an extension of HyperLogLog that gives the optimal space bound of $\Theta(\error^{-2}+\log\log|\Omega|)$ for $k\in (0,1)$ in the random oracle model whereas Jayaram, Rajesh and Woodruff~\cite{JayaramRajeshWoodruff23} gave a $\Theta(\error^{-2}+\log|\Omega|)$ algorithm for $k\in (0,1)$ without the random oracle assumption. Cohen \cite{cohen2017hyperlogloghyperextendedsketches} also gives algorithms with the same complexity for several other statistics.

In the world of sets, the history starts in 1977, when Klee \cite{klee1977can} asked the problem of computing the length of the union of a collection of $n$ intervals in $\mathbb R$. This was solved algorithmically by Klee himself \cite{klee1977can}, and the algorithm was optimal. In 1977, Jon Bentley asked a similar question about the union of $n$ axis-parallel boxes in $\mathbb R^2$, and who also solved it algorithmically, and it was optimal. The natural generalization of this question is to find the volume of the union of $n$ axis-parallel boxes in $\mathbb R^d$. This is the famous \textit{Klee's measure problem}. Progress on this happened slowly (with the most recent development by Chan, giving a simple $\mathcal O(n^{d/2})$ algorithm)\cite{chan2013klee}.

The question was then studied in the streaming setup (where we assume the boxes are coming to us in a stream)\cite{bar2002reductions, braverman2010recursive, indyk2005optimal, pavan2007range, sharma2015efficient, sun2009two}. Pavan and Tirthapura \cite{pavan2007range} gave an optimal algorithm achieving $\tilde O(\error^{-2}\log|\univ|)$ for the case of $d=1$ (intervals). Chakraborty, Meel, and Vinodchandran generalized this notion and introduced the Delphic set streaming model in \cite{CVM} where the stream elements are subsets of the universe. They achieved a $\mathrm{poly}(1/\error, \log(1/\conf),\log\streamlen)$ algorithm to compute the $\F{0}$ of such streams (so solving the Klee's measure problem for boxes as a result); the model was generalized in \cite{CVMapprox} and the results improved in \cite{NandiVGMP024} to $\tilde O(\error^{-2}\log^2|\univ|)$ space and update time for $\F{0}$ for Delphic sets. They left the problem of estimating $\F{k}$ for other $k$ as open problems in the Delphic set stream model. \emph{This is the main motivation of our paper}.


For Delphic sets, almost nothing else was known, even for restricted classes of streams. We elaborate on some of the difficulties in \Cref{subsec:previousnotworking}. In this paper, we look at a restricted class of Delphic set streams, namely $\maxfreq$-frequency bounded streams, where every element occurs in at most $\maxfreq$ sets. Such a restriction is \emph{natural} because often streams have frequency caps, for example, rate limits and physical constraints for sensors. In this setting, we give the first algorithms for $\F{k}$ when $k\in (0,1)$ and several other statistics which have space and update time $\mathrm{poly}(\tau, \log|\univ|,\log m, \error^{-1})$. So when $\tau=\mathrm{poly}(\log|\univ|, \log m)$, we get the first ever $\mathrm{poly}(\log|\univ|,\log m, \error^{-1})$ algorithm for $\F{k}.$ We also show (in \Cref{sec:Fkamsanalytic}) that a slight modification of the AMS algorithm for $k>1$ also works for Delphic sets in this setting. As an extension of our techniques, we show in \Cref{sec:FkN} that we can also get $\F{k}$ for when $k\in\mathbb N$ with slightly worse bounds than the AMS modification.

In \Cref{tab:prior_vs_our_work}, we compare known results for space complexity with ours, where the Delphic set stream column is for polylog update times. We hide all factors of $\log\maxfreq$, $\log(1/\error)$, $\log(1/\conf)$, $\log(R_i)$, $\log(U)$ and all $\log\log$ factors to make the bounds simpler looking.

\begin{table}[htbp]
\centering
\caption{Prior work in comparison to our work. Space complexity (in bits) for $(1\pm\error)$-approximation with constant probability. We hide all factors of $\log\maxfreq$, $\log(1/\error)$, $\log(R_i)$, $\log(U)$ and all $\log\log$ factors. $^*$ marked complexities are in the random oracle model. The rightmost column shows both space and per-set update time for $\maxfreq$-frequency-bounded Delphic set streams.}
\label{tab:prior_vs_our_work}
\setlength{\extrarowheight}{6pt} 
\renewcommand{\arraystretch}{1}
\small
\begin{adjustbox}{max width=1.5\textwidth}
\begin{tabularx}{\textwidth}{|
  >{\centering\arraybackslash}m{0.30\textwidth}
  |>{\centering\arraybackslash}m{0.25\textwidth}
  |>{\centering\arraybackslash}m{0.37\textwidth}|
}
\hline
\makecell[c]{Stream\\Statistic}
  & \makecell[c]{Data\\streams}
  & \makecell[c]{$\maxfreq$-frequency\\bounded Delphic set\\streams} \\
\hline
\makecell[c]{\\ \mbox{}$F_0$\\ \\} 
  & \makecell[c]{$\Theta(\varepsilon^{-2}+\log |\univ|)$} 
  & \makecell[c]{$\tilde O(\varepsilon^{-2}\log^2|\univ|)$~\cite{NandiVGMP024}} \\
\hline
\makecell[c]{$\F{k},\; k\in(0,1)$} 
  & \makecell[c]{$\Theta(\varepsilon^{-2}+\log\log|\univ|)^*$\\ ~\cite{cohen2017hyperlogloghyperextendedsketches}\\ $\Theta(\varepsilon^{-2}+\log|\univ|)$~\cite{JayaramRajeshWoodruff23}} 
  & \makecell[c]{$\tilde O_k\left(\tau^{\frac{5+k}{2}}{\varepsilon^{-4-\frac{3(k+1)}{2k(1-k)}}\log^5|\univ|}\right)$\\~[Our work]} \\
\hline
\makecell[c]{\\ \mbox{}$F_1$\\ \\} 
  & \makecell[c]{$\tilde\Theta(1)$ [Folklore]} 
  & \makecell[c]{\(\tilde\Theta(1)\) [Folklore]} \\
\hline
\makecell[c]{\\ \mbox{}$\F{k},\; k\in (1,2)$\\ \\} 
  & \makecell[c]{\\ \mbox{}$\Theta(\varepsilon^{-2}(\log|\univ|+\log m))$\\ ~\cite{KNW2010}\\ \\} 
  & \makecell[c]{$\tilde \Theta_k(\maxfreq^{k-1}\error^{-2})$~\cite{alon1996space}} \\
\hline
\makecell[c]{\\ \mbox{}$F_2$\\ \\} 
  & \makecell[c]{\\ \mbox{}$\Theta(\varepsilon^{-2}(\log|\univ|+\log m))$\\ ~\cite{alon1996space}\\ \\} 
  & \makecell[c]{$\tilde O(\error^{-2}\sqrt{|\Omega|}$)~\cite{alon1996space}} \\
\hline
\makecell[c]{\\ \mbox{}$\F{k},\; k> 2$\\ \\} 
  & \makecell[c]{$\tilde \Theta_k(\error^{-2}|\univ|^{1-2/k})$~\cite{alon1996space}} 
  & \makecell[c]{$\tilde \Theta_k(\maxfreq^{k-1}\error^{-2})$~\cite{alon1996space}} \\
\hline
\makecell[c]{\\ \mbox{}$\SR(r)$\\ \\} 
  & \makecell[c]{$O(\varepsilon^{-2}+\log\log|\univ|)^*$\\ ~\cite{cohen2017hyperlogloghyperextendedsketches}} 
  & \makecell[c]{$\tilde O\left(\maxfreq^2\varepsilon^{-5}\log m\log^5|\univ|\right)$ \\~[Our work]} \\
\hline
\makecell[c]{\\ \mbox{}$\SLFA$\\ \\} 
  & \makecell[c]{$O(\varepsilon^{-2}+\log\log|\univ|)^*$\\ ~\cite{cohen2017hyperlogloghyperextendedsketches}} 
  & \makecell[c]{$\tilde O\left(\maxfreq^{3}\varepsilon^{-5}\log m\log^5|\univ|\right)$ \\~[Our work]} \\
\hline
\makecell[c]{\\ \mbox{}General Bernstein function\\with non-negative L\'evy density\\ \\} 
  & \makecell[c]{$O(\varepsilon^{-2}+\log\log|\univ|)^*$\\ ~\cite{cohen2017hyperlogloghyperextendedsketches}} 
  & \makecell[c]{$\tilde O(U^{3/2}\varepsilon^{-9/2}(\maxfreq\sqrt{R_0}+\sqrt{\maxfreq R_1}+$\\$\sqrt{R_2})\log m\log(1/1-e^{-L})\log^5|\univ|$)\\~[Our work]} \\
\hline
\end{tabularx}
\end{adjustbox}
\end{table}

\subsection{Why previous approaches do not work}\label{subsec:previousnotworking}


The techniques of data streams can indeed be directly ported to the Delphic streams case if one is not worried about update time per set; one can just enumerate all the elements of the set and then use the data stream algorithms. But the challenge appears because this is not allowed in this model, since the subsets can be of size $\sim |\Omega|$, and that would cause the update time to not be polylogarithmic in the parameters. A very concrete example is axis-parallel boxes represented as a pair of diagonal points; the size of the representation is $\sim \log|\univ|$, but the box itself can contain $\sim |\univ|$ number of points in it. The breakthrough of $\F{0}$ estimation in this model was possible because of the brilliant observation that the hash function-based data stream $\F{0}$ algorithms could be simulated (in some sense) using just sampling every element of the stream by a fixed probability. Since Delphic sets offer a sampling primitive, the algorithm works. In fact, the belief is that any purely sampling-based algorithms (sampling from the stream) can most likely be ported to Delphic set streams.

The current known data stream algorithms for $\F{k}$ for $k\in (0,1)$ require one to touch every \emph{element} of the stream (see \cite{cohen2017hyperlogloghyperextendedsketches, indyk2006stable} etc), be it estimating based on hash functions or sketching based on sampling from distributions. As mentioned, this violates the update time constraint in the Delphic set case. In particular, one can ask if the techniques of \cite{indyk2006stable} can be generalized to Delphic set streams; unfortunately, they cannot. The $k$-stable distribution approach of Indyk for estimating $\F{k}$ of data streams when $k\in (0,2]$ requires one to compute a linear sketch of \emph{all elements} of the stream, which is a per-element operation, and hence cannot be computed without enumerating the sets (or assuming powerful oracles).


While recent work by by Pettie and Wang~\cite{pettie_wang_levy} utilized the L\'evy-Khintchine representation for data streams by hashing the indices to appropriate L\'evy processes, applying this analytic framework to Delphic sets is fundamentally obstructed by the inability to perform per-element updates (similar to the problems as mentioned above). Our approach bypasses this by reducing to $\F{0}$ estimation of a subsampled stream and then using polynomials and estimating via integral transforms with similar L\'evy-Khintchine representations. One of our key innovations lie in the sampling process detailed in \Cref{sec:preliminaries} where we show how to simulate Delphic oracles on subsampled sets without manifesting the subsampled set (which can take a lot of time and space).

Interestingly, in \Cref{sec:complexity-barrier}, we formalize exactly what oracle would be needed to make the approach of \cite{indyk2006stable} work for $\F{k}$ estimation. A (finite-precision) stable sketch maintains counters of the form $\sum_x \f{x}w(x)$. Therefore, when a Delphic set $S_i$ arrives, the corresponding update would require the aggregate $\sum_{x\in S_i}w(x)$ over an implicit set. We show that if such threshold aggregates could be simulated deterministically in polynomial time and linear space, then unrestricted Delphic $\F{k}$ estimation would admit polylogarithmic-space and polylogarithmic-update algorithms. Consequently, lower bounds ruling this out imply a threshold-counting time-space separation. We give some more details in \Cref{subsec:whyremovinghard}.


\subsection{Integration, measures and stream-statistics estimation : a new polynomial-based approach}




In this work, we propose a sampling-based algorithm for estimating several different statistics of streams over $\maxfreq$-Frequency-Bounded Delphic set streams (see \Cref{def:freqbound}), which uses only $\text{poly}(\maxfreq, \log\streamlen, \log|\univ|, \error^{-2})$ space. We use an analytic approach towards solving the problem, which we outline below.

\subsubsection{Our approach}\label{sub:ourapproach}

There is a main core idea and a couple of surprising observations that allow our algorithm to work. The core idea is the fact that using a certain sampling technique, one can reduce estimating these statistics to estimating $\F{0}$ over the stream, followed by application of real analytic techniques to obtain the final estimate. The observation is that the object the sampling technique reveals is a much more general object that is related to deep real analytic objects, which allows us to reduce $\F{k}$ estimation for $k\in (0,1)$ and several other stream statistics to just $\F{0}$ estimations.


The main sampling idea is simple: perform \emph{element-level subsampling} on the stream by independently retaining each element $x\in S_i$ in each set $S_i$ with probability $1-\throwprob$. Now we consider the $\F{0}$ of the resulting thinned substream. One easily checks that this quantity turns out to be $\sum_{x\in\univ}(1-\alpha^{\f{x}})$. More formally, we have the following.


\paragraph{\textsc{Expected Support Polynomial}}: Let $\stream = \langle S_1, \dots, S_m\rangle$ be a stream of sets, with each set a subset of the universe $\univ$. For each set $S_i$ and each element $x\in S_i$, independently retain $x$ in $S_i$ with probability $1-\throwprob$ (and discard it with probability $\throwprob$). The resulting thinned sub-stream is denoted $\stream_{\throwprob}$. Let $\mathsf Y$ be the $\F{0}$ of the sub-stream $\stream_{\throwprob}$. Note that $\mathsf Y$ is a random variable and the expected value of $\mathsf Y$ depends on $\throwprob$.


\begin{lemma} \label{lem:expectedpoly}
If $\mathsf Y$ be the $\F{0}$ moment of the sub-stream $\stream_{\throwprob}$ that is obtained by element-level subsampling: for each set $S_i$ in the stream $\stream$ and each element $x\in S_i$, independently retaining $x$ in $S_i$ with probability $(1-\throwprob)$, then 
    \begin{equation*}
    \mathbb{E}[\mathsf Y] = \sum_{x\in \univ}(1-\throwprob^{\freq{x}})
    \end{equation*}
    where $\f{x}(\stream)$ is the frequency of $x$ in the stream $\stream$.
\end{lemma}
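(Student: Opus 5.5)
The plan is to write $\mathsf Y$ as a sum of indicator variables, one per universe element, and apply linearity of expectation. For each $x\in\univ$, let $\mathsf Z_x$ be the indicator of the event that $x$ survives in at least one set of the thinned stream $\stream_{\throwprob}$. By the definition of $\F{0}$ as the number of distinct elements, $\mathsf Y=\sum_{x\in\univ}\mathsf Z_x$. Hence $\mathbb E[\mathsf Y]=\sum_{x\in\univ}\Pr[\mathsf Z_x=1]$. Note that no independence across different $x$ is needed here.

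Next, fix $x$. Let $I_x=\{i : x\in S_i\}$ be the set of stream positions containing $x$, so that $|I_x|=\freq{x}(\stream)$. Elements not in $S_i$ can never appear in the thinned copy of $S_i$. Therefore $\mathsf Z_x=0$ exactly when, for every $i\in I_x$, the copy of $x$ in $S_i$ is discarded. The retention decisions are made independently for each pair $(i,x)$ with $x\in S_i$, and each discard happens with probability $\throwprob$. So $\Pr[\mathsf Z_x=0]=\throwprob^{\freq{x}}$, which gives $\Pr[\mathsf Z_x=1]=1-\throwprob^{\freq{x}}$.

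Summing over $x$ yields the claimed formula. Elements with $\freq{x}=0$ contribute $1-\throwprob^0=0$, which is consistent with their never appearing in the stream. The only point that needs care, and it is a mild one, is the independence of the $\freq{x}$ discard events for a fixed $x$. If a set $S_i$ could appear with repeated elements, we would interpret $\freq{x}$ as the number of (set, element) incidences, and the argument is unchanged. Independence holds directly from the sampling model: each (set, element) pair is thinned independently. There is no real obstacle beyond this.
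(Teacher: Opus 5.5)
Your proof is correct and is essentially the paper's argument: write $\mathsf Y$ as a sum of per-element survival indicators, observe that $x$ is absent from $\stream_{\throwprob}$ exactly when all $\freq{x}$ independent coins say discard (probability $\throwprob^{\freq{x}}$), and apply linearity of expectation. The only implicit point, which the paper handles the same way, is the convention $0^0=1$, so that elements with $\freq{x}=0$ contribute zero even when $\throwprob=0$.
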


The proof of the lemma is in \Cref{subsec:expectedpoly}.

\paragraph{} Now since the sets can be large and constructing the sampled set might take a lot of time and space, we will never actually do the sampling to generate the sampled set. Since the only access required of the sampled sets are queries about membership, size, and uniform sampling, we \emph{simulate} the oracles from the set itself, without ever manifesting the subsampled set. A key part of this involves looking at the structure of the $\F{0}$-estimator from \cite{NandiVGMP024}. The observations that make the simulation work is noting that membership queries are (in a way) just checking membership in the original set and tossing a coin, the size of a subsampled set follows a binomial distribution, and sampling can be done via rejection sampling appropriately (there is more nuance to the idea, but this is the core philosophy). This is constructed in detail in \Cref{subsec:simulation}. This also guarantees that the observation of \Cref{lem:expectedpoly} still holds.

\paragraph{} We now note that $\poly(\throwprob):=\sum_{x\in\univ}(1-\throwprob^{\f{x}})$ is a polynomial in $\throwprob$ (since $\f{x}$ are non-negative integers) and that $\poly(1)=0$ and $\poly(0)=\F{0}(S)$. Now noting that $\throwprob\in [0,1]$ has a natural change of variables $\throwprob\mapsto e^{-\expt}$ with $\expt\in (0,\infty)$, we can define a new function

\begin{equation*}
    \espoly(t):=\sum_{x\in\F{0}}(1-e^{-\expt\f{x}})
\end{equation*}

We call $\espoly$ the \emph{Expected Support Polynomial}.

\paragraph{\textsc{Estimating $\F{k}$ when $k\in (0,1)$}}: The first observation is that if one now defines the measure $\mu=\sum_{x\in\Omega}\delta_{\f{x}}$ over the non-negative real line\footnote{A Dirac delta measure on a set $\mathcal X$ is designated as $\delta_X$ and defined for a given $x\in \mathcal X$ and any measurable subset $\mathcal A\subseteq \mathcal X$ by $\delta_x(\mathcal A)=1$ if and only if $x\in\mathcal A$.}, that is, a sum of Dirac delta measures at each frequency that occurs in the stream, then we can write

\begin{equation*}
    \espoly(\expt)=\poly(e^{-\expt})=\int_0^\infty (1-e^{-\expt\lambda})\ \mathrm d\mu(\lambda)
\end{equation*}
which is the \textit{complementary Laplace transform} (\Cref{def:complementarylaplace}) of the measure $\mu$!

\paragraph{} The next observation is that one can write the frequency moments $\F{k}$ as moments of this measure $\mu$, that is

\begin{equation*}
    \F{k}=\int_0^\infty \lambda^k\ \mathrm d\mu(\lambda)
\end{equation*}
Now depending on $k$, the function $\phi(\lambda)=\lambda^k$ behaves differently. This motivates us to look at what these functions are and if we can find a more unified approach for other statistics that are of the same kind. We note that when $k\in (0,1)$, the function is exactly a \emph{Bernstein function} (see \Cref{subsec:BernsteinfunctionLKrepprelim}) and hence admits an integral representation against $\espoly$ (by virtue of \Cref{thm:bernsteinstheorem}).

More particularly, we have the following identity,

\begin{equation*}
    \F{k}=\frac{k}{\Gamma(1-k)}\int_0^\infty \espoly(t)t^{-k-1}\ \mathrm dt
\end{equation*}
which follows from the L\'evy-Khintchine representation of the Bernstein function $\varphi(\lambda)=\lambda^k$ for $k\in (0,1)$ (for an elementary proof of this identity, see \Cref{lem:Fkidentityintegral}, and for the Laplace transform connection, see \Cref{lem:bernsteink}). Our approach is then to estimate $\F{k}$ via estimating point estimates of $\espoly$ by reducing them to $\F{0}$ estimates (see \Cref{sec:evaluateespoly}) and then using numerical integration (specifically the trapezoidal rule, see \Cref{subsec:trapezoidalrule}) to approximate the integral. A high level overview of the technicalities is provided in \Cref{subsec:highleveloverview}. This gives us the first ever method for computing $\F{k}$ for $k\in (0,1)$ for Delphic set streams with $\text{poly}(\maxfreq, \log\streamlen, \log|\univ|)$ update time per set and space complexity, see \Cref{thm:prelimAlgorithmEstimateKcorrectnessandspace}. In the regime when $\maxfreq=\mathrm{poly}(\log|\Omega|,\log m)$, we thus achieve $\mathrm{poly}(\log|\Omega|,\log m, 1/\error,\log(1/\conf))$.

\paragraph{\textsc{Estimating other statistics}}: Several other stream statistics that are Bernstein functions can similarly be written as integrals against $\espoly$, and hence using similar techniques to estimate the integrals, we estimate the statistics themselves. We provide two examples of them, namely \emph{Saturated Richness} and \emph{Smoothed Log-Frequency Aggregate} (see \Cref{def:SRdefinition,def:SLFAdefinition}), the results are mentioned in \Cref{thm:prelimestimateSRcorrectness} and \Cref{thm:prelimestimateSLFAcorrectness} respectively. In fact, the technique generalizes for general Bernstein functions whose L\'evy densities are explicitly known and satisfy mild regularity conditions. The core identity used for all of these is that the statistics look like $\sum_{x}\varphi(\f{x})$ for some Bernstein function $\varphi$ with the representation $\varphi=\int_0^\infty (1-e^{-ts})w(t)\ \mathrm dt$ for different densities $w(t)$. We provide a table, namely \Cref{tab:bernsteinfunctionsanddensities}, for some examples of Bernstein functions $\varphi$ and their densities $w(t)$. Several other examples can be found in \cite{schilling2012bernstein}. Using similar techniques, we get the result stated in \Cref{thm:prelimestimateGBcorrectness}.

\begin{table}[htb]
\centering

\newcolumntype{C}{>{\centering\arraybackslash}X}

\renewcommand{\arraystretch}{1.8}

\begin{tabularx}{\textwidth}{|C|C|}
\hline
\textbf{Bernstein Function $\phi(x)$} & \textbf{Density $w(t)$} \\ 
\hline
$x^k, k\in (0,1)$ & $\frac{k}{\Gamma(1-k)}t^{-k-1}$ \\ 
\hline
$\frac{x}{r+x}, r > 0$ & $re^{-rt}$ \\ 
\hline
$\ln(1+x)$ & $e^{-t}/t$ \\ 
\hline
$(x+1)^k - 1$ & $\frac{ke^{-t}}{\Gamma(1-k)t^{k+1}}$ \\ 
\hline
$\ln\left(\frac{\Gamma(x+a+b)\Gamma(a)}{\Gamma(x+a)\Gamma(a+b)}\right), a,b>0$ & $\frac{e^{-at}(1-e^{-bt})}{t(1-e^{-t})}$ \\
\hline
$-xe^{ax}\mathrm{Ei}(-ax),\ a>0,\ \mathrm{Ei}(x)=\int_{-\infty}^x e^{-t}/t\ \mathrm dt$ & $1/(a+t)^2$ \\ 
\hline
\end{tabularx}

\renewcommand{\arraystretch}{1}

\caption{Examples of Bernstein functions and their corresponding densities.}
\label{tab:bernsteinfunctionsanddensities}
\end{table}

\paragraph{\textsc{Estimating $\F{k}$ when $k\in \mathbb N$}}:  We also observe that 
instead of integrating against our Expected Support Polynomial $\espoly$, if we take the $k$th derivative of $\espoly(t)$ with respect to $t$ then we obtain 
\begin{equation*}
    \F{k}=(-1)^{k-1}\espoly^{(k)}(0)
\end{equation*}
Hence, estimating $\F{k}$ (for $k \in \mathbb{N}$) boils down to estimating the value of $\espoly^{(k)}(0)$. Using, standard analytical techniques (a forward finite difference scheme \cite{atkinson2008introduction}) we can design an algorithm that estimates $\F{k}$ with $k\in\mathbb N$ for $\maxfreq$-Frequency-Bounded streams of Delphic sets with $\tilde O_k(\error^{-4k+1}\maxfreq^{4k-3})$ update time per set and space complexity.
The complexity is slightly worse than the AMS modification we provide in \Cref{sec:Fkamsanalytic} which achieves a complexity of $\tilde O_k(\error^{-2}\maxfreq^{k-1})$. However, for completeness, we provide a detailed analysis of our algorithm in \Cref{subsec:appendixFkintegeranalytic}. Thus we present a new algorithm for estimating $\F{k}$ in a $\maxfreq$-Frequency-Bounded streams (for $k\in \mathbb{N}$). This also shows the power of our technique of using analytical tools with the Expected Support Polynomial.

\subsection{High level overview of the technicalities}\label{subsec:highleveloverview}

Let us go into some more details on $\F{k}$ estimation in this subsection (the other statistics follow a very similar line of reasoning). In \Cref{sub:ourapproach}, we have established that

\[\F{k}=\frac{k}{\Gamma(1-k)}\int_0^\infty\espoly(t)t^{-k-1}\ \mathrm dt\]

If we are to compute $\F{k}$ using this identity, we are immediately faced with several challenges.

\begin{itemize}
    \item The first challenge is that we cannot directly evaluate the integral since we can only evaluate $\espoly$ at points.
    \item The second challenge is the infinite domain of integration and a (integrable) singularity at $0$.
    \item Third is that we can only evaluate $\espoly$ approximately by subsampling the incoming sets so that every element is picked independently.
    \item And finally, the problem of subsampling itself; since we are picking every element independently, we don't have enough time to subsample or enough storage to store the subsampled sets.
\end{itemize}

To solve the first and second challenge, we \emph{approximate} the integral as a Riemann sum (we will use the trapezoidal rule) since we can only evaluate $\espoly$ at points, by \emph{discretizing the integration region}. Since we also cannot do infinite sums, we need to truncate the integral at some finite point $U$; moreover, the integrand (as we shall see later in the calculations) has an integrable singularity at the origin, so we need to truncate the integral somewhere off the origin, say $L$. So we have, with equispaced (spacing $\Delta t=(U-L)/N$) evaluation points from $L$ to $U$

\[\F{k}=\frac{k}{\Gamma(1-k)}\int_0^\infty\espoly(t)t^{-k-1}\ \mathrm dt\approx \frac{k}{\Gamma(1-k)}\left[\frac12(\espoly(L)+\espoly(U))+\sum_{t=L+\Delta t}^{U-\Delta t}\espoly(t)\Delta t\right]\]

Note that this already introduces three sources of error: two from truncating the integrals on both sides and one from approximating the integral as a sum. The errors for truncation can be handled by analyzing the integrand, and the approximation error can be handled by the error of the trapezoidal rule. The third challenge can be solved by replacing $\espoly$ by estimates $\widehat{\espoly}$ and introducing another source of error, so now we have:

\begin{equation}\label{eqn:approximateFk}
    \F{k}\approx \frac{k}{\Gamma(1-k)}\left[\frac12(\widehat{\espoly}(L)+\widehat{\espoly}(U))+\sum_{t=L+\Delta t}^{U-\Delta t}\widehat\espoly(t)\right]\Delta t
\end{equation}

The error for cutting off the origin from the sum is one place where the bounded frequency $\maxfreq$ assumption comes in handy to bound the error; a second place where it is used at times is discretizing the integral, which has to be analyzed via taylor series, and hence some derivatives have to be bounded and $\maxfreq$ plays a role there (We talk more about this assumption on $\tau$ in \Cref{subsec:whyremovinghard}).

Finally, we divy up the total error into all these errors and then use a union bound.

So at a very very high level, the algorithm works as follows. Fix $U, L, N$ before the stream starts. Then when a set in the stream comes, do the subsampling for every set, estimate $\espoly$ parallely for every evaluation point (which are the probabilities for subsampling) and evaluate \Cref{eqn:approximateFk} as a running sum. At the end of the stream, output the value.

Schematically, the high level $\F{k}$ algorithm can be written as (assuming no error in the $\espoly$ evaluations:

\SetAlgoNoLine%
\begin{algorithm}[htb]
    \DontPrintSemicolon%
    \caption{High level algorithm for $\F{k}$ assuming $\espoly$ evaluations have no error}\label{alg:highlevelFk}
    
    \SetKwInOut{Input}{Input}
    \SetKwInOut{Output}{Output}
    
    \Input{stream $\stream$, moment order $k\in(0,1)$\;}
    \Output{(Approximation of the $k$-th frequency moment $\F{k}$ of $\stream$\;}
    
    $L\gets \text{Lower truncation of integral}$\;
    $U\gets \text{Upper truncation of integral}$\;
    $N \gets \text{Number of grid points}$\;
    $\Delta t=\dfrac{U-L}{N}$\;
    $c\gets 0$\;
    \BlankLine
    $\mathrm{bound}=\espoly(U)U^{-k-1}+\espoly(L)L^{-k-1}$\\
    
    \BlankLine
    
    \For{$j \gets 1$ \KwTo $N-1$}{
        $c_j\gets \espoly(L+j\cdot\Delta t)$\;
    }
    
    \BlankLine
    
    $c\gets \sum_{j=1}^{N-1}c_j$\;
    $I^{[k]}=\Delta t\cdot\left(\frac{\mathrm{bound}}2+c\right)$\;
    \BlankLine
    \Return $\frac{kI^{[k]}}{\Gamma(1-k)}$\;
    
    \end{algorithm}

The final challenge is that of subsampling itself. Since $\espoly$ at any point is essentially an expected value of a subsampled stream, na\"ively it seems that we need to actually do the subsampling, that is, when a set comes we have to sample every element from it with some probability. But this takes time about the size of the set, which can be as large as the universe (also storage of the subsampled sets is a concern).

We devise a way to handle this by understanding the structure of the $\F{0}$-estimator that we will be using (from \cite{NandiVGMP024}), and exploiting this structure to \emph{simulate} the subsampled sets. The crux of the simulation is basically that when a set comes, the $\F{0}$-estimator only asks \emph{membership queries, size queries,} and \emph{sampling queries} to the set, and has no other way to access it. The key fact we note is that the $\F{0}$-estimator from \cite{NandiVGMP024} does the following \emph{in order} and outline how we simulate them:

\begin{enumerate}
    \item the \emph{membership queries first} and \emph{asks only one membership query per universe element}, we can essentially simulate this by checking membership in the mother set, flipping a coin with the appropriate head probability, answering, and then freezing the answers for those elements.
    \item \emph{Next, it asks for a size query}, which we can answer by exploiting the fact that the size of a subsampled set follows a binomial distribution, and we can answer this by sampling from a binomial distribution after we take into account the frozen \emph{yes} answers (here we will use a decomposition property of binomials, see \Cref{lem:binomialdecompositionappendix}).
    \item \emph{Finally, it asks for uniform samples}; here we do a rejection sampling with appropriate probabilities across the sets stored and the mother set to return a sample as if we are returning a uniform sample from a subsampled set.
\end{enumerate}

Because everything happens in this particular order, we show that the $\F{0}$-algorithm is oblivious to the simulation. The details are in \Cref{subsec:simulation}.

\subsection{Why the bounded-frequency assumption is not easily removable}\label{subsec:whyremovinghard}

Complementary to the algorithm, we look at the assumption of $\maxfreq$-frequency boundedness. The dependence on $\maxfreq$ in our algorithm comes from the analytic reconstruction step. For example, when $0<k<1$, the identity underlying our estimator is
\[\F{k}=\frac{k}{\Gamma(1-k)}\int_0^\infty \espoly(\expt)\expt^{-k-1}\,d\expt,\]
and controlling the contribution near $\expt=0$ uses the inequality $\F{1}\le \maxfreq^{1-k}\F{k}$. Thus, our method genuinely uses a frequency cap to truncate the integral at a scale that can be handled with polylogarithmic resources.

However, this does not mean that the assumption is necessary for every possible method. But in fact, proving such a necessity appears to be hard. The standard data-stream algorithms for $\F{k}$, such as stable sketches \cite{indyk2006stable}, fail in the Delphic model only because they need per-element linear updates (as mentioned in \Cref{subsec:previousnotworking}). If one had an oracle that could compute the aggregate contribution of an implicit set to each linear sketch coordinate, then the stable-sketch algorithm would lift immediately to Delphic set streams. In \Cref{sec:complexity-barrier}, we formalize this observation and show that a lower bound ruling out such an unrestricted algorithm would imply the separation $\mathrm{LinPP}\not\subseteq\mathrm{DTISP}(\mathrm{poly},\mathrm{LINSPACE})$.

Thus, our $\maxfreq$-bounded results should be viewed as algorithmic feasibility results under a natural structural restriction, while unconditional lower bounds for the unrestricted model appear to require major progress on threshold-counting time-space lower bounds.

\subsubsection{Our results}

Combining the above ideas and techniques, we have the following results\footnote{For definitions of the statistics and terms, see \Cref{sec:preliminaries}. For definition of the complexity classes, see \Cref{sec:complexity-barrier}.}:


\begin{theorem}\label{thm:prelimAlgorithmEstimateKcorrectnessandspace}
    Given a $\maxfreq$-Frequency-Bounded Delphic set stream
    $\stream$, there exists a one-pass streaming algorithm that outputs an $(\error,\conf)$-approximation of $\F{k}(\stream)$ when $k\in(0,1)$ and takes 
    
    \begin{equation*}
        \tilde O_k\left(\tau^{\frac{5+k}{2}}{\varepsilon^{-4-\frac{3(k+1)}{2k(1-k)}}}\log(1/\conf)[\log|\univ|+\log\maxfreq+\log(1/\conf)]\log^4|\univ|\right)
    \end{equation*}
    space and update time per set.    
\end{theorem}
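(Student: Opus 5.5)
The plan is to instantiate the sampling-plus-integration paradigm of \Cref{subsec:highleveloverview}. We start from the identity $\F{k}=\frac{k}{\Gamma(1-k)}\int_0^\infty \espoly(\expt)\expt^{-k-1}\,\mathrm d\expt$ (\Cref{lem:Fkidentityintegral}) and split the error budget $\error\F{k}$ into four parts of size $\error\F{k}/4$ each:
\begin{itemize}
    \item the lower truncation $\int_0^L$,
    \item the upper truncation $\int_U^\infty$,
    \item the trapezoidal discretization on $[L,U]$ with $N$ nodes,
    \item the stochastic error from replacing $\espoly(\expt_j)$ by estimates $\widehat{\espoly}(\expt_j)$.
\end{itemize}

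\textbf{Lower truncation.} Use $1-e^{-\expt f}\le \expt f$. This gives $\int_0^L \espoly(\expt)\expt^{-k-1}\,\mathrm d\expt\le \F{1}L^{1-k}/(1-k)$. Combined with $\F{1}\le\maxfreq^{1-k}\F{k}$, it suffices to take $L\approx (\error(1-k)\Gamma(1-k)/k)^{1/(1-k)}/\maxfreq$.

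\textbf{Upper truncation.} Use $\espoly\le\F{0}\le\F{k}$. This gives $\int_U^\infty\le \F{0}U^{-k}/k$, so $U\approx \error^{-1/k}$ up to constants depending on $k$.

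\textbf{Discretization.} Apply the trapezoidal error bound $\frac{(U-L)\Delta t^2}{12}\sup|g''|$ with $g(\expt)=\espoly(\expt)\expt^{-k-1}$. The derivatives $\espoly',\espoly''$ are bounded by $\F{1},\F{2}\le\maxfreq\F{1}$. The factor $\expt^{-k-3}$ blows up at $L$, so a single uniform grid is wasteful. I would first try a uniform grid and accept the resulting polynomial blow-up in $N$ (in terms of $\maxfreq$, $\error^{-1/k}$ and $\error^{-1/(1-k)}$); this is where the exponent $\frac{3(k+1)}{2k(1-k)}$ should come from. If needed, I would fall back to dyadic blocks in $\expt$.

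\textbf{Stochastic error.} Each $\espoly(\expt_j)=\mathbb E[\mathsf Y_{\throwprob_j}]$ with $\throwprob_j=e^{-\expt_j}$ (\Cref{lem:expectedpoly}). We estimate it by running the $\F{0}$ estimator of \cite{NandiVGMP024} on the simulated thinned stream $\stream_{\throwprob_j}$ from \Cref{subsec:simulation}. This costs $\tilde O(\error_0^{-2}\log^2|\univ|)$ per node, plus the simulation overhead for binomial sampling and rejection sampling.

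Two error sources must then be controlled. The first is $|\mathsf Y-\espoly|$: $\mathsf Y$ is a sum of independent indicators over $x$, so Chernoff/Hoeffding gives concentration relative to $\espoly(\expt_j)$, up to an additive slack. The second is the $\F{0}$ estimator's relative error $\error_0$. Both are relative to $\espoly(\expt_j)\le\min(\F{0},\expt_j\F{1})$.

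Summing against the weights $w_j=\Delta t\,\expt_j^{-k-1}$ makes the total error at most $\error_0$ times a Riemann sum of $\espoly\expt^{-k-1}$, which is $O(\F{k})$. The exception is the additive slack, which must be bounded using the lower-truncation reasoning, and this forces $\error_0\approx\error$ and possibly more. A union bound over the $N$ nodes uses per-node confidence $\conf/N$, which gives the $\log(1/\conf)$ and $\log\maxfreq$ terms in the stated bound.

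\textbf{Complexity.} The total space and update time is $N$ times the per-node $\F{0}$ cost. Inserting $L,U,N$ should reproduce $\tau^{(5+k)/2}\error^{-4-\frac{3(k+1)}{2k(1-k)}}$ times polylog factors.

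\textbf{Main obstacle.} I expect two steps to be hard. The first is controlling the trapezoidal error near $L$ without letting $N$ depend on $\F{1}/\F{k}$ beyond $\maxfreq$. The second is proving the correctness of the oracle simulation: the $\F{0}$ algorithm must behave exactly as on a truly thinned stream, which requires the query order membership, then size, then sampling, together with the binomial decomposition (\Cref{lem:binomialdecompositionappendix}).
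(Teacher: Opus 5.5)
Your decomposition, your choices of $L$ and $U$, and the uniform trapezoidal grid are essentially the paper's. Your derivative bounds (via $\F{2}\le\maxfreq\F{1}$, $\F{1}\le\maxfreq^{1-k}\F{k}$ and $t\ge L$) give the same $N\approx\maxfreq^{(k+3)/2}\error^{-\frac{k+3}{2k(1-k)}}$, so the discretization step you flag as an obstacle is fine.

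The genuine gap is the stochastic step. You union-bound over the nodes at confidence $\conf/N$. That requires each node's estimate to be within $O(\error)\,\espoly(t_j)$, or within an additive amount whose weighted sum is $O(\error\F{k})$. A single thinned copy per node gives neither when $\espoly(t_j)=O(1)$. Take a stream consisting of one element of frequency $1$: then $\F{k}=1$ and $\mathsf Y_j\sim\Ber(1-e^{-t_j})$. Since $1-e^{-L}\gg\conf/N$, at confidence $1-\conf/N$ the deviation at every node with $t_j\in[L,1]$ is $\Omega(1)$. The weighted sum $\sum_j\Delta t\,t_j^{-k-1}$ over these nodes is about $L^{-k}/k\approx\maxfreq^k\error^{-k/(1-k)}/k$, far above $\error I^{[k]}=\error\,\Gamma(1-k)/k$. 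The lower-truncation inequality $\espoly(t)\le t\F{1}$ is an upper bound on $\espoly$, so it cannot absorb this additive slack.

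The missing idea, which is what $\evaluate$ does, has three parts:
\begin{itemize}
\item At node $t$, average $\lceil 48\ln 12/(\error^2(1-e^{-t}))\rceil$ independent thinned copies.
\item Use the lower bound $\espoly(t)\ge 1-e^{-t}$, valid for every nonempty stream, so that Chernoff gives a purely multiplicative $(\error,1/3)$ guarantee.
\item Take a median over $O(\log(N/\conf))$ repetitions to boost the confidence.
\end{itemize}

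This also corrects your complexity accounting. $N$ by itself contributes only $\maxfreq^{(k+3)/2}\error^{-\frac{k+3}{2k(1-k)}}$, not the exponent $\frac{3(k+1)}{2k(1-k)}$ you attribute to it. The stated bound is $N$ times $\error^{-2}$ (the $\F{0}$ estimator) times $\error^{-2}(1-e^{-L})^{-1}$ (the copies). Here $(1-e^{-L})^{-1}\le 1+1/L\lesssim\maxfreq\error^{-1/(1-k)}$. Indeed $\frac{k+3}{2k(1-k)}+\frac{1}{1-k}=\frac{3(k+1)}{2k(1-k)}$ and $\frac{k+3}{2}+1=\frac{k+5}{2}$. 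So ``$N$ times the per-node $\F{0}$ cost'' neither carries a valid relative guarantee nor reproduces the stated exponents.
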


\vspace{1em}

As examples of other statistics that we can compute, we show the following.

\begin{theorem}\label{thm:prelimestimateSRcorrectness}
    Given a $\maxfreq$-Frequency-Bounded Delphic set stream $\stream$, there exists a one-pass streaming algorithm that outputs an $(\error,\conf)$-approximation of $\SR(r)$ of $\stream$ and takes
    
    \begin{equation*}
        \tilde O\left(\maxfreq^2\error^{-5}\ln^{3/2}(1/\error)(\log |\univ|+\log\maxfreq+\log(1/\error)+\log(1/\conf))\log^4|\univ|\right)
    \end{equation*}
    space and update time.
\end{theorem}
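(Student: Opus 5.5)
The plan is to instantiate the sampling-plus-integration paradigm of \Cref{sub:ourapproach} with the Bernstein function $\varphi(\lambda)=\lambda/(r+\lambda)$. By \Cref{tab:bernsteinfunctionsanddensities}, its L\'evy density is $w(t)=re^{-rt}$. I am taking $\SR(r)=\sum_x \f{x}/(r+\f{x})$ from \Cref{def:SRdefinition}. Summing $\varphi$ over $x$ and exchanging sum and integral gives
\[
\SR(r)=\int_0^\infty \espoly(t)\,re^{-rt}\,\mathrm dt .
\]
Unlike the $\F{k}$ case, the density is bounded at the origin, so there is no singularity to cut off and the lower limit can be $L=0$. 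Moreover $\espoly(0)=0$ exactly, so that grid point needs no estimate. Only three error sources remain: upper truncation, trapezoidal discretization, and the $\F{0}$-based evaluation error. I will give each a budget of $\error/3$ and finish with a union bound, exactly as outlined in \Cref{subsec:highleveloverview}.

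\textbf{Truncation.} Since $0\le\espoly(t)\le\F{0}$, the tail satisfies $\int_U^\infty \espoly(t)re^{-rt}\,\mathrm dt\le \F{0}e^{-rU}$. Every $x$ in the support has $1\le \f{x}\le\maxfreq$, so $\SR(r)\ge \F{0}/(r+1)$, and more crudely $\SR(r)\ge\F{0}/(r+\maxfreq)$. Choosing $U=r^{-1}\ln(3(r+\maxfreq)/\error)$ therefore makes the tail at most $\error\SR(r)/3$.

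\textbf{Discretization.} The standard trapezoidal bound gives error at most $\frac{U(\Delta t)^2}{12}\sup_t|g''(t)|$ for $g(t)=\espoly(t)re^{-rt}$. Here $\espoly'(t)=\sum_x\f{x}e^{-t\f{x}}$ and $|\espoly''(t)|=\sum_x\f{x}^2e^{-t\f{x}}$, so $|g''|\lesssim r(\maxfreq^2+r\maxfreq+r^2)\F{0}$. Comparing with $\SR(r)\ge\F{0}/(r+\maxfreq)$ shows that $N=U/\Delta t=\mathrm{poly}(\maxfreq)\cdot U^{3/2}\error^{-1/2}$ grid points suffice, with $r$ treated as a constant.

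\textbf{Evaluation error.} I will use the evaluator of \Cref{sec:evaluateespoly}, built on the simulated subsampled oracles of \Cref{subsec:simulation} and the $\F{0}$ estimator of \cite{NandiVGMP024}. For each grid point $t_j=j\Delta t$ it returns $\widehat{\espoly}(t_j)$ with $|\widehat{\espoly}(t_j)-\espoly(t_j)|\le\error'\espoly(t_j)$ with probability $1-\conf/N$. Because the weights $re^{-rt_j}\Delta t$ are nonnegative, the total evaluation error is at most $\error'$ times the trapezoidal sum of $g$. By the two previous steps that sum is at most $(1+\error)\SR(r)$, so $\error'=\error/4$ suffices.

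\textbf{Main obstacle.} I expect the hard part to be this relative-error evaluation at small $t$. There $\espoly(t)\approx t\F{1}$ can be tiny, so the realized $\F{0}$ of the thinned stream may concentrate poorly around its mean $\espoly(t)$. My first attempt is a Chernoff bound: the realized count $\mathsf Y$ is a sum of independent indicators $\mathbb 1[x\text{ survives}]$ with mean $\espoly(t)$, and $\espoly(t)\ge(1-e^{-t})\F{0}$. Then $\mathsf Y$ deviates by more than an $\error'$ fraction only when $\espoly(t)$ is below $\error'^{-2}\log(N/\conf)$. If that happens, the whole contribution of the first few grid points is $O(\Delta t\cdot\espoly)$, which is negligible against $\SR(r)\ge \F{0}/(r+\maxfreq)$. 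If needed, I would simply bound that contribution additively rather than relatively. Taking medians over $O(\log(N/\conf))$ independent copies then handles the failure probability.

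\textbf{Complexity.} The cost is $N$ parallel copies of the $\F{0}$ estimator, each run at accuracy $\error'$ and confidence $\conf/N$, with $\tilde O(\error^{-2}\log^2|\univ|)$ space and update time per copy. Multiplying out, with the extra $\log|\univ|$ factors coming from the simulation of the subsampled oracles, should give the stated $\tilde O(\maxfreq^2\error^{-5}\ln^{3/2}(1/\error)\cdots)$ bound. The $\ln^{3/2}(1/\error)$ factor is the $U^{3/2}$ term and the $\log(1/\conf)$ term comes from the union bound.
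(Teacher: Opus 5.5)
Your analytic skeleton matches the paper's. It has the identity $\SR(r)=\int_0^\infty\espoly(t)re^{-rt}\,\mathrm dt$, no lower cutoff because $\espoly(0)=0$, the tail bound $\F{0}e^{-rU}$, and the trapezoidal bound with $|g''|=O(r(r+\maxfreq)^2\F{0})$. It also combines per-point relative-error evaluations through the nonnegative weights and a union bound. The gap is in your last two paragraphs, which is exactly where the stated complexity comes from.

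\textbf{The small-$t$ fallback does not work.} Suppose $\F{0}=O(\error^{-2}\log(N/\conf))$, e.g.\ a single element of frequency $1$, so that $\espoly(t)=1-e^{-t}$. Then $\espoly(t_j)$ is below your Chernoff threshold at \emph{every} grid point, not just the first few. A single thinned copy is one Bernoulli trial with $\Theta(1)$ relative error. Medians of such copies do not help: for $t<\ln 2$ the median is $0$, so the whole contribution of $[0,\ln 2]$, a constant fraction of $\SR(r)$, is lost.

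The paper resolves this inside $\helpevaluate$ by pooling $k=\lceil 48\ln 12/(\error^2(1-\throwprob))\rceil$ independent thinnings. Since $\mu_\throwprob\ge 1-\throwprob$ for any nonempty stream, the pooled mean $k\mu_\throwprob$ is at least $48\ln 12/\error^2$ however small $\F{0}$ is, so Chernoff applies. This is what \Cref{cor:espolyestimatecorollary}, which you already cite, packages. Rely on it and drop the additive workaround.

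\textbf{Your cost count is then wrong.} A call $\evaluate(\stream,e^{-t_j},\error/4,\conf/(N+1))$ runs $\Theta(\error^{-2}(1-e^{-t_j})^{-1}\log(N/\conf))$ copies of the $\F{0}$ estimator. It costs $\tilde O(\error^{-4}(1-e^{-t_j})^{-1}\cdots\log^4|\univ|)$, not $\tilde O(\error^{-2}\log^2|\univ|)$. So your product $N\cdot\tilde O(\error^{-2})$ is neither the cost of the algorithm you run nor the source of $\maxfreq^2\error^{-5}$.

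In the paper's accounting, the first grid point $t_1=\Delta t=U/N$ has $(1-e^{-\Delta t})^{-1}\approx N/U$. The total is therefore at most $N\cdot(N/U)\cdot\tilde O(\error^{-4})$. With $N=\Theta(\maxfreq U^{3/2}\error^{-1/2})$ this is $\tilde O(\maxfreq^2U^2\error^{-5})$. That is why \Cref{thm:estimateSRcorrectness} has $\ln^2(1/\error)$ rather than the $U^{3/2}$ you attribute it to; the difference is absorbed by $\tilde O$.

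Getting $N$ linear in $\maxfreq$ requires comparing the discretization error with $\SR(r)\ge\F{0}/(1+r)$, as the paper does. Your cruder $\SR(r)\ge\F{0}/(r+\maxfreq)$ gives $N=\Theta(\maxfreq^{3/2}U^{3/2}\error^{-1/2})$. Under the same max-based accounting this yields $\maxfreq^3\error^{-5}$. Alternatively, $(1-e^{-t})^{-1}\le 1+1/t$ gives $\sum_j(1-e^{-t_j})^{-1}\lesssim N+(N/U)\ln N$, which keeps even your $N$ within the stated bound, but you would have to carry that out explicitly.
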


\vspace{1em}

\begin{theorem}\label{thm:prelimestimateSLFAcorrectness}
    Given a $\maxfreq$-Frequency-Bounded Delphic set stream $\stream$, there exists a one-pass streaming algorithm that outputs an $(\error,\conf)$-approximation of $\SLFA$ of $\stream$ and takes
    
    \begin{equation*}
        \tilde O\left(\maxfreq^3\error^{-5}\ln^{3/2}(1/\error)(\log |\univ|+\log\maxfreq+\log(1/\error)+\log(1/\conf))\log^4|\univ|\right)
    \end{equation*}
    space and update time.
\end{theorem}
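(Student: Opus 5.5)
The proof would mirror the $\F{k}$ argument of \Cref{thm:prelimAlgorithmEstimateKcorrectnessandspace}, with the density $t^{-k-1}$ replaced by the L\'evy density of $\varphi(\lambda)=\ln(1+\lambda)$. By \Cref{tab:bernsteinfunctionsanddensities} this density is $w(t)=e^{-t}/t$, and I take $\SLFA=\sum_x\ln(1+\f{x})$ (\Cref{def:SLFAdefinition}). Integrating against the measure $\mu$ then gives
\[
\SLFA=\int_0^\infty \espoly(t)\,\frac{e^{-t}}{t}\,\mathrm dt .
\]
The algorithm is the trapezoidal scheme of \Cref{alg:highlevelFk} applied to $g(t)=\espoly(t)e^{-t}/t$ on a grid $L=t_0<\dots<t_N=U$. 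Each value $\espoly(t_j)$ is replaced by an estimate $\widehat\espoly(t_j)$. That estimate is the $\F{0}$ estimator of \cite{NandiVGMP024}, run on the substream thinned with retention probability $1-e^{-t_j}$. The thinned sets are never built; their oracles are simulated as in \Cref{subsec:simulation}, and \Cref{lem:expectedpoly} guarantees the estimator targets $\espoly(t_j)$.

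The error has four parts: lower truncation, upper truncation, discretization, and estimation. Each gets budget $\error\,\SLFA/4$. Throughout I use the lower bound $\SLFA\ge \ln 2\cdot \F{0}$, which holds because every element with $\f{x}\ge1$ contributes at least $\ln 2$.

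\textbf{Lower truncation.} From $1-e^{-t\lambda}\le t\lambda$ we get $\espoly(t)\le t\F{1}$. Hence $\int_0^L g\le L\F{1}\le L\maxfreq\F{0}$. Choosing $L=\Theta(\error/\maxfreq)$ suffices, and this is where the $\maxfreq$-bound enters.

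\textbf{Upper truncation.} Since $\espoly\le \F{0}$, we have $\int_U^\infty g\le \F{0}e^{-U}/U$. Choosing $U=\Theta(\ln(1/\error))$ suffices.

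\textbf{Discretization.} The trapezoidal error is at most $\frac{(U-L)h^2}{12}\max_{[L,U]}|g''|$. Expanding $g''$, every term is a product of one of $\espoly,\espoly',\espoly''$ with one of $t^{-1},t^{-2},t^{-3}$ and $e^{-t}$. I would bound these using $\espoly(t)\le t\F{1}$, $\espoly'(t)\le \F{1}$, $|\espoly''(t)|\le \F{2}\le\maxfreq^2\F{0}$ and $\F{1}\le\maxfreq\F{0}$. The worst terms give $|g''|\lesssim \maxfreq\F{0}/L^2+\maxfreq^2\F{0}/L=O(\maxfreq^3\error^{-2}\F{0})$. This forces a step size $h=\Theta\bigl(\error^{3/2}\maxfreq^{-3/2}\ln^{-1/2}(1/\error)\bigr)$. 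The number of grid points is then $N=\Theta\bigl(\maxfreq^{3/2}\error^{-3/2}\ln^{3/2}(1/\error)\bigr)$, which explains the $\ln^{3/2}(1/\error)$ factor in the statement.

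\textbf{Estimation.} Each $\widehat\espoly(t_j)$ is a $(\error/4,\conf/(N+1))$-approximation of $\espoly(t_j)$. The trapezoidal weights $w(t_j)h$ are nonnegative, so the estimated sum is a $(1\pm\error/4)$-multiple of the exact trapezoidal sum. That sum is itself within $O(\error)\SLFA$ of $\SLFA$, so the estimation error is within budget. A union bound over the $N+1$ points gives overall confidence $1-\conf$.

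\textbf{Complexity and main obstacle.} Space and update time are $N+1$ copies of the \cite{NandiVGMP024} estimator. Each copy costs $\tilde O(\error^{-2}\log(N/\conf)\log^4|\univ|)$ including simulation overhead, and $\log(N/\conf)$ gives the additive $\log\maxfreq+\log(1/\error)+\log(1/\conf)$ term. With looser accounting, e.g.\ bounding $\max|g''|$ by $\maxfreq^3\F{0}/L^2$ uniformly, this is at most the stated $\maxfreq^3\error^{-5}$ bound. The step I expect to be hardest is the derivative bound. I would first make sure that the $t^{-3}$ term is controlled via $\espoly(t)\le t\F{1}$, rather than by the crude $\espoly\le\F{0}$, which would cost an extra $1/L$. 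I would then check that the powers of $\maxfreq$ this produces match the claimed $\maxfreq^3$.
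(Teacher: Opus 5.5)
\textbf{What holds up.} Your correctness argument is fine: the identity, both truncation bounds, and the trapezoid-plus-noise step all work, provided each $\widehat{\espoly}(t_j)$ really is an $(\error/4,\conf/(N+1))$-approximation of $\espoly(t_j)$.

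\textbf{The gap is in the complexity.} One run of the $\F{0}$ estimator on one thinned substream does not give such an approximation. It estimates the random variable $Y_t=\F{0}(\stream_{e^{-t}})$, and \Cref{lem:expectedpoly} only gives $\mathbb E[Y_t]=\espoly(t)$. $Y_t$ concentrates around its mean only when $\espoly(t)\gtrsim\error^{-2}\log(N/\conf)$, and this can fail badly. For example, with a single element of frequency $\maxfreq$, $Y_L\in\{0,1\}$ has mean about $L\maxfreq=\Theta(\error)$. This is why $\helpevaluate$ averages $\Theta(\error^{-2}/(1-e^{-t}))$ independent thinned copies. It is also why \Cref{cor:espolyestimatecorollary} charges $\error^{-4}/(1-e^{-t})$ per node, not the $\error^{-2}$ you assumed.

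Once you use this cost, your grid is too expensive. The node at $L=\Theta(\error/\maxfreq)$ costs an extra factor $\Theta(\maxfreq/\error)$. Your term-by-term bound $|g''|=O(\maxfreq^3\error^{-2}\F{0})$ forces $N=\Theta(\maxfreq^{3/2}\error^{-3/2}\ln^{3/2}(1/\error))$. Together these give roughly $\maxfreq^{5/2}\error^{-13/2}$. Even if you sum $1/(1-e^{-t_j})$ over the nodes instead of taking the worst node, an $\error^{-11/2}$ factor remains. Either way you do not reach the stated $\maxfreq^3\error^{-5}$ when $\error$ is small.

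\textbf{The missing idea.} The integrand $g(t)=\espoly(t)e^{-t}/t$ is not singular at $0$ at all. The $t^{-2}$ and $t^{-1}$ pieces of $\espoly''w+2\espoly'w'+\espoly w''$ cancel, and your triangle inequality throws that cancellation away. To see it, write each element's contribution as $(1-e^{-\f{x}t})e^{-t}/t=\int_1^{1+\f{x}}e^{-ct}\,\mathrm dc$. Its second derivative is $\int_1^{1+\f{x}}c^2e^{-ct}\,\mathrm dc\le\f{x}(1+\f{x})^2$. Hence $|g''|\le\F{0}\maxfreq(1+\maxfreq)^2$ uniformly on $[0,\infty)$; this is \Cref{lem:SLFAfderivativebound}, which the paper proves by the mean value theorem.

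\textbf{How the paper uses it.} The paper does no lower truncation. It applies the trapezoidal rule on $[0,U]$ and supplies the endpoint value $g(0)=\F{1}$ separately, through a counter for $\F{1}$. This gives $N=\Theta(\maxfreq^{3/2}\error^{-1/2}\ln^{3/2}(1/\error))$, a factor $\error^{-1}$ fewer nodes than your grid. The smallest positive node $U/N$ then costs a factor $O(N/U)$ in \Cref{cor:espolyestimatecorollary}. The total is $\tilde O(N^2\error^{-4}/U)=\tilde O(\maxfreq^3\error^{-5})$, up to logarithmic factors.
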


\begin{theorem}\label{thm:prelimestimateGBcorrectness}
    Let $\stream$ be a $\maxfreq$-Frequency-Bounded Delphic set stream. Let $\varphi$ be a Bernstein function with L\'evy-Khintchine representation

    \begin{equation*}
        \varphi(s)=\int_0^\infty (1-e^{-ts})w(t)\ \mathrm dt
    \end{equation*}
    where $w(s)\ge 0$ is the L\'evy density of $\varphi$. Let $I_{\mathrm{GB}}=\sum_{x\in\F{0}}\varphi(\f{x})$ be the statistic to be estimated. Then

    \begin{equation*}
        I_{\mathrm{GB}}=\int_0^\infty \espoly(t)w(t)\ \mathrm dt
    \end{equation*}
    Moreover, if $S_1(y):=\int_0^y sw(s)\ \mathrm ds<\infty$ for all $y>0$, and $r$ is $C^2$ away from $0$ then there exists a one-pass streaming algorithm that outputs an $(\error,\conf)$-approximation of $I$ and takes

    \begin{equation*}
    \tilde O\left(\sqrt{\frac{U^{3}}{\error^{9}}}\left(\maxfreq\sqrt{R_0}+\sqrt{\maxfreq R_1}+\sqrt{R_2}\right)\cdot 
    \log\left(|\univ|\cdot U\cdot \maxfreq\cdot R_0\cdot R_1\cdot R_2\cdot \frac1{\error}\cdot  \frac1{\conf} \cdot (1-e^{-L})\right)     \cdot 
    (1-e^{-L})\log^4|\univ|\right)
    \end{equation*}
    space and update time per set, where $L$ and $U$ are such that $S_1(L)\le\error\varphi(1)/5\maxfreq$ and $\int_U^\infty w(t)\ \mathrm dt\le\error\varphi(1)/5$ respectively and $R_j=\sup_{[L,U]}|r^{(j)}(t)|$ for $j\in \{0,1,2\}$.

\end{theorem}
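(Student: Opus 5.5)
The identity comes first, and it is the easy part. For every $x$ with $\f{x}\ge 1$, the L\'evy--Khintchine representation gives $\varphi(\f{x})=\int_0^\infty(1-e^{-t\f{x}})w(t)\,\mathrm dt$. Summing over the finitely many such $x$ and exchanging the finite sum with the integral (Tonelli applies since the integrand is nonnegative) yields $I_{\mathrm{GB}}=\int_0^\infty\espoly(t)w(t)\,\mathrm dt$.

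The algorithmic part follows the template of \Cref{alg:highlevelFk}, with $w$ in place of $\frac{k}{\Gamma(1-k)}t^{-k-1}$. I read $r$ in the statement as the density $w$ restricted to $[L,U]$. Split the error into four parts, each at most $\error I_{\mathrm{GB}}/5$: the lower truncation, the upper truncation, the trapezoidal discretization, and the statistical error of the $\widehat{\espoly}$ evaluations. Since $\varphi$ is increasing and $\f{x}\ge1$ on the support, we have $I_{\mathrm{GB}}\ge\F{0}\varphi(1)$, so it suffices to bound each error by $\error\F{0}\varphi(1)/5$.

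\textbf{Truncation.} For the lower end, use $1-e^{-t\f{x}}\le t\f{x}\le t\maxfreq$. This gives $\int_0^L\espoly(t)w(t)\,\mathrm dt\le \maxfreq\F{0}S_1(L)$, which is at most $\error\F{0}\varphi(1)/5$ by the choice of $L$. For the upper end, use $\espoly\le\F{0}$, so the tail is at most $\F{0}\int_U^\infty w$, which is small by the choice of $U$.

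\textbf{Discretization.} The trapezoidal error on $[L,U]$ with $N$ nodes is at most $\frac{(U-L)\Delta t^2}{12}\sup|(\espoly w)''|$. Expanding by the product rule and using $|\espoly^{(j)}|\le \maxfreq^j\F{0}$ and $\espoly\le\F{0}$, the second derivative is at most $\F{0}(\maxfreq^2R_0+2\maxfreq R_1+R_2)$. Choosing $N\approx\sqrt{U^3(\maxfreq^2R_0+\maxfreq R_1+R_2)/(\error\varphi(1))}$ therefore suffices, which is consistent with the factor $\maxfreq\sqrt{R_0}+\sqrt{\maxfreq R_1}+\sqrt{R_2}$ in the bound.

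\textbf{Estimation error.} For each node $t_j$, obtain $\widehat{\espoly}(t_j)$ by running the $\F{0}$ estimator of \cite{NandiVGMP024} on the simulated substream $\stream_{e^{-t_j}}$ (\Cref{subsec:simulation}), averaging independent copies. Lemma \ref{lem:expectedpoly} gives the expectation $\espoly(t_j)$. This is the main obstacle. The $\F{0}$ sketch has relative error, but the thinned $\F{0}$ is random around $\espoly(t_j)$, and near $t=L$ we have $\espoly(L)\approx(1-e^{-L})\F{0}$, which can be small. So the error must be controlled relative to $\F{0}$ and weighted by $w(t_j)\Delta t\le R_0\Delta t$.

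I would ask each node for additive error $\eta\,\espoly(t_j)$, where $\eta$ is chosen so that $\sum_j w(t_j)\Delta t\,\eta\,\espoly(t_j)\le\error\F{0}\varphi(1)/5$. The variance of the thinned count is at most $\espoly(t_j)$, since it is a sum of independent indicators. Chebyshev or median-of-means then controls its deviation, with the $(1-e^{-L})$ factor entering through the smallest node value.

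Per-node confidence $\conf/N$ and a union bound over the $N$ nodes give the logarithmic factor. Multiplying $N$ by the per-node cost, $\tilde O(\eta^{-2}\log^2|\univ|\cdot\text{log factors})$, and tracking the $\error$ powers should recover the stated $\error^{-9/2}U^{3/2}$ dependence. This final accounting, together with verifying that the simulation preserves the $\F{0}$ guarantee for all rates simultaneously, is where I expect most of the work.
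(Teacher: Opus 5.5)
Your identity, the split $\int_0^L+\int_L^U+\int_U^\infty$ with the paper's choices of $L$ and $U$, the lower bound $I_{\mathrm{GB}}\ge\F{0}\varphi(1)$, and the product-rule bound on $(\espoly w)''$ are exactly the paper's argument. Your factor $2$ on the $\maxfreq R_1$ term is actually more careful than the paper's, and harmless.

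The step that goes wrong is the statistical error, which you single out as the main obstacle. You require it to be at most $\error\F{0}\varphi(1)/5$. That normalization is natural for the truncation and discretization terms, which carry a factor $\F{0}$. The noise term, however, is proportional to the quadrature sum $\sum_j w(t_j)\Delta t\,\espoly(t_j)\approx I_{\mathrm{GB}}$. This can exceed $\F{0}\varphi(1)$ by a factor up to $\varphi(\maxfreq)/\varphi(1)$; for example, it is $\maxfreq^{k}$ when $\varphi(s)=s^k$ and every frequency equals $\maxfreq$. Meeting your criterion therefore forces $\eta$ down by this unknown ratio. Since the per-node cost is polynomial in $1/\eta$, you would pick up an extra polynomial factor in $\varphi(\maxfreq)/\varphi(1)$ that the stated bound does not contain.

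The missing observation is that nothing delicate is needed. \Cref{cor:espolyestimatecorollary} already gives each node a \emph{relative} $(\error/5)$-approximation of $\espoly(t_j)$. Because $w\ge 0$ and the trapezoid weights are nonnegative, relative per-node error passes directly to the weighted sum, measured against $I_{\mathrm{GB}}$ itself: $|\hat I_{\mathrm{trap}}-I_{\mathrm{est}}|\le(\error/5)\hat I_{\mathrm{trap}}\le(\error/5)(1+\error/5)I_{\mathrm{GB}}$. This is the paper's \Cref{lem:EtrapboundGB}. So $\eta=\error/5$ suffices, and nothing needs to be controlled relative to $\F{0}$.

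The smallness of $\espoly(L)$ affects only the cost, not correctness. For a nonempty stream, $\espoly(t)\ge 1-e^{-t}$. Hence relative accuracy $\eta$ at node $t$ needs $\Theta(\eta^{-2}(1-e^{-t})^{-1})$ averaged thinned copies (the $k$ in $\helpevaluate$). Each copy is an $\F{0}$ sketch whose per-set cost is $\tilde O(\eta^{-2})$ times polylogarithmic factors.

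Your closing per-node cost of $\tilde O(\eta^{-2}\log^2|\univ|\cdots)$ drops this copy count, and $N$ times it would give only $\error^{-5/2}$. With the correct per-node cost $\tilde O(\error^{-4}(1-e^{-L})^{-1})$, multiplying by $N=\Theta\bigl(\sqrt{U^3(\maxfreq^2R_0+\maxfreq R_1+R_2)/(\error\varphi(1))}\bigr)$ gives $U^{3/2}\error^{-9/2}(\maxfreq\sqrt{R_0}+\sqrt{\maxfreq R_1}+\sqrt{R_2})/(1-e^{-L})$ up to logarithmic factors. This matches the paper's derivation following \Cref{thm:algorithmestimateGBcorrectnessandspace}. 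Note that $(1-e^{-L})$ belongs in the denominator, so the multiplicative factor $(1-e^{-L})$ in the displayed statement should be read as $1/(1-e^{-L})$.
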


Finally, we complement this with a complexity theoretic barrier result.

\begin{theorem}[Informal complexity-theoretic barrier]
\label{thm:intro-complexity-barrier}
Fix $0<k\le 2$. Suppose one proves that unrestricted Delphic $\F{k}$ estimation cannot be solved by any one-pass randomized streaming algorithm with space and per-set update time polynomial in $\log|\univ|$, $\log \streamlen$, $\error^{-1}$, and $\log(1/\conf)$. Then
\[\mathrm{LinPP}\not\subseteq\mathrm{DTISP}(\mathrm{poly},\mathrm{LINSPACE}),\]
where $\mathrm{LinPP}$ is the polynomial-time, linear-space analogue of a PP or gap-counting threshold class.

Equivalently, if $\mathrm{LinPP}\subseteq\mathrm{DTISP}(\mathrm{poly},\mathrm{LINSPACE})$, then unrestricted Delphic $\F{k}$ estimation admits a one-pass randomized algorithm with polylogarithmic space and per-set update time.
\end{theorem}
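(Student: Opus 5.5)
We prove the contrapositive ("equivalently") form: assuming $\mathrm{LinPP}\subseteq\mathrm{DTISP}(\mathrm{poly},\mathrm{LINSPACE})$, we build a polylogarithmic Delphic $\F{k}$ algorithm by lifting the finite-precision $k$-stable sketch of \cite{indyk2006stable}. The sketch for $0<k\le 2$ keeps $r=O(\error^{-2}\log(1/\conf))$ counters $Z_j=\sum_{x}\f{x}\,w_j(x)$. Here each $w_j(x)$ is a $k$-stable variate, discretized to $O(\log(|\univ|\streamlen/\error))$ bits. It is generated from a short seed via a space-bounded pseudorandom generator (Nisan's, as in Indyk's derandomization, or a limited-independence family as in Kane--Nelson--Woodruff). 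The estimator is a median (or a more refined estimator) of the $|Z_j|$, suitably normalized. The seeds take polylogarithmic space, and correctness with finite precision is inherited from the data-stream analysis, since the sketch is linear and depends only on the final frequency vector. The only missing piece for Delphic sets is the per-set update $Z_j \mathrel{+}= \sum_{x\in S_i} w_j(x)$ over an implicit, possibly exponential-size set $S_i$.

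\textbf{Expressing the update as a threshold-counting computation.} Fix the Delphic representation of $S_i$, of length $\mathrm{poly}(\log|\univ|)$, and the seed of $w_j$. The membership oracle and the PRG evaluation $x\mapsto w_j(x)$ both run in polynomial time and linear space in $n=\Theta(\log|\univ|+\text{seed length})$. Write $w_j(x)$ as a signed fixed-point integer. Then $\sum_{x\in S_i}w_j(x)=\sum_b 2^b(\#\{(x,\text{bit }b)\text{ positive}\}-\#\{\text{negative}\})$, a GapP-style difference of counts in which each witness is verifiable in polynomial time and linear space. Using a standard binary search over threshold queries of the form ``is this gap at least $T$?'', each bit of the aggregate is obtained from $O(n)$ adaptively chosen $\mathrm{LinPP}$ queries. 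Each query is an instance of size $O(n)$, with the threshold written in the input.

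\textbf{Applying the hypothesis.} Under the hypothesis, each such query is decided deterministically in $\mathrm{poly}(n)$ time and $O(n)$ space. Hence the update takes $\mathrm{poly}(\log|\univ|,\log\streamlen,\error^{-1},\log(1/\conf))$ time and $O(n)$ workspace. That workspace is reused across queries, and only the $O(n)$-bit answer is written into $Z_j$. Summing over the $r$ counters gives polylogarithmic space and update time, which contradicts the assumed lower bound.

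\textbf{Main obstacle.} The delicate step is making sure that the combined predicate really is a \emph{linear}-space $\mathrm{LinPP}$ instance. The PRG generating the $w_j$ (Nisan's generator uses $O(\log^2)$ seed bits) and the stable-variate discretization must be evaluable in space linear in the instance length. I would handle this by padding: define the instance length to include the seed, so that $O(\log^2|\univ|)$-size seeds are still linear in $n$. I would also check that the $\mathrm{LinPP}$ definition of \Cref{sec:complexity-barrier} is closed under this padding and under threshold shifts.
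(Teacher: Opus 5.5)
Your proposal is correct and follows essentially the same route as the paper. Both arguments lift the finite-precision stable sketch by linearity, place the threshold version of the per-set aggregate $\sum_{x\in S_i}w_j(x)$ in $\mathrm{LinPP}$ as a gap of two linear-space counting functions, and use the collapse together with binary search (which recovers the whole value with $O(n+b)$ threshold queries in total) to compute each counter update deterministically. The only differences are in presentation: the paper realizes $w^+(x)$ by guessing an auxiliary $u<w^+(x)$ and absorbs the threshold via $T^-$ and $T^+$ extra branches, which is equivalent to your bit decomposition and threshold shift; and the linear-space evaluability of the weights, which you flag as the delicate step and handle by putting the seed into the instance, is simply built into the paper's definition of a finite-precision linear sketch.
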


\subsection{Organization of the Paper}


The remainder of the paper is organized as follows. \Cref{sec:preliminaries} fixes notation, defines the Delphic set stream model and the $\maxfreq$-bounded frequency assumption, and recalls the complementary Laplace transform and numerical-quadrature facts used throughout. \Cref{sec:evaluateespoly} gives the primitive $\evaluate$ routine for estimating the Expected Support Polynomial $\espoly$ at any given point via sampled substreams and black-box $\F{0}$ algorithms. \Cref{sec:Fkforkinzeroone} applies $\evaluate$ to construct a one-pass streaming estimator for $\F{k}$ with $k\in (0,1)$. \Cref{sec:otherstatistics} demonstrates similar estimators for several other Bernstein-type statistics (including \emph{Saturated Richness} and \emph{Smoothed Log-Frequency Aggregate}). \Cref{sec:generalbernsteinestimationsection} places the approach in the general Bernstein/L\'evy framework, discusses when the L\'evy density is explicit versus when numerical inversion is required, and addresses stability considerations. Finally, \Cref{sec:complexity-barrier} gives the oracle simulation and threshold-counting barrier for lower bounds in the unrestricted Delphic model.

\section{Preliminaries}\label{sec:preliminaries}

We will use $[n]$ to denote the set $\{1, \dots, n\}$. For any event $\mathsf{E}$ we will use $\mathbb{P}(\mathsf{E})$ to denote the probability of the event $\mathsf{E}$ and we will use $\mathbbm{1}_{\mathsf{E}}$ to denote the Boolean valued random variable that takes value $1$ if event $\mathsf{E}$ happens and $0$ otherwise. For any random variable $\mathsf{X}$ we will use $\mathbb{E}[\mathsf{X}]$ to denote the expected value of the random variable $\mathsf{X}$.

We use the Gamma function, defined as $\Gamma(z)=\int_0^\infty t^{z-1}e^{-t}\ \mathrm dt$ for $\Re(z)>0$ and meromorphically extend it to the left (except to the negative even integers and 0) if required. We will also need the digamma function $\psi(z)=\Gamma'(z)/\Gamma(z)$; in particular, we will use that for $n\in\mathbb N$, it satisfies $\psi(n)=H_{n-1}-\gamma$ where $H_k=\sum_{j=1}^k (1/j)$ is the $k$-th harmonic sum and $\gamma$ is the Euler-Mascheroni constant (which equals $\lim_{k\to\infty} (H_k-\ln k)$).




We will be working with (Delphic) \emph{set streams} in this paper. A set stream of items coming from a universe $\univ$ will be denoted by $\stream = \langle S_1, \dots, S_{\streamlen}\rangle$, where $S_j \subseteq \Omega$. The stream length will refer to the number of items in the stream, that is $\streamlen$. In \cite{CVM,CVMapprox}, the concept of \textit{Delphic sets} was introduced to formalize the types of properties that can reasonably be assumed for the sets received in the stream.

\begin{definition}[Delphic Families]\label{def:Delphicsetdefinition}
     A set $S\subseteq\univ$ belongs to a Delphic Family if the following can be done in time $O(\log |\univ|)$:
        (1) Getting the size of $S$.
        (2) For any element $x\in\univ$ checking membership in $S$, that is, answering the question \textit{Is $x\in S$?}
        (3) Sampling an element uniformly from $S$.
\end{definition}


For \Cref{def:Fkdefinition,def:SRdefinition,def:SLFAdefinition}, let $\stream=\langle S_1,\dots,S_m\rangle$ be a stream of Delphic sets where $S_j\subseteq\univ$ and for any $x\in\univ$, let $\f{x}=|\{j\in [m]\mid x\in S_j\}|$, that is, $\f{x}$ is the number of sets in the stream where $x$ belongs. 

\begin{definition}[$\F{k}$ for Delphic set streams]\label{def:Fkdefinition}
    $\F{k}$ of $\stream$ is defined as
    \begin{equation*}
        \F{k}=\sum_{x\in\univ}\f{x}^k
    \end{equation*}
\end{definition}

Note that, for $k>0$, this can also be written as $\F{k}=\sum_{x\in\F{0}}\f{x}^k$ because for the other elements $\f{x}=0$ where we abuse a little notation to denote $\F{0}$ as the set of distinct elements in the stream.

\begin{definition}[$\SR(r)$ of a stream]\label{def:SRdefinition}
    For any $r>0$, the \emph{Saturated Richness} $\SR(r)$ of $\stream$ is defined as
    \begin{equation*}
        \SR(r)=\sum_{x\in\univ}\frac{\f{x}}{r+\f{x}}
    \end{equation*}
\end{definition}

\begin{definition}[$\SLFA$ of a stream]\label{def:SLFAdefinition}
    The \emph{Smoothed Log-Frequency Aggregate} $\SLFA$ of $\stream$ is defined as
    \begin{equation*}
        \SLFA)=\sum_{x\in\univ}\ln(1+\f{x})
    \end{equation*}
\end{definition}

\begin{definition}[$\maxfreq$-Frequency-Bounded set streams] \label{def:freqbound}
    A set stream is called $\maxfreq$-Frequency-Bounded if for every element $x\in\univ$, $x$ occurs in at most $\maxfreq$ many sets in the stream. Equivalently, the frequency of $x$, namely $\freq{x}$, satisfies $\freq{x}\le \maxfreq$. 
\end{definition}

\begin{definition}[$(\error,\conf)$-approximation]\label{def:epsilondelta}
    Let $X$ be a value. We say a random variable $\mathsf{Z}$ is an $(\error,\conf)$-approximation of $X$ if $\mathbb P(|X-\mathsf{Z}|\le \error X)\ge 1-\conf$.
\end{definition}

\subsection{Subsampling from the sets in the stream and simulating the oracles}\label{subsec:simulation}


In our algorithms we will ideally like to do the following random process to select a substream.

\ 

True model: \textit{For each set $S_i$ in the stream and each element $x\in S_i$, independently retain $x$ in $S_i$ with probability $(1-\throwprob)$ and discard it with probability $\throwprob$. The resulting thinned set is denoted $S_i^{(\throwprob)}$, and the sub-stream of thinned sets is denoted $\stream_{\throwprob}$. The goal is to answer membership queries, size queries, and get a uniform sample from the thinned sets.}

\ 

Note that element $x$ appears in the thinned sub-stream (i.e., $x\in\bigcup_i S_i^{(\throwprob)}$) if and only if at least one of its $\freq{x}$ independent coins lands ``keep'', which happens with probability $1-\throwprob^{\freq{x}}$. Crucially, since distinct elements $x\ne y$ have disjoint sets of coins, their survival indicators are independent.

If we run the above random process $k$ times (with fresh independent coins each time) we will denote the $k$ sub-streams by $\stream_{\throwprob}^{[1]}, \dots, \stream_{\throwprob}^{[k]}$. In most cases, $\throwprob$ will be equal to $e^{-t}$.

Now we only want to guarantee we can answer membership queries, size queries, and get a uniform sample from the thinned sets. Since constructing the thin sets can take a lot of time and space, we simulate the required oracles for the thinned set by \textit{simulating} the True model. To define this, we need a crucial assumption: \emph{all size queries are asked after all membership queries}. So formally, we define:

\ 

Simulated model: \textit{For each set $S_i$ in the stream, define the following simulated oracles for the thinned set $S_i'$:
\begin{enumerate}
    \item \textbf{Membership oracle:} On query $x$, check whether $x\in S_i$ using the Delphic membership oracle for $S_i$. If $x\notin S_i$, return \textsc{No}. If $x\in S_i$, flip an independent coin $C_x\sim\Ber(1-\alpha)$; return \textsc{Yes} if $C_x=1$, \textsc{No} otherwise.
    \item \textbf{Size oracle (called after all membership queries):} Let $Q_i\subseteq S_i$ be the set of elements for which the membership oracle was invoked and returned a definite answer (i.e., elements $x$ with $x\in S_i$ that were queried). Store the sets $Q_{i,\mathrm{yes}}:=\{x\in Q_i:C_x=1\}$ and $Q_{i,\mathrm{no}}:=\{x\in Q_i:C_x=0\}$ Let $r=|Q_i|$ and $r_+=|Q_{i,\mathrm{yes}}|$. Return 
    \begin{equation*}
        K_i \;=\; r_+ \;+\; K',\qquad K'_i\sim\mathrm{Bin}(|S_i|-r,\;1-\alpha),
    \end{equation*}
    where $K'_i$ is drawn independently of all prior coin flips.
    \item \textbf{Sampling oracle:} Initialize $D_i=\varnothing$ before the first sampling query, where $D_i\subseteq S_i\setminus Q_i$ stores the previously sampled elements from the unqueried part. If $K_i=0$, return $\bot$. Otherwise, with probability
    \[\frac{r_+ + |D_i|}{K_i},\]
    return an element chosen uniformly at random from $Q_{i,\mathrm{yes}}\cup D_i$; or with probability
    \[\frac{K'_i-|D_i|}{K_i},\]
    return an element uniformly at random from $S_i\setminus(Q_i\cup D_i)$, and add the returned element to $D_i$. To sample an element uniformly at random from $S_i\setminus(Q_i\cup D_i)$, we do the following:
    \begin{enumerate}
        \item Draw a uniform sample $y$ from $S_i$ using the Delphic oracle.
        \item Check if $y\in Q_i\cup D_i$. If yes, reject and return to Step a. Else, add $y$ to $D_i$ and return $y$.
    \end{enumerate}
\end{enumerate}}

\

To show that the simulated model is equivalent to the true model under our requirements, we prove \Cref{thm:delphicsimulation}. In words, the theorem says that if an algorithm interacts with a subsampled set $S'$ (of $S$) via \emph{making at most one membership query per element of the universe}, and \emph{asks the membership queries before asking the size query, and asks the size query before any sampling query}, and \emph{the number of distinct samples asked is at most the size query answer}, then \emph{the algorithm is statistically oblivious to whether it is interacting with a subsampled set or a simulation of the subsampled set (as per the simulation notion above)}. Formally, we state this as follows.

\begin{theorem}[Simulation of element-level subsampling via Delphic oracles]\label{thm:delphicsimulation}
Let $S\subseteq\univ$ be a Delphic set with $|S|=n$, and let $p=1-\throwprob\in[0,1]$. Consider the element-level subsampled set $S'=\{x\in S: B_x=1\}$ where $\{B_x\}_{x\in S}$ are independent $\Ber(p)$ random variables. This is the true model. Let the simulated model be defined as above.

Let $\mathcal{A}$ be a (possibly randomized, adaptive) algorithm that interacts with $S'$ through the above three oracle types and satisfies the following two properties:
\begin{enumerate}
    \item[(P1)] $\mathcal{A}$ makes at most one membership query per element $x\in\univ$, and asks at most $T$ many membership queries.
    \item[(P2)] $\mathcal{A}$ completes all membership queries before invoking the size oracle, (possibly) invokes the size oracle before any sampling query, asks at most $T$ sampling queries in total, and the number of distinct samples obtained is at most the answer of the size oracle.
\end{enumerate}
Then:
\begin{enumerate}
    \item[(i)] The joint distribution of the membership answers and the size $K$ observed by $\mathcal A$ under the simulated oracles is identical to that under the true oracles of $S'$.
    \item[(ii)] Conditional on the membership answers and the size query answer, the joint distribution of the entire sequence of answers from the simulated sampling oracle is identical to that obtained by repeated uniform sampling from the fixed true set $S'$. In particular, any procedure for obtaining distinct samples by rejecting duplicates has the same distribution in the two models, and the simulation never returns more distinct elements than the size oracle answer.
    \item[(iii)] The simulated membership and size oracles each require exactly one Delphic oracle call on $S$ plus $O(1)$ additional work, taking $O(\log|\univ|)$ time. Moreover, for any $\delta\in(0,1)$, over the entire sampling phase the simulated sampling oracle uses
    \[ O\!\left(T\left(\log T+\log\frac1\delta\right)\right) \]
    Delphic sampling calls with probability at least $1-\delta$. Consequently, the entire sampling phase takes
    \[ O\!\left(T\left(\log T+\log\frac1\delta\right)\log|\univ|\right) \]
    time with probability at least $1-\delta$. It uses $O(T\log|\univ|)$ bits for storing $Q$ and $D$, up to lower-order counters.
\end{enumerate}
\end{theorem}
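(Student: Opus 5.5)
The plan is a coupling argument carried out phase by phase, following the order imposed by (P1)–(P2). The true model is realized by first drawing all coins $\{B_x\}_{x\in S}$. In the simulated model the same coins are revealed lazily: the coin of an element is drawn only when the algorithm first touches that element. The core claim is that at every point the conditional law of the unrevealed coins is still i.i.d.\ $\Ber(p)$, independent of everything $\mathcal A$ has seen. This is a principle of deferred decisions, and it is what makes the two transcripts equal in distribution. I will condition on $\mathcal A$'s internal randomness throughout, so that $\mathcal A$ becomes a deterministic adaptive decision tree and it suffices to compare answer distributions query by query.

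\textbf{Membership phase.} I will argue by induction on the number of membership queries. By (P1) each query concerns a fresh element $x$. If $x\notin S$, both models answer \textsc{No} deterministically. If $x\in S$, the true answer is $B_x$. Since $x$ has not been queried before, $B_x$ is independent of all previous answers (those are functions of other coins), so it is distributed $\Ber(p)$. The simulated answer is a fresh $C_x\sim\Ber(p)$. Hence the conditional laws agree at every step, and the joint law of the membership transcripts agrees.

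\textbf{Size query.} Given the transcript, the coins $\{B_x\}_{x\in S\setminus Q}$ are still i.i.d.\ $\Ber(p)$ and independent of it. Therefore $|S'|=r_++\sum_{x\in S\setminus Q}B_x$, and the sum has law $\mathrm{Bin}(n-r,p)$ conditionally on the transcript. This is exactly $K=r_++K'$, which proves (i). Only $|S|$ is needed here, so a single Delphic size call suffices.

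\textbf{Sampling phase.} In the true model, conditioned on the transcript and on $K'=j$, the set $S'\setminus Q$ is a uniformly random $j$-subset $W$ of $S\setminus Q$. This follows by exchangeability of i.i.d.\ coins. I will view $S'$ as $Q_{\mathrm{yes}}\cup W$ and repeated uniform sampling from it as follows: each draw is a uniform element of a $K$-set. Given the elements of $W$ revealed so far (the set $D$), the draw is either one of the $r_++|D|$ already-known members, with probability $(r_++|D|)/K$ and uniform among them, or a not-yet-revealed member of $W$, with probability $(K'-|D|)/K$. In the latter case, by symmetry of the uniform $j$-subset given the revealed part, the element is uniform over $S\setminus(Q\cup D)$. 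This is precisely the simulated rule, and the rejection loop produces a uniform element of $S\setminus(Q\cup D)$. An induction over sampling queries then gives equality of the full joint law, which is (ii). Since $|D|\le K'$ always holds, the simulation never returns more than $K$ distinct elements.

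\textbf{Main obstacle: the running time in (iii).} The obstacle is bounding the number of rejection-loop iterations. A single iteration succeeds with probability $(n-r-|D|)/n$, and this can be small if $Q\cup D$ covers most of $S$. My first attempt will be the following. Since $r+|D|\le 2T$, we get success probability at least $1-2T/n\ge 1/2$ whenever $n\ge 4T$. In the small case $n<4T$, I will instead enumerate $S$ by repeated sampling, or note that the bound must be absorbed; this may require the paper's implicit assumption $n\gtrsim T$, and I would check how the paper handles it. With constant success probability, each call needs more than $c(\log T+\log(1/\delta))$ iterations with probability at most $\delta/T$. A union bound over the at most $T$ sampling calls then gives $O(T(\log T+\log 1/\delta))$ Delphic calls overall with probability $1-\delta$. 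Each call plus a membership check in the stored sets $Q\cup D$ (via hashing or a balanced tree) costs $O(\log|\univ|)$. Storing $Q$ and $D$ takes $O(T\log|\univ|)$ bits.
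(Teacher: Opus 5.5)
Your treatment of (i) and (ii) is sound and matches the paper's argument. Deferred decisions carries the same content as the paper's binomial-decomposition lemma. Your description of the next draw is exactly the paper's invariant: the known pool with probability $(r_++|D|)/K$, otherwise uniform over $S\setminus(Q\cup D)$ by exchangeability of the unrevealed survivors. The case $n\ge 4T$ of (iii) is also fine. Per-call geometric tails plus a union bound give the stated $O(T(\log T+\log(1/\delta)))$. The paper gets the sharper $O(T+\log(1/\delta))$ from one tail bound on the total number of trials needed for at most $T$ successes, but either suffices.

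The genuine gap is the case $n<4T$ in (iii), which you leave open.

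\emph{Why your approach fails there.} The theorem assumes nothing like $n\gtrsim T$, and small sets (e.g.\ singletons) do occur in the stream. Once $Q\cup D$ covers all but $m$ elements of $S$, a single rejection loop succeeds with probability only $m/n$, which can be $1/n$. A per-call tail bound then costs $\Theta(n\log(T/\delta))$ draws per call and $O(T^2\log(T/\delta))$ overall.

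\emph{The missing idea.} Bound the whole sampling phase at once rather than per call.
\begin{itemize}
\item View all Delphic draws in the phase as one i.i.d.\ uniform sequence over $S$.
\item A draw is rejected only if it lies in $Q\cup D$, and every accepted draw is the first appearance of an element of $S\setminus Q$, which is then added to $D$. So after any prefix, $D$ is exactly the set of elements of $S\setminus Q$ seen so far.
\item The loop is entered only when $K'-|D|>0$, and $|D|\le K'\le n-r$ always. Once every element of $S\setminus Q$ has appeared, $D=S\setminus Q$, which forces $|D|=K'$. The unseen pool is then selected with probability $(K'-|D|)/K=0$, so no further draws occur.
\end{itemize}
Hence the total number of draws is at most the coupon-collector time for $S\setminus Q$. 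The probability that some element is still unseen after $q$ draws is at most $n(1-1/n)^q\le ne^{-q/n}$. Taking $q=n(\ln n+\ln(1/\delta))$ gives, for $n<4T$, $O(T(\log T+\log(1/\delta)))$ draws with probability at least $1-\delta$.

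Your fallback of ``enumerating $S$ by repeated sampling'' would change the simulated oracle rather than analyze the one in the statement. The point is that the given rejection sampler already enjoys the coupon-collector bound without modification.
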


We provide the proof of \Cref{thm:delphicsimulation} in \Cref{subsec:simulationappendix}. The proof idea is basically that membership query answers are generated by flipping coins, and the same heads probability is used for the size oracle simulation; thus by freezing the answers of the memberships, we can use a binomial decomposition argument (\Cref{lem:binomialdecompositionappendix}) to get the same joint distribution of membership answers and size query answer in the simulated model as in the true model. For the sampling oracle, the rejection sampling protocol analysis guarantees that samples are generated with the correct uniform probabilities as in the true model. The complexities are analysed via tail bounds. 

\begin{remark}
    We now crucially observe that the expected value established in \Cref{lem:expectedpoly} applies perfectly to the simulated stream generated via \Cref{thm:delphicsimulation}. Because the joint distribution of oracle responses under our simulated model is identical to that of the true model, any $\F{0}$-estimation algorithm satisfying the assumptions of \Cref{thm:delphicsimulation} is statistically oblivious to the substitution. Thus, it estimates the exact target expectation $\sum_{x\in \univ}(1-\throwprob^{\freq{x}})$.
\end{remark}

Now, any algorithm that queries the thinned sets itself can have a failure probability, which now needs to take into account that there is a chance the sampling oracle above fails. To account for this, we have the obvious corollary.

\begin{corollary}[Global Runtime Union Bound]\label{cor:errorforlasvegassampling}
    Let $\mathcal{A}$ be a randomized algorithm interacting with the simulated sampling oracle from \Cref{thm:delphicsimulation} (in particular, satisfies the setup and assumptions of \Cref{thm:delphicsimulation}). Suppose that, over its entire execution, $\mathcal A$ carries out at most $M$ sampling phases (where one sampling phase means all sampling queries made to one simulated set by one algorithm instance). To guarantee that every sampling phase satisfies the high-probability runtime bound of
    \Cref{thm:delphicsimulation}(iii) with overall probability at least
    $1-\delta_{\mathrm{time}}$, it suffices to instantiate each phase with failure probability $\delta'=\frac{\delta_{\mathrm{time}}}{M}$.
\end{corollary}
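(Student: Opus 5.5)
The statement is a union bound over the sampling phases, so the plan is to apply the runtime guarantee of \Cref{thm:delphicsimulation}(iii) separately to each phase at a smaller failure probability and then add the failures up.

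For each $j\in[M]$, let $E_j$ be the event that the $j$-th sampling phase exceeds its runtime bound. Concretely, $E_j$ is the event that the phase uses more than $O(T(\log T+\log(1/\delta')))$ Delphic sampling calls, and hence more than $O(T(\log T+\log(1/\delta'))\log|\univ|)$ time. Since $\mathcal A$ satisfies (P1) and (P2) for every simulated set it queries, \Cref{thm:delphicsimulation}(iii) applies to each phase with confidence parameter $\delta'$. This gives $\mathbb P(E_j)\le\delta'$ for every $j$.

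One point needs care: $\mathcal A$ is adaptive, so which phase is the $j$-th one, and how many phases actually occur, may depend on earlier randomness. To handle this, I would condition on the history before phase $j$ begins. The rejection sampling within phase $j$ uses fresh Delphic samples, so the bound of \Cref{thm:delphicsimulation}(iii) holds conditionally on that history. Taking expectations, it also holds unconditionally. If fewer than $M$ phases occur, the event $E_j$ for a phase that does not happen is empty, so the bound still holds trivially.

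The union bound then gives
\[\mathbb P\Bigl(\bigcup_{j=1}^{M}E_j\Bigr)\le \sum_{j=1}^{M}\mathbb P(E_j)\le M\delta'=\delta_{\mathrm{time}}.\]
So with probability at least $1-\delta_{\mathrm{time}}$, every phase meets its bound. Note that the per-phase cost then contains $\log(M/\delta_{\mathrm{time}})$ in place of $\log(1/\delta)$.

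The only step that is more than bookkeeping is the conditioning argument for adaptivity above. Apart from that, the proof is a direct application of the union bound.
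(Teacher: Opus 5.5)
Your proof is correct and matches the paper's: apply \Cref{thm:delphicsimulation}(iii) to each of the at most $M$ sampling phases with failure probability $\delta'$, then take a union bound to get total runtime-failure probability at most $M\delta'=\delta_{\mathrm{time}}$. Your conditioning argument for adaptively determined phases is a reasonable added detail that the paper leaves implicit.
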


\begin{proof}
Apply \Cref{thm:delphicsimulation}(iii) to each of the at most $M$
sampling phases with failure probability $\delta'$. A union bound gives
total runtime-failure probability at most $M\delta'=\delta_{\mathrm{time}}$.
\end{proof}


\subsection{Laplace transforms and measures}

\begin{definition}[Laplace transform]\label{def:laplace}
    Let $f:\mathbb R_{\ge 0}\to\mathbb R_{\ge 0}$ be a function. Then its Laplace transform $\mathcal L[f(t)](s)$ is defined as

    \begin{equation*}
        \mathcal L[f(t)](s):=\int_0^\infty f(t)e^{-st}\ \mathrm dt
    \end{equation*}
    where $\mathrm dt$ is the Lebesgue measure on $\mathbb R$. More generally, if $\mu$ is a non-negative (Borel) measure on $[0,\infty)$, then the Laplace transform of $\mu$ is defined as

    \begin{equation*}
        \mathcal L[\mu](s):=\int_0^\infty e^{-st}\ \mathrm d\mu
    \end{equation*}
\end{definition}

There is a related notion of the Laplace transform which we will use.

\begin{definition}[Complementary Laplace transform and expected support]\label{def:complementarylaplace}
    Let $f:\mathbb R_{\ge 0}\to\mathbb R_{\ge 0}$ be a function. Then its complementary Laplace transform $\mathcal L^c[f(t)](s)$ is defined as

    \begin{equation*}
        \mathcal L^c[f(t)](s):=\int_0^\infty f(t)(1-e^{-st})\ \mathrm dt
    \end{equation*}
    where $\mathrm dt$ is the Lebesgue measure on $\mathbb R$. More generally, if $\mu$ is a non-negative (Borel) measure on $[0,\infty)$, then the complementary Laplace transform of $\mu$ is defined as

    \begin{equation*}
        \mathcal L^c[\mu](s):=\int_0^\infty (1-e^{-st})\ \mathrm d\mu
    \end{equation*}
    We may call the complementary Laplace transform of a measure $\mu$ as the \textit{Expected Support} of $\mu$.
\end{definition}

This is called the \emph{complementary} Laplace transform because from \Cref{def:laplace}, we have $\mathcal L^c[\mu](s)=\int_0^\infty (1-e^{st})\ \mathrm d\mu=\mu([0,\infty))-\mathcal L[\mu](s)$.

\subsection{Bernstein functions and L\'evy-Khintchine representation}\label{subsec:BernsteinfunctionLKrepprelim}

\begin{definition}[Bernstein function]\label{def:Bernsteinfunction}
    A continuous function $\varphi:\mathbb R_{\ge 0}\to\mathbb R_{\ge 0}$ is called a Bernstein function if $(-1)^k\varphi^{(k)}(x)\le 0$ for $x>0$ and $k=1,2,\dots$.
\end{definition}

Bernstein functions are those that are \emph{completely monotone} in their derivatives and naturally represent the transforms of positive measures, making them ideal for capturing many useful statistics.

\begin{theorem}[Bernstein's theorem]\label{thm:bernsteinstheorem}
    $\varphi$ is a Bernstein function if and only if

    \begin{equation*}
        \varphi(x)=a+bx+\int_0^\infty (1-e^{-tx})\ \mathrm d\mu(t)
    \end{equation*}
    for some $a,b\ge 0$ and a (Radon) measure $\mu$ such that $\int_0^\infty \min(1, t)\ \mathrm d\mu(t)<\infty$.
\end{theorem}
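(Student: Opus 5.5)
The plan is to prove the two directions separately. The \emph{if} direction is a direct computation. The \emph{only if} direction reduces, through $\varphi'$, to the classical Hausdorff--Bernstein--Widder characterization of completely monotone functions.

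\textbf{Easy direction.} Suppose $\varphi(x)=a+bx+\int_0^\infty(1-e^{-tx})\,\mathrm d\mu(t)$ with $\int\min(1,t)\,\mathrm d\mu<\infty$. Since $0\le 1-e^{-tx}\le \max(1,x)\min(1,t)$, the integral is finite and nonnegative. Continuity on $[0,\infty)$ then follows by dominated convergence.

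For $x>0$ and $k\ge1$ I differentiate under the integral sign. To justify this on a neighbourhood $[x_0,\infty)$ with $x_0>0$, I use the bound $t^ke^{-tx}\le C_{k,x_0}\min(1,t)$. This gives
\[\varphi'(x)=b+\int_0^\infty te^{-tx}\,\mathrm d\mu(t),\qquad \varphi^{(k)}(x)=(-1)^{k-1}\int_0^\infty t^ke^{-tx}\,\mathrm d\mu(t)\quad(k\ge2).\]
Hence $(-1)^k\varphi^{(k)}\le0$ for all $k\ge1$, which is exactly the sign condition in \Cref{def:Bernsteinfunction}.

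\textbf{Hard direction.} Let $\varphi$ be Bernstein and set $f=\varphi'$ on $(0,\infty)$. The sign conditions say precisely that $(-1)^jf^{(j)}\ge0$ for all $j\ge0$, so $f$ is completely monotone. I will invoke the Bernstein--Widder theorem: there is a unique Borel measure $\nu$ on $[0,\infty)$ with $f(x)=\int_{[0,\infty)}e^{-tx}\,\mathrm d\nu(t)$ for $x>0$.

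Next I split off the atom at zero: put $b=\nu(\{0\})=\lim_{x\to\infty}f(x)\ge0$. Set $a=\varphi(0)\ge0$, which is legitimate because $\varphi$ is continuous at $0$ and nonnegative. For $x>0$ I then write $\varphi(x)-\varphi(\eta)=\int_\eta^x f(s)\,\mathrm ds$ and swap integrals by Tonelli. Letting $\eta\downarrow0$ by monotone convergence gives
\[\varphi(x)=a+bx+\int_{(0,\infty)}\frac{1-e^{-tx}}{t}\,\mathrm d\nu(t).\]
So the candidate is $\mathrm d\mu(t)=t^{-1}\,\mathrm d\nu(t)$ on $(0,\infty)$. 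For the integrability condition, note that $\varphi(1)<\infty$ gives $\int(1-e^{-t})\,\mathrm d\mu<\infty$, and $1-e^{-t}\ge(1-e^{-1})\min(1,t)$. Finally $\mu$ is Radon on $(0,\infty)$, since it is finite on every compact subset of $(0,\infty)$.

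\textbf{Main obstacle.} The real work is the Bernstein--Widder representation itself. If a self-contained argument is wanted, I would proceed in four steps.
\begin{enumerate}
\item For fixed $n$, use Taylor's formula with integral remainder and the sign pattern of the derivatives to write $f(x)-f(\infty)$ as an integral of the kernel $(1-tx/n)_+^{\,n-1}$ against a nonnegative measure $\nu_n$ built from $(-1)^nf^{(n)}$.
\item Show the total masses of the $\nu_n$ are uniformly bounded on sets $\{t\ge\epsilon\}$, using the values $f(x_0)$.
\item Extract a vaguely convergent subsequence by Helly's selection theorem.
\item Pass to the limit using $(1-tx/n)_+^{\,n-1}\to e^{-tx}$ uniformly on compacts, and control the tails with the mass bounds.
\end{enumerate}
The delicate point is tightness near $t=0$ and $t=\infty$ in the limit step. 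Uniqueness then follows from injectivity of the Laplace transform, although only existence is needed here.
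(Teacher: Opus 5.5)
The paper gives no proof of this statement. It quotes the classical L\'evy--Khintchine representation of Bernstein functions and uses it only as background. Your main argument is the standard textbook route, and it is correct:
\begin{itemize}
\item the sign conditions say exactly that $\varphi'$ is completely monotone;
\item Bernstein--Widder gives $\varphi'(x)=\int_{[0,\infty)}e^{-tx}\,\mathrm d\nu(t)$;
\item the atom $\nu(\{0\})$ is the drift $b$, and $\varphi(0)$ is the killing term $a$;
\item Tonelli plus monotone convergence as $\eta\downarrow 0$ yields $\mathrm d\mu=t^{-1}\,\mathrm d\nu$ on $(0,\infty)$, and $\int\min(1,t)\,\mathrm d\mu<\infty$ follows from $\varphi(1)<\infty$.
\end{itemize}
The converse direction is also fine, including the domination $t^ke^{-tx}\le C_{k,x_0}\min(1,t)$ on $[x_0,\infty)$. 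As a reduction to Bernstein--Widder, nothing is missing.

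Your optional self-contained sketch of Bernstein--Widder, however, has a step that fails as written. In Step~2 the uniform mass bounds hold on bounded sets, not on $\{t\ge\epsilon\}$. With $x=1/(2T)$ one has $(1-tx/n)_+^{n-1}\ge(1-\tfrac1{2n})^{n-1}\ge\tfrac12$ for $t\le T$, hence $\nu_n([0,T])\le 2f(1/(2T))$. On $\{t\ge\epsilon\}$ the masses are infinite for every $n$ as soon as $f(0+)=\varphi'(0+)=\infty$. To see this, let $x\downarrow 0$ in Step~1 to get $\nu_n((0,\infty))=f(0+)-f(\infty)$, while $\nu_n((0,\epsilon))<\infty$. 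This is precisely the paper's case $\varphi(x)=x^k$, where $\nu$ has density $\frac{k}{\Gamma(1-k)}t^{-k}$ and hence infinite mass near $t=\infty$.

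So Helly has to be run on $[0,\infty)$ with the local bounds. In Step~4 the tail $t\to\infty$ cannot be controlled by mass bounds at all. You need the extra decay of the kernel compared against the representation at a smaller point; for example, for $n\ge 2$,
\[
\int_{t>T}(1-tx/n)_+^{n-1}\,\mathrm d\nu_n(t)\le e^{-3Tx/8}\int(1-tx/(4n))_+^{n-1}\,\mathrm d\nu_n(t)\le e^{-3Tx/8}f(x/4),
\]
which is small uniformly in $n$. Near $t=0$ nothing is lost, since any limiting atom at $0$ equals $\lim_{x\to\infty}(f(x)-f(\infty))=0$.
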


This representation of $\varphi$ is called the \emph{L\'evy-Khintchine representation} of $\varphi$ and $\mu$ is called the L\'evy measure of $\varphi$. Here $a$ is called the killing term and $b$ is called the drift term. Note that if $\mu$ has a density $r$ (called the L\'evy density of $\varphi$), then $\mathrm d\mu(t)=w(t)\ \mathrm dt$, which will be useful for us.

\subsection{Trapezoidal rule for estimating integrals}\label{subsec:trapezoidalrule}

A definite integral $\int_a^b f(x)\ \mathrm dx$ can we approximated by choosing a partitioning $\{x_k\}_{k=0}^N$ of $[a,b]$ such that $a=x_0<x_1<\cdots<x_{N-1}<x_N=b$, letting $\Delta x_k=x_k-x_{k-1}$ and computing

\begin{equation*}
    \sum_{k=1}^N \frac{f(x_{k-1})+f(x_k)}{2}\Delta x_k
\end{equation*}
This is called the \emph{trapezoidal rule} \cite{atkinson2008introduction}. We will be using a uniform partition, so $x_k=a+k\Delta x$ where $\Delta x=(b-a)/N$ and the trapezoidal rule reads

\begin{equation*}
    \Delta x\left(\frac{f(x_N)+f(x_0)}{2}+\sum_{k=1}^{N-1}f(x_k)\right)
\end{equation*}
Moreover for the uniform partition, if $f\in C^2([a,b])$, then there exists $\xi\in (a,b)$ such that

\begin{equation*}
    \int_a^b f(x)\ \mathrm dx -\Delta x\left(\frac{f(x_N)+f(x_0)}{2}+\sum_{k=1}^{N-1}f(x_k)\right)=-\frac{(b-a)^3}{12N^2}f''(\xi)
\end{equation*}

\subsection{Distributions and concentration inequalities}

We define a random variable $\mathsf{X}$ to follow a Bernoulli distribution with success probability $p$ (denoted as $\mathsf X\sim\Ber(p)$) if
\begin{equation*}
    \mathsf{X}=\begin{cases}1&\text{ with probability $p$}\\
    0&\text{ with probability $1-p$}
    \end{cases}
\end{equation*}
If $\mathsf{X}\sim\Ber(p)$ then $\mathbb E[\mathsf{X}]=p$ (that is, the expectation of a Bernoulli random variable with success probability $p$ is $p$).

\paragraph{} The following concentration inequalities would be used in this paper. 

\begin{theorem}[Chernoff Bounds]
    Let $\mathsf{X}$ be a random variable. Then the following holds:
    
    \begin{equation*}
        \mathbb P(\mathsf X\ge a)\le\inf_{t>0}\mathbb E[e^{t(\mathsf X-a)}]
    \end{equation*} 
    \begin{equation*}
        \mathbb P(\mathsf X\le a)\le\inf_{t<0}\mathbb E[e^{t(\mathsf X-a)}]
    \end{equation*}
\end{theorem}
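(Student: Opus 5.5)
The plan is to derive both inequalities from Markov's inequality applied to an exponential transform of $\mathsf X$, and then optimize over the free parameter $t$. Throughout, I would allow $\mathbb E[e^{t(\mathsf X-a)}]$ to take the value $+\infty$; for such $t$ the claimed bound is trivially true, so only $t$ with a finite moment generating function needs attention.

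For the upper tail, fix any $t>0$. The map $y\mapsto e^{t(y-a)}$ is strictly increasing, so the event $\{\mathsf X\ge a\}$ coincides with the event $\{e^{t(\mathsf X-a)}\ge 1\}$. The random variable $e^{t(\mathsf X-a)}$ is non-negative, so Markov's inequality with threshold $1$ gives
\[
\mathbb P(\mathsf X\ge a)=\mathbb P\bigl(e^{t(\mathsf X-a)}\ge 1\bigr)\le \mathbb E\bigl[e^{t(\mathsf X-a)}\bigr].
\]
The left-hand side does not depend on $t$, and the bound holds for every $t>0$. Taking the infimum over $t>0$ therefore yields the first inequality.

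For the lower tail, fix $t<0$. Now $y\mapsto e^{t(y-a)}$ is strictly decreasing, so $\{\mathsf X\le a\}=\{e^{t(\mathsf X-a)}\ge 1\}$. The same Markov argument gives $\mathbb P(\mathsf X\le a)\le \mathbb E[e^{t(\mathsf X-a)}]$ for every $t<0$, and taking the infimum over $t<0$ gives the second inequality. Equivalently, one can apply the first part to $-\mathsf X$ with threshold $-a$.

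There is no real obstacle in this argument. The only points needing care are the direction of monotonicity in the second case, which is why $t<0$ is required there, and the convention that an infinite expectation gives a vacuous bound. The proof of Markov's inequality itself, $\mathbb P(\mathsf Z\ge 1)\le \mathbb E[\mathsf Z\,\mathbbm 1_{\mathsf Z\ge 1}]\le \mathbb E[\mathsf Z]$ for $\mathsf Z\ge 0$, is the one-line step on which everything rests.
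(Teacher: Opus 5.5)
Your argument is correct. Applying Markov's inequality to the non-negative variable $e^{t(\mathsf X-a)}$ at threshold $1$, using monotonicity in the appropriate direction for $t>0$ and for $t<0$, and then taking the infimum is exactly the standard derivation. The paper states this theorem as a standard tool in the preliminaries without proof, so there is no different route to compare against, and your argument fully supplies the missing justification.
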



The following corollaries of the above theorem would be used in our proofs.

\begin{corollary}\label{cor:chernoffcorollary}
    Let $\mathsf X_1,\mathsf X_2,\cdots,\mathsf X_n$ be independent Bernoulli random variables. Also let $\mathsf X=\sum_{i=1}^n \mathsf X_i$ and $\mu=\mathbb E[\mathsf X]$. Then for any $\error\in (0,1)$ 
    \begin{equation*}
        \mathbb P(|\mathsf X-\mu|<\error\mu)\ge 1-2e^{-\error^2\mu/3}
    \end{equation*}
\end{corollary}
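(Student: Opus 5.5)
The plan is to derive the two one-sided tails from the Chernoff bound stated just above and then combine them with a union bound. Since $\{|\mathsf X-\mu|\ge\error\mu\}\subseteq\{\mathsf X\ge(1+\error)\mu\}\cup\{\mathsf X\le(1-\error)\mu\}$, it suffices to show that each of these two events has probability at most $e^{-\error^2\mu/3}$. The degenerate case $\mu=0$ needs no work, because the right-hand side $1-2e^{0}=-1$ is trivially a lower bound.

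\textbf{Moment generating function.} Write $p_i=\mathbb E[\mathsf X_i]$. By independence,
\[
\mathbb E[e^{t\mathsf X}]=\prod_{i=1}^n\bigl(1+p_i(e^t-1)\bigr).
\]
Using $1+y\le e^y$ on each factor gives
\[
\mathbb E[e^{t\mathsf X}]\le \exp\bigl(\mu(e^t-1)\bigr)\quad\text{for every real } t.
\]

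\textbf{Upper tail.} Take $a=(1+\error)\mu$ and $t=\ln(1+\error)>0$ in the Chernoff bound. This gives
\[
\mathbb P(\mathsf X\ge(1+\error)\mu)\le\Bigl(\tfrac{e^{\error}}{(1+\error)^{1+\error}}\Bigr)^{\mu}.
\]
It then remains to prove the scalar inequality $\error-(1+\error)\ln(1+\error)\le-\error^2/3$ for $\error\in(0,1)$. I would use $\ln(1+\error)\ge 2\error/(2+\error)$ and check it by differentiating. This yields the bound $-\error^2/(2+\error)$, which is at most $-\error^2/3$ because $\error<1$.

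\textbf{Lower tail.} Take $t=\ln(1-\error)<0$ in the Chernoff bound. This gives
\[
\mathbb P(\mathsf X\le(1-\error)\mu)\le\Bigl(\tfrac{e^{-\error}}{(1-\error)^{1-\error}}\Bigr)^{\mu}.
\]
Expanding $(1-\error)\ln(1-\error)\ge -\error+\error^2/2$, for instance via the Taylor series with nonnegative higher terms, bounds the exponent by $-\error^2\mu/2\le-\error^2\mu/3$.

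Adding the two tails gives $\mathbb P(|\mathsf X-\mu|\ge\error\mu)\le 2e^{-\error^2\mu/3}$, and taking the complement gives the statement. The only step with real content is the elementary logarithmic inequality in the upper tail; everything else is a routine application of the preceding Chernoff bound.
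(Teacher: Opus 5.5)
Your proof is correct. The MGF bound $\mathbb E[e^{t\mathsf X}]\le e^{\mu(e^t-1)}$, the choices $t=\ln(1\pm\varepsilon)$, and the scalar inequalities $\ln(1+\varepsilon)\ge 2\varepsilon/(2+\varepsilon)$ and $(1-\varepsilon)\ln(1-\varepsilon)\ge-\varepsilon+\varepsilon^2/2$ all check out. The paper gives no proof of this corollary and presents it as a direct consequence of the displayed Chernoff bound; your argument is the standard derivation it leaves implicit.
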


\section{Estimating the evaluation of $\espoly$ at a point}\label{sec:evaluateespoly}

To slowly build our algorithms, we will first show how to evaluate the function $\espoly$ (via the polynomial $\poly$) at a given point. Since it is unrealistic to calculate the exact evaluation of the function at a point $t$, we present an algorithm $\evaluate$ which can give an approximate evaluation of the function at $t$.



We show how to evaluate $\espoly$ at a point $t\in [0,\infty)$. Note that evaluating $\espoly$ at $\expt$ is the same as evaluating $\poly$ at $\alpha=e^{-t}$. So showing that we can approximately evaluate $\poly$ at any point $\alpha$ suffices. That is exactly what $\evaluate$ does. The correctness is proven in \Cref{thm:evaluate} and the required corollary for $\espoly$ is proven in \Cref{cor:espolyestimatecorollary}. 


The algorithm $\evaluate$ takes as input a stream $\stream = \langle a_1, \dots, a_{\streamlen}\rangle$, the point of evaluation  $\throwprob$, error parameter $\error$ and a confidence parameter $\conf$ and outputs a value that is an $(\error, \conf)$ estimate of $\poly(\throwprob)$. For this, we will need to use the following sub-routine:

\paragraph{$\estimateZ$ :} The algorithm $\estimateZ$ is the algorithm in \cite{NandiVGMP024}, which takes as input an error parameter $\error$, confidence parameter $\conf$ and on input a stream $\stream$ outputs an $(\error, \conf)$ estimate of the $\F{0}(\stream)$. It also satisfies hypotheses~(P1) and~(P2) of \Cref{thm:delphicsimulation}. Also per set, it does $\tilde O(\error^{-2}\log|\univ|\log^2(1/\error))$ number of sampling queries.

In the algorithm $\evaluate$ we will use multiple parallel instantiations of the algorithm  $\estimateZ$ on different streams. We will denote the $j$th instantiation as  $\estimateZ^{[j]}$.

\



\SetAlgoNoLine%
\begin{algorithm}[htb]
    \DontPrintSemicolon%
    \caption{$\helpevaluate(\stream = \langle a_1, \dots, a_{\streamlen}\rangle, \throwprob,\error)$}
    \SetKwInOut{Input}{Input}
    \SetKwInOut{Output}{Output}
    
    \Input{stream $\stream$, throwing probability $\alpha$, overall error $\error$\;}
    \Output{$(\error,1/3)$-approximation of the expected number of distinct sampled elements;}

    \BlankLine
    
    $k \gets \left\lceil\frac{48\ln(12)}{\error^2(1-\alpha)}\right\rceil$\\
    \For{$i \gets 1$ \KwTo $\streamlen$}{
        \For{$j \gets 1$ \KwTo $k$}{
            Send the thinned set $a_i^{(\throwprob,j)}$ to $\estimateZ^{[j]}(\frac{\error}{4}, \frac{1}{6k})$\; \label{ln:estimateZ}
            \tcp*{element-level subsampling via simulation}
        }
    }
    \BlankLine
    \Return $\frac1k\sum_{j=1}^k(\text{value returned by }\estimateZ^{[j]}(\frac{\error}{4}, \frac{1}{6k}))$\;
    
\end{algorithm}

\SetAlgoNoLine%
\begin{algorithm}[htb]
    \DontPrintSemicolon%
    \caption{$\evaluate(\stream = \langle a_1, \dots, a_{\streamlen}\rangle, \throwprob,\error,\conf)$}
    \SetKwInOut{Input}{Input}
    \SetKwInOut{Output}{Output}
    
    \Input{stream $\stream$, throwing probability $\alpha$, overall error $\error$\;}
    \Output{$(\error,\conf)$-approximation of the expected number of distinct sampled elements;}

    \BlankLine
    
    \If{$\alpha=1$}{
    \Return $0$\;
    }
    $R \gets \lceil48\ln(2/\delta)\rceil$\\
    \For{$r \gets 1$ \KwTo $R$}{
        $Y_r\gets \helpevaluate(\stream,\alpha,\varepsilon)$ 
    }
    \BlankLine
    \Return $\mathrm{median}_r\{Y_r\}$\;
\end{algorithm}

The algorithm $\evaluate$ uses the subroutines $\estimateZ^{[j]}$ inside the helper algorithm $\helpevaluate$ and the correctness of its output is guaranteed in \Cref{thm:evaluate}.


\begin{theorem}\label{thm:evaluate}
    Given any real number $0<\error<1$ and a stream $\stream=\langle a_1,\ a_2\dots,\ a_{\streamlen}\rangle$ and a point of evaluation $\throwprob\in[0,1]$ the algorithm $\evaluate(\stream = \langle a_1, \dots, a_{\streamlen}\rangle, \throwprob,\error,\conf)$
    outputs a $(\error,\conf)$-approximation of $\mathbb E[|\F{0}(S_\throwprob)|]$. For $\throwprob\in[0,1)$, the time and space complexity of the algorithm is
    \[ \tilde O\left(\frac{\log^4|\univ|}{\error^4(1-\throwprob)}\log(1/\error)\log(1/\conf)\left[\log m+\log(1/\error)+\log(1/\conf)+\log\frac1{1-\throwprob}\right]\right). \]
    For $\throwprob=1$, the algorithm returns $0$ exactly.
\end{theorem}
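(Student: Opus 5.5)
The proof has three layers: a single $\estimateZ^{[j]}$ instance, the averaging inside $\helpevaluate$, and the median amplification in $\evaluate$.

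\textbf{Setup and the case $\throwprob=1$.} If $\throwprob=1$, every element is discarded, so $\F{0}(\stream_\throwprob)=0$ deterministically, and returning $0$ is exact. Now fix $\throwprob<1$. Write $\mathsf Y^{[j]}=\F{0}(\stream^{[j]}_\throwprob)$ for the $j$th independent thinned substream, and set $\mu:=\mathbb E[\mathsf Y^{[j]}]=\poly(\throwprob)$.
\begin{itemize}
\item By \Cref{thm:delphicsimulation} and the Remark after it, $\estimateZ$ satisfies (P1)--(P2). Hence running it on the simulated thinned sets is distributionally identical to running it on the true thinned stream.
\item By \Cref{lem:expectedpoly}, $\mu=\sum_x(1-\throwprob^{\f x})$.
\item $\mathsf Y^{[j]}$ is a sum of independent Bernoulli indicators, one per $x$ with $\f x\ge1$, each with success probability $1-\throwprob^{\f x}\ge 1-\throwprob$.
\end{itemize}

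\textbf{Analysis of one call to $\helpevaluate$.} Let $\bar{\mathsf Y}=\frac1k\sum_j \mathsf Y^{[j]}$. This is a sum of $k\cdot\F{0}$ independent Bernoullis with mean $k\mu$, where $\mu\ge(1-\throwprob)\F{0}\ge 1-\throwprob$ when the stream is nonempty. (An empty stream gives output $0$ trivially.)
\begin{itemize}
\item Apply \Cref{cor:chernoffcorollary} with error $\error/2$. The failure probability is at most $2\exp(-\error^2k\mu/12)\le 2\exp(-\error^2k(1-\throwprob)/12)$. With $k=\lceil 48\ln 12/(\error^2(1-\throwprob))\rceil$ this is at most $2\cdot 12^{-4}<1/6$.
\item Each $\estimateZ^{[j]}$ returns $\widehat{\mathsf Y}^{[j]}$ within factor $1\pm\error/4$ of $\mathsf Y^{[j]}$, except with probability $1/(6k)$. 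A union bound over the $k$ instances costs $1/6$.
\item On the good event, the average of the $\widehat{\mathsf Y}^{[j]}$ lies within $(1\pm\error/4)(1\pm\error/2)\mu\subseteq(1\pm\error)\mu$.
\end{itemize}
So each $Y_r$ is an $(\error,1/3)$-approximation.

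\textbf{Median amplification.} The $Y_r$ are independent. The median fails only if at least $R/2$ of the $R$ trials fail, while the expected number of failures is at most $R/3$. A Chernoff/Hoeffding bound gives failure probability at most $e^{-R/c}$ for a constant $c$, and $R=\lceil48\ln(2/\conf)\rceil$ makes this at most $\conf$.

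\textbf{Complexity.} The total work is $R\cdot k=O\big(\log(1/\conf)/(\error^2(1-\throwprob))\big)$ instances of $\estimateZ$ with parameters $(\error/4,1/(6k))$. Each instance uses $\tilde O(\error^{-2}\log^2|\univ|\log(6k))$ space and update time per set, as in \cite{NandiVGMP024}. On top of that, the simulation overhead from \Cref{thm:delphicsimulation}(iii) adds $T=\tilde O(\error^{-2}\log|\univ|\log^2(1/\error))$ samples per set, each costing $\log|\univ|$. The rejection-sampling runtime failure probability is controlled via \Cref{cor:errorforlasvegassampling} with $M\le mRk$ phases, which contributes a $\log(mRk/\conf)$ factor. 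Multiplying out, $\log(6k)$ and $\log(mRk/\conf)$ produce the bracket $[\log m+\log(1/\error)+\log(1/\conf)+\log\frac1{1-\throwprob}]$, giving the stated bound.

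\textbf{Main obstacle.} The delicate step is this complexity bookkeeping, in particular matching the exact power $\log^4|\univ|$ and the bracket of log terms. I would need to allocate part of $\conf$ to runtime failure and fold the resulting logarithms into the $\tilde O$. Correctness itself is routine once the simulation equivalence is invoked.
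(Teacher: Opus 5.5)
Your proposal is correct and follows essentially the same route as the paper. Both arguments use a union bound over the $k$ instances of $\mathsf{EstimateF_0}$, a Chernoff bound on the sum of independent survival indicators using $\mu\ge 1-\alpha$ for a nonempty stream, median amplification over $R$ repetitions, and a union bound over the $mRk$ simulated sampling phases for the runtime. The differences are only cosmetic: you run the Chernoff step at accuracy $\varepsilon/2$ (using $[(1-\varepsilon/4)(1-\varepsilon/2),(1+\varepsilon/4)(1+\varepsilon/2)]\subseteq[1-\varepsilon,1+\varepsilon]$) where the paper uses $\varepsilon/4$ and $(1\pm\varepsilon/4)^2$, and you spell out the median-of-$R$ argument that the paper leaves implicit.
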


\begin{proof} To prove this theorem, we show that $\helpevaluate$ gives a $(\error,1/3)$-approximation of $\mathbb E[|\F{0}(S_\throwprob)|]$. Then we show that $\evaluate$ achieves the stated guarantees of the theorem.

Let $\stream_{\throwprob}^{[j]}$ be the substream that is send to $\estimateZ^{[j]}(\error, \conf)$ and let $Y^{[j]}_{\throwprob}$ be the exact value of the $\F{0}(\stream_{\throwprob}^{[j]})$. Note that from \Cref{lem:expectedpoly} $Y^{[j]}_{\throwprob}$ is a random variable with $\mathbb{E}[Y^{[j]}_{\throwprob}] = \sum_{x} (1 - \throwprob^{\freq{x}})$, for all $j$.


Let the value returned by $\estimateZ^{[j]}(\frac{\error}{4}, \frac{1}{6k}))$ in the end be $\hat Y_{\throwprob}^{[j]}$ and hence the output of $\evaluate(\stream = \langle a_1, \dots, a_{\streamlen}\rangle, \throwprob,\error,\conf)$ is

\begin{equation*}
     \overline{\hat{Y_\throwprob}}=\frac1k\sum_{j=1}^k\hat Y_\throwprob^{[j]}
\end{equation*}
By the guarantee we have of the subroutine $\estimateZ^{[j]}(\frac{\error}{4}, \frac{1}{6k}))$  we have for any $j \in [k]$,
\[\mathbb P\left(|\hat Y_\throwprob^{[j]}-Y_\throwprob^{[j]}|\le\frac{\error}{4} Y_\throwprob^{[j]}\right)\ge 1-\frac{1}{6k}\]
Thus by union bound with probability at least $(1-\frac16)=5/6$ for all $j$ we have $|\hat Y_\throwprob^{[j]}-Y_\throwprob^{[j]}|\le\error Y_\throwprob^{[j]}/4$  and hence 
$\bigg|\overline{\hat{Y}_\throwprob}-\overline{Y_\throwprob}\bigg|\le\error \overline{Y_\throwprob}/4$,
where $\overline{Y_\throwprob} = \frac{1}{k}\sum_{j=1}^k Y_\throwprob^{[j]}$ and $\overline{\hat{Y}_\throwprob} = \frac{1}{k}\sum_{j=1}^k \hat{Y}_\throwprob^{[j]}$. Thus, with probability at least $5/6$
\begin{equation}\label{eq:hatY}
    \left(1-\frac{\error}{4}\right) \overline{Y_\throwprob} \leq \overline{\hat{Y}_\throwprob}  \leq \left(1+\frac{\error}{4}\right) \overline{Y_\throwprob}
\end{equation}




Now note that for all $j$, $Y_{\throwprob}^{[j]} = \sum_{x\in \univ} \mathbbm{1}_{x\in \F{0}(\stream^{[j]}_\throwprob)}$, where $\mathbbm{1}_{x\in \F{0}(\stream^{[j]}_\throwprob)}$ is the indicator that element $x$ appears in at least one thinned set of the sub-stream $\stream_{\throwprob}^{[j]}$. Under element-level subsampling, for each $x$, the indicator $\mathbbm{1}_{x\in \F{0}(\stream^{[j]}_\throwprob)}$ depends only on the $\freq{x}$ independent coin flips deciding whether $x$ is retained from each set containing it. For distinct elements $x\ne y$, these coin flip sets are disjoint, and coins across different copies $j\ne j'$ are also fresh. Hence the collection $\{\mathbbm{1}_{x\in \F{0}(\stream^{[j]}_\throwprob)}\}_{x\in\univ, j\in[k]}$ consists of mutually independent Bernoulli random variables, and in particular $\sum_{j=1}^k Y_{\throwprob}^{[j]}$ is a sum of independent Bernoulli random variables.\footnote{The fact that element-level subsampling can be efficiently simulated using the Delphic oracles of the original sets (without enumerating them), and that the $\F{0}$ estimation algorithm receives the correct joint distribution of oracle responses under this simulation, is guaranteed by \Cref{thm:delphicsimulation}. Hence we keep arguing with the true model for the correctness, and for the time and space bounds use the simulated model.} Hence, if $\mu_{\throwprob} = \mathbb{E}[\overline Y_\throwprob]$ (which is $\mathbb{E}[\F{0}(\stream_{\throwprob})]$), then from \Cref{cor:chernoffcorollary}

\begin{equation}\label{eq:Y}
\mathbb P\left(|\overline Y_\throwprob-\mu_\throwprob|>\frac{\error}{4}\mu_\throwprob\right)=\mathbb P\left(\left|\sum_{j=1}^k Y_{\throwprob}^{[j]}-k\mu_\throwprob\right|>\frac{\error}{4}k\mu_\throwprob\right)\le2\exp\left(-\frac{(\error/4)^2k\mu_\throwprob}{3}\right)\le \frac16.
\end{equation}
where the first inequality follows from the Chernoff Bound and the fact that $\mathbb{E}[\sum_{j=1}^k Y_{\throwprob}^{[j]}] = k\mu_{\throwprob}$. If the stream is nonempty, then there exists some $x$ with $\freq{x}\ge 1$, and therefore
\[ \mu_\throwprob = \sum_x(1-\throwprob^{\freq{x}}) \ge 1-\throwprob.\]
Thus, by the choice $k\ge \frac{48\ln(12)} {\error^2(1-\throwprob)}$, the second inequality follows. If the stream is empty, then
$\mu_\throwprob=0$ and all the $Y_\throwprob^{[j]}$ are identically
zero, so the claim is immediate.Thus with probability at least $5/6$, 
\begin{equation}\label{eq:Y2}
\left(1-\frac{\error}{4}\right)\mu_\throwprob \le \overline Y_\throwprob
\le \left(1+\frac{\error}{4}\right)\mu_\throwprob.
\end{equation}

On the intersection of the events in
Equations~\ref{eq:hatY} and~\ref{eq:Y2}, which has probability at
least $2/3$, we have
\[ \left(1-\frac{\error}{4}\right)^2\mu_\throwprob \le \overline{\hat Y_\throwprob} \le \left(1+\frac{\error}{4}\right)^2\mu_\throwprob. \]
Since $0<\error<1$,
\[ \left(1-\frac{\error}{4}\right)^2\ge 1-\error \qquad\text{and}\qquad
\left(1+\frac{\error}{4}\right)^2\le 1+\error. \]
Hence the output is a $(\error,1/3)$-approximation of $\mathbb E[|\F{0}(\stream_\throwprob)|]$.

Finally, we analyze the space and update time complexity. The algorithm
$\evaluate$ repeats $\helpevaluate$ exactly $R=O(\log(1/\conf))$
times in parallel. Inside each copy of $\helpevaluate$, we run $k=O\left(\frac{1}{\error^2(1-\throwprob)}\right)$ parallel instantiations of $\estimateZ$. By the construction and analysis of the $\F{0}$ estimator in
\cite{NandiVGMP024}, for a fixed arriving set and a fixed
$\estimateZ$ instance,
\[ T=\tilde O\left(\error^{-2}\log^2|\univ|\log^2(1/\error)\right) \]
is a common upper bound on the number of membership and sampling queries
made to that set. The same estimator uses
$\tilde O(\error^{-2}\log^2|\univ|)$ space, up to logarithmic factors
in the error and confidence parameters. By \Cref{thm:delphicsimulation}(iii), if the failure
probability allocated to this set--instance interaction is $\conf'$, its
entire simulated sampling phase takes $\tilde O\left(T\log(1/\conf')\log|\univ|\right)$ time with probability at least $1-\conf'$.

There are at most $M_{\mathrm{total}}=mRk$ such set-instance interactions over the entire execution of $\evaluate$. We allocate $\conf/2$ to the accuracy failure probability and $\conf/2$ to the runtime failure probability, and set $\conf'=\frac{\conf}{2M_{\mathrm{total}}}$. By \Cref{cor:errorforlasvegassampling}, with probability at least $1-\conf/2$, all simulated sampling phases satisfy their stated runtime bounds. Moreover,
\[ \log\frac1{\conf'} = \tilde O\left(\log m+\log\frac1\error+\log\frac1\conf+\log\frac1{1-\throwprob}\right) \]
up to lower-order logarithmic terms. Multiplying the cost of one
set-instance interaction by the $Rk$ parallel instances gives a
worst-case per-set update time, with the stated high probability, of
\[ \tilde O\left(\frac{\log^4|\univ|}{\error^4(1-\throwprob)}\log(1/\error)\log(1/\conf)\left[\log m+\log(1/\error)+\log(1/\conf)+\log\frac1{1-\throwprob}\right]\right).\]
The total space is bounded by the same expression.
\end{proof}



\begin{corollary}\label{cor:espolyestimatecorollary}
Given any real number $0<\error<1$, a stream
$\stream=\langle a_1,\dots,a_{\streamlen}\rangle$, and a point $t>0$,
\[ \evaluate(\stream=\langle a_1,\dots,a_{\streamlen}\rangle, e^{-t},\error,\conf) \]
outputs an $(\error,\conf)$-approximation of $\espoly(t)$ with update
time per set and space complexity
\[ \tilde O\left(\frac{\log^4|\univ|}{\error^4(1-e^{-t})}\log(1/\error)\log(1/\conf)\left[\log m+\log(1/\error)+\log(1/\conf)+\log\frac1{1-e^{-t}}\right]\right). \]
For $t=0$, $\espoly(0)=0$ and the algorithm returns $0$ exactly.
\end{corollary}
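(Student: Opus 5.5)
The plan is to obtain \Cref{cor:espolyestimatecorollary} directly from \Cref{thm:evaluate} by the substitution $\throwprob=e^{-t}$. No new probabilistic argument is needed; we only have to check that the target quantities coincide and that the parameter ranges match.

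First, fix $t>0$. Then $\throwprob:=e^{-t}\in(0,1)$, so we are in the regime $\throwprob\in[0,1)$ covered by the complexity bound of \Cref{thm:evaluate}. That theorem says $\evaluate(\stream,e^{-t},\error,\conf)$ returns an $(\error,\conf)$-approximation of $\mathbb E[\F{0}(\stream_{e^{-t}})]$. By \Cref{lem:expectedpoly}, applied to the true model, this expectation equals $\sum_{x\in\univ}(1-e^{-t\freq{x}})$. The Remark following \Cref{thm:delphicsimulation} shows that $\estimateZ$ cannot distinguish the true model from the simulated one, so the same expectation is what the algorithm actually estimates.

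Next, we identify this sum with $\espoly(t)$. Elements with $\freq{x}=0$ contribute $1-e^{0}=0$, so the sum over $\univ$ equals the sum over the distinct elements of the stream. This is exactly $\espoly(t)=\poly(e^{-t})$ as defined in \Cref{sub:ourapproach}. Substituting $1-\throwprob=1-e^{-t}$ into the complexity expression of \Cref{thm:evaluate} gives the stated space and update-time bound verbatim, including the $\log\frac{1}{1-e^{-t}}$ term inside the bracket.

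Finally, for $t=0$ we have $\throwprob=1$. \Cref{thm:evaluate} (equivalently, the first line of $\evaluate$) returns $0$ exactly, and $\espoly(0)=\sum_x(1-1)=0$, so the output is exact. The only point needing care is the identification of $\mathbb E[\F{0}(\stream_\throwprob)]$ with $\espoly$ under the simulated oracles, and that is already settled by the Remark. Everything else is routine substitution.
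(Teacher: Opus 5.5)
Your proposal is correct and matches the paper's proof. Both substitute $\throwprob=e^{-t}$ into \Cref{thm:evaluate}, use \Cref{lem:expectedpoly} to identify $\mathbb E[\F{0}(\stream_{e^{-t}})]$ with $\espoly(t)$, and note that at $t=0$ one has $\throwprob=1$, so the algorithm returns $0=\espoly(0)$ exactly. Your extra remarks about zero-frequency elements and the simulation Remark only spell out the same argument in more detail.
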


\begin{proof}
Substitute $\throwprob=e^{-t}$ in \Cref{thm:evaluate} and use
\Cref{lem:expectedpoly}, which gives $\mathbb E[|\F{0}(S_{e^{-t}})|]=\espoly(t)$. For $t=0$, we have $e^{-t}=1$ and hence $\espoly(0)=0$ exactly.
\end{proof}

\section{Frequency moments $\F{k}$ - estimation}\label{sec:Fkforkinzeroone}


In this section, we will show how to compute various $\F{k}$ for $k\in (0,1)$.


\subsection{$\F{k}$ when $k\in (0,1)$}\label{sec:Fk01}

There is a driving identity that helps design our algorithm. Let us start with a little intuition first.
\paragraph{\textsc{Intuition:}} Let us think of a jar of marbles with $\f{x}$ marbles of type $x$. A thinning at strength $t$ can be thought of as a \textit{focal length} for the \textit{magnifying glass} that is our probe (that is, the probe scale $t$ now is thought of as a focal length): each marble survives independently with probability $e^{-t}$, and the probe returns \textit{something of type $x$ survived} with expectation $1-e^{-t\f{x}}$. Pool the whole family of probes by averaging these responses against a weight function $w(t)$; mathematically we study

\begin{equation*}
    I(\f{x})=\int_0^\infty (1-e^{-t\f{x}})w(t)\ \mathrm dt
\end{equation*}
What should we choose for our weights $w(t)$? The motivation comes from a simple experiment. Let us put our probes at geometric focal lengths $t_j=2^j,j\in\mathbb Z$; probe $j$ fires roughly when $2^j\lesssim 1/\f{x}$, so the contributing probes are those with $j\le -\log_2\f{x}$. If probe $j$ is given weight $w_j$, the total signal is roughly the tail sum $\sum_{j\le-\log_2\f{x}}w_j$. Scaling $\f{x}\to c\f{x}$ shifts the cutoff by $-\log_2c$; to make the shift always multiply the tail by $c^k$ for every $\f{x}$ (since we are looking at the moments and scaling $\f{x}$ by $c$ scales the moments by $c^k$) the weights must be geometric : $w_j\propto q^j$. A shift by $-\log_2c$ multiplies the tail by $q^{-\log_2c}=c^{-\log_2q}$; setting $-\log_2q=k$ forces $q=2^{-k}$. This \textit{geometric-on-log-grid} corresponds to a continuous density $w(t)\propto t^{-k-1}$ in $t$ (the extra factor comes from a Jacobian). Thus the power law is the unique scale-free choice that turns log-translations (rescaling $\f{x}$) into exact multiplicative rescaling of the output.

In fact, that this weight function works can now be intuitively verified. For $t\ll 1/\f{x}$ we have $1-e^{t\f{x}}\sim t\f{x}$, so the small-$t$ contribution is $\f{x}\int_0^{1/\f{x}}tw(t)\appropto f{x}^k$, and for $t\gg 1/\f{x}$ the integrand saturates at $1$ so the large-$t$ tail contributes $\int_{1/\f{x}}^\infty w(t)\appropto \f{x}^k$. The proportionality constant can be found by setting the case if the jar had only one marble of each type.

\paragraph{} We formalize this now. We provide an elementary proof of \Cref{lem:Fkidentityintegral} and in \Cref{sec:deeperreasonsappendix} we discuss about its deeper connections to the Laplace transform.

\begin{lemma}\label{lem:Fkidentityintegral}
    Let $k\in(0,1)$. Let
    
    \begin{equation*}
        I^{[k]}:=\int_0^\infty \espoly(t)t^{-k-1}\ \mathrm dt
    \end{equation*}
    Then the following identity holds,

    \begin{equation*}
        \F{k}=\frac{kI^{[k]}}{\Gamma(1-k)}
    \end{equation*}
\end{lemma}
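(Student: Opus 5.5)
We prove the identity element by element and then sum over the finite set of distinct elements. Since $\espoly(t)=\sum_{x\in\F{0}}(1-e^{-t\f{x}})$ is a finite sum of nonnegative terms, Tonelli's theorem lets us exchange sum and integral:
\[
I^{[k]}=\sum_{x\in\F{0}}\int_0^\infty (1-e^{-t\f{x}})\,t^{-k-1}\,\mathrm dt .
\]
It therefore suffices to show, for every real $\lambda>0$ (in particular $\lambda=\f{x}\ge 1$), that
\[
J(\lambda):=\int_0^\infty (1-e^{-\lambda t})\,t^{-k-1}\,\mathrm dt=\frac{\Gamma(1-k)}{k}\,\lambda^k .
\]
Elements with $\f{x}=0$ contribute $0$ on both sides.

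The scaling comes from the substitution $u=\lambda t$:
\[
J(\lambda)=\lambda^k\int_0^\infty (1-e^{-u})\,u^{-k-1}\,\mathrm du=\lambda^k J(1).
\]
Before computing $J(1)$, we check that it is finite. Near $0$ we have $1-e^{-u}\le u$, so the integrand is at most $u^{-k}$, which is integrable on $(0,1]$ because $k<1$. Near $\infty$ the integrand is at most $u^{-k-1}$, which is integrable on $[1,\infty)$ because $k>0$. This is the only place the hypothesis $k\in(0,1)$ is used.

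To evaluate $J(1)$ we integrate by parts with $F(u)=1-e^{-u}$ and $G(u)=-u^{-k}/k$, so that $G'(u)=u^{-k-1}$:
\[
J(1)=\Big[-\tfrac{1}{k}(1-e^{-u})u^{-k}\Big]_0^\infty+\frac1k\int_0^\infty e^{-u}u^{-k}\,\mathrm du .
\]
The boundary term vanishes at both ends. At $\infty$ it is $O(u^{-k})$, and at $0$ it is $O(u^{1-k})$, both of which tend to $0$. The remaining integral is $\Gamma(1-k)$ by the definition of the Gamma function, which is valid since $1-k>0$. Hence $J(1)=\Gamma(1-k)/k$.

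Summing over $x$ gives $I^{[k]}=\frac{\Gamma(1-k)}{k}\sum_x \f{x}^k=\frac{\Gamma(1-k)}{k}\F{k}$, which is the claimed identity. The main step needing care is the integration by parts, specifically the vanishing of the boundary terms at $0$ and $\infty$. To be safe, one can carry out the integration by parts on $[\eta,M]$ and then let $\eta\to0$ and $M\to\infty$; the limits exist by the integrability estimates above.
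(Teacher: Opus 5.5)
Your proof is correct, and its skeleton matches the paper's. You exchange the finite sum over $x$ with the integral, reduce to a single frequency via the substitution $u=\lambda t$, and check convergence by bounding $1-e^{-u}\le u$ near $0$ and $1-e^{-u}\le 1$ near $\infty$.

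The one genuine difference is how you evaluate the constant $J(1)=\int_0^\infty(1-e^{-u})u^{-k-1}\,\mathrm du$.
\begin{itemize}
\item The paper writes $1-e^{-u}=u\int_0^1 e^{-us}\,\mathrm ds$ and uses Tonelli to swap the order of integration. It then computes the inner integral $\int_0^\infty e^{-us}u^{-k}\,\mathrm du=s^{k-1}\Gamma(1-k)$ and finishes with $\int_0^1 s^{k-1}\,\mathrm ds=1/k$.
\item You instead integrate by parts against $G(u)=-u^{-k}/k$.
\end{itemize}

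The paper's route keeps every integrand nonnegative, so the interchange is justified by Tonelli alone, with no boundary terms or limits to track. Yours is shorter and more elementary, but it needs the truncation to $[\eta,M]$ that you mention. You handle that correctly: $(1-e^{-u})u^{-k}\le u^{1-k}\to 0$ as $u\to 0$ uses $k<1$, and $(1-e^{-u})u^{-k}\le u^{-k}\to 0$ as $u\to\infty$ uses $k>0$.

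One cosmetic inaccuracy: you say the finiteness check is ``the only place'' the hypothesis $k\in(0,1)$ is used. That is not literally true. As you yourself note later, the vanishing of both boundary terms and the convergence of $\int_0^\infty e^{-u}u^{-k}\,\mathrm du=\Gamma(1-k)$ also require $0<k<1$.
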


\begin{proof}
    To aid us in proving this, let us define

    \begin{equation*}
        I(x):=\int_0^\infty (1-e^{tx})t^{-k-1}\ \mathrm dt
    \end{equation*}
    In \Cref{lem:Fkintegralidentityandconveregence}, we show that $I(x)$ converges and equates to $x^k\frac{\Gamma(1-k)}{k}$ where $\Gamma$ is the Gamma function defined in \Cref{sec:preliminaries}. Assuming this integral identity, we note that

    \begin{align*}
        \int_0^\infty \espoly(t)t^{-k-1}\ \mathrm dt &=\int_0^\infty \sum_{x\in\F{0}}(1-e^{-t\f{x}})t^{-k-1}\ \mathrm dt
        =\sum_{x\in\F{0}}\int_0^\infty (1-e^{-t\f{x}})t^{-k-1}\ \mathrm dt\\
        &=\sum_{x\in\F{0}}I(\f{x})=\sum_{x\in\F{0}}\frac{\Gamma(1-k)}{k}\f{x}^k=\frac{\Gamma(1-k)}{k}\F{k}
    \end{align*}
    This finishes the proof.
\end{proof} 


We now provide an algorithm to estimate $\F{k}$ called $\estimateK$. \Cref{thm:AlgorithmEstimateKcorrectnessandspace} then proves the correctness of the algorithm along with the space and update time complexity.

\SetAlgoNoLine%
\begin{algorithm}[htb]
    \DontPrintSemicolon%
    \caption{$\estimateK(\stream = \langle a_1, \dots, a_{\streamlen}\rangle,\error,\conf)$}
    \algorithmfootnote{Remark : Although $\estimateK$ is written as if $\evaluate(\stream,\dots)$ is called sequentially for each grid point, the implementation is in fact single-pass. As the stream $\stream$ is read, each arriving update is fed simultaneously to the parallel data structures underlying all Evaluate instances (for endpoints and grid points). Thus the algorithm maintains all sketches concurrently and requires only one pass over the stream.}
    
    \SetKwInOut{Input}{Input}
    \SetKwInOut{Output}{Output}
    
    \Input{stream $\stream$, moment order $k\in(0,1)$, overall error $\error$, overall failure probability $\conf$\;}
    \Output{($(\error,\conf)$-approximation of the $k$-th frequency moment $\F{k}$ of $\stream$\;}
    
    $L\gets \dfrac1\maxfreq\left(\dfrac{\varepsilon\Gamma(2-k)}{5k}\right)^{\dfrac1{1-k}}$\;
    $U\gets \left(\dfrac{10(1-k)}{\error}\right)^{1/k}$\;
    $M_k\gets L^{-k-3}\left[(k+1)(k+2)+\dfrac{2(k+1)}e+\dfrac4{e^2}\right]$\;
    $N \gets \bigg\lceil\sqrt{\dfrac{5k(1-k)M_kU^3}{6\error}}\bigg\rceil$\;
    $\Delta t=\dfrac{U-L}{N}$\;
    $c\gets 0$\;
    \BlankLine
    $\label{line:boundFk}\mathrm{bound}=\evaluate(\stream, e^{-U}, \error/5, \conf/(N+1))U^{-k-1}+\evaluate(\stream, e^{-L}, \error/5, \conf/(N+1))L^{-k-1}$\\
    
    \BlankLine
    
    \For{$j \gets 1$ \KwTo $N-1$\label{line:forloopFk}}{
        $c_j\gets \evaluate(\stream, e^{-(L+j\cdot\Delta t)},\error/5,\conf/(N+1))$\label{line:forloopinsideFk}\;
    }
    
    \BlankLine
    
    $c\gets \sum_{j=1}^{N-1}c_j$\;
    $\hat I_{\mathrm{est}}^{[k]}=\Delta t\cdot\left(\frac{\mathrm{bound}}2+c\right)$\;
    \BlankLine
    \label{line:Fkoutput}\Return $\frac{k\hat I_{\mathrm{est}}^{[k]}}{\Gamma(1-k)}$\;
    
    \end{algorithm}


\begin{theorem}\label{thm:AlgorithmEstimateKcorrectnessandspace}
    The algorithm $\estimateK$ outputs an $(\error,\conf)$-approximation of $\F{k}$ when $k\in(0,1)$. $\estimateK$ takes 
    
    \begin{equation*}
        \tilde O\left(a(k)\error^{-4-\frac{3(k+1)}{2k(1-k)}}\maxfreq^{\frac{5+k}{2}}(\log(1/\conf)+\log N)\log(1/\error)[\log m+\log(1/\error)+\log N+\log(1/\conf)]\log^4|\Omega|\right)
    \end{equation*}
    space and update time where $c=\inf_{y\in(1,2]}\Gamma(y)$ which approximately equals $0.88560325\dots$ and
    
    \begin{equation*}
        \log N=\tilde O\left(\frac{k^2+k+6}{2k(1-k)}+\frac{k+3}{2k(1-k)}\log(1/\error)+\frac{3+k}{2}\log\maxfreq+\log k\right)
    \end{equation*}

    \begin{equation*}
        \tilde O_k\left(\tau^{\frac{5+k}{2}}\error^{-4-\frac{3(k+1)}{2k(1-k)}}\log(1/\conf)[\log m+\log(1/\conf)]\log^4|\univ|\right).
    \end{equation*}

\end{theorem}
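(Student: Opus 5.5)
We bound the total error by splitting it into four pieces and allocating roughly $\error\F{k}/5$ to each piece, then union bound. By \Cref{lem:Fkidentityintegral}, $\F{k}=\frac{k}{\Gamma(1-k)}I^{[k]}$. We write
\[ I^{[k]}=\int_0^L \espoly(t)t^{-k-1}\,\mathrm dt+\int_L^U \espoly(t)t^{-k-1}\,\mathrm dt+\int_U^\infty \espoly(t)t^{-k-1}\,\mathrm dt. \]
The trapezoidal sum with exact values approximates the middle integral, and $\hat I^{[k]}_{\mathrm{est}}$ replaces exact values by $\evaluate$ outputs. (The algorithm as written multiplies only the endpoint evaluations by $t^{-k-1}$. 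We read line~\ref{line:forloopinsideFk} as including the weight $(L+j\Delta t)^{-k-1}$, which is clearly intended.)

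\textbf{Lower truncation.} We use $1-e^{-t\f{x}}\le t\f{x}$, so $\espoly(t)\le t\F{1}$. This gives
\[ \int_0^L \espoly(t)t^{-k-1}\,\mathrm dt\le \F{1}\frac{L^{1-k}}{1-k}. \]
Since $\f{x}\le\maxfreq$, we have $\F{1}\le \maxfreq^{1-k}\F{k}$. Plugging in $L=\maxfreq^{-1}(\error\Gamma(2-k)/5k)^{1/(1-k)}$ and using $\Gamma(2-k)=(1-k)\Gamma(1-k)$, the contribution after multiplying by $k/\Gamma(1-k)$ is at most $\error\F{k}/5$.

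\textbf{Upper truncation.} We use $\espoly(t)\le \F{0}\le\F{k}$, so the tail is at most $\F{k}U^{-k}/k$. Multiplied by $k/\Gamma(1-k)$, this is $\F{k}U^{-k}/\Gamma(1-k)$. The choice $U=(10(1-k)/\error)^{1/k}$ together with a crude lower bound on $1/\Gamma(1-k)$ (this is where the constant $c=\inf\Gamma$ enters) makes it at most $\error\F{k}/5$.

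\textbf{Discretization.} This is the main obstacle. We apply the trapezoidal error formula to $g(t)=\espoly(t)t^{-k-1}$ on $[L,U]$. We need $\sup_{[L,U]}|g''|\le M_k\F{k}$-type control. Differentiating termwise,
\[ g''=\sum_x\Bigl[(1-e^{-t\f{x}})(k+1)(k+2)t^{-k-3}-2\f{x}e^{-t\f{x}}(k+1)t^{-k-2}-\f{x}^2e^{-t\f{x}}t^{-k-1}\Bigr]. \]
We bound $\f{x}e^{-t\f{x}}\le 1/(et)$ and $\f{x}^2e^{-t\f{x}}\le 4/(e^2t^2)$, so each summand is at most $t^{-k-3}[(k+1)(k+2)+2(k+1)/e+4/e^2]$. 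Summing over $x\in\F{0}$ and taking $t\ge L$ gives $|g''|\le M_k\F{0}\le M_k\F{k}$. The trapezoidal error is then at most $U^3M_k\F{k}/(12N^2)$, and the choice of $N$ makes this, times $k/\Gamma(1-k)$, at most $\error\F{k}/5$. The tricky part is keeping the $\Gamma$ constants and the relation $\F{0}\le\F{k}$ straight.

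\textbf{Estimation error.} Each of the $N+1$ calls to $\evaluate$ is an $(\error/5,\conf/(N+1))$-approximation by \Cref{cor:espolyestimatecorollary}. By a union bound, with probability $1-\conf$ all of them are simultaneously within factor $1\pm\error/5$. All weights are nonnegative, so the estimated trapezoid is within $(1\pm\error/5)$ of the exact trapezoid, which is itself at most $(1+\error/5)\F{k}\cdot\Gamma(1-k)/k$. This gives an error of about $\error\F{k}/5$ up to constants, and the total is at most $\error\F{k}$.

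\textbf{Complexity.} We sum \Cref{cor:espolyestimatecorollary} over the $N+1$ points, running them in parallel in one pass. The worst factor is $1/(1-e^{-L})\approx 1/L$. The total is $\tilde O(N L^{-1}\error^{-4}\cdots)$. We then substitute $N\asymp \sqrt{M_kU^3/\error}\asymp L^{-(k+3)/2}U^{3/2}\error^{-1/2}$ and expand $L$ and $U$ in terms of $\maxfreq$ and $\error$. This gives $\maxfreq^{(5+k)/2}$, and the $\error$ exponent collects from $L^{-(k+5)/2}$, $U^{3/2}$, and $\error^{-4-1/2}$. Finally, $\log N$ is computed from the same expression.
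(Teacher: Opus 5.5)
Your proposal is correct and follows the paper's proof essentially step for step. It uses the same three-way split at $L$ and $U$, the same bounds $\F{1}\le\maxfreq^{1-k}\F{k}$ and $\espoly\le\F{0}$ for the truncations, and the same $M_k$ from $\f{x}e^{-t\f{x}}\le 1/(et)$ and $\f{x}^2e^{-t\f{x}}\le 4/(e^2t^2)$. It also uses the same union bound over the $N+1$ calls and the same substitution (including the $1/(1-e^{-L})$ factor) that yields $\maxfreq^{(5+k)/2}$ and $\error^{-4-3(k+1)/(2k(1-k))}$. Your reading of the loop as computing $\hat f(L+j\Delta t)$ is exactly what the paper's proof silently assumes.

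One small fix: in the tail, and equally in reconciling the paper's choice of $N$ with the factor $k/\Gamma(1-k)$, you need a lower bound on $\Gamma(1-k)$, not on $1/\Gamma(1-k)$. Concretely, you need $\Gamma(1-k)=\Gamma(2-k)/(1-k)\ge 1/(2(1-k))$, which is the paper's \Cref{lem:Gammabound}.
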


\begin{proof}
    We start with proving the correctness of the algorithm $\estimateK$, that is, proving that the output of the algorithm $\estimateK$ is an $(\error,\conf)$ estimate of $\F{k}$.
    
    For the sake of the proof, let us define the function $f(t)=\espoly(t)t^{-k-1}$. Let $U$ and $L$ be as defined in the algorithm.
    
    Observe that the output of the algorithm is $\frac{k\hat I_{\mathrm{est}}^{[k]}}{\Gamma(1-k)}$ (Line~\ref{line:Fkoutput}). In Line~\ref{line:boundFk}, the algorithm $\estimateK$ makes two calls to the subroutine $\evaluate$. From \Cref{cor:espolyestimatecorollary} we obtain that the first $\evaluate$ returns an $(\error/5, \conf/(N+1))$-approximation of $\espoly(U)$ and hence the first term is an $(\error/5, \conf/(N+1))$-approximation of $f(U)$ (let us call it $\hat f(U)$) while the second $\evaluate$ returns an $(\error/5, \conf/(N+1))$-approximation of of $\espoly(L)$ and hence the second term is an $(\error/5, \conf/(N+1))$-approximation of $f(L)$ (let us call this $\hat f(L)$). Thus the value of $\mathrm{bound}$ in Line~\ref{line:boundFk} is
    
    \begin{equation} \label{eq:bound}
        \mathrm{bound} = \hat f(U)+\hat f(L)
    \end{equation}
    which is an $(\error/5, 2\conf/(N+1))$-approximation of $f(U)+f(L)$ by an union bound. 

    Similarly in the \textbf{for} loop in Line~\ref{line:forloopFk}-\ref{line:forloopinsideFk}, we have $N-1$ many calls to $\evaluate$, and for each $j$, $c_j=\hat f(L+j\Delta t)$ is an $(\error/5, \conf/(N+1))$-approximation of $f(L+j\Delta t)$.

    Hence $\hat I_{\mathrm{est}}^{[k]}$ is given by  

    \begin{equation}\label{eq:ikest}
    \hat I_{\mathrm{est}}^{[k]}=\Delta t\left(\frac{\hat f(U)+\hat f(L)}{2}+\sum_{j=1}^{N-1}\hat f(t_j)\right)
    \end{equation}
    Finally, the algorithm outputs $\frac{k\hat I_{\mathrm{est}}^{[k]}}{\Gamma(1-k)}$, while the goal of the algorithm is the estimate $\F{k}$ which is (from \Cref{lem:Fkidentityintegral}) equal to $\frac{k\hat I^{[k]}}{\Gamma(1-k)}$. Thus if we show that with probability at least $(1-\delta)$
    
    \begin{equation}\label{eqn:Fkrequiredbound}
    |I^{[k]} - I_{\mathrm{est}}^{[k]}| \leq \error I^{[k]}
    \end{equation}
    then the output of the algorithm is an $(\error, \conf)$ of $\F{k}$. We now prove \Cref{eqn:Fkrequiredbound}. 
    
    $I^{[k]}$ has a singularity at $0$ because near $0$ we have $\espoly(t)\sim \F{1}t$ and thus the integrand behaves like $t^{-1/2}$ which blows up at $0$. Thus to compute $I^{[k]}$ need to isolate the singularity. Hence, we break the integral into three parts : a small part isolating $0$, a part that we estimate numerically, and a tail part

    \begin{equation*}
        I^{[k]}=\underbrace{\int_0^L(\dots)\ \mathrm dt}_{I_{\mathrm{small}}^{[{k}]}}+\underbrace{\int_L^U(\dots)\ \mathrm dt}_{I_{\mathrm{mid}}^{[k]}}+\underbrace{\int_U^\infty(\dots)\ \mathrm dt}_{I_{\mathrm{tail}}^{[k]}}
    \end{equation*}
    We deal with the three parts separately. In \Cref{lem:smallpartFk} we analytically upper bound the value of $I^{[k]}_{\mathrm{small}}$ and obtain  
    
    \begin{equation}\label{eqn:newsmallmain}
        I_{\mathrm{small}}^{[k]}\le (\error/5)I^{[k]}
    \end{equation}
    Then in \Cref{lem:tailpartFk} we again analytically upper bound the value of $I^{[k]}_{\mathrm{tail}}$ and obtain
    
    \begin{equation}\label{eqn:newtailmain}
        I_{\mathrm{tail}}^{[k]}\le (\error/5)I^{[k]}
    \end{equation}
    Finally in \Cref{lem:Emidbound} we show that the distance between $I^{[k]}_{\mathrm{mid}}$ and the $\hat I^{[k]}_{\mathrm{est}}$ is small with respect to $I^{[k]}$; more precisely we show that

    \begin{equation}\label{eqn:newIkIestbound}
        |I^{[k]}_{\mathrm{mid}}-\hat I^{[k]}_{\mathrm{est}}|\le \left(\frac{2\error}{5}+\frac{\error^2}{25}\right)I^{[k]}
    \end{equation}
    Using the triangle inequality and \Cref{eqn:newsmallmain}, \Cref{eqn:newtailmain} and \Cref{eqn:newIkIestbound} we obtain 
    
    \begin{align*}
        \frac{\bigg|I^{[k]}-\hat{I}_{\mathrm{est}}^{[k]}\bigg|}{I^{[k]}}
        &=\frac{|I_{\mathrm{small}}^{[k]}+I_{\mathrm{mid}}^{[k]}+I_{\mathrm{tail}}^{[k]}-\hat I_{\mathrm{est}}^{[k]}|}{I^{[k]}}
        \le \frac{I_{\mathrm{small}}^{[k]}}{I^{[k]}}+\frac{|I_{\mathrm{mid}}^{[k]}-\hat I_{\mathrm{est}}^{[k]}|}{I^{[k]}}+\frac{I_{\mathrm{tail}}^{[k]}}{I^{[k]}}\\
        &\le \frac{\error}{5}+\left(\frac{2\error}{5}+\frac{\error^2}{25}\right)+\frac{\error}{5}\\
        &=\frac{4\error}{5}+\frac{\error^2}{25}
        \le\error
    \end{align*}
    Thus $\hat I^{[k]}_{\mathrm{est}}$ is an $(\error,\conf)$ estimator of $I^{[k]}$ and hence the output of algorithm $\estimateK$ is an $(\error,\conf)$ estimator of $\F{k}$.

    Now we analyse the space and update time complexity of $\estimateK$. Note that the number of evaluations of our polynomial (which is the number of $\evaluate$ calls) required is $O(N)$ which evaluates to (for fixed $k\in (0,1)$ and $\varepsilon\to 0$) 
    
    \begin{equation*}
        N=O\left(k^{\frac32}(1-k)^{\frac{3+k}{2k}}5^{\frac3{2k}+\frac{3+k}{2(1-k)}}\varepsilon^{-\frac{k+3}{2k(1-k)}}\tau^{\frac{3+k}{2}}c^{-\frac{k+3}{2k(1-k)}}\right)
    \end{equation*}
    where $c=\inf_{y\in(1,2]}\Gamma(y)$ which approximately equals $0.88560325\dots$. Moreover, since every call to $\evaluate$ is made at a point $t\ge L$,
using $\Gamma(2-k)\ge c$ and
$\frac{1}{1-e^{-t}}\le 1+\frac1t$, we have
\[\frac{1}{1-e^{-t}}\le\left(1+\left(\frac{5k}{c}\right)^{\frac1{1-k}}\right)\tau\error^{-\frac1{1-k}}.\]
The additional logarithmic term $\log(1/(1-e^{-t}))$ from
\Cref{cor:espolyestimatecorollary} is $\tilde O_k(\log\maxfreq+\log(1/\error))$ and is absorbed by the
logarithmic factors below.

The space used and update time then is $O(N)$ times the space used and update time of $\evaluate$ (proven in \Cref{cor:espolyestimatecorollary}), which is



    
    \begin{equation*}
        \tilde O\left(a(k)\varepsilon^{-4-\frac{k+3}{2k(1-k)}}\tau^{\frac{3+k}{2}}(\log(1/\conf)+\log N)\log(1/\error)[\log |\univ|+\log\maxfreq+\log(1/\error\conf)+\log N]\log^4|\Omega|\right)
    \end{equation*}
    with 
    
    \begin{equation*}
        \log N=\tilde O\left(\frac{k^2+k+6}{2k(1-k)}+\frac{k+3}{2k(1-k)}\log(1/\error)+\frac{3+k}{2}\log\maxfreq+\log k\right)
    \end{equation*}
    
    \begin{equation*}
        a(k)=\left(1+\left(\frac{5k}{c}\right)^{\frac1{1-k}}\right)k^{\frac32}(1-k)^{\frac{3+k}{2k}}5^{\frac{3}{2k}+\frac{3+k}{2(1-k)}}c^{-\frac{k+3}{2k(1-k)}}.
    \end{equation*}
    For readability if we hide the factors of $k,\log(1/\error)$ and $\log(1/\maxfreq)$, this reads as 
    
    \begin{equation*}
        \tilde O_k\left(\tau^{\frac{5+k}{2}}\error^{-4-\frac{3(k+1)}{2k(1-k)}}\log(1/\conf)[\log m+\log(1/\conf)]\log^4|\univ|\right).
    \end{equation*}

\end{proof}

To completely finish the proof of \Cref{thm:AlgorithmEstimateKcorrectnessandspace}, we now prove the lemmas used in the proof.

\begin{lemma}\label{lem:smallpartFk}
    By choosing $L=\frac1{\maxfreq}\left[\frac{\error\Gamma(2-k)}{5k}\right]^{\frac1{1-k}}$, we have $I_{\mathrm{small}}^{[k]}\le (\error/5)I^{[k]}$.
\end{lemma}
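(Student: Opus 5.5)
Bound the integrand near the origin with the elementary inequality $1-e^{-u}\le u$ for $u\ge 0$. It gives
\[
\espoly(t)=\sum_{x\in\F{0}}\bigl(1-e^{-t\f{x}}\bigr)\le t\sum_{x}\f{x}=t\,\F{1}.
\]
Hence
\[
I^{[k]}_{\mathrm{small}}=\int_0^L \espoly(t)t^{-k-1}\,\mathrm dt\le \F{1}\int_0^L t^{-k}\,\mathrm dt=\frac{\F{1}L^{1-k}}{1-k}.
\]
This is where the integrable singularity is tamed: $k<1$ makes $t^{-k}$ integrable at $0$.

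Next, convert $\F{1}$ into $\F{k}$ using the $\maxfreq$-frequency bound. Since $1\le \f{x}\le\maxfreq$ for every $x$ in the support, we have $\f{x}=\f{x}^k\f{x}^{1-k}\le \maxfreq^{1-k}\f{x}^k$, and so $\F{1}\le\maxfreq^{1-k}\F{k}$. By \Cref{lem:Fkidentityintegral}, $\F{k}=kI^{[k]}/\Gamma(1-k)$. Putting these together,
\[
I^{[k]}_{\mathrm{small}}\le \frac{(\maxfreq L)^{1-k}}{1-k}\cdot\frac{k}{\Gamma(1-k)}\,I^{[k]}=\frac{k(\maxfreq L)^{1-k}}{\Gamma(2-k)}\,I^{[k]},
\]
where the last step uses $(1-k)\Gamma(1-k)=\Gamma(2-k)$.

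Finally, substitute the chosen $L$. Then $(\maxfreq L)^{1-k}=\error\Gamma(2-k)/(5k)$, and the right-hand side collapses exactly to $(\error/5)I^{[k]}$, which is the claim.

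The only delicate point is matching constants. One must keep track of the $1/(1-k)$ coming from integrating $t^{-k}$ and absorb it through the Gamma recurrence so that $\Gamma(2-k)$, not $\Gamma(1-k)$, appears; this is precisely why $L$ is defined with $\Gamma(2-k)$. The empty-stream case is trivial, since both sides are then $0$.
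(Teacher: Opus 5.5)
Your proof is correct and follows the paper's argument: bound $\espoly(t)\le t\,\F{1}$ using $1-e^{-u}\le u$, integrate $t^{-k}$ over $[0,L]$, and convert $\F{1}\le\maxfreq^{1-k}\F{k}$ using the frequency cap. You also spell out the constant bookkeeping, combining $(1-k)\Gamma(1-k)=\Gamma(2-k)$ with $\F{k}=kI^{[k]}/\Gamma(1-k)$, which the paper leaves implicit; and you write $\F{1}=\sum_x\f{x}$ where the paper writes $m$.
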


\begin{proof}
    For this part, note that

    \begin{equation*}
        0\le I_{\mathrm{small}}^{[k]}=\int_0^L\espoly(t)t^{-k-1}\ \mathrm dt
        \le\int_0^Lmt\cdot t^{-k-1}\ \mathrm dt
        =m\int_0^Lt^{-k}\ \mathrm dt
        =\frac{mL^{1-k}}{1-k}
    \end{equation*}
    where the second inequality follows since $1-e^{-x}\le x$. Now also note that
    
    \begin{equation*}
        m=\sum_{x\in\F{0}} \f{x}=\sum_{x\in\F{0}}\f{x}^{1-k}\f{x}^k\le \maxfreq^{1-k}\sum_{x\in\F{0}}\f{x}^k=\maxfreq^{1-k}\F{k}
    \end{equation*}
    and hence $m/\F{k}\le \maxfreq^{1-k}$. Thus using $L=\frac1{\maxfreq}\left[\frac{\error\Gamma(2-k)}{5k}\right]^{\frac1{1-k}}$ we have $I_{\mathrm{small}}^{[k]}\le (\error/5)I^{[k]}$.
    
\end{proof}

\begin{lemma}\label{lem:tailpartFk}
    By choosing $U=\left[\frac{10(1-k)}{\error}\right]^{1/k}$, we have $I_{\mathrm{tail}}^{[k]}\le (\error/5)I^{[k]}$.
\end{lemma}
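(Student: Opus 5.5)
We bound the tail by the same two-sided comparison used for \Cref{lem:smallpartFk}: the integrand is controlled crudely from above, and the result is compared against an exact lower bound on $I^{[k]}$.

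\textbf{Upper bound on the tail.} Since $1-e^{-t\f{x}}\le 1$ for every $x$, we have $\espoly(t)\le \F{0}$ for all $t$. Hence
\[
I_{\mathrm{tail}}^{[k]}=\int_U^\infty \espoly(t)t^{-k-1}\ \mathrm dt\le \F{0}\int_U^\infty t^{-k-1}\ \mathrm dt=\frac{\F{0}U^{-k}}{k}.
\]

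\textbf{Lower bound on $I^{[k]}$.} Every $x$ in the support has $\f{x}\ge 1$, so $\f{x}^k\ge 1$ and $\F{k}\ge \F{0}$. Combining this with \Cref{lem:Fkidentityintegral} gives
\[
I^{[k]}=\frac{\Gamma(1-k)}{k}\F{k}\ge \frac{\Gamma(1-k)}{k}\F{0}.
\]
It therefore suffices that $\frac{U^{-k}}{k}\le \frac{\error}{5}\cdot\frac{\Gamma(1-k)}{k}$, that is, $U^{-k}\le \error\Gamma(1-k)/5$.

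\textbf{Checking the chosen $U$.} With $U=\left[\frac{10(1-k)}{\error}\right]^{1/k}$ we get $U^{-k}=\frac{\error}{10(1-k)}$. The required inequality thus reduces to $\frac{1}{10(1-k)}\le \frac{\Gamma(1-k)}{5}$, or equivalently $2(1-k)\Gamma(1-k)\ge 1$. Since $(1-k)\Gamma(1-k)=\Gamma(2-k)$, this is the claim $\Gamma(2-k)\ge 1/2$.

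This holds because $2-k\in(1,2)$ and $\inf_{y\in(1,2]}\Gamma(y)=c\approx 0.8856>1/2$. The same constant $c$ already appears in \Cref{thm:AlgorithmEstimateKcorrectnessandspace}.

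\textbf{Main obstacle.} The only delicate point is this constant check. We must make sure the crude bounds $\espoly\le\F{0}$ and $\F{k}\ge\F{0}$ leave exactly enough slack. The factor $10(1-k)$ in $U$ is designed so that the requirement becomes $\Gamma(2-k)\ge 1/2$, which we verify using the minimum of $\Gamma$ on $(1,2]$.
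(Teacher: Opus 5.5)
Your proof is correct and matches the paper's argument: bound $\espoly\le\F{0}$ on the tail, use $\F{k}\ge\F{0}$ with \Cref{lem:Fkidentityintegral}, and reduce the claim to $\Gamma(1-k)\ge 1/(2(1-k))$, which is equivalent to $\Gamma(2-k)\ge 1/2$. The only difference is in that final step: the paper proves it in \Cref{lem:Gammabound} via convexity of $\Gamma$, using the tangent line at $2$ to get $\Gamma(y)\ge\gamma>1/2$ on $[1,2]$, whereas you cite the known numerical minimum $c\approx 0.8856$.
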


\begin{proof}
    For the tail bound, we begin by noting that
    
    \begin{equation*}
        \espoly(t)=\sum_{x\in\F{0}}(1-e^{-t\f{x}})\le \sum_{x\in\F{0}}1=\F{0}
    \end{equation*}
    Then
    
    \begin{equation*}
        I_{\mathrm{tail}}^{[k]}=\int_U^\infty\espoly(t)t^{-k-1}\ \mathrm dt
        \le |\F{0}|\int_U^\infty t^{-k-1}\ \mathrm dt
        =\frac{|\F{0}|}{kU^k}
    \end{equation*}
    Also
    
    \begin{equation}\label{eqn:upperboundIfork}
        I^{[k]}=\frac{\Gamma(1-k)}k\F{k}=\frac{\Gamma(1-k)}k\sum_{x\in\F{0}}\f{x}^k\ge \frac{\Gamma(1-k)}k\sum_{x\in\F{0}}1=\frac{\Gamma(1-k)}k|\F{0}|\ge \frac1{2k(1-k)}|\F{0}|
    \end{equation}
    where the last equality follows from \Cref{lem:Gammabound}. And thus using $U=\left[\frac{10(1-k)}{\error}\right]^{1/k}$, we get $$I_{\mathrm{tail}}^{[k]}\le (\error/5)I^{[k]}$$
    
\end{proof}

\begin{lemma}\label{lem:Emidbound}
    
    
    By choosing $N=\bigg\lceil\sqrt{\frac{5k(1-k)M_kU^3}{6\error}}\bigg\rceil$ as in $\estimateK$, we have $\bigg|I^{[k]}_{\mathrm{mid}}-\hat I^{[k]}_{\mathrm{est}}\bigg|\le \left(\frac{2\error}5+\frac{\error^2}{25}\right)I^{[k]} $.
\end{lemma}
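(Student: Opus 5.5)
We split the error into a deterministic discretization error and a stochastic estimation error. Let
\[ T_N:=\Delta t\left(\frac{f(L)+f(U)}{2}+\sum_{j=1}^{N-1}f(t_j)\right) \]
be the exact trapezoidal sum for $f(t)=\espoly(t)t^{-k-1}$ on $[L,U]$. The triangle inequality then gives
\[ |I^{[k]}_{\mathrm{mid}}-\hat I^{[k]}_{\mathrm{est}}|\le |I^{[k]}_{\mathrm{mid}}-T_N|+|T_N-\hat I^{[k]}_{\mathrm{est}}|. \]
The target is to show that each term is at most roughly $\frac{\error}{5}I^{[k]}$, up to an $\error^2/25$ slack.

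\textbf{Discretization term.} We use the trapezoidal error formula from \Cref{subsec:trapezoidalrule}:
\[ |I^{[k]}_{\mathrm{mid}}-T_N|\le \frac{(U-L)^3}{12N^2}\sup_{[L,U]}|f''|. \]
To bound $f''$, we differentiate termwise,
\[ f(t)=\sum_{x\in\F{0}}(1-e^{-t\f{x}})t^{-k-1}. \]
For each $x$, with $g=1-e^{-t\f{x}}$ and $h=t^{-k-1}$, we expand $f''=g''h+2g'h'+gh''$ using
\[ |g|\le 1,\qquad |g'|=\f{x}e^{-t\f{x}}\le \frac{1}{et},\qquad |g''|=\f{x}^2e^{-t\f{x}}\le \frac{4}{e^2t^2}, \]
where the last two follow from $\sup_y y^je^{-y}=(j/e)^j$ applied with $y=t\f{x}$. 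We also have $|h'|=(k+1)t^{-k-2}$ and $|h''|=(k+1)(k+2)t^{-k-3}$. This gives a per-element bound of $t^{-k-3}[(k+1)(k+2)+2(k+1)/e+4/e^2]$, which is decreasing in $t$ and hence at most $M_k$ on $[L,U]$. Summing over $x$ yields
\[ \sup_{[L,U]}|f''|\le M_k|\F{0}|. \]
Using $(U-L)^3\le U^3$ and \Cref{eqn:upperboundIfork}, $|\F{0}|\le 2k(1-k)I^{[k]}$, we get
\[ |I^{[k]}_{\mathrm{mid}}-T_N|\le \frac{U^3M_k\cdot 2k(1-k)}{12N^2}I^{[k]}. \]
Plugging in $N^2\ge 5k(1-k)M_kU^3/(6\error)$ makes this at most $\frac{\error}{5}I^{[k]}$. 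The only delicate part of the proof is the constant bookkeeping, which should be checked against the factor $12$ in the trapezoidal formula.

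\textbf{Estimation term.} By \Cref{cor:espolyestimatecorollary}, each of the $N+1$ calls to $\evaluate$ is an $(\error/5,\conf/(N+1))$-approximation. A union bound over all $N+1$ calls shows that, with probability at least $1-\conf$, every $\hat f(t_j)$ satisfies $|\hat f(t_j)-f(t_j)|\le \frac{\error}{5}f(t_j)$. Because the trapezoidal weights are nonnegative and $f\ge 0$, this gives
\[ |T_N-\hat I^{[k]}_{\mathrm{est}}|\le\frac{\error}{5}T_N\le \frac{\error}{5}\left(I^{[k]}_{\mathrm{mid}}+\frac{\error}{5}I^{[k]}\right)\le\left(\frac{\error}{5}+\frac{\error^2}{25}\right)I^{[k]}, \]
using the discretization bound and $I^{[k]}_{\mathrm{mid}}\le I^{[k]}$. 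Adding the two terms yields the claimed bound of $\left(\frac{2\error}{5}+\frac{\error^2}{25}\right)I^{[k]}$.
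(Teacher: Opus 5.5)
Your proposal is correct and follows essentially the same route as the paper. You split off the exact trapezoidal sum, bound the discretization error via the trapezoidal remainder with $|f''|\le M_k|\F{0}|$ (re-deriving \Cref{lem:doublederivativebound} inline) and $|\F{0}|\le 2k(1-k)I^{[k]}$, then bound the estimation error by $\frac{\error}{5}T_N\le\frac{\error}{5}\left(1+\frac{\error}{5}\right)I^{[k]}$ using a union bound over the $N+1$ calls. The constant check you flag works out exactly, since $\frac{2k(1-k)M_kU^3}{12N^2}=\frac{k(1-k)M_kU^3}{6N^2}\le\frac{\error}{5}$ for the chosen $N$.
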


\begin{proof}
    Recall from \Cref{eq:ikest} that

    \begin{equation*}
    \hat I_{\mathrm{est}}^{[k]}=\Delta t\left(\frac{\hat f(U)+\hat f(L)}{2}+\sum_{j=1}^{N-1}\hat f(t_j)\right)
    \end{equation*}
    Since $\hat f(x)$ is an $(\error/5,\conf/(N+1))$ approximation of $f(x)$ for $x\in \{L,U\}\cup \{L+tj\}_{t\in [N-1]}$, $\hat I^{[k]}_{\mathrm{est}}$ is an $(\error,\conf)$-approximation of

    \begin{equation*}
        \hat I^{[k]}_{\mathrm{mid}}:=\Delta t\left(\frac{f(U)+f(L)}{2}+\sum_{j=1}^{N-1}f(t_j)\right)
    \end{equation*}
    by the union bound. That is, with probability at last $1-\conf$
    
    \begin{equation}\label{eqn:boundforhatmidandest}
        |\hat I^{[k]}_{\mathrm{mid}}-\hat I^{[k]}_{\mathrm{est}}|\le \frac{\error}5\hat I^{[k]}_{\mathrm{mid}}
    \end{equation}
    Also, by the definition of trapezoidal rule in \Cref{subsec:trapezoidalrule}, there exists some $\xi\in (L,U)$ such that

    \begin{equation*}
        |\hat I^{[k]}_{\mathrm{mid}}-I^{[k]}_{\mathrm{mid}}|=\frac{(U-L)^3}{12N^2}|f''(\xi)|\le \frac{U^3}{12N^2}|f''(\xi)|
    \end{equation*}
    Then using \Cref{lem:doublederivativebound} gives us that
    
    \begin{equation*}
        |f''(t)|\le\frac1{L^{k+3}}\sum_{x\in\F{0}}\left[(k+1)(k+2)+\frac{2(k+1)}e+\frac4{e^2}\right]=M_k|\F{0}|
    \end{equation*}
    where
    
    \begin{equation*}
        M_k=L^{-k-3}\left[(k+1)(k+2)+\frac{2(k+1)}e+\frac4{e^2}\right]
    \end{equation*}
    and thus using \Cref{eqn:upperboundIfork} and setting $N=\bigg\lceil\sqrt{\frac{5k(1-k)M_kU^3}{6\error}}\bigg\rceil$ we have
    
    \begin{equation}\label{eqn:Emidboundinlemma}
        \frac{\bigg|\hat I^{[k]}_{\mathrm{mid}}-I^{[k]}_{\mathrm{mid}}\bigg|}{I^{[k]}}= \frac{U^3}{12N^2}\frac{2k(1-k)M_k|\F{0}|}{|\F{0}|}=\frac{k(1-k)M_kU^3}{6N^2}\le \frac{\error}{5}
    \end{equation}
    Then using \Cref{eqn:boundforhatmidandest}, \Cref{eqn:Emidboundinlemma} and the fact that $I^{[k]}_{\mathrm{mid}}\le I^{[k]}$, we obtain

    \begin{align*}
        |I^{[k]}_{\mathrm{mid}}-\hat I^{[k]}_{\mathrm{est}}|&=|I^{[k]}_{\mathrm{mid}}-\hat I^{[k]}_{\mathrm{mid}}+\hat I^{[k]}_{\mathrm{mid}}-\hat I^{[k]}_{\mathrm{est}}|\le |I^{[k]}_{\mathrm{mid}}-\hat I^{[k]}_{\mathrm{mid}}|+|\hat I^{[k]}_{\mathrm{mid}}-\hat I^{[k]}_{\mathrm{est}}|\\
        &\le \frac{\error}5I^{[k]}+\frac{\error}5I^{[k]}_{\mathrm{mid}}\\
        &\le \frac{\error}{5}I^{[k]}+\frac{\error}{5}\left(1+\frac{\error}{5}\right)I^{[k]}\\
        &=\left(\frac{2\error}5+\frac{\error^2}{25}\right)I^{[k]} 
    \end{align*}
\end{proof}

\section{Other statistics}\label{sec:otherstatistics}

Using similar techniques that we used for $\F{k}$ estimation in \Cref{sec:Fk01}, we can estimate several other statistics. We provide examples of Saturated Richness and Smoothed Log-Frequency Aggregate as examples, and then treat the general case in \Cref{sec:generalbernsteinestimationsection}.

\subsection{Saturated Richness}

Let us consider a $\maxfreq$-frequency bounded stream with stream length $\streamlen$. The \emph{Saturated richness} of such a stream is defined as

\begin{equation*}
    \SR=\sum_{x\in\F{0}}\frac{\f{x}}{\f{x}+1}
\end{equation*}
We will actually be looking at a more general form of this statistic, namely the \emph{$r$-Saturated richness} of a stream, defined at a scale $r>0$ as (also see \Cref{def:SRdefinition})

\begin{equation*}
    \SR(r)=\sum_{x\in\F{0}}\frac{\f{x}}{\f{x}+r}
\end{equation*}
We chose the name \emph{Saturated Richness} because it builds on the basic idea of "richness" in streaming algorithms - the number of unique items, or $\F{0}$, which is key for stuff like spotting heavy hitters or estimating counts in massive data streams. The name “saturated” reflects that each type’s contribution is capped (saturates) as its frequency grows: rare types contribute about $1/2$ while very frequent types contribute nearly $1$. Note that $\SR(1)=\SR$, which prompts the generalization.

\paragraph{} To facilitate developing our algorithm, we start again with an intuition.

\paragraph{\textsc{Intuition:}} Let us consider the jar of marbles again, and this time let us pick a random probe scale $t\sim \mathrm{Exp}(r)$. Then for a marble type $x$, the average response we get from such sensors is $\int_0^\infty (1-e^{-t\f{x}})re^{-rt}\ \mathrm dt$. Thought of in another way, this is actually a race; attach a clock to each marble of type $x$, where the clock follows a $\mathrm{Exp}(1)$ distribution (so these clocks have rate $1$), and there is a global clock (the probe scale) moving at rate $r$ - a marble survived if its own clock goes off before the global clock. Then by standard probability, the probability that at least one marble of type $x$ has survived after the global clock goes off is $\f{x}/(r+\f{x})$, which is also the expected value of the sensor output at scale $t$ (since the sensor replies if a marble survives or not at that strength and we aggregate those values). Thus we have the identity

\begin{equation*}
    \int_0^\infty (1-e^{-t\f{x}})re^{-rt}\ \mathrm dt=\frac{\f{x}}{\f{x}+r}
\end{equation*}

Now we formalize this. We provide an elementary proof of \Cref{lem:SRintegralidentity} and in \Cref{sec:deeperreasonsappendix} we discuss about its deeper connections to the Laplace transform.

\begin{lemma}\label{lem:SRintegralidentity}
    Let

    \begin{equation*}
        I_{\SR}=\int_0^\infty \espoly(t)re^{-rt}\ \mathrm dt
    \end{equation*}
    Then the following identity holds,

    \begin{equation*}
        \SR(r)=I_{\SR(r)}
    \end{equation*}
\end{lemma}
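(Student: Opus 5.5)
The plan mirrors the proof of \Cref{lem:Fkidentityintegral}: reduce the identity to a single-frequency computation and then sum over the distinct elements. First I expand $\espoly(t)=\sum_{x\in\F{0}}(1-e^{-t\f{x}})$, which is a finite sum since only finitely many $x$ have $\f{x}\ge 1$. Linearity of the integral then exchanges sum and integral, with no Fubini subtlety:
\begin{equation*}
I_{\SR}=\int_0^\infty \espoly(t)re^{-rt}\ \mathrm dt=\sum_{x\in\F{0}}\int_0^\infty (1-e^{-t\f{x}})re^{-rt}\ \mathrm dt .
\end{equation*}
Each summand converges absolutely, since the integrand is bounded by $re^{-rt}$ and $\int_0^\infty re^{-rt}\,\mathrm dt=1$.

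The core step is the scalar identity $J(s):=\int_0^\infty (1-e^{-ts})re^{-rt}\ \mathrm dt=\frac{s}{s+r}$ for $s\ge 0$, $r>0$. I would split $J(s)$ as $\int_0^\infty re^{-rt}\,\mathrm dt-\int_0^\infty re^{-(r+s)t}\,\mathrm dt$. Both pieces are elementary exponential integrals, equal to $1$ and $\frac{r}{r+s}$ respectively. Their difference is $1-\frac{r}{r+s}=\frac{s}{r+s}$.

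Equivalently, in the language of \Cref{def:laplace}, $J(s)=1-r\,\mathcal L[e^{-rt}](s)$. This matches the row $\frac{x}{r+x}\leftrightarrow re^{-rt}$ of \Cref{tab:bernsteinfunctionsanddensities}, and it also formalizes the exponential-race intuition given before the lemma.

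Substituting $s=\f{x}$ and summing gives $I_{\SR}=\sum_{x\in\F{0}}\frac{\f{x}}{\f{x}+r}=\SR(r)$, as claimed. Elements with $\f{x}=0$ contribute $0$ on both sides, so summing over $\F{0}$ rather than over all of $\univ$ changes nothing.

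I do not expect a real obstacle. The only points needing care are the convergence of each integral, which is handled by domination by $re^{-rt}$, and justifying the sum–integral exchange, which holds because the sum is finite.
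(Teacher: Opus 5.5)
Your proposal is correct and follows essentially the same route as the paper: you split $\int_0^\infty (1-e^{-t\f{x}})re^{-rt}\,\mathrm dt$ into two exponential integrals to get $1-\frac{r}{\f{x}+r}=\frac{\f{x}}{\f{x}+r}$, then exchange the finite sum over distinct elements with the integral and sum. Your remarks on convergence (domination by $re^{-rt}$) and on the finiteness of the sum only make explicit what the paper's proof leaves implicit.
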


\begin{proof}
   Note that
    
    \begin{align*}
        \int_0^\infty (1-e^{-\f{x}t})re^{-rt}\ \mathrm dt&=r\int_0^\infty e^{-rt}\ \mathrm dt-r\int_0^\infty e^{-(\f{x}+r)t}\ \mathrm dt=1-\frac{r}{\f{x}+r}=\frac{\f{x}}{\f{x}+r}
    \end{align*}
    Then

    \begin{align*}
        I_{\SR(r)}&=\int_0^\infty\espoly(t)re^{-rt}\ \mathrm dt=\int_0^\infty \sum_{x\in\F{0}}(1-e^{-\f{x}t})re^{-rt}\ \mathrm dt=\sum_{x\in\F{0}}\int_0^\infty (1-e^{-\f{x}t})re^{-rt}\ \mathrm dt\\
        &=\sum_{x\in\F{0}}\frac{\f{x}}{\f{x}+r}=\SR(r)
    \end{align*}
\end{proof}

We will need a small lemma in the proof of correctness (\Cref{thm:estimateSRcorrectness}, specifically \Cref{lem:SRtailbound})

\begin{lemma}\label{lem:increasing}
    The functon $\frac{x}{x+r}$ is increasing for $x\ge 0$ and fixed $r>0$.
\end{lemma}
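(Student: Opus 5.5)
The plan is to show directly that $g(x)=\frac{x}{x+r}$ has a positive derivative on $[0,\infty)$ for fixed $r>0$. First note that $g$ is well defined and differentiable on $[0,\infty)$: its denominator satisfies $x+r\ge r>0$. By the quotient rule,
\begin{equation*}
g'(x)=\frac{(x+r)-x}{(x+r)^2}=\frac{r}{(x+r)^2}.
\end{equation*}
This is strictly positive since $r>0$ and $(x+r)^2>0$. By the mean value theorem, for any $0\le x_1<x_2$ there is $\xi\in(x_1,x_2)$ with
\begin{equation*}
g(x_2)-g(x_1)=g'(\xi)(x_2-x_1)>0.
\end{equation*}
Hence $g$ is strictly increasing on $[0,\infty)$.

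A derivative-free alternative is to write $g(x)=1-\frac{r}{x+r}$. As $x$ increases, $x+r$ increases, so $\frac{r}{x+r}$ decreases (because $r>0$), and therefore $g$ increases. Either route is a one-line argument.

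There is no real obstacle. The only point needing care is the domain: positivity of $r$ guarantees that the denominator never vanishes on $[0,\infty)$ and that the sign of $g'$ is positive.

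In the later use (\Cref{lem:SRtailbound}), this monotonicity gives $\frac{\f{x}}{\f{x}+r}\ge\frac{1}{1+r}$ for every $x$ with $\f{x}\ge 1$. Summing over $x$ yields $\SR(r)\ge \F{0}/(1+r)$, which is the lower bound on the statistic needed to control the truncation errors relative to $\SR(r)$.
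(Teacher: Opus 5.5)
Your proof is correct and matches the paper's: both compute $g'(x)=\frac{r}{(x+r)^2}>0$ with the quotient rule. Your mean value theorem step and the derivative-free rewriting $g(x)=1-\frac{r}{x+r}$ just make explicit what the paper leaves implicit.
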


\begin{proof}
    Let $h(x)$ be the function. Then $h'(x)=\frac{(x+r)-x}{(x+r)^2}=\frac{r}{(x+r)^2}>0$ and thus the result holds.
\end{proof}

We now provide an algorithm to compute $\SR(r)$ called $\estimateSR$. \Cref{thm:estimateSRcorrectness} then proves the correctness of the algorithm along with the space and update time complexity.

\SetAlgoNoLine%
\begin{algorithm}[htb]
    \DontPrintSemicolon%
    \caption{$\estimateSR(\stream = \langle a_1, \dots, a_{\streamlen}\rangle,r,\error,\conf)$}
    
    \SetKwInOut{Input}{Input}
    \SetKwInOut{Output}{Output}
    
    \Input{stream $\stream$, $r>0$, overall error $\error$, overall failure probability $\conf$\;}
    \Output{($(\error,\conf)$-approximation of the $r$-Saturated richness ($\SR(r)$) of $\stream$\;}
    
    $U\gets \dfrac1r\ln\left(\frac{4(r+1)}{\error}\right)$\;
    $N \gets \bigg\lceil\sqrt{\dfrac{U^3r(r+1)(r^2+(r+\maxfreq)^2)}{3\error}}\bigg\rceil$\;
    $\Delta t=\dfrac{U}{N}$\;
    $c\gets 0$\;
    \BlankLine
    $\label{line:boundSR}\mathrm{bound}=\evaluate(\stream, e^{-U}, \error/4, \conf/(N+1))re^{-rU}$\\
    
    \BlankLine
    
    \For{$j \gets 1$ \KwTo $N-1$\label{line:forloopSR}}{
        $c_j\gets \evaluate(\stream, e^{-j\cdot\Delta t},\error/4,\conf/(N+1))$\label{line:forloopinsideSR}\;
    }
    
    \BlankLine
    
    $c\gets \sum_{j=1}^{N-1}c_j$\;
    $\hat I_{\mathrm{est}}=\Delta t\cdot\left(\frac{\mathrm{bound}}2+c\right)$\;
    \BlankLine
    \label{line:SRoutput}\Return $\hat I_{\mathrm{est}}$\;
    
\end{algorithm}

\begin{theorem}\label{thm:estimateSRcorrectness}
    The algorithm $\estimateSR$ outputs an $(\varepsilon,\delta)$-approximation of $\SR(r)$. $\estimateSR$ takes $\tilde O\left(\maxfreq^2\error^{-5}\ln^2(1/\error)(\log|\univ|+\log\maxfreq+\log(1/\error)+\log(1/\conf))\log^4|\univ|\right)$ space and update time.
\end{theorem}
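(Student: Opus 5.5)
We follow the template of \Cref{thm:AlgorithmEstimateKcorrectnessandspace}, now with the density $re^{-rt}$ from \Cref{lem:SRintegralidentity}. Set $f(t)=\espoly(t)re^{-rt}$, so that $\SR(r)=\int_0^\infty f(t)\,\mathrm dt$. Since $re^{-rt}$ is bounded near $0$ and $\espoly(0)=0$, there is no singularity at the origin. We therefore take $L=0$ and split only into $I_{\mathrm{mid}}=\int_0^U f$ and $I_{\mathrm{tail}}=\int_U^\infty f$. The endpoint value $f(0)=0$ is exact, by \Cref{cor:espolyestimatecorollary} with $t=0$, which explains why $\mathrm{bound}$ contains only the $U$ term.

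\textbf{Tail.} Since $\espoly(t)\le|\F{0}|$, we get $I_{\mathrm{tail}}\le |\F{0}|e^{-rU}$. Every $x\in\F{0}$ has $\f{x}\ge1$, so \Cref{lem:increasing} gives $\SR(r)\ge |\F{0}|/(r+1)$. With $U=\frac1r\ln(4(r+1)/\error)$ this yields $I_{\mathrm{tail}}\le(\error/4)\SR(r)$.

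\textbf{Discretization.} We bound $f''$ on $[0,U]$ termwise. For $g_x(t)=(1-e^{-\f{x}t})e^{-rt}=e^{-rt}-e^{-(r+\f{x})t}$ we have $|g_x''|\le r^2+(r+\f{x})^2\le r^2+(r+\maxfreq)^2$. Hence $|f''|\le r|\F{0}|\,(r^2+(r+\maxfreq)^2)$. The trapezoidal error of \Cref{subsec:trapezoidalrule} is then at most
\[
\frac{U^3}{12N^2}\,r\,(r^2+(r+\maxfreq)^2)\,|\F{0}|\le \frac{U^3r(r+1)(r^2+(r+\maxfreq)^2)}{12N^2}\SR(r).
\]
The chosen $N$ makes this at most $(\error/4)\SR(r)$, with some slack. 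This is the only place where $\maxfreq$ enters, exactly as in \Cref{lem:Emidbound}.

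\textbf{Estimation error and assembly.} Each of the $N$ calls to $\evaluate$ is an $(\error/4,\conf/(N+1))$-approximation. All weights $\Delta t\cdot re^{-rt}$ are nonnegative, so by a union bound the estimator is within factor $(1\pm\error/4)$ of the trapezoidal sum $\hat I_{\mathrm{mid}}$ with probability at least $1-\conf$. Also $\hat I_{\mathrm{mid}}\le(1+\error/4)\SR(r)$, by the discretization bound together with $I_{\mathrm{mid}}\le\SR(r)$. Combining tail, discretization and estimation errors gives total error at most $(3\error/4+\error^2/16)\SR(r)\le\error\,\SR(r)$.

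One subtlety to check is that the grid loop $c_j=\evaluate(\cdot)$ must include the factor $re^{-rt_j}$. We read it as implicitly multiplied, as in $\estimateK$.

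\textbf{Complexity.} For fixed $r$, $U=O(\ln(1/\error))$ and $N=O(\maxfreq\,\error^{-1/2}\ln^{3/2}(1/\error))$. Each call costs the bound of \Cref{cor:espolyestimatecorollary} with accuracy $\error/4$, confidence $\conf/(N+1)$ and factor $1/(1-e^{-t})$.

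This last factor is the main obstacle, because grid points come arbitrarily close to $0$. At $t_1=\Delta t$ the factor is about $1/\Delta t\approx N/U$. A naive sum over the grid gives $\sum_j 1/(1-e^{-t_j})\approx (N/U)\ln N$ rather than $N\cdot N/U$. So the total cost is $\tilde O(N\log N/U)\cdot\tilde O(\error^{-4}\log^4|\univ|\cdots)$, which is at most $\tilde O(\maxfreq\,\error^{-9/2})$. This already lies within the claimed $\maxfreq^2\error^{-5}\ln^2(1/\error)$.

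If a crude worst-case bound is preferred, use $\frac{1}{1-e^{-t}}\le 1+\frac1{\Delta t}$ for every call. Multiplying by $N$ then gives the stated $\maxfreq^2\error^{-5}$, up to logarithmic factors. The $\log N$ and $\log(1/(1-e^{-\Delta t}))$ terms are $O(\log\maxfreq+\log(1/\error))$ and are absorbed into the bracketed logarithmic factor.
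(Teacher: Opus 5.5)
Your proof is correct and follows the paper's argument essentially step for step. You use the same split $\int_0^U+\int_U^\infty$, the same tail bound via $\SR(r)\ge|\F{0}|/(r+1)$, the same termwise bound $|f''|\le|\F{0}|\,r(r^2+(r+\maxfreq)^2)$, and the same $\frac{\error}{4}+\frac{\error}{4}+\frac{\error}{4}(1+\frac{\error}{4})$ assembly; your reading of the loop as returning $\hat f(t_j)$, i.e.\ with the factor $re^{-rt_j}$ included, is also how the paper's proof treats it.

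On complexity, the paper's stated $\maxfreq^2\error^{-5}\ln^2(1/\error)$ is exactly your crude bound $N(1+1/\Delta t)\error^{-4}$. Your harmonic-sum refinement is valid: $\sum_j(1-e^{-t_j})^{-1}\le N+\frac{N}{U}H_N$, where you dropped the harmless $+N$ term. It shows the cost is in fact $\tilde O(\maxfreq\,\error^{-9/2})$ for fixed $r$, which is sharper than the paper's bound.
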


\begin{proof}
    We start with proving the correctness of the algorithm $\estimateSR$, that is, proving that the output of the algorithm $\estimateSR$ is an $(\error,\conf)$ estimate of $\SR(r)$.
    
    For the sake of the proof, let us define the function $f(t)=\espoly(t)re^{-rt}$. Let $U$ be as defined in the algorithm.
    
    Observe that the output of the algorithm is $\hat I_{\mathrm{est}}$ (Line~\ref{line:SRoutput}). In Line~\ref{line:boundSR}, the algorithm $\estimateSR$ makes a call to the subroutine $\evaluate$. From \Cref{cor:espolyestimatecorollary} we obtain that the $\evaluate$ call returns an $(\error/4, \conf/(N+1))$-approximation of $\espoly(U)$ and hence the first term is an $(\error/4, \conf/(N+1))$-approximation of $f(U)$ (let us call it $\hat f(U)$). Thus the value of $\mathrm{bound}$ in Line~\ref{line:boundSR} is
    
    \begin{equation} \label{eq:boundSR}
        \mathrm{bound} = \hat f(U)
    \end{equation}
    which is an $(\error/5, \conf/(N+1))$-approximation of $f(U)$. 

    Similarly in the \textbf{for} loop in Line~\ref{line:forloopSR}-\ref{line:forloopinsideSR}, we have $N-1$ many calls to $\evaluate$, and for each $j$, $c_j=\hat f(j\Delta t)$ is an $(\error/5, \conf/(N+1))$-approximation of $f(j\Delta t)$.

    Hence $I_{\mathrm{est}}$ is given by  

    \begin{equation}\label{eq:IestSR}
    \hat I_{\mathrm{est}}=\Delta t\left(\frac{\hat f(U)}{2}+\sum_{j=1}^{N-1}\hat f(t_j)\right)
    \end{equation}
    Finally, the algorithm outputs $\hat I_{\mathrm{est}}$, while the goal of the algorithm is the estimate $\SR(r)$ which is (from \Cref{lem:SRintegralidentity}) equal to $I_{\SR(r)}$. Thus if we show that with probability at least $(1-\delta)$
    
    \begin{equation}\label{eqn:SRrequiredbound}
    |I_{\SR(r)}- I_{\mathrm{est}}| \leq \error
    \end{equation}
    then the output of the algorithm is an $(\error, \conf)$ of $\F{k}$. We now prove \Cref{eqn:SRrequiredbound}.

    $I_{\SR(r)}$ has no singularity at $0$, so it suffices to break it into two pieces, one a tail piece and the remaining piece we estimate using the trapezoidal rule. Concretely, we do

    \begin{equation*}
        I_{\SR(r)}=\underbrace{\int_0^U(\dots)\ \mathrm dt}_{I_{\mathrm{trap}}}+\underbrace{\int_U^\infty(\dots)\ \mathrm dt}_{I_{\mathrm{tail}}}
    \end{equation*}
    We deal with the two parts separately. In \Cref{lem:SRtailbound} we analytically upper bound the value of $I_{\mathrm{tail}}$ and obtain

    \begin{equation}\label{eqn:tailboundSR}
        I_{\mathrm{tail}}\le \frac{\error}4I_{\SR(r)}
    \end{equation}
    Then in \Cref{lem:ItrapIestboundSR} we show that the distance between $I_{\mathrm{trap}}$ and $\hat I_{\mathrm{est}}$ is small with respect to $I_{\SR(r)}$; more precisely we show that

    \begin{equation}\label{eqn:IkIestboundSR}
        |I_{\mathrm{trap}}-\hat I_{\mathrm{est}}|\le \left(\frac{\error}2+\frac{\error^2}{16}\right)I_{\SR(r)}
    \end{equation}
    Using the triangle inequality and \Cref{eqn:tailboundSR} and \Cref{eqn:IkIestboundSR}, we obtain

    \begin{align*}
        \frac{|I_{\SR(r)}-\hat I_{\mathrm{est}}|}{I_{\SR(r)}}&=\frac{|I_{\mathrm{trap}}+I_{\mathrm{tail}}-\hat I_{\mathrm{est}}|}{I_{\SR(r)}}\le \frac{|I_{\mathrm{trap}}-\hat I_{\mathrm{est}}|}{I_{\SR(r)}}-\frac{I_{\mathrm{tail}}}{I_{\SR(r)}}\\
        &\le \left(\frac{\error}2+\frac{\error^2}{16}\right)+\frac{\error}4=\frac{3\error}4+\frac{\error^2}{16}\le \error
    \end{align*}
    Thus $I_{\mathrm{est}}$ is an $(\error,\conf)$ estimator of $I_{\mathrm{\SR}}$ and hence the output of algorithm $\estimateSR$ is an $(\error,\conf)$ estimator of $\SR(r)$.
    
    Now we analyse the space and update time complexity of $\estimateSR$. Note that the number of evaluations of our polynomial (which is the number of $\evaluate$ calls) is $O(N)$ which is $O(\maxfreq\error^{-1/2}\ln^{3/2}(1/\error))$. Plugging this in along with the space and update time of one $\evaluate$ call (which is proven in \Cref{cor:espolyestimatecorollary}) gives the space and update time complexity of $\estimateSR$ as

    \begin{equation*}
        \tilde O\left(\maxfreq^2\error^{-5}\ln^2(1/\error)(\log|\univ|+\log\maxfreq+\log(1/\error)+\log(1/\conf))\log^4|\univ|\right)
    \end{equation*}

\end{proof}

To completely finish the proof of \Cref{thm:estimateSRcorrectness}, we now prove the lemmas used in the proof.

\begin{lemma}\label{lem:SRtailbound}
    By choosing $U=\frac1r\ln(4(r+1)/\error)$, we have $I_{\mathrm{tail}}\le (\error/4)I_{\SR(r)}$.
\end{lemma}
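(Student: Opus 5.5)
We bound the tail crudely and compare it with a crude lower bound on $I_{\SR(r)}$, in the same way as \Cref{lem:tailpartFk}. For the tail, use $\espoly(t)=\sum_{x\in\F{0}}(1-e^{-t\f{x}})\le|\F{0}|$ for all $t\ge 0$. This gives
\begin{equation*}
I_{\mathrm{tail}}=\int_U^\infty \espoly(t)re^{-rt}\ \mathrm dt\le |\F{0}|\int_U^\infty re^{-rt}\ \mathrm dt=|\F{0}|e^{-rU}.
\end{equation*}

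For the lower bound, use \Cref{lem:SRintegralidentity} to write $I_{\SR(r)}=\SR(r)=\sum_{x\in\F{0}}\frac{\f{x}}{\f{x}+r}$. Every $x\in\F{0}$ has $\f{x}\ge 1$, so \Cref{lem:increasing} gives $\frac{\f{x}}{\f{x}+r}\ge\frac{1}{1+r}$. Hence
\begin{equation*}
I_{\SR(r)}\ge\frac{|\F{0}|}{r+1}.
\end{equation*}
Note that we deliberately do not use any upper bound on $\f{x}$ here, so $\maxfreq$ does not enter this lemma.

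Now substitute $U=\frac1r\ln\bigl(4(r+1)/\error\bigr)$, which gives $e^{-rU}=\frac{\error}{4(r+1)}$. Combining the two bounds,
\begin{equation*}
I_{\mathrm{tail}}\le\frac{\error|\F{0}|}{4(r+1)}\le\frac{\error}{4}I_{\SR(r)}.
\end{equation*}
The empty-stream case is trivial, since then both sides are $0$.

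There is no real obstacle in this argument. The only point needing care is to lower bound each summand by its value at $\f{x}=1$ via monotonicity; this is exactly what produces the factor $r+1$ in the choice of $U$.
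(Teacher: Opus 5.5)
Your proposal is correct and follows the paper's proof essentially verbatim: the paper likewise bounds $I_{\mathrm{tail}}\le|\F{0}|e^{-rU}$ using $\espoly(t)\le|\F{0}|$, lower-bounds $I_{\SR(r)}\ge|\F{0}|/(1+r)$ via the monotonicity in \Cref{lem:increasing}, and substitutes $e^{-rU}=\error/(4(r+1))$. Your remark that each $x\in\F{0}$ has $\f{x}\ge 1$ and your handling of the empty stream are small clarifications the paper leaves implicit.
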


\begin{proof}
    Observe that

    \begin{equation*}
        I_{\SR(r)}=\SR(r)=\sum_{x\in\F{0}}\frac{\f{x}}{\f{x}+r}\ge\sum_{x\in\F{0}}\frac1{1+r}=\frac{|\F{0}|}{1+r}
    \end{equation*}
    where for the inequality, we used the fact that $\frac{\f{x}}{\f{x}+r}$ is increasing in $\f{x}$ from \Cref{lem:increasing}. Now for $I_{\mathrm{tail}}$, using $(1-e^{-\lambda t})\le 1$ we have

    \begin{equation*}
        I_{\mathrm{tail}}=\int_U^\infty\espoly(t)re^{-rt}\ \mathrm dt\le |\F{0}|\int_U^\infty re^{-rt}\ \mathrm dt=|\F{0}|e^{-rU}
    \end{equation*}
    Thus using $U=\frac1r\ln(4(r+1)/\error)$, we have

    \begin{equation*}
        \frac{I_{\mathrm{tail}}}{I_{\SR(r)}}\le \frac{\error}{4}
    \end{equation*}
\end{proof}

\begin{lemma}\label{lem:ItrapIestboundSR}
    By choosing $N=\bigg\lceil\sqrt{\frac{U^3r(r+1)(r^2+(r+\maxfreq)^2)}{3\error}}\bigg\rceil$ as in $\estimateSR$, we have $|I_{\mathrm{trap}}-\hat I_{\mathrm{est}}|\le \left(\frac{\error}2+\frac{\error^2}{16}\right)I_{\SR(r)}$
\end{lemma}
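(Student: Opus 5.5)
The plan mirrors \Cref{lem:Emidbound}: split the error into a statistical part (estimated versus exact trapezoid sum) and a discretization part (trapezoid sum versus the integral over $[0,U]$).

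\textbf{Statistical part.} Define the exact trapezoid sum
\[
\hat I_{\mathrm{trap}}:=\Delta t\Bigl(\tfrac{f(0)+f(U)}{2}+\sum_{j=1}^{N-1}f(j\Delta t)\Bigr),
\]
where $f(t)=\espoly(t)re^{-rt}$. Since $f(0)=0$, the left endpoint costs nothing; this matches the algorithm, which only evaluates at $U$ and the interior nodes. By \Cref{cor:espolyestimatecorollary} and a union bound over the $N$ calls, with probability at least $1-\conf$ every $\hat f(t_j)$ is within a factor $1\pm\error/4$ of $f(t_j)$. All weights are nonnegative, so this gives
\[
|\hat I_{\mathrm{trap}}-\hat I_{\mathrm{est}}|\le(\error/4)\,\hat I_{\mathrm{trap}}.
\]

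\textbf{Discretization part.} By the trapezoidal error formula of \Cref{subsec:trapezoidalrule}, with $\Delta t=U/N$,
\[
|\hat I_{\mathrm{trap}}-I_{\mathrm{trap}}|\le \frac{U^3}{12N^2}\sup_{[0,U]}|f''|.
\]
To bound $f''$, write $f=\sum_{x\in\F{0}}g_x$ with
\[
g_x(t)=r\bigl(e^{-rt}-e^{-(r+\f{x})t}\bigr),\qquad g_x''(t)=r\bigl(r^2e^{-rt}-(r+\f{x})^2e^{-(r+\f{x})t}\bigr).
\]
This gives $|g_x''|\le r\bigl(r^2+(r+\maxfreq)^2\bigr)$, using $\f{x}\le\maxfreq$. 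Summing over $x$,
\[
|f''|\le r\bigl(r^2+(r+\maxfreq)^2\bigr)|\F{0}|.
\]
The proof of \Cref{lem:SRtailbound} gives $|\F{0}|\le (r+1)I_{\SR(r)}$. Combining with the choice of $N$, for which $N^2\ge U^3r(r+1)(r^2+(r+\maxfreq)^2)/(3\error)$,
\[
|\hat I_{\mathrm{trap}}-I_{\mathrm{trap}}|\le \frac{U^3\,r(r+1)(r^2+(r+\maxfreq)^2)}{12N^2}\,I_{\SR(r)}\le\frac{\error}{4}\,I_{\SR(r)}.
\]

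\textbf{Combining.} By the triangle inequality, and using $\hat I_{\mathrm{trap}}\le(1+\error/4)I_{\mathrm{trap}}$ from the discretization bound together with $I_{\mathrm{trap}}\le I_{\SR(r)}$ (the integrand is nonnegative),
\[
|I_{\mathrm{trap}}-\hat I_{\mathrm{est}}|\le \frac{\error}{4}I_{\SR(r)}+\frac{\error}{4}\Bigl(1+\frac{\error}{4}\Bigr)I_{\SR(r)}=\Bigl(\frac{\error}{2}+\frac{\error^2}{16}\Bigr)I_{\SR(r)}.
\]

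The main obstacle is the second-derivative step. I have to keep the constants tight enough that the stated $N$ (with the factor $3$ rather than $12$) suffices. The crude bound above leaves a factor-$4$ slack, so this should go through. The remaining care point is the node accuracy in the statistical part: the algorithm calls $\evaluate$ with $\error/4$, while the proof of \Cref{thm:estimateSRcorrectness} loosely says $\error/5$; I will use $\error/4$ consistently.
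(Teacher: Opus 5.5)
Your proposal is correct and follows the paper's proof essentially step for step: the same split into the statistical error $|\hat I_{\mathrm{trap}}-\hat I_{\mathrm{est}}|\le(\error/4)\hat I_{\mathrm{trap}}$ and the trapezoidal error, the latter controlled by $|f''|\le|\F{0}|\,r(r^2+(r+\maxfreq)^2)$ (the paper's \Cref{lem:SRfdoublederivbound}) together with $|\F{0}|\le(r+1)I_{\SR(r)}$. One small wording fix in your combining step: the discretization bound gives $\hat I_{\mathrm{trap}}\le I_{\mathrm{trap}}+(\error/4)I_{\SR(r)}\le(1+\error/4)I_{\SR(r)}$, not $\hat I_{\mathrm{trap}}\le(1+\error/4)I_{\mathrm{trap}}$, but the former is exactly what your final display uses.
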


\begin{proof}
    Recall from \Cref{eqn:IkIestboundSR} that

    \begin{equation*}
        \hat I_{\mathrm{est}}=\Delta t\left(\frac{\hat f(U)}2+\sum_{j=1}^{N-1}\hat f(t_j)\right)
    \end{equation*}
    Since $\hat f(x)$ is an $(\error/4,\conf/(N+1))$ approximation of $f(x)$ for $x\in \{U\}\cup \{j\Delta t\}_{j\in [N-1]}$, $\hat I_{\mathrm{est}}$ is an $(\error,\conf)$-approximation of

    \begin{equation*}
        \hat I_{\mathrm{trap}}:=\Delta t\left(\frac{f(U)}{2}+\sum_{j=1}^{N-1}f(t_j)\right)
    \end{equation*}
    by the union bound. That is, with probability

    \begin{equation}\label{eqn:midestboundSR}
        |\hat I_{\mathrm{trap}}-\hat I_{\mathrm{est}}|\le \frac{\error}4\hat I_{\mathrm{mid}}
    \end{equation}
    Also, by the definition of trapezoidal rule in \Cref{subsec:trapezoidalrule}, there exists some $\xi\in (L,U)$ such that

    \begin{equation*}
        |\hat I_{\mathrm{trap}}-I_{\mathrm{\SR}}|=\frac{(U-L)^3}{12N^2}|f''(\xi)|\le \frac{U^3}{12N^2}|f''(\xi)|
    \end{equation*}
    Computation of the upper bound on the absolute value of the double derivative of $f$ from \Cref{lem:SRfdoublederivbound} gives us that

    \begin{equation*}
        |f''(t)|\le r\sum_{x\in\F{0}}(r^2+(r+\maxfreq)^2)=|\F{0}|r(r^2+(r+\maxfreq)^2)
    \end{equation*}
    Hence setting $N=\bigg\lceil\sqrt{\frac{U^3r(r+1)(r^2+(r+\maxfreq)^2)}{3\error}}\bigg\rceil$, we have

    \begin{equation}\label{eqn:ItrapboundSR}
        \frac{\bigg|I_{\mathrm{trap}}-\hat I_{\mathrm{trap}}\bigg|}{I}\le\frac{U^3}{12N^2}\frac{|\F{0}|r(r^2+(r+\maxfreq)^2)(1+r)}{|\F{0}|}\le \frac{\error}{4}\Rightarrow |I_{\mathrm{trap}}-\hat I_{\mathrm{trap}}|\le \frac{\error}4I
    \end{equation}
    Then using \Cref{eqn:midestboundSR}, \Cref{eqn:ItrapboundSR} and the fact that $I_{\mathrm{trap}}\le I_{\SR(r)}$, we obtain

    \begin{align*}
        |I_{\mathrm{trap}}-\hat I_{\mathrm{est}}|&=|I_{\mathrm{trap}}-\hat I_{\mathrm{trap}}-\hat I_{\mathrm{trap}}+I_{\mathrm{est}}|\le |I_{\mathrm{trap}}-\hat I_{\mathrm{trap}}|+|\hat I_{\mathrm{trap}}+I_{\mathrm{est}}|\\
        &\le \frac{\error}{4}I_{\SR(r)}+\frac{\error}{4}\hat I_{\mathrm{trap}}\\
        &\le \frac{\error}{4}I_{\SR(r)}+\frac{\error}{4}\left(1+\frac{\error}{4}\right)I_{\mathrm{\SR}}\\
        &=\left(\frac{\error}{2}+\frac{\error^2}{16}\right)I_{\SR(r)}
    \end{align*}    
\end{proof}

\subsection{Smoothed Log-Frequency Aggregate}

Again, let us consider a $\maxfreq$-frequency bounded stream with stream length $m$. Recall from \Cref{def:SLFAdefinition} that the \emph{Smoothed Log-Frequency Aggregate} (or \emph{SLFA} in short) of such a stream is defined as

\begin{equation*}
    \SLFA=\sum_{x\in\F{0}}\ln(1+\f{x})
\end{equation*}
We named this statistic the \emph{Smoothed Log-Frequency Aggregate} ($\SLFA$) because the "$+1$" employs Laplace smoothing to address zero frequencies and prevent undefined logarithms, the logarithmic transformation mitigates the effect of extremely high frequencies for a more balanced representation, and \emph{aggregate} reflects the summation that yields a single measure of the overall smoothed frequency distribution—effectively capturing the "total effective logarithmic scale" of the data in a set. This formulation draws from established practices in information retrieval, where $\log(1 + tf)$ is used for sublinear term frequency weighting in tf-idf, and summing such values can serve as an $L_1$ norm in document vector similarity computations and sentiment analysis, see for example, \cite{addiga2022sentiment}. In ecology, sums of $\log(\mathrm{abundance} + 1)$ normalize skewed species counts for community-level analyzes in biodiversity assessments; such analyzes appear in microbiome ecology reviews, where $\log(\mathrm{abundance} + 1)$ help to discern features in rarely occurring objects \cite{mcknight2019methods}. In Bayesian statistics, it appears in contexts like pre-processing for count data models, where $\log(y + 1/2)$ or similar helps in proposal distributions for sampling algorithms like in \cite{hoff2009first}.

\paragraph{} Again, we start with an intuition to facilitate the development of our algorithm.
\paragraph{\textsc{Intuition:}} The jar of marbles will help us again. But now, let us add a special marble to the jar. We pick an integer $J\in\{-N,\dots,N\}$ and set a special probe strength $t=2^J$; this probe reports the indicator that the special marble has survived but at least one marble of type $x$ has not at strength $t$ (thus this is a collection of our previous probes). The probability of this happening is $e^{-t}-e^{-(\f{x}+1)t}$. The average response of this probe over the random choice of $J$ gives a Riemann sum approximation

\begin{equation*}
    \frac1{2N+1}\sum_{J=-N}^N(e^{-2^J}-e^{-(\f{x}+1)2^J})\approx \int_{2^{-N}}^{2^N}\frac{e^{-t}-e^{-(\f{x}+1)t}}{t}\ \mathrm dt
\end{equation*}
because one sensor per multiplicative step corresponds to the measure $\mathrm d(\log t)=\mathrm dt/t$. Letting $N\to \infty$ (one sensor per multiplicative step over all scales) converts the discrete average into the continuous per-log integral; the value produced by this multiplicative-sensor experiment experiment is therefore

\begin{equation*}
    \int_0^\infty\frac{e^{-t}-e^{-(\f{x}+1)t}}{t}\ \mathrm dt=\int_0^\infty(1-e^{-\f{x}t})\frac{e^{-t}}{t}\ \mathrm dt
\end{equation*}
We can again think of this as a race where the special marble is given a clock $S_0\sim \mathrm{Exp}(1)$ and the combined system a clock $S_{\f{x}+1}\sim\mathrm{Exp}(1+\f{x})$. The difference in the expected log-waiting-times for the clocks to go off, namely $\mathbb E[\ln S_0]-\mathbb E[\ln S_{\f{x}+1}]$, measures how many multiplicative time-scales the jar shaves off the waiting time on average; writing that expectation as an integral over log-scales yields precisely the same per-log integrand and this can easily seen to be $\ln(1+\f{x})$. Hence we have the identity

\begin{equation*}
    \int_0^\infty(1-e^{-\f{x}t})\frac{e^{-t}}{t}\ \mathrm dt=\ln(1+\f{x})
\end{equation*}
We now prove this identity formally. We provide an elementary proof of \Cref{lem:SLFAintegralidentity} and in \Cref{sec:deeperreasonsappendix} we discuss about its deeper connections to the Laplace transform.

\begin{lemma}\label{lem:SLFAintegralidentity}
    Let

    \begin{equation*}
        I_{\SLFA}=\int_0^\infty \espoly(t)\frac{e^{-t}}{t}\ \mathrm dt
    \end{equation*}
    Then the following identity holds : $\SLFA=I_{\SLFA}$.
    
\end{lemma}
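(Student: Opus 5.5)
The plan follows the template of \Cref{lem:SRintegralidentity} and \Cref{lem:Fkidentityintegral}. First I establish the single-element identity
\begin{equation*}
    \int_0^\infty (1-e^{-\lambda t})\frac{e^{-t}}{t}\ \mathrm dt=\ln(1+\lambda)\qquad\text{for every }\lambda\ge 0.
\end{equation*}
Then I sum it over $x\in\F{0}$ with $\lambda=\f{x}$. Since $\espoly(t)=\sum_{x\in\F{0}}(1-e^{-t\f{x}})$ is a finite sum of nonnegative terms, interchanging sum and integral is immediate (Tonelli, or simply linearity for finitely many convergent integrals). This gives $I_{\SLFA}=\sum_{x\in\F{0}}\ln(1+\f{x})=\SLFA$.

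For the scalar identity, I first check convergence. The integrand is $\frac{e^{-t}-e^{-(1+\lambda)t}}{t}$, which is nonnegative. Near $t=0$ it tends to $\lambda$, since $1-e^{-\lambda t}\le\lambda t$. At infinity it is bounded by $e^{-t}/t$, so it is integrable.

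The cleanest evaluation is to define $G(\lambda):=\int_0^\infty (1-e^{-\lambda t})\frac{e^{-t}}{t}\,\mathrm dt$ and differentiate under the integral sign in $\lambda$. Formally,
\begin{equation*}
    G'(\lambda)=\int_0^\infty e^{-(1+\lambda)t}\ \mathrm dt=\frac{1}{1+\lambda}.
\end{equation*}
This is justified because $\partial_\lambda$ of the integrand is $e^{-(1+\lambda)t}\le e^{-t}$, an integrable dominating function uniform in $\lambda\ge0$. Together with $G(0)=0$, integrating gives $G(\lambda)=\ln(1+\lambda)$.

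An alternative route, which I would mention or use as a check, is Frullani's integral $\int_0^\infty \frac{e^{-at}-e^{-bt}}{t}\,\mathrm dt=\ln(b/a)$ with $a=1$ and $b=1+\lambda$. A third option is the Fubini trick: write $\frac{e^{-t}-e^{-(1+\lambda)t}}{t}=\int_1^{1+\lambda}e^{-st}\,\mathrm ds$ and swap the order of integration, which is valid since everything is nonnegative. The inner integral in $t$ then becomes $\int_1^{1+\lambda} s^{-1}\,\mathrm ds$.

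The only real obstacle is rigor in the interchange of limits: differentiation under the integral sign, or Fubini. Both are handled by nonnegativity and the dominating function $e^{-t}$. I would present the Fubini version as the main proof since it avoids any differentiability discussion.
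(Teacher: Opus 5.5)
Your proposal is correct and matches the paper's proof: the paper also evaluates the single-element integral by differentiating under the integral sign (Feynman's trick) to get $1/(1+x)$, fixes the constant with $I(0)=0$, and then sums over $x\in\F{0}$. Your domination bound $e^{-(1+\lambda)t}\le e^{-t}$ justifies the differentiation step, which the paper leaves unjustified. Your preferred Tonelli route is the same computation in integrated form, since $\int_1^{1+\lambda}e^{-st}\,\mathrm ds=\int_0^\lambda e^{-(1+\mu)t}\,\mathrm d\mu$.
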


\begin{proof}
    Let us consider the integral

    \begin{equation*}
        I(x):=\int_0^\infty(1-e^{-xt})\frac{e^{-t}}{t}\ \mathrm dt=\int_0^\infty \frac{e^{-t}-e^{-(x+1)t}}{t}\ \mathrm dt
    \end{equation*}
    Assuming $x$ is a variable here, we can apply Feynman's trick. We differentiate $I(x)$ with respect to $x$ and then bring the derivative under the integral sign.

    \begin{equation*}
        I'(x)=\ddt{f}\int_0^\infty \frac{e^{-t}-e^{-(x+1)t}}{t}\ \mathrm dt=\int_0^\infty\frac{\partial}{\partial x}\left(\frac{e^{-t}-e^{-(x+1)t}}{t}\right)\ \mathrm dt=\int_0^\infty e^{-(x+1)t}\ \mathrm dt=\frac1{x+1}
    \end{equation*}
    Thus $I(x)=\ln(x+1)+C$ where $C$ is a constant with respect to $x$. Now note that $I(0)=0$ and hence $C=0$. Thus $I(x)=\ln(x+1)$. Then

    \begin{align*}
        I_{\SLFA}&=\int_0^\infty \espoly(t)\frac{e^{-t}}{t}\ \mathrm dt=\int_0^\infty \sum_{x\in\F{0}}(1-e^{-\f{x}t})\frac{e^{-t}}{t}\ \mathrm dt=\sum_{x\in\F{0}}\int_0^\infty (1-e^{-\f{x}t})\frac{e^{-t}}{t}\ \mathrm dt\\
        &=\sum_{x\in\F{0}}\ln(\f{x}+1)=\SLFA
    \end{align*}
    
\end{proof}

To aid us in the proof of correctness in a moment, we would also need another lemma.

\begin{lemma}\label{lem:SLFAfderivativebound}
    Let $f(t)=\espoly(t)\frac{e^{-t}}{t}$. Then $|f''(t)|\le |\F{0}|\tau(1+\tau)^2$ for $t\in [0,T]$ for any $T>0$.
\end{lemma}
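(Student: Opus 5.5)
The plan is to prove the bound one element at a time and then sum over $x\in\F{0}$. By linearity,
\begin{equation*}
f(t)=\sum_{x\in\F{0}} g_{\f{x}}(t),\qquad g_\lambda(t):=\frac{1-e^{-\lambda t}}{t}\,e^{-t},
\end{equation*}
with $1\le\lambda=\f{x}\le\maxfreq$. So it suffices to show $|g_\lambda''(t)|\le \lambda(1+\lambda)^2$ for all $t\ge 0$. Since $\lambda\mapsto\lambda(1+\lambda)^2$ is increasing, this gives $|g_\lambda''|\le\maxfreq(1+\maxfreq)^2$, and summing over the $|\F{0}|$ elements yields the claim.

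The main obstacle is the apparent singularity of $1/t$ at $t=0$; the lemma is claimed on $[0,T]$. I will remove it with the integral representation
\begin{equation*}
h_\lambda(t):=\frac{1-e^{-\lambda t}}{t}=\int_0^\lambda e^{-st}\,\mathrm ds .
\end{equation*}
This identity extends $h_\lambda$ smoothly to $t=0$ with $h_\lambda(0)=\lambda$. Differentiating under the integral sign is justified because the integrand is smooth and the domain is compact. This gives
\begin{equation*}
h_\lambda^{(j)}(t)=\int_0^\lambda (-s)^j e^{-st}\,\mathrm ds .
\end{equation*}
Using $e^{-st}\le 1$ for $s,t\ge0$, we get $|h_\lambda^{(j)}(t)|\le \lambda^{j+1}/(j+1)$. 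In particular,
\begin{equation*}
|h_\lambda|\le\lambda,\qquad |h_\lambda'|\le\lambda^2/2,\qquad |h_\lambda''|\le\lambda^3/3,
\end{equation*}
uniformly for $t\ge 0$.

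Next apply the Leibniz rule to $g_\lambda=h_\lambda e^{-t}$:
\begin{equation*}
g_\lambda''(t)=\bigl(h_\lambda''(t)-2h_\lambda'(t)+h_\lambda(t)\bigr)e^{-t}.
\end{equation*}
Since $e^{-t}\le 1$ and by the triangle inequality,
\begin{equation*}
|g_\lambda''(t)|\le \frac{\lambda^3}{3}+\lambda^2+\lambda\le \lambda(\lambda^2+2\lambda+1)=\lambda(1+\lambda)^2 .
\end{equation*}
This holds for every $t\ge0$, so in particular on $[0,T]$ for any $T>0$; the bound is in fact independent of $T$. Summing over $x\in\F{0}$ and using $\f{x}\le\maxfreq$ from \Cref{def:freqbound} gives $|f''(t)|\le|\F{0}|\maxfreq(1+\maxfreq)^2$.
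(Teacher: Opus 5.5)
Your proof is correct and reaches the same bound as the paper, by a somewhat different route. The paper writes each summand as $(e^{-t}-e^{-(1+\lambda)t})/t$ and computes its second derivative in closed form for $t>0$, which has a $t^{-3}$ factor. It cancels the apparent singularity by applying the mean value theorem to the auxiliary function $H(x)=(t^2x^2+2tx+2)e^{-tx}$, whose derivative is $-t^3x^2e^{-tx}$. This gives the exact expression $\lambda c^2e^{-ct}$ for some $c\in(1,1+\lambda)$. The endpoint $t=0$ is then treated separately, by computing the limit $\lambda+\lambda^2+\lambda^3/3$ and extending by continuity.

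You instead use the representation $(1-e^{-\lambda t})/t=\int_0^\lambda e^{-st}\,\mathrm ds$, the Leibniz rule for the product with $e^{-t}$, and a term-by-term triangle inequality. Your route treats $t=0$ uniformly with all other $t$. It shows directly that each summand, and hence $f$, is $C^\infty$ on $[0,\infty)$ with the same bound at every point. That is exactly the regularity the trapezoidal error formula on $[0,U]$ in the SLFA algorithm needs, and the paper's step of extending ``right-continuously'' is somewhat informal here. The paper's route gives more information in exchange: an exact formula showing the second derivative is positive and decays like $e^{-ct}$ with $c>1$, though this is never used.

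The two arguments are close cousins. Absorbing $e^{-t}$ into your integral gives $g_\lambda(t)=\int_1^{1+\lambda}e^{-ut}\,\mathrm du$, hence $g_\lambda''(t)=\int_1^{1+\lambda}u^2e^{-ut}\,\mathrm du$. The paper's MVT identity is just the mean-value form of this integral, and its value at $t=0$ is exactly your intermediate bound $\lambda^3/3+\lambda^2+\lambda$.
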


We defer the proof of \Cref{lem:SLFAfderivativebound} to the appendix, see \Cref{lem:SLFAfderivativeboundappendix}

\paragraph{}We now provide an algorithm to estimate $\SLFA$ called $\estimateSLFA$. \Cref{thm:estimateSLFAcorrectness} then proves the correctness of the algorithm along with the space and update time complexity.

\SetAlgoNoLine%
\begin{algorithm}[htb]
    \DontPrintSemicolon%
    \caption{$\estimateSLFA(\stream = \langle a_1, \dots, a_{\streamlen}\rangle,\error,\conf)$}
    
    \SetKwInOut{Input}{Input}
    \SetKwInOut{Output}{Output}
    
    \Input{stream $\stream$, overall error $\error$, overall failure probability $\conf$\;}
    \Output{$(\error,\conf)$-approximation of the Smoothed Log-Frequency Aggregate ($\SLFA$) of $\stream$\;}
    
    $U\gets \ln\left(\dfrac{4}{\error\ln 2}\right)$\;
    $N \gets \bigg\lceil\sqrt{\dfrac{U^3\maxfreq(1+\maxfreq)^2}{3\error\ln 2}}\bigg\rceil$\;
    $\Delta t=\dfrac{U}{N}$\;
    $c\gets 0$\;
    \BlankLine
    \label{line:boundSLFA}$\mathrm{bound}=\evaluate(\stream, e^{-U}, \error/4, \conf/(N+1))e^{-U}/U+\widehat{m}$\\

    \BlankLine
    
    \tcp{$\hat m$ is an $(\error/4, \conf/(N+1))$-approximation of $m=\F{1}$, for example using a Morris counter \cite{Morris1978}}
    
    \BlankLine
    
    \For{$j \gets 1$ \KwTo $N-1$\label{line:forloopSLFA}}{
        $c_j\gets \evaluate(\stream, e^{-j\cdot\Delta t},\error/4,\conf/(N+1))$\label{line:forloopinsideSLFA}\;
    }
    
    \BlankLine
    
    $c\gets \sum_{j=1}^{N-1}c_j$\;
    $\hat I_{\mathrm{est}}=\Delta t\cdot\left(\frac{\mathrm{bound}}2+c\right)$\;
    \BlankLine
    \label{line:SLFAoutput}\Return $\hat I_{\mathrm{est}}$\;
    
\end{algorithm}

\begin{theorem}\label{thm:estimateSLFAcorrectness}
    The algorithm $\estimateSLFA$ outputs an $(\error,\conf)$-approximation of $\SLFA$. $\estimateSLFA$ takes $\tilde O\left(\maxfreq^3\error^{-5}\ln^2(1/\error)(\log|\univ|+\log\maxfreq+\log(1/\error)+\log(1/\conf))\log^4|\univ|\right)$ space and update time.
\end{theorem}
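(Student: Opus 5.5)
We follow the template of the proofs of \Cref{thm:AlgorithmEstimateKcorrectnessandspace} and \Cref{thm:estimateSRcorrectness}. Let $f(t)=\espoly(t)e^{-t}/t$, so that $\SLFA=I_{\SLFA}=\int_0^\infty f(t)\,\mathrm dt$ by \Cref{lem:SLFAintegralidentity}.

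\textbf{No singularity at $0$.} The integrand is bounded near the origin: since $\espoly(t)/t\to\sum_x\f{x}=\F{1}=m$, we have $f(0^+)=m$. This is why the algorithm uses $\widehat m$, a Morris-counter estimate of $m$, as the left trapezoid endpoint. We therefore split only into $I_{\mathrm{trap}}=\int_0^U f$ and $I_{\mathrm{tail}}=\int_U^\infty f$.

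\textbf{Tail bound.} Using $\espoly\le|\F{0}|$ and $t\ge U$,
\[I_{\mathrm{tail}}\le |\F{0}|\int_U^\infty \frac{e^{-t}}{t}\,\mathrm dt\le |\F{0}|\frac{e^{-U}}{U}.\]
Since $\f{x}\ge1$, we also have $\SLFA\ge|\F{0}|\ln 2$. With $U=\ln(4/(\error\ln2))$ we get $e^{-U}=\error\ln 2/4$. Provided $U\ge1$ (true for $\error$ small), this gives $I_{\mathrm{tail}}\le(\error/4)\SLFA$.

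\textbf{Discretization error.} By the trapezoidal error formula and \Cref{lem:SLFAfderivativebound},
\[|\hat I_{\mathrm{trap}}-I_{\mathrm{trap}}|\le \frac{U^3}{12N^2}|\F{0}|\maxfreq(1+\maxfreq)^2.\]
Dividing by $|\F{0}|\ln 2\le\SLFA$ and using the chosen $N$, this is at most $(\error/4)\SLFA$.

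\textbf{Estimation error.} Each evaluation point is an $(\error/4,\conf/(N+1))$-approximation; for $t=0$ this includes $\widehat m$ for $f(0)=m$. All weights $\Delta t$ and $\Delta t/2$ are nonnegative, so a union bound over the $N+1$ estimates gives $|\hat I_{\mathrm{est}}-\hat I_{\mathrm{trap}}|\le(\error/4)\hat I_{\mathrm{trap}}$ with probability at least $1-\conf$. Moreover $\hat I_{\mathrm{trap}}\le(1+\error/4)\SLFA$, because $I_{\mathrm{trap}}\le\SLFA$ and the discretization error is at most $(\error/4)\SLFA$. The triangle inequality then gives total relative error at most $\error/4+\error/4+(\error/4)(1+\error/4)\le\error$.

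\textbf{Complexity.} The number of $\evaluate$ calls is $N=O(\maxfreq^{3/2}\error^{-1/2}\ln^{3/2}(1/\error))$. The call at the smallest positive grid point $t=\Delta t$ is the most expensive, since $1/(1-e^{-\Delta t})\le 1+1/\Delta t=O(N/U)$. This produces the extra $\maxfreq^{3/2}$-type factor.

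\textbf{Main obstacle.} The main obstacle is this cost of \Cref{cor:espolyestimatecorollary} near $t=0$, not the analytic error terms. A naive bound of $N$ calls, each paying $1/(1-e^{-\Delta t})\approx N/U$, gives $N^2/U\approx\maxfreq^3\error^{-1}\ln^2(1/\error)$. Multiplying by $\error^{-4}$ from each call yields the claimed $\maxfreq^3\error^{-5}\ln^2(1/\error)$ with the remaining logarithmic factors. To get this rather than a worse bound, sum $1/(1-e^{-j\Delta t})$ over $j$ as a harmonic-type sum, $\sum_j N/(jU)=O((N/U)\log N)$, instead of bounding every term by the worst one; this keeps the total within $\tilde O$ of the stated bound.
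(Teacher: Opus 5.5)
Your proposal is correct and follows the paper's proof essentially step for step. It uses the same split into $I_{\mathrm{trap}}$ and $I_{\mathrm{tail}}$ with $f(0)=m$ handled by $\widehat m$, the same $\error/4$ budgets for the tail, trapezoid and estimation errors, and the same complexity count: the paper's $\maxfreq^3\error^{-5}\ln^2(1/\error)$ is exactly your ``naive'' $N\cdot(N/U)\cdot\error^{-4}$.

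Two small corrections. First, in your last paragraph the harmonic-sum refinement is not needed to reach the stated bound, since the worst-case count already gives it; the refinement actually yields the smaller bound $\tilde O(N\error^{-4})\approx\tilde O(\maxfreq^{3/2}\error^{-9/2})$. Second, $U\ge 1$ holds for every $\error\in(0,1)$, not only for small $\error$.
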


\begin{proof}
    We start with proving the correctness of the algorithm $\estimateSLFA$, that is, proving that the output of the algorithm $\estimateSR$ is an $(\error,\conf)$ estimate of $\SLFA$.
    
    For the sake of the proof, let us define the function $f(t)=\espoly(t)e^{-t}/t$. Note that $f(0)$ is not defined as is, but since
    \begin{equation*}
        \lim_{t\to 0^+}f(t)=\lim_{t\to 0^+}\espoly(t)\frac{e^{-t}}{t}=\sum_{x\in\F{0}}\lim_{t\to 0^+}(1-e^{-t\f{x}})\frac{e^{-t}}{t}=\sum_{x\in\F{0}}\f{x}=\F{1}=m
    \end{equation*}
    We can extend $f$ to 0 from the right using this value $m$ and call that $f$. Also let $U$ be as defined in the algorithm.
    
    Observe that the output of the algorithm is $\hat I_{\mathrm{est}}$ (Line~\ref{line:SLFAoutput}). In Line~\ref{line:boundSLFA}, the algorithm $\estimateSLFA$ makes a call to the subroutine $\evaluate$ and has a term $\hat m$. From \Cref{cor:espolyestimatecorollary} we obtain that the $\evaluate$ call returns an $(\error/4, \conf/(N+1))$-approximation of $\espoly(U)$ and hence the first term is an $(\error/4, \conf/(N+1))$-approximation of $f(U)$ (let us call it $\hat f(U)$). The term $\hat m$ is an estimate of the stream length $\streamlen=\F{1}$ and is an $(\error/4, \conf/(N+1))$-approximation of $f(0)$ (let us call it $\hat f(0)$). Thus the value of $\mathrm{bound}$ in Line~\ref{line:boundSLFA} is
    
    \begin{equation} \label{eq:boundSLFA}
        \mathrm{bound} = \hat f(U)+\hat f(0)
    \end{equation}
    which is an $(\error/5, \conf/(N+1))$-approximation of $f(U)+f(0)$. 

    Similarly in the \textbf{for} loop in Line~\ref{line:forloopSLFA}-\ref{line:forloopinsideSLFA}, we have $N-1$ many calls to $\evaluate$, and for each $j$, $c_j=\hat f(j\Delta t)$ is an $(\error/5, \conf/(N+1))$-approximation of $f(j\Delta t)$.

    Hence $I_{\mathrm{est}}$ is given by  

    \begin{equation}\label{eq:IestSLFA}
    \hat I_{\mathrm{est}}=\Delta t\left(\frac{\hat f(U)+\hat f(0)}{2}+\sum_{j=1}^{N-1}\hat f(t_j)\right)
    \end{equation}
    Finally, the algorithm outputs $\hat I_{\mathrm{est}}$, while the goal of the algorithm is the estimate $\SLFA$ which is (from \Cref{lem:SLFAintegralidentity}) equal to $I_{\SLFA}$. Thus if we show that with probability at least $(1-\delta)$
    
    \begin{equation}\label{eqn:SLFArequiredbound}
    |I_{\SLFA}- I_{\mathrm{est}}| \leq \error
    \end{equation}
    then the output of the algorithm is an $(\error, \conf)$ of $\F{k}$. We now prove \Cref{eqn:SLFArequiredbound}.
    $I_{\SLFA}$ has no singularity at $0$, so it suffices to break it into two pieces, one a tail piece and the remaining piece we estimate using the trapezoidal rule. Concretely, we do

    \begin{equation*}
        I=\underbrace{\int_0^U(\dots)\ \mathrm dt}_{I_{\mathrm{trap}}}+\underbrace{\int_U^\infty(\dots)\ \mathrm dt}_{I_{\mathrm{tail}}}
    \end{equation*}
    We deal with the two parts separately. In \Cref{lem:SLFAtailbound} we analytically upper bound the value of $I_{\mathrm{tail}}$ and obtain

    \begin{equation}\label{eqn:tailboundSLFA}
        \frac{I_{\mathrm{tail}}}{I}\le \frac{\error}4
    \end{equation}
    Then in \Cref{lem:ItrapIestboundSLFA} we show that the distance between $I_{\mathrm{trap}}$ and $\hat I_{\mathrm{est}}$ is small with respect to $I_{\SR(r)}$; more precisely we show that

    \begin{equation}\label{eqn:IkIestboundSLFA}
        |I_{\mathrm{trap}}-\hat I_{\mathrm{est}}|\le \left(\frac{\error}2+\frac{\error^2}{16}\right)I_{\SR(r)}
    \end{equation}
    Using the triangle inequality and \Cref{eqn:tailboundSLFA} and \Cref{eqn:IkIestboundSLFA}, we obtain

    \begin{align*}
        \frac{|I_{\SLFA}-\hat I_{\mathrm{est}}|}{I_{\SLFA}}&=\frac{|I_{\mathrm{trap}}+I_{\mathrm{tail}}-\hat I_{\mathrm{est}}|}{I_{\SLFA}}\le \frac{|I_{\mathrm{trap}}-\hat I_{\mathrm{est}}|}{I_{\SR(r)}}-\frac{I_{\mathrm{tail}}}{I_{\SR(r)}}\\
        &\le \left(\frac{\error}2+\frac{\error^2}{16}\right)+\frac{\error}4=\frac{3\error}4+\frac{\error^2}{16}\le \error
    \end{align*}
    Thus $I_{\mathrm{est}}$ is an $(\error,\conf)$ estimator of $I_{\mathrm{\SLFA}}$ and hence the output of algorithm $\estimateK$ is an $(\error,\conf)$ estimator of $\SR$.
    
    Now we analyse the space and update time complexity of $\estimateSLFA$. Note that the number of evaluations of our polynomial (which is the number of $\evaluate$ calls) is $O(N)$ which is $O(\maxfreq^{3/2}\error^{-1/2}\ln^{3/2}(1/\error))$. Plugging this in along with the space and update time of one $\evaluate$ call (which is proven in \Cref{cor:espolyestimatecorollary}) gives the space and update time complexity of $\estimateSLFA$ as

    \begin{equation*}
        \tilde O\left(\maxfreq^3\error^{-5}\ln^2(1/\error)(\log|\univ|+\log\maxfreq+\log(1/\error)+\log(1/\conf))\log^4|\univ|\right)
    \end{equation*}
    
 \end{proof}

To completely finish the proof of \Cref{thm:estimateSLFAcorrectness}, we now prove the lemmas used in the proof.

\begin{lemma}\label{lem:SLFAtailbound}
    By choosing $U=\ln\left(\frac4{\error\ln 2}\right)$, we have $I_{\mathrm{tail}}\le (\error/4)I_{\SR(r)}$.
\end{lemma}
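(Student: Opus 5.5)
The plan mirrors \Cref{lem:SRtailbound}: bound the tail integral above by a multiple of $|\F{0}|$, bound $I_{\SLFA}=\SLFA$ below by a multiple of $|\F{0}|$, and choose $U$ so that the ratio is at most $\error/4$. (The statement writes $I_{\SR(r)}$; I read this as $I_{\SLFA}$, since that is the quantity used in \Cref{eqn:tailboundSLFA}.)

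\textbf{Lower bound on $I_{\SLFA}$.} By \Cref{lem:SLFAintegralidentity}, $I_{\SLFA}=\sum_{x\in\F{0}}\ln(1+\f{x})$. Every $x\in\F{0}$ has $\f{x}\ge 1$, and $\ln(1+y)$ is increasing, so
\[
I_{\SLFA}\ge |\F{0}|\ln 2 .
\]

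\textbf{Upper bound on the tail.} Using $1-e^{-t\f{x}}\le 1$ gives $\espoly(t)\le|\F{0}|$. Hence
\[
I_{\mathrm{tail}}=\int_U^\infty \espoly(t)\frac{e^{-t}}{t}\,\mathrm dt\le |\F{0}|\int_U^\infty \frac{e^{-t}}{t}\,\mathrm dt .
\]
Whenever $U\ge 1$ we have $1/t\le 1$ on $[U,\infty)$, and the last integral is at most $\int_U^\infty e^{-t}\,\mathrm dt=e^{-U}$.

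\textbf{Combining.} This yields $I_{\mathrm{tail}}/I_{\SLFA}\le e^{-U}/\ln 2$. With $U=\ln\!\big(4/(\error\ln 2)\big)$ we get $e^{-U}=\error\ln 2/4$, so the ratio is at most $\error/4$, as required.

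\textbf{The only delicate point} is the step $1/t\le 1$, which needs $U\ge 1$. Since $\error<1$ and $\ln 2<1$, we have $4/(\error\ln 2)>4>e$, so $U>1$ and the step is valid.
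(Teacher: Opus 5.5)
Your proposal is correct and follows essentially the same route as the paper: lower-bound $I_{\SLFA}\ge|\F{0}|\ln 2$, upper-bound the tail by $|\F{0}|\int_U^\infty e^{-t}t^{-1}\,\mathrm dt$ using $\espoly(t)\le|\F{0}|$, and plug in the choice of $U$; you are also right to read $I_{\SR(r)}$ as $I_{\SLFA}$. The only difference is cosmetic: the paper bounds the last integral by $e^{-U}/U$ instead of $e^{-U}$, giving the ratio $\error/(4U)$, so it too needs $U\ge 1$ without saying so --- you check this explicitly.
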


\begin{proof}
    Observe that

    \begin{equation*}
        I_{\SLFA}=\sum_{x\in\F{0}}\ln(1+\f{x})\ge\sum_{x\in\F{0}}\ln(1+1)=|\F{0}|\ln2
    \end{equation*}
    Now for $I_{\mathrm{tail}}$, using $(1-e^{-\lambda t})\le 1$ we have

    \begin{equation*}
        I_{\mathrm{tail}}=\int_U^\infty\espoly(t)\frac{e^{-t}}{t}\ \mathrm dt\le |\F{0}|\int_U^\infty \frac{e^{-t}}{t}\ \mathrm dt\le|\F{0}|\frac{e^{-U}}{U}
    \end{equation*}
    Thus using $U=\ln\left(\frac4{\error\ln2}\right)$, we have

    \begin{equation*}
        \frac{I_{\mathrm{tail}}}{I}\le \frac{\error}{4}
    \end{equation*}
\end{proof}

\begin{lemma}\label{lem:ItrapIestboundSLFA}
    By choosing $N=\bigg\lceil\sqrt{\frac{U^3\maxfreq(1+\maxfreq)^2}{3\error\ln 2}}\bigg\rceil$ as in $\estimateSLFA$, we have $|I_{\mathrm{trap}}-\hat I_{\mathrm{est}}|\le \left(\frac{\error}2+\frac{\error^2}{16}\right)I_{\SR(r)}$
\end{lemma}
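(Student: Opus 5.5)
We follow the template of \Cref{lem:ItrapIestboundSR}: write $f(t)=\espoly(t)e^{-t}/t$, extended continuously to $t=0$ by $f(0)=\F{1}=\streamlen$, and compare $\hat I_{\mathrm{est}}$ with the exact trapezoidal sum
\begin{equation*}
    \hat I_{\mathrm{trap}}:=\Delta t\left(\frac{f(0)+f(U)}{2}+\sum_{j=1}^{N-1}f(j\Delta t)\right),
\end{equation*}
so that
\begin{equation*}
    |I_{\mathrm{trap}}-\hat I_{\mathrm{est}}|\le |I_{\mathrm{trap}}-\hat I_{\mathrm{trap}}|+|\hat I_{\mathrm{trap}}-\hat I_{\mathrm{est}}|.
\end{equation*}

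\emph{Estimation error.} By \Cref{cor:espolyestimatecorollary}, each $c_j$ and the $U$-endpoint term is an $(\error/4,\conf/(N+1))$-approximation of the corresponding value of $f$. The Morris-counter term $\hat m$ is an $(\error/4,\conf/(N+1))$-approximation of $f(0)=\streamlen$. There are $N+1$ such quantities, so a union bound makes all of them hold simultaneously with probability at least $1-\conf$. All weights $\Delta t/2$ and $\Delta t$ are nonnegative and $f\ge 0$, so summing the pointwise relative errors gives $|\hat I_{\mathrm{trap}}-\hat I_{\mathrm{est}}|\le (\error/4)\hat I_{\mathrm{trap}}$.

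\emph{Discretization error.} The extended $f$ is $C^2$ on $[0,U]$: each summand $(1-e^{-\f{x}t})e^{-t}/t$ equals $\sum_{n\ge1}(-1)^{n-1}\f{x}^n t^{n-1}e^{-t}/n!$, which is entire in $t$. So the trapezoidal error formula of \Cref{subsec:trapezoidalrule} applies on $[0,U]$ with $L=0$, giving
\begin{equation*}
    |I_{\mathrm{trap}}-\hat I_{\mathrm{trap}}|\le \frac{U^3}{12N^2}\sup_{[0,U]}|f''|.
\end{equation*}
\Cref{lem:SLFAfderivativebound} bounds $\sup|f''|$ by $|\F{0}|\maxfreq(1+\maxfreq)^2$. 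The lower bound $I_{\SLFA}\ge |\F{0}|\ln 2$ from the proof of \Cref{lem:SLFAtailbound} then gives
\begin{equation*}
    \frac{|I_{\mathrm{trap}}-\hat I_{\mathrm{trap}}|}{I_{\SLFA}}\le \frac{U^3\maxfreq(1+\maxfreq)^2}{12N^2\ln 2}\le\frac{\error}{4},
\end{equation*}
where the last step is exactly the choice $N\ge\sqrt{U^3\maxfreq(1+\maxfreq)^2/(3\error\ln 2)}$.

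\emph{Combining.} Using $I_{\mathrm{trap}}\le I_{\SLFA}$, since the integrand is nonnegative,
\begin{equation*}
    \hat I_{\mathrm{trap}}\le I_{\mathrm{trap}}+\frac{\error}{4}I_{\SLFA}\le\left(1+\frac{\error}{4}\right)I_{\SLFA}.
\end{equation*}
Therefore
\begin{equation*}
    |I_{\mathrm{trap}}-\hat I_{\mathrm{est}}|\le\frac{\error}{4}I_{\SLFA}+\frac{\error}{4}\left(1+\frac{\error}{4}\right)I_{\SLFA}=\left(\frac{\error}{2}+\frac{\error^2}{16}\right)I_{\SLFA},
\end{equation*}
which is the claimed bound. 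The $I_{\SR(r)}$ in the statement is a typo for $I_{\SLFA}$.

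\emph{Main obstacle.} The delicate step is the endpoint $t=0$. One must justify that the extended $f$ is genuinely $C^2$ on the closed interval, via the power series above, so that the error formula holds with $\xi\in(0,U)$. One must also check that the bound of \Cref{lem:SLFAfderivativebound} holds uniformly up to $0$. The rest is bookkeeping identical to the $\SR$ case.
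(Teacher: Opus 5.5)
Your proposal is correct and follows the paper's proof essentially step for step: you split $|I_{\mathrm{trap}}-\hat I_{\mathrm{est}}|$ through the exact trapezoidal sum $\hat I_{\mathrm{trap}}$, bound the estimation part by $(\error/4)\hat I_{\mathrm{trap}}$ via the union bound over the $N+1$ estimates, bound the discretization part by $(\error/4)I_{\SLFA}$ using \Cref{lem:SLFAfderivativebound} together with $I_{\SLFA}\ge|\F{0}|\ln 2$, and combine using $\hat I_{\mathrm{trap}}\le(1+\error/4)I_{\SLFA}$. Your power-series argument that the extended $f$ is $C^2$ on the closed interval $[0,U]$ and your reading of $I_{\SR(r)}$ as a typo for $I_{\SLFA}$ are both right; they make explicit two points the paper leaves implicit.
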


\begin{proof}
    Recall from \Cref{eqn:IkIestboundSLFA} that

    \begin{equation*}
        \hat I_{\mathrm{est}}=\Delta t\left(\frac{\hat f(U)+\hat f(0)}2+\sum_{j=1}^{N-1}\hat f(t_j)\right)
    \end{equation*}
    Since $\hat f(x)$ is an $(\error/4,\conf/(N+1))$ approximation of $f(x)$ for $x\in \{U\}\cup \{j\Delta t\}_{j\in [N-1]}$, $\hat I_{\mathrm{est}}$ is an $(\error,\conf)$-approximation of

    \begin{equation*}
        \hat I_{\mathrm{trap}}:=\Delta t\left(\frac{f(U)+f(0)}{2}+\sum_{j=1}^{N-1}f(t_j)\right)
    \end{equation*}
    by the union bound. That is, with probability

    \begin{equation}\label{eqn:midestboundSLFA}
        |\hat I_{\mathrm{mid}}-\hat I_{\mathrm{est}}|\le \frac{\error}4\hat I_{\mathrm{mid}}
    \end{equation}
    Also, by the definition of trapezoidal rule in \Cref{subsec:trapezoidalrule}, there exists some $\xi\in (L,U)$ such that

    \begin{equation*}
        |\hat I_{\mathrm{trap}}-I_{\mathrm{\SR}}|=\frac{(U-L)^3}{12N^2}|f''(\xi)|\le \frac{U^3}{12N^2}|f''(\xi)|
    \end{equation*}
    Computation of the upper bound on the absolute value of the double derivative of $f$ from \Cref{lem:SLFAfderivativebound} gives us that

    \begin{equation*}
        |f''(t)|\le r\sum_{x\in\F{0}}\maxfreq(1+\maxfreq)^2
    \end{equation*}
    Hence setting $N=\bigg\lceil\sqrt{\frac{U^3\maxfreq(1+\maxfreq)^2}{3\error\ln 2}}\bigg\rceil$, we have

    \begin{equation}\label{eqn:ItrapboundSLFA}
        \frac{ \bigg|I_{\mathrm{trap}}-\hat I_{\mathrm{trap}}\bigg|}{I_{\SLFA}}\le\frac{U^3}{12N^2}\frac{|\F{0}|\tau(1+\tau)^2}{|\F{0}|\ln 2}\le \frac{\error}{4}\Rightarrow |I_{\mathrm{trap}}-\hat I_{\mathrm{trap}}|\le \frac{\error}4I_{\SLFA}
    \end{equation}
    Then using \Cref{eqn:midestboundSLFA}, \Cref{eqn:ItrapboundSLFA} and the fact that $I_{\mathrm{trap}}\le I_{\SLFA}$, we obtain

    \begin{align*}
        |I_{\mathrm{trap}}-\hat I_{\mathrm{est}}|&=|I_{\mathrm{trap}}-\hat I_{\mathrm{trap}}-\hat I_{\mathrm{trap}}+I_{\mathrm{est}}|\le |I_{\mathrm{trap}}-\hat I_{\mathrm{trap}}|+|\hat I_{\mathrm{trap}}+I_{\mathrm{est}}|\\
        &\le \frac{\error}{4}I_{\SLFA}+\frac{\error}{4}\hat I_{\mathrm{trap}}\\
        &\le \frac{\error}{4}I_{\SLFA}+\frac{\error}{4}\left(1+\frac{\error}{4}\right)I_{\mathrm{\SLFA}}\\
        &=\left(\frac{\error}{2}+\frac{\error^2}{16}\right)I_{\SLFA}
    \end{align*}
\end{proof}

\section{Estimation of a general Bernstein function}\label{sec:generalbernsteinestimationsection}

Here we use the full theory of the L\'evy-Khintchine representation of Bernstein functions from \Cref{subsec:BernsteinfunctionLKrepprelim} to give a general algorithm and theorem. For the sake of completeness, we later talk about what one can practically do if the weight function is not explicitly known in \Cref{subsec:unknowndensity}.

\subsection{Algorithm for when the L\'evy density is explicitly given}

Consider the following algorithm $\estimateGB$ for estimating the statistic of an arbitrary statistic that comes from a Bernstein function with non-negative L\'evy density.

\SetAlgoNoLine%
\begin{algorithm}[htb]
    \DontPrintSemicolon%
    \caption{$\estimateGB(\stream = \langle a_1, \dots, a_{\streamlen}\rangle,\varphi,w, L, U, R_0, R_1, R_2, \error,\conf)$}
    
    \SetKwInOut{Input}{Input}
    \SetKwInOut{Output}{Output}
    
    \Input{
    
    \begin{itemize}
        \item stream $\stream$, Bernstein function $\varphi$, L\'evy density $w$ of $\varphi$
        \item overall error $\error$, overall failure probability $\conf$
        \item lower bound on estimation interval $L$ such that $\int_0^L sw(s)\ \mathrm ds\le \error\varphi(1)/5\maxfreq$
        \item upper bound on estimation interval $U$ such that $\int_U^\infty w(t)\ \mathrm dt\le\error\varphi(1)/5$
        \item $\sup_{[L,U]}|w(t)|=R_0$, $\sup_{[L,U]}|w'(t)|=R_1$, $\sup_{[L,U]}|w''(t)|=R_2$
    \end{itemize}\;}
    
    \BlankLine
    
    \Output{$(\error,\conf)$-approximation of the statistic $\sum_{x\in \F{0}}\varphi(\f{x})$ of $\stream$\;}
    
    \BlankLine
    
    $N \gets \bigg\lceil\sqrt{\dfrac{5U^3(\maxfreq^2R_0+\maxfreq R_1+R_2)}{12\error\varphi(1)}}\bigg\rceil$\;
    $\mathrm{grid}=\dfrac{U-L}{N}$\;
    $c\gets 0$\;
    \BlankLine
    \label{line:boundGB}$\mathrm{bound}=\evaluate(\stream, e^{-U}, \error/5, \conf/(N+1))w(U)+\evaluate(\stream, e^{-L}, \error/5, \conf/(N+1))w(L)$\\
    
    \BlankLine
    
    \For{$j \gets 1$ \KwTo $N-1$\label{line:forloopGB}}{
        $c_j\gets \evaluate(\stream, e^{-(L+j\cdot\mathrm{grid})},\error/5,\conf/(N+1))$\label{line:forloopinsideGB}\;
    }
    
    \BlankLine
    
    $c\gets \sum_{j=1}^{N-1}c_j$\;
    $\hat I_{\mathrm{est}}=\mathrm{grid}\cdot\left(\frac{\mathrm{bound}}2+c\right)$\;
    \BlankLine
    \label{line:GBoutput}\Return $I_{\mathrm{est}}$\;
    
    \end{algorithm}

We can then write down the following general theorem.

\begin{theorem}\label{thm:algorithmestimateGBcorrectnessandspace}
    Let $\varphi$ be a Bernstein function with L\'evy-Khintchine representation (which exists from \Cref{subsec:BernsteinfunctionLKrepprelim})

    \begin{equation*}
        \varphi(s)=\int_0^\infty (1-e^{-ts})w(t)\ \mathrm dt
    \end{equation*}
    where $w(s)\ge 0$ is the L\'evy density of $\varphi$. Let $\GBS=\sum_{x\in\F{0}}\varphi(\f{x})$ be the statistic to be estimated and let $I_{\mathrm{GB}}=\int_0^\infty \espoly(t)w(t)\ \mathrm dt$. Then

    \begin{equation}\label{eqn:generalBernsteinidentity}
        \GBS=I_{\mathrm{GB}}
    \end{equation}
    Moreover, if $S_1(y):=\int_0^y sw(s)\ \mathrm ds<\infty$ for all $y>0$, and $r$ is $C^2$ away from $0$ then $\estimateGB$ outputs an $(\error,\conf)$-approximation of $\GBS$.
\end{theorem}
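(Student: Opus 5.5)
The identity \eqref{eqn:generalBernsteinidentity} comes first, and it follows the template of \Cref{lem:Fkidentityintegral}, \Cref{lem:SRintegralidentity} and \Cref{lem:SLFAintegralidentity}. Write $\espoly(t)=\sum_{x\in\F{0}}(1-e^{-t\f{x}})$. This is a finite sum of nonnegative measurable functions and $w\ge 0$, so Tonelli lets us exchange sum and integral. Then
\[
I_{\mathrm{GB}}=\sum_{x\in\F{0}}\int_0^\infty (1-e^{-t\f{x}})w(t)\,\mathrm dt=\sum_{x\in\F{0}}\varphi(\f{x})=\GBS,
\]
using the given L\'evy--Khintchine representation (killing and drift terms zero).

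For correctness of $\estimateGB$, I would copy the three-part split from the proof of \Cref{thm:AlgorithmEstimateKcorrectnessandspace}:
\[
I_{\mathrm{GB}}=I_{\mathrm{small}}+I_{\mathrm{mid}}+I_{\mathrm{tail}}
\]
over $[0,L]$, $[L,U]$ and $[U,\infty)$. Set $f(t)=\espoly(t)w(t)$. Every $\f{x}\ge1$, and Bernstein functions are nondecreasing, so $\GBS\ge|\F{0}|\varphi(1)$. This is the normaliser.

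\begin{itemize}
\item \textbf{Small part.} Use $1-e^{-u}\le u$ and $\f{x}\le\maxfreq$ to get $\espoly(t)\le \maxfreq t|\F{0}|$. Hence $I_{\mathrm{small}}\le \maxfreq|\F{0}|S_1(L)\le (\error/5)\GBS$ by the hypothesis on $L$.
\item \textbf{Tail.} Use $\espoly\le|\F{0}|$. Then $I_{\mathrm{tail}}\le|\F{0}|\int_U^\infty w\le(\error/5)\GBS$.
\item \textbf{Middle part, quadrature error.} Apply the trapezoidal rule error from \Cref{subsec:trapezoidalrule}. This needs $f\in C^2[L,U]$, which is the hypothesis that the density (called $r$ in the statement, i.e.\ $w$) is $C^2$ away from $0$. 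By Leibniz,
\[
f''=\espoly''w+2\espoly'w'+\espoly w'',
\]
with $|\espoly^{(j)}(t)|\le\sum_x \f{x}^j e^{-t\f{x}}\le \maxfreq^j|\F{0}|$ for $j\ge1$ and $\espoly\le|\F{0}|$. This gives $|f''|\le|\F{0}|(\maxfreq^2R_0+2\maxfreq R_1+R_2)$. The factor $2$ on $R_1$ must be absorbed into the constant in $N$, adjusting the $5/12$ if needed. With the chosen $N$, the quadrature error $\frac{U^3}{12N^2}|f''|$ is at most $(\error/5)|\F{0}|\varphi(1)\le(\error/5)\GBS$.
\item \textbf{Middle part, evaluation error.} Each grid value, including the endpoints, is an $(\error/5,\conf/(N+1))$-approximation by \Cref{cor:espolyestimatecorollary}. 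The weights $w\ge0$ are nonnegative multipliers, so a union bound over the $N+1$ calls gives, with probability $\ge1-\conf$, $|\hat I_{\mathrm{est}}-\hat I_{\mathrm{mid}}|\le(\error/5)\hat I_{\mathrm{mid}}\le(\error/5)(1+\error/5)\GBS$.
\end{itemize}

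Summing the four errors gives $\frac45\error+\frac{\error^2}{25}\le\error$, exactly as in \Cref{lem:Emidbound}.

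One bookkeeping point: as written, the loop in $\estimateGB$ omits the factor $w(t_j)$ on the interior evaluations. I would read $c_j$ as $\evaluate(\cdot)\,w(L+j\cdot\mathrm{grid})$, matching the endpoint terms.

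I expect the only real obstacle to be the $f''$ bound with constants matching the algorithm's $N$, including the fact that the trapezoidal error formula uses a single $\xi$. Every other step is a direct transcription of the $\F{k}$ proof.
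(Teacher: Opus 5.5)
Your proposal is correct and follows the paper's proof essentially step for step: the same exchange of the finite sum and the integral for the identity, the same normaliser $\GBS\ge|\F{0}|\varphi(1)$, and the same split into small, trapezoidal and tail parts, with the small-part, tail, quadrature and evaluation-noise errors summing to $\frac45\error+\frac{\error^2}{25}\le\error$. Both of your bookkeeping remarks are also right: the paper's bound on $|f''|$ silently drops the factor $2$ in front of $\maxfreq R_1$, so the constant in $N$ needs the harmless adjustment you describe (at most a $\sqrt{2}$-factor increase in $N$), and the paper's proof implicitly treats $c_j$ as $\hat f(t_j)$, i.e.\ with the weight $w(t_j)$ included, exactly as you read it.
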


\begin{proof}
    We first prove the identity in \Cref{eqn:generalBernsteinidentity}. We have

    \begin{align*}
        \GBS=\sum_{x\in\F{0}}\varphi(\f{x})
        =\sum_{x\in\F{0}}\int_0^\infty (1-e^{-t\f{x}})w(t)\ \mathrm dt
        =\int_0^\infty\sum_{x\in\F{0}}(1-e^{-t\f{x}})w(t)\ \mathrm dt
        =\int_0^\infty\espoly(t)w(t)\ \mathrm dt=I_{\mathrm{GB}}
    \end{align*}
    Now we start with the proof of the correctness of the algorithm $\estimateGB$, that is, proving that the output of the algorithm $\estimateGB$ is an $(\error,\conf)$ estimate of $\GBS$. For the sake of the proof, let us define the function $f(t)=\espoly(t)w(t)$. Let $U$ and $L$ be as defined in the algorithm.
    
    Observe that the output of the algorithm is $I_{\mathrm{est}}$ (Line~\ref{line:GBoutput}). In Line~\ref{line:boundGB}, the algorithm $\estimateK$ makes two calls to the subroutine $\evaluate$. From \Cref{cor:espolyestimatecorollary} we obtain that the first $\evaluate$ returns an $(\error/5, \conf/(N+1))$-approximation of $\espoly(U)$ and hence the first term is an $(\error/5, \conf/(N+1))$-approximation of $f(U)$ (let us call it $\hat f(U)$) while the second $\evaluate$ returns an $(\error/5, \conf/(N+1))$-approximation of of $\espoly(L)$ and hence the second term is an $(\error/5, \conf/(N+1))$-approximation of $f(L)$ (let us call this $\hat f(L)$). Thus the value of $\mathrm{bound}$ in Line~\ref{line:boundGB} is
    
    \begin{equation} \label{eq:GBbound}
        \mathrm{bound} = \hat f(U)+\hat f(L)
    \end{equation}
    which is an $(\error/5, 2\conf/(N+1))$-approximation of $f(U)+f(L)$ by an union bound. 

    Similarly in the \textbf{for} loop in Line~\ref{line:forloopGB}-\ref{line:forloopinsideGB}, we have $N-1$ many calls to $\evaluate$, and for each $j$, $c_j=\hat f(L+j\Delta t)$ is an $(\error/5, \conf/(N+1))$-approximation of $f(L+j\Delta t)$.

    Hence $I_{\mathrm{est}}$ is given by  

    \begin{equation}\label{eq:ikestGB}
    \hat I_{\mathrm{est}}=\Delta t\left(\frac{\hat f(U)+\hat f(L)}{2}+\sum_{j=1}^{N-1}\hat f(t_j)\right)
    \end{equation}
    Finally, the algorithm outputs $I_{\mathrm{est}}$, while the goal of the algorithm is the estimate $\GBS$ which is (from \Cref{eqn:generalBernsteinidentity}) equal to 
    $I_{\mathrm{GB}}$. Thus if we show that with probability at least $(1-\delta)$
    
    \begin{equation}\label{eqn:GBrequiredbound}
    |I_{\mathrm{GB}} - I_{\mathrm{est}}| \leq \error I_{\mathrm{GB}}
    \end{equation}
    then the output of the algorithm is an $(\error, \conf)$ of $\GBS$. We now prove \Cref{eqn:GBrequiredbound}. 
    
    Since we have not assumed that $I_{\mathrm{GB}}$ does not have a singularity near $0$, we need to chop off a little part around $0$ first (similar to the proof of \Cref{thm:AlgorithmEstimateKcorrectnessandspace}) and a tail part, and estimate the remaining part. So let us write $I_{\mathrm{GB}}$ as

    \begin{equation*}
        I_{\mathrm{GB}}=\underbrace{\int_0^L (\dots)}_{I_{\mathrm{small}}}+\underbrace{\int_L^U (\dots)}_{I_{\mathrm{trap}}}+\underbrace{\int_U^\infty (\dots)}_{I_{\mathrm{tail}}}
    \end{equation*}
    We deal with the three parts separately. In \Cref{lem:smallpartGB} we analytically upper bound the value of $I_{\mathrm{small}}$ and obtain  

    \begin{equation}\label{eqn:newsmallmainGB}
        I_{\mathrm{small}}\le (\error/5)I_{\mathrm{GB}}
    \end{equation}
    Then in \Cref{lem:tailpartGB} we again analytically upper bound the value of $I_{\mathrm{tail}}$ and obtain
    
    \begin{equation}\label{eqn:newtailmainGB}
        I_{\mathrm{tail}}\le (\error/5)I_{\mathrm{GB}}
    \end{equation}
    Finally in \Cref{lem:EtrapboundGB} we show that the distance between $I_{\mathrm{trap}}$ and $I_{\mathrm{est}}$ is small with respect to $I_{\mathrm{GB}}$; more precisely we show that

    \begin{equation}\label{eqn:newIkIestboundGB}
        |I_{\mathrm{trap}}-I_{\mathrm{est}}|\le \left(\frac{2\error}{5}+\frac{\error^2}{25}\right)I_{\mathrm{GB}}
    \end{equation}
    Using the triangle inequality and \Cref{eqn:newsmallmainGB}, \Cref{eqn:newtailmainGB} and \Cref{eqn:newIkIestboundGB} we obtain 
    
    \begin{align*}
        \frac{\bigg|I_{\mathrm{GB}}-I_{\mathrm{est}}\bigg|}{I_{\mathrm{GB}}}
        &=\frac{|I_{\mathrm{small}}+I_{\mathrm{trap}}+I_{\mathrm{tail}}- I_{\mathrm{est}}|}{I_{\mathrm{GB}}}
        \le \frac{I_{\mathrm{small}}}{I_{\mathrm{GB}}}+\frac{|I_{\mathrm{trap}}- I_{\mathrm{est}}|}{I_{\mathrm{GB}}}+\frac{I_{\mathrm{tail}}}{I_{\mathrm{GB}}}\\
        &\le \frac{\error}{5}+\left(\frac{2\error}{5}+\frac{\error^2}{25}\right)+\frac{\error}{5}\\
        &=\frac{4\error}{5}+\frac{\error^2}{25}
        \le\error
    \end{align*}
    Thus $I_{\mathrm{est}}$ is an $(\error,\conf)$ estimator of $I_{\mathrm{GB}}$ and hence the output of algorithm $\estimateGB$ is an $(\error,\conf)$ estimator of $\GBS$.
    
\end{proof}

The space usage of this algorithm depends on the particulars of the L\'evy density $r$. But at a general level, one can write the space usage and update time of the algorithm as $\tilde O\left(N\error^{-4}(\log(\conf)+\log(N))\log^2|\univ|\right)$ which more explicitly is

\begin{equation*}
    \tilde O\left(\frac{U^{3/2}}{(1-e^{-L})\error^{9/2}}\left(\maxfreq\sqrt{R_0}+\maxfreq^{1/2}\sqrt{R_1}+\sqrt{R_2}\right)\left(\log(|\univ|U\maxfreq)+\log(R_0R_1R_2)+\log\frac1{\error\conf}+\log\frac1{1-e^{-L}}\right)\log^4|\univ|\right).
\end{equation*}

To completely finish the proof of \Cref{thm:algorithmestimateGBcorrectnessandspace}, we now give the proofs of the lemmas used in the proof.

\begin{lemma}\label{lem:smallpartGB}
    By choosing $L$ such that $S_1(L)\le \error\varphi(1)/5\maxfreq$ (as in the algorithm $\estimateGB$), we have $I_{\mathrm{small}}\le (\error/5)I_{\mathrm{GB}}$.
\end{lemma}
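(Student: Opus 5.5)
I will mirror the argument of \Cref{lem:smallpartFk}: bound $I_{\mathrm{small}}$ from above using the linearization $1-e^{-u}\le u$, and bound $I_{\mathrm{GB}}$ from below using the frequency cap together with monotonicity of $\varphi$.

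\textbf{Upper bound on $I_{\mathrm{small}}$.} Since $w\ge 0$ and $0\le 1-e^{-t\f{x}}\le t\f{x}$, we get
\begin{equation*}
    0\le I_{\mathrm{small}}=\int_0^L\espoly(t)w(t)\ \mathrm dt\le \int_0^L \Bigl(\sum_{x\in\F{0}}\f{x}\Bigr)t\,w(t)\ \mathrm dt=\F{1}\,S_1(L).
\end{equation*}
Exchanging the finite sum and the integral is harmless because every term is nonnegative (Tonelli). The assumption $S_1(L)<\infty$ makes this bound finite.

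\textbf{Lower bound on $I_{\mathrm{GB}}$.} Use $\F{1}=\sum_{x\in\F{0}}\f{x}\le \maxfreq|\F{0}|$, which follows from $\f{x}\le\maxfreq$. Next, $\varphi$ is nondecreasing: it is a Bernstein function, and this is also visible directly from the L\'evy--Khintchine form, since $1-e^{-ts}$ increases in $s$ and $w\ge0$. Every $x\in\F{0}$ has $\f{x}\ge 1$, so $\varphi(\f{x})\ge\varphi(1)$. By \Cref{eqn:generalBernsteinidentity},
\begin{equation*}
    I_{\mathrm{GB}}=\GBS=\sum_{x\in\F{0}}\varphi(\f{x})\ge \varphi(1)|\F{0}|.
\end{equation*}

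\textbf{Combining.} Putting the two bounds together with the choice $S_1(L)\le \error\varphi(1)/(5\maxfreq)$ gives
\begin{equation*}
    I_{\mathrm{small}}\le \maxfreq|\F{0}|\cdot\frac{\error\varphi(1)}{5\maxfreq}\le \frac{\error}{5}I_{\mathrm{GB}}.
\end{equation*}
There is no real obstacle here. The only point needing care is monotonicity of $\varphi$, which gives $\varphi(\f{x})\ge\varphi(1)$. The case $\varphi(1)=0$ is degenerate: then $w=0$ almost everywhere, both sides vanish, and the inequality holds trivially.
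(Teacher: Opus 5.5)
Your proof is correct and follows the paper's argument essentially verbatim. You linearize $1-e^{-t\f{x}}\le t\f{x}$ to get $I_{\mathrm{small}}\le \maxfreq|\F{0}|S_1(L)$ (passing through $\F{1}$ first is immaterial), and you use monotonicity of $\varphi$ to lower-bound $I_{\mathrm{GB}}\ge\varphi(1)|\F{0}|$; the paper's extra remark that such an $L$ exists and your treatment of the degenerate case $\varphi(1)=0$ are both just harmless additions.
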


\begin{proof}
    Note that such an $L$ is guaranteed to exist as $tw(t)\ge 0$ and thus $S_1(y)\to 0$ as $y\downarrow 0$. Then observe that we can use the fact that $\varphi$ is non-decreasing (since Bernstein functions satisfy $\varphi'(x)\ge 0$ (see \Cref{def:Bernsteinfunction}) and hence are non-decreasing by definition) to conclude
    
    \begin{equation}\label{eqn:GBintegralbound}
        I=\sum_{x\in\F{0}}\varphi(\f{x})\ge |\F{0}|\varphi(1)
    \end{equation}
    Now for $I_{\mathrm{small}}$, using the inequality $1-e^{st}\le st$ for $s,t\ge 0$, we get 
    
    \begin{equation*}
        \espoly(t)=\sum_{x\in\F{0}}(1-e^{-t{\f{x}}})\le \sum_{x\in\F{0}}t\f{x}\le |\F{0}|\maxfreq t
    \end{equation*}
    and hence

    \begin{equation*}
        I_{\mathrm{small}}\le |\F{0}|\tau \int_0^Ltw(t)\ \mathrm dt=|\F{0}|\tau S_1(L).
    \end{equation*}
    By our choice of $L$ and \Cref{eqn:GBintegralbound}, we have $S_1(L)\le\error\varphi(1)/5\maxfreq$ and hence $I_{\mathrm{small}}\le\error I/5$.   
\end{proof}

\begin{lemma}\label{lem:tailpartGB}
    Let $T_0(U):=\int_U^\infty w(t)\ \mathrm dt$. Then by choosing $U$ such that $T_0(U)\le \error\varphi(1)/5$ (as in the algorithm $\estimateGB$), we have $I_{\mathrm{tail}}\le (\error/5)I_{\mathrm{GB}}$.
\end{lemma}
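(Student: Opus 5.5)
The plan mirrors \Cref{lem:tailpartFk} and \Cref{lem:SRtailbound}: bound $\espoly$ above by a constant on the tail, integrate the density, and compare with a lower bound on $I_{\mathrm{GB}}$ already established in the proof of \Cref{lem:smallpartGB}.

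First, since $1-e^{-t\f{x}}\le 1$ for every $t\ge 0$ and every $x$, we have
\begin{equation*}
    \espoly(t)=\sum_{x\in\F{0}}(1-e^{-t\f{x}})\le |\F{0}|.
\end{equation*}
Because $w\ge 0$, we may multiply this inequality by $w(t)$ and integrate over $[U,\infty)$ to get
\begin{equation*}
    I_{\mathrm{tail}}=\int_U^\infty \espoly(t)w(t)\ \mathrm dt\le |\F{0}|\int_U^\infty w(t)\ \mathrm dt=|\F{0}|\,T_0(U).
\end{equation*}
Note that $T_0(U)<\infty$ for every $U>0$, since $\int_0^\infty\min(1,t)w(t)\ \mathrm dt<\infty$ by \Cref{thm:bernsteinstheorem}. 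Consequently $T_0(U)\to 0$ as $U\to\infty$, so a $U$ satisfying $T_0(U)\le\error\varphi(1)/5$ exists whenever $\varphi(1)>0$. If $\varphi(1)=0$, then $\varphi\equiv 0$ on $[0,1]$ and hence everywhere, and the statement is trivial.

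Second, the proof of \Cref{lem:smallpartGB} gives $I_{\mathrm{GB}}\ge |\F{0}|\varphi(1)$ (\Cref{eqn:GBintegralbound}). This uses that $\varphi$ is nondecreasing and that every $x\in\F{0}$ has $\f{x}\ge 1$. Combining the two displays,
\begin{equation*}
    I_{\mathrm{tail}}\le |\F{0}|\,\frac{\error\varphi(1)}{5}\le \frac{\error}{5}I_{\mathrm{GB}}.
\end{equation*}

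There is no real obstacle here. The only points to check are the nonnegativity of $w$ and the finiteness of $T_0(U)$, which justify the integral manipulation, and the degenerate empty-stream case. In that case both sides are $0$ and the inequality holds trivially.
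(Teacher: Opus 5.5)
Your proof is correct and is essentially the paper's argument: bound $\espoly(t)\le|\F{0}|$, integrate against $w\ge 0$ over $[U,\infty)$ to get $I_{\mathrm{tail}}\le|\F{0}|\,T_0(U)$, and compare with $I_{\mathrm{GB}}\ge|\F{0}|\varphi(1)$ from \Cref{eqn:GBintegralbound}. Your extra checks are sound and make explicit what the paper only asserts in passing: finiteness of $T_0(U)$ and $T_0(U)\to 0$ via the L\'evy integrability condition, and the degenerate cases $\varphi(1)=0$ and the empty stream.
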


\begin{proof}
    We start by noting that such an $U$ exists as $T_0(y)\to 0$ as $y\to\infty$. Then note that
    
    \begin{equation*}
        I_{\mathrm{tail}}=\int_U^\infty \espoly(t)w(t)\ \mathrm dt\le |\F{0}|\int_U^\infty w(t)\ \mathrm dt=|\F{0}|T_0(U)
    \end{equation*}
    By our choice of $U$ and \Cref{eqn:GBintegralbound}, we have $T_0(U)\le\error\varphi(1)/5$ and hence $I_{\mathrm{tail}}\le \error I/5$.
    
\end{proof}

\begin{lemma}\label{lem:EtrapboundGB}
    
    
    By choosing $N=\bigg\lceil\sqrt{\frac{5U^3(\maxfreq^2R_0+\maxfreq R_1+R_2)}{12\error\varphi(1)}}\bigg\rceil$ as in $\estimateGB$, we have $\bigg|I_{\mathrm{trap}}- I_{\mathrm{est}}\bigg|\le \left(\frac{2\error}5+\frac{\error^2}{25}\right)I_{\mathrm{GB}} $.
\end{lemma}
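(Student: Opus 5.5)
The proof will mirror \Cref{lem:Emidbound} and \Cref{lem:ItrapIestboundSR}. Set $f(t)=\espoly(t)w(t)$ on $[L,U]$ and let
\begin{equation*}
\hat I_{\mathrm{trap}}:=\mathrm{grid}\left(\frac{f(U)+f(L)}{2}+\sum_{j=1}^{N-1}f(t_j)\right)
\end{equation*}
be the exact trapezoidal sum. The error splits into two parts.

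\emph{Sampling error.} Each $\hat f(t_j)$ is an $(\error/5,\conf/(N+1))$-approximation of $f(t_j)$, because $w(t_j)\ge 0$ is a known deterministic multiplier. All trapezoid weights are nonnegative. A union bound over the $N+1$ calls to $\evaluate$ (\Cref{cor:espolyestimatecorollary}) then gives, with probability at least $1-\conf$,
\begin{equation*}
|\hat I_{\mathrm{trap}}-I_{\mathrm{est}}|\le(\error/5)\hat I_{\mathrm{trap}}.
\end{equation*}

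\emph{Discretization error.} Since $w$ is $C^2$ on $[L,U]$, the trapezoidal error formula of \Cref{subsec:trapezoidalrule} gives
\begin{equation*}
|\hat I_{\mathrm{trap}}-I_{\mathrm{trap}}|\le \frac{U^3}{12N^2}\sup_{[L,U]}|f''|.
\end{equation*}

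\emph{Bounding $f''$.} This is the key step. By Leibniz,
\begin{equation*}
f''=\espoly''w+2\espoly'w'+\espoly w''.
\end{equation*}
Termwise, $\espoly(t)=\sum_{x\in\F{0}}(1-e^{-t\f{x}})$ satisfies
\begin{equation*}
0\le\espoly\le|\F{0}|,\qquad 0\le\espoly'=\sum\f{x}e^{-t\f{x}}\le\maxfreq|\F{0}|,\qquad |\espoly''|=\sum\f{x}^2e^{-t\f{x}}\le\maxfreq^2|\F{0}|.
\end{equation*}
Hence
\begin{equation*}
|f''|\le|\F{0}|(\maxfreq^2R_0+2\maxfreq R_1+R_2)\le 2|\F{0}|(\maxfreq^2R_0+\maxfreq R_1+R_2).
\end{equation*}
Combining with \Cref{eqn:GBintegralbound}, $|\F{0}|\le I_{\mathrm{GB}}/\varphi(1)$, gives
\begin{equation*}
|\hat I_{\mathrm{trap}}-I_{\mathrm{trap}}|\le \frac{U^3(\maxfreq^2R_0+\maxfreq R_1+R_2)}{6N^2\varphi(1)}I_{\mathrm{GB}}.
\end{equation*}
The stated $N$ makes this at most about $(\error/5)I_{\mathrm{GB}}$. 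The factor-2 slack in the constant $12$ versus $6$ is where the main obstacle lies. I expect either the cross term $2\maxfreq R_1$ must be absorbed more carefully, or the constant in $N$ must be checked against it. If it does not match, I would tighten $|\espoly'|\,|w'|$ using $\espoly'\le\maxfreq|\F{0}|$ directly and accept a constant adjustment.

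\emph{Combining.} Use $\hat I_{\mathrm{trap}}\le(1+\error/5)I_{\mathrm{trap}}$ and $I_{\mathrm{trap}}\le I_{\mathrm{GB}}$, which holds since the integrand is nonnegative. The triangle inequality then gives
\begin{equation*}
|I_{\mathrm{trap}}-I_{\mathrm{est}}|\le\frac{\error}{5}I_{\mathrm{GB}}+\frac{\error}{5}\left(1+\frac{\error}{5}\right)I_{\mathrm{GB}}=\left(\frac{2\error}{5}+\frac{\error^2}{25}\right)I_{\mathrm{GB}},
\end{equation*}
which is the claim.
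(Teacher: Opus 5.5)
Your outline follows the paper's route: a union bound for the sampling error, the trapezoid remainder bounded by $\sup|f''|$, Leibniz on $f''=\espoly''w+2\espoly'w'+\espoly w''$, then combining. Your suspicion about the factor $2$ is well founded. The paper writes $|f''|\le|\espoly''w|+2|\espoly'w'|+|\espoly w''|$ and in the next line silently drops the $2$, obtaining $|\F{0}|(\maxfreq^2R_0+\maxfreq R_1+R_2)$. That slip is the only reason the constant $12$ in $N$ works out.

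Your proof does not close the gap either. With the stated $N$, your own bound $\frac{U^3(\maxfreq^2R_0+\maxfreq R_1+R_2)}{6N^2\varphi(1)}I_{\mathrm{GB}}$ is at most $\frac{2\error}{5}I_{\mathrm{GB}}$, not ``about'' $\frac{\error}{5}I_{\mathrm{GB}}$. Yet your combining step uses $\frac{\error}{5}I_{\mathrm{GB}}$. Carried through honestly, your argument gives only $|I_{\mathrm{trap}}-I_{\mathrm{est}}|\le(\frac{3\error}{5}+\frac{2\error^2}{25})I_{\mathrm{GB}}$. Here you should use $\hat I_{\mathrm{trap}}\le I_{\mathrm{trap}}+|\hat I_{\mathrm{trap}}-I_{\mathrm{trap}}|$; your $\hat I_{\mathrm{trap}}\le(1+\error/5)I_{\mathrm{trap}}$ does not follow. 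The resulting bound breaks the budget $\frac{4\error}{5}+\frac{\error^2}{25}\le\error$ used in \Cref{thm:algorithmestimateGBcorrectnessandspace}.

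Your proposed repair does not help. The $2$ comes from the Leibniz rule, not from the estimate $\espoly'\le\maxfreq|\F{0}|$, so bounding $|\espoly'|\,|w'|$ directly returns the same $2\maxfreq R_1$. The factor also cannot be absorbed for a general nonnegative $C^2$ density along the $\sup|f''|$ route. For an element with $\f{x}=\maxfreq$, the Leibniz terms are $-\maxfreq^2e^{-t\maxfreq}w+2\maxfreq e^{-t\maxfreq}w'+(1-e^{-t\maxfreq})w''$. At a point of $[L,U]$ with $t\maxfreq$ small, $w>0$ and $w'<0$, the first two terms have the same sign and the third is suppressed. So $|f''|$ can approach $|\F{0}|(\maxfreq^2R_0+2\maxfreq R_1)$, which exceeds $|\F{0}|(\maxfreq^2R_0+\maxfreq R_1+R_2)$ whenever $\maxfreq R_1>R_2$.

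The clean fix is to change $N$. Either take $N=\lceil\sqrt{5U^3(\maxfreq^2R_0+2\maxfreq R_1+R_2)/(12\error\varphi(1))}\rceil$ together with the sharper bound $|f''|\le|\F{0}|(\maxfreq^2R_0+2\maxfreq R_1+R_2)$, or replace $12$ by $6$ and keep your factor-$2$ bound. With either change your argument goes through and gives exactly $(\frac{2\error}{5}+\frac{\error^2}{25})I_{\mathrm{GB}}$. This enlarges $N$ by at most a factor $\sqrt2$ and leaves the complexity bound unchanged. With $N$ exactly as stated, neither your proof nor the paper's establishes the claimed constant.
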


\begin{proof}
    Recall from \Cref{eq:ikestGB} that

    \begin{equation*}
    I_{\mathrm{est}}=\Delta t\left(\frac{\hat f(U)+\hat f(L)}{2}+\sum_{j=1}^{N-1}\hat f(t_j)\right)
    \end{equation*}
    Since $\hat f(x)$ is an $(\error/5,\conf/(N+1))$ approximation of $f(x)$ for $x\in \{L,U\}\cup \{L+tj\}_{t\in [N-1]}$, $\hat I^{[k]}_{\mathrm{est}}$ is an $(\error,\conf)$-approximation of

    \begin{equation*}
        \hat I_{\mathrm{trap}}:=\Delta t\left(\frac{f(U)+f(L)}{2}+\sum_{j=1}^{N-1}f(t_j)\right)
    \end{equation*}
    by the union bound. That is, with probability at last $1-\conf$
    
    \begin{equation}\label{eqn:boundforhattrapandestGB}
        |\hat I_{\mathrm{trap}}-I_{\mathrm{est}}|\le \frac{\error}5\hat I_{\mathrm{trap}}
    \end{equation}
    Also, by the definition of trapezoidal rule in \Cref{subsec:trapezoidalrule}, there exists some $\xi\in (L,U)$ such that

    \begin{equation*}
        |\hat I_{\mathrm{trap}}-I_{\mathrm{trap}}|=\frac{(U-L)^3}{12N^2}|f''(\xi)|\le \frac{U^3}{12N^2}|f''(\xi)|
    \end{equation*}
    Now since $f''(t)=\espoly''(t)w(t)+2\espoly'(t)w'(t)+\espoly(t)w''(t)$, we get
    
    \begin{equation*}
        |f''(t)|\le |\espoly''(t)w(t)|+2|\espoly'(t)w'(t)|+|\espoly(t)w''(t)|
    \end{equation*}
    and since $|\espoly(t)|\le|\F{0}|, |\espoly'(t)|\le |\F{0}|\maxfreq, |\espoly''(t)|\le |\F{0}|\maxfreq^2$, we have
    
    \begin{equation*}
        |f''(t)|\le |\F{0}|(\tau^2|w(t)|+\tau|w'(t)|+|w''(t)|)
    \end{equation*}
    Since the algorithm sets $R_0=\sup_{[L,U]}|w(t)|, R_1=\sup_{[L,U]}|w'(t)|, R_2=\sup_{[L,U]}|w''(t)|$ (all of which exist and is finite since $r\in C^2([L-\eta,U+\eta])$ for some $\eta$ small enough and $[L-\eta,U+\eta]$ is compact), we have

    \begin{equation*}
        |f''(t)|\le |\F{0}|(\maxfreq^2R_0+\maxfreq R_1+R_2)
    \end{equation*}
    and thus using \Cref{eqn:GBintegralbound} and setting $N=\bigg\lceil\sqrt{\frac{5U^3(\maxfreq^2R_0+\maxfreq R_1+R_2)}{12\error\varphi(1)}}\bigg\rceil$ we have
    
    \begin{equation}\label{eqn:GBEtrapboundinlemma}
        |I_{\mathrm{trap}}-\hat I_{\mathrm{trap}}|\le \error I_{\mathrm{GB}}/5
    \end{equation}
    Then using \Cref{eqn:boundforhattrapandestGB}, \Cref{eqn:GBEtrapboundinlemma} and the fact that $I_{\mathrm{trap}}\le I_{\mathrm{GB}}$, we obtain

    \begin{align*}
        |I_{\mathrm{trap}}-I_{\mathrm{est}}|&=|I_{\mathrm{trap}}-\hat I_{\mathrm{trap}}+\hat I_{\mathrm{trap}}-\hat I_{\mathrm{est}}|\le |I_{\mathrm{trap}}-\hat I_{\mathrm{trap}}|+|\hat I_{\mathrm{trap}}-I_{\mathrm{est}}|\\
        &\le \frac{\error}5I_{\mathrm{GB}}+\frac{\error}5I_{\mathrm{trap}}\\
        &\le \frac{\error}{5}I_{\mathrm{GB}}+\frac{\error}{5}\left(1+\frac{\error}{5}\right)I_{\mathrm{GB}}\\
        &=\left(\frac{2\error}5+\frac{\error^2}{25}\right)I_{\mathrm{GB}} 
    \end{align*}
\end{proof}

\subsection{Remarks on if the L\'evy density is not explicitly known}\label{subsec:unknowndensity}

We add this section for the sake of completeness and keep the exposition brief. If the L\'evy density $w(t)$ is not explicitly given for a Bernstein function $\varphi$, then one has to do a Laplace inversion to get the density. More formally, note that from \Cref{subsec:BernsteinfunctionLKrepprelim}, in particular from \Cref{thm:bernsteinstheorem} and the definition of the complementary Laplace transform \Cref{def:complementarylaplace}, we have

\begin{equation*}
    \varphi(t)=\int_0^\infty(1-e^{-ts})w(s)\ \mathrm ds=\mathcal L^c[w(s)](t)
\end{equation*}
Differentiating both sides with respect to $t$, swapping integral and differentiation (we assume $\int_0^\infty sw(s)\ \mathrm ds<\infty$ here and hence this is allowed [the weaker assumption of $sw(s)$ being locally integrable also suffices]), we get that

\begin{equation*}
    \ddt{t}\mathcal L^c[w(s)](t)=\int_0^\infty se^{-ts}w(s)\ \mathrm ds\Rightarrow w(s)=\frac1s\mathcal L^{-1}\left\{\ddt{t}\mathcal L^c[w(s)](t)\right\}(s)\Rightarrow w(s)=\frac1s\mathcal L^{-1}\left\{\varphi'(t)\right\}(s)
\end{equation*}
So we have proved the following.

\begin{theorem}\label{thm:BernsteininverseLaplace}
    Let $\varphi$ be a Bernstein function with L\'evy-Khintchine representation

    \begin{equation*}
        \varphi(t)=\int_0^\infty (1-e^{-ts})w(s)\ \mathrm ds
    \end{equation*}
    Then we have $w(s)=\frac1s\mathcal L^{-1}\left\{\varphi'(t)\right\}(s)$ and consequently, we can write

    \begin{equation*}
        \varphi(t)=\int_0^\infty (1-e^{-ts})\frac1s\mathcal L^{-1}\left\{\varphi'(t)\right\}(s)\ \mathrm ds
    \end{equation*}    
\end{theorem}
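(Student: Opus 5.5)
The plan is to start from the Lévy–Khintchine representation assumed in the statement, differentiate both sides in $t$, and then recognise the result as a Laplace transform that can be inverted. The hypotheses needed are mild. First, $w\ge 0$. Second, $s\,w(s)$ should be locally integrable; this is already implicit in the Bernstein condition $\int_0^\infty \min(1,s)\,w(s)\,\mathrm ds<\infty$ from \Cref{thm:bernsteinstheorem}. These are the same assumptions used informally just before the statement.

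\textbf{Step 1: differentiate under the integral.} Fix $t_0>0$ and restrict to $t\ge t_0/2$. The $t$-derivative of the integrand is $s e^{-ts}w(s)$. This is dominated by $s e^{-t_0 s/2}w(s)$, uniformly in $t$. To see the dominating function is integrable, split at $s=1$:
\begin{itemize}
\item on $[0,1]$ it is at most $s\,w(s)$, which is integrable by the Bernstein condition;
\item on $[1,\infty)$ it is at most $C_{t_0}\,w(s)$, where $C_{t_0}=\sup_{s\ge1}s e^{-t_0s/2}<\infty$, and $\int_1^\infty w<\infty$ again by the Bernstein condition.
\end{itemize}
Dominated convergence (via the mean value theorem on difference quotients) then justifies differentiating under the integral, giving
\[
\varphi'(t)=\int_0^\infty e^{-ts}\,s\,w(s)\,\mathrm ds=\mathcal L[s\,w(s)](t)\qquad (t>0).
\]

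\textbf{Step 2: invert the Laplace transform.} By Lerch's uniqueness theorem, a locally integrable function whose Laplace transform converges for all $t>0$ is determined almost everywhere by that transform. The function $g(s)=s\,w(s)$ qualifies, by the integrability shown in Step 1. Hence $\mathcal L^{-1}\{\varphi'\}(s)=s\,w(s)$ for a.e.\ $s$. If $w$ is continuous, as in the $C^2$ setting used for $\estimateGB$, this holds pointwise. Dividing by $s>0$ gives $w(s)=\frac1s\mathcal L^{-1}\{\varphi'(t)\}(s)$.

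\textbf{Step 3: substitute back.} Putting this expression for $w$ into the original representation gives the displayed formula for $\varphi(t)$ immediately.

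\textbf{Main obstacle.} The only delicate point is rigour in Steps 1 and 2: the domination near $s=0$, and the fact that $\mathcal L^{-1}$ is only defined up to null sets. I would handle the latter by stating the identity for a.e.\ $s$, which is harmless because $w$ only ever appears inside integrals. If needed, the Post–Widder inversion formula can be cited to make $\mathcal L^{-1}$ explicit.
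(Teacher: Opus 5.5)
Your proof is correct and follows the same route as the paper: differentiate the L\'evy--Khintchine representation in $t$ to obtain $\varphi'(t)=\int_0^\infty e^{-ts}\,s\,w(s)\,\mathrm ds$, invert the Laplace transform to get $s\,w(s)$, and substitute back. Your version is more careful than the paper's in two places: you justify differentiating under the integral by dominated convergence, with the integrability of $s\,w(s)$ derived from $\int_0^\infty\min(1,s)\,w(s)\,\mathrm ds<\infty$ where the paper simply assumes it, and you use Lerch's theorem to make explicit that the inversion holds only almost everywhere, which does not affect the integral in the final formula.
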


As an example, consider the function $\varphi(t)=\frac{t}{1+t}$. We have $\varphi'(t)=\frac1{(1+t)^2}$ and $\mathcal L^{-1}\{\varphi'(t)\}(s)=se^{-s}$ and hence $w(s)=e^{-s}$, agreeing with our calculation for $\SR$.

This suggests an immediate modification of $\estimateGB$ in the following way: everywhere, we replace evaluations of $w(s)$ with $\frac1s\mathcal L^{-1}\left\{\varphi'(t)\right\}(s)$. For most practical functions, one can analytically compute the Laplace inversion by hand and use the algorithm from the previous section. But in cases where such analytic computations are not possible, one can do numerical inversion, but this loses theoretical guarantees since there is no known way to get $\varepsilon$-multiplicative estimates of a Laplace inversion (unless the function is particularly well looking). For theoretical purposes, there are two main ways to handle this (but again, there are no theoretical guarantees on these methods): 

\begin{enumerate}
    \item One is to use the Bromwich integral to compute the Laplace inversion, which in all its glory reads (in our context)

    \begin{equation*}
        \mathcal L^{-1}\{\varphi'(t)\}(s)=\frac1{2\pi s\iota}\lim_{T\to\infty}\int_{\gamma-\iota T}^{\gamma+\iota T}e^{s\lambda}\varphi'(\lambda)\ \mathrm d\lambda
    \end{equation*}
    where the integration is done along the vertical line $\Re(\lambda)=\gamma$ in the complex plane such that $\gamma$ is greater than the real part of all singularities of $\varphi'$ and $\varphi'$ is bounded on the line, for example if the contour path is in the region of convergence.

    For a proof of this, see Section 5.4 in \cite{Hunt}.

    \item The other is to use Post's inversion formula for Laplace transforms. It requires $\mathcal L[\varphi'(s)](t)$ to be of exponential order, that is, $\sup_{t>0}\frac{\mathcal L^{-1}[\varphi'(s)](t)}{e^{bt}}<\infty$ for some $b\in\mathbb R$. Then
    \begin{equation*}
        \mathcal L^{-1}\{\varphi'(t)\}(s)=\lim_{k\to\infty} \frac{(-1)^k}{k!}\left(\frac{k}{s}\right)^{k+1}\varphi^{(k+1)}(k/s)
    \end{equation*}

    For a proof of this, see \cite{rosehulman}.
\end{enumerate}

But some form of approximations of the Laplace inversions are possible with regularization/mollification, for example, see \cite{maréchal2023regularizationinverselaplacetransform}. The space usage of such algorithms are quite understudied in the literature. We keep it as future work to understand how to efficiently estimate Laplace inversions in the context of set streams.

\section{A complexity-theoretic barrier for lower bounds for $\F{k}$}
\label{sec:complexity-barrier}

In this section we explain why proving lower bounds for unrestricted Delphic $\F{k}$ estimation appears to be difficult. The point is not that the bounded-frequency assumption is known to be necessary. Rather, we show that a sufficiently strong lower bound ruling out polylogarithmic-space and polylogarithmic-update algorithms for unrestricted Delphic $\F{k}$ would imply a threshold-counting time-space separation. This is analogous to the barrier for $\F{0}$ in \cite{NandiVGMP024}, but the oracle needed for $\F{k}$ is naturally a threshold aggregate oracle rather than an existential oracle.

Throughout this section, let $\univ=\{0,1\}^n$, so $n=\log|\univ|$. Let $\stream=\angleb{S_1,\ldots,S_\streamlen}$ be a stream of Delphic sets over $\univ$. For $x\in\univ$, recall that $\f{x}=|\{i\in[\streamlen]:x\in S_i\}|$ and $\F{k}(\stream)=\sum_{x\in\univ}\f{x}^k$. We assume that every input set $S_i$ is given by a binary representation $r_i$, and that membership $x\in S_i$ can be decided from $(r_i,x)$ in time polynomial in $|r_i|+n$ and space linear in $|r_i|+n$.

The ordinary singleton-stream algorithm that we simulate is the stable-sketch algorithm of \cite{indyk2006stable}. We only need its linearity. After finite-precision truncation and discretization, the sketch can be written as follows.

\begin{definition}[Finite-precision linear singleton-stream sketch]
\label{def:finite-precision-linear-sketch}
A finite-precision linear singleton-stream sketch for $\F{k}$ over $\univ$ is specified by a random seed $\sigma$, an integer $d$, integer counters $Y_1,\ldots,Y_d$, and integer-valued functions $w_{\sigma,j}:\univ\to\mathbb Z$ for $j\in[d]$, such that a singleton update $x$ performs $Y_j\leftarrow Y_j+w_{\sigma,j}(x)$ for every $j\in[d]$. At the end, a decoder $\mathsf{Dec}$ outputs a value from $\sigma$ and the counter vector $Y=(Y_1,\ldots,Y_d)$. We assume that each $w_{\sigma,j}(x)$ is computable in time polynomial in the parameter lengths and linear space, and that $|w_{\sigma,j}(x)|\le 2^b$ for some precision parameter $b$.

We say that such a sketch is an $(\error,\conf)$-estimator for singleton-stream $\F{k}$ if, for every singleton stream over $\univ$ with frequency vector $\f{\cdot}$, the output is a $(1\pm\error)$-approximation to $\sum_{x\in\univ}\f{x}^k$ with probability at least $1-\conf$ over the choice of $\sigma$.
\end{definition}

The singleton-stream stable sketch can then be expressed in the following form. We just need a small technicality to write the algorithm cleanly.

Let $Z_k$ denote a standard symmetric $k$-stable random variable, and let
$c_k=\mathrm{median}(|Z_k|^k)$. The standard stable-sketch decoder is
\[\mathsf{StableDecode}_{k}(\sigma,Y_1,\ldots,Y_d)
=
\frac{\mathrm{median}_{j\in[d]} |Y_j|^k}{c_k}.\]
For the finite-precision implementation, $c_k$ is replaced by the corresponding calibration constant of the finite-precision weight distribution. We treat this standard decoder as part of the singleton-stream stable sketch guarantee.

\begin{algorithm}[H]
\caption{$\mathsf{StableSketch}_{k}$: finite-precision singleton-stream sketch}
\label{alg:singleton-stable-sketch}
\KwIn{Singleton stream, moment parameter $0<k\le 2$, error $\error$, confidence $\conf$}
\KwOut{$(\error,\conf)$-estimate of $\F{k}$}
Choose $d=\mathrm{poly}_k(\error^{-1},\log(1/\conf))$\;
Choose a random seed $\sigma$ defining finite-precision $k$-stable weights $w_{\sigma,j}:\univ\to\mathbb Z$ for $j\in[d]$\;
Initialize $Y_1,\ldots,Y_d\leftarrow 0$\;
\For{each singleton update $x$}{
    \For{$j=1$ \KwTo $d$}{
        $Y_j\leftarrow Y_j+w_{\sigma,j}(x)$\;
    }
}
\Return{$\mathsf{StableDecode}_{k}(\sigma,Y_1,\ldots,Y_d)$}\;
\end{algorithm}

The obstruction in the Delphic setting is now visible. If a set $S_i$ arrives, then the update that \Cref{alg:singleton-stable-sketch} would perform on the expanded stream is not a single update, but rather the aggregate update $\sum_{x\in S_i}w_{\sigma,j}(x)$ for every sketch coordinate $j$. We now define an oracle for exactly this operation.

\begin{definition}[Aggregate oracle]
\label{def:aggregate-oracle}
Fix the functions $w_{\sigma,j}$ of a finite-precision linear sketch. The value aggregate oracle $\mathsf{AGG}_w$ is the oracle that, given a Delphic set representation $r$, a seed $\sigma$, and an index $j\in[d]$, returns \[\mathsf{AGG}_w(r,\sigma,j)=\sum_{x\in S_r}w_{\sigma,j}(x).\]
The associated threshold aggregate language is \[\mathsf{THR}\text{-}\mathsf{AGG}_w=\left\{(r,\sigma,j,T):\sum_{x\in S_r}w_{\sigma,j}(x)\ge T\right\}.\]
\end{definition}

Using this oracle, we can write the Delphic version of the stable sketch explicitly.

\begin{algorithm}[H]
\caption{$\mathsf{DelphicStableSketch}_{k}^{\mathsf{AGG}}$}
\label{alg:delphic-stable-sketch-agg}
\KwIn{Delphic set stream $\stream=\angleb{S_1,\ldots,S_\streamlen}$, moment parameter $0<k\le 2$, error $\error$, confidence $\conf$}
\KwOut{$(\error,\conf)$-estimate of $\F{k}(\stream)$}
Choose $d=\mathrm{poly}_k(\error^{-1},\log(1/\conf))$\;
Choose a random seed $\sigma$ defining finite-precision $k$-stable weights $w_{\sigma,j}:\univ\to\mathbb Z$ for $j\in[d]$\;
Initialize $Y_1,\ldots,Y_d\leftarrow 0$\;
\For{each set update $S_i$ with representation $r_i$}{
    \For{$j=1$ \KwTo $d$}{
        $a_{i,j}\leftarrow \mathsf{AGG}_w(r_i,\sigma,j)$\;
        $Y_j\leftarrow Y_j+a_{i,j}$\;
    }
}
\Return{$\mathsf{StableDecode}_{k}(\sigma,Y_1,\ldots,Y_d)$}\;
\end{algorithm}

\begin{lemma}[Exact simulation of the expanded stream]
\label{lem:exact-stable-simulation}
For every fixed seed $\sigma$, \Cref{alg:delphic-stable-sketch-agg} produces exactly the same counter vector as \Cref{alg:singleton-stable-sketch} would produce on the expanded singleton stream obtained by replacing each set $S_i$ with all elements of $S_i$.
\end{lemma}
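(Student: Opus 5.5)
The plan is a direct induction over the stream, using only that both algorithms draw the same seed $\sigma$ and that the counter updates are additive in $\mathbb Z$. Fix $\sigma$ (and hence $d$ and the functions $w_{\sigma,j}$). Let $Y^{(i)}_j$ denote the value of counter $j$ in \Cref{alg:delphic-stable-sketch-agg} after processing $S_1,\ldots,S_i$. Let $\tilde Y^{(i)}_j$ denote the value of counter $j$ in \Cref{alg:singleton-stable-sketch} after processing the expanded singleton stream up through the block of singletons coming from $S_i$. The order in which the elements of each $S_i$ are listed in the expansion is arbitrary. I will show $Y^{(i)}_j=\tilde Y^{(i)}_j$ for all $i\in\{0,\ldots,\streamlen\}$ and $j\in[d]$, by induction on $i$.

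The base case $i=0$ is immediate, since both vectors are initialized to $0$. For the inductive step, consider the block of singleton updates $x\in S_i$ in \Cref{alg:singleton-stable-sketch}. Each such update adds $w_{\sigma,j}(x)$ to $\tilde Y_j$. Integer addition is commutative and associative and involves no overflow or rounding (the counters are unbounded integers and each $w_{\sigma,j}(x)$ is an exact integer by \Cref{def:finite-precision-linear-sketch}). Hence after the block,
\[
\tilde Y^{(i)}_j=\tilde Y^{(i-1)}_j+\sum_{x\in S_i}w_{\sigma,j}(x),
\]
independently of the listing order. On the Delphic side, by \Cref{def:aggregate-oracle},
\[
a_{i,j}=\mathsf{AGG}_w(r_i,\sigma,j)=\sum_{x\in S_{r_i}}w_{\sigma,j}(x),
\]
and $S_{r_i}=S_i$. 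So $Y^{(i)}_j=Y^{(i-1)}_j+a_{i,j}$, and the induction hypothesis closes the step.

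Equivalently, unrolling the recurrence, both final counters equal $\sum_{x\in\univ}\f{x}\,w_{\sigma,j}(x)$, because each $x$ is counted once for each $i$ with $x\in S_i$.

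I expect no real obstacle. The only points needing care are bookkeeping ones. First, the oracle must be evaluated with the \emph{same} seed $\sigma$ that defines the sketch weights. Second, finite precision enters only through the integer-valued $w_{\sigma,j}$, so exact integer arithmetic makes the rearrangement of sums legitimate. Third, the expanded stream may be exponentially long, but the lemma is a statement about the resulting counter vector, not about running time, so its length does not matter. Since the decoder is applied to identical $(\sigma,Y)$ in both algorithms, the outputs coincide as well.
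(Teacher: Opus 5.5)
Your proof is correct and is essentially the paper's argument. Both fix $\sigma$, note that the singleton block for $S_i$ adds $\sum_{x\in S_i}w_{\sigma,j}(x)$ to counter $j$, which is exactly the oracle answer $a_{i,j}$, and conclude that both counters equal $\sum_{x\in\univ}\f{x}\,w_{\sigma,j}(x)$; the paper sums over $i$ directly rather than inducting on prefixes, which is only a cosmetic difference.
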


\begin{proof}
Fix the seed $\sigma$. Consider the expanded singleton stream obtained by replacing each set $S_i$ with all elements of $S_i$. If \Cref{alg:singleton-stable-sketch} were run on this expanded stream, then its $j$-th counter at the end would be \[Y_j^{\mathrm{exp}}=\sum_{i=1}^{\streamlen}\sum_{x\in S_i}w_{\sigma,j}(x)=\sum_{x\in\univ}\f{x}w_{\sigma,j}(x).\]

On the other hand, \Cref{alg:delphic-stable-sketch-agg} does not expand the sets. When the set representation $r_i$ arrives, it queries \[a_{i,j}=\mathsf{AGG}_w(r_i,\sigma,j)=\sum_{x\in S_i}w_{\sigma,j}(x)\] for every $j\in[d]$, and then updates $Y_j\leftarrow Y_j+a_{i,j}$.

Therefore, after the whole stream has been processed, \[Y_j=\sum_{i=1}^{\streamlen}a_{i,j}=\sum_{i=1}^{\streamlen}\sum_{x\in S_i}w_{\sigma,j}(x)=Y_j^{\mathrm{exp}}.\]
This holds for every $j\in[d]$, and hence the whole counter vector is identical to the counter vector obtained by running the singleton-stream sketch on the expanded stream.
\end{proof}

\begin{theorem}[Oracle upper bound for unrestricted Delphic $\F{k}$]
\label{thm:oracle-upper-bound-delphic-fk}
Fix $0<k\le 2$. Suppose we have oracle access to the value aggregate oracle for the finite-precision weights of a standard $k$-stable singleton-stream sketch. Then unrestricted Delphic $\F{k}$ estimation has a one-pass algorithm with space and per-set update time polynomial in $\log|\univ|$, $\log \streamlen$, $\error^{-1}$, and $\log(1/\conf)$, with no bounded-frequency assumption.
\end{theorem}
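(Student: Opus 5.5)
The algorithm witnessing the claim will be \Cref{alg:delphic-stable-sketch-agg}. Its correctness will follow from \Cref{lem:exact-stable-simulation} together with the singleton-stream guarantee of the finite-precision stable sketch. The argument has three steps, carried out in order.

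\emph{Correctness.} Fix the seed $\sigma$. By \Cref{lem:exact-stable-simulation}, the counter vector $(Y_1,\ldots,Y_d)$ produced by \Cref{alg:delphic-stable-sketch-agg} on $\stream$ is identical to the one \Cref{alg:singleton-stable-sketch} produces on the expanded singleton stream. That expanded stream has frequency vector exactly $(\f{x})_{x\in\univ}$. Both algorithms apply the same decoder $\mathsf{StableDecode}_k(\sigma,\cdot)$ to the same $(\sigma,Y)$, so their outputs coincide for every $\sigma$. Hence the failure probability over $\sigma$ is the same in both cases. The singleton-stream $(\error,\conf)$ guarantee holds for \emph{every} frequency vector, with no assumption on the maximum frequency, so the Delphic algorithm is an $(\error,\conf)$-estimator of $\F{k}(\stream)$. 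This step uses no frequency bound $\maxfreq$. The expanded stream has length $\sum_x\f{x}\le \streamlen|\univ|$, so its parameters are still polynomially related to $\log|\univ|$ and $\log\streamlen$.

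\emph{Space.} The algorithm stores the seed $\sigma$, the $d=\mathrm{poly}_k(\error^{-1},\log(1/\conf))$ counters, and the decoder's working space. Each counter satisfies $|Y_j|\le\sum_x\f{x}2^b\le \streamlen|\univ|2^b$, so it needs $O(\log\streamlen+\log|\univ|+b)$ bits. I will take from the standard finite-precision analysis of \cite{indyk2006stable} that the precision is $b=O(\log|\univ|+\log\streamlen+\log(1/\error))$. I will also take from that analysis, and from its derandomization, that the seed length is polylogarithmic in $|\univ|$ and $\streamlen$ and polynomial in $\error^{-1}$ and $\log(1/\conf)$. The decoder computes a median of $d$ numbers, which fits within the same bound.

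\emph{Update time.} Each arriving set triggers $d$ oracle calls, each returning an integer of $O(\log\streamlen+\log|\univ|+b)$ bits, and $d$ additions of that size. In the oracle model each call costs unit time, or time polynomial in $|r_i|$ to write the query. So the per-set update time is $\mathrm{poly}(\log|\univ|,\log\streamlen,\error^{-1},\log(1/\conf))$.

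The main obstacle is the precision and seed-length bookkeeping. The proof has to check carefully that the discretized stable weights still give a $(1\pm\error)$ estimator after the calibration constant $c_k$ is replaced, and that $b$ and $|\sigma|$ stay polylogarithmic even though frequencies are now unbounded, up to $\streamlen$. I would handle this by citing the finite-precision guarantee as part of the assumed singleton-stream sketch, as \Cref{def:finite-precision-linear-sketch} already packages it. I would then only verify that the counter magnitudes are bounded by $\streamlen|\univ|2^b$.
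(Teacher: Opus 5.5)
Your proposal is correct and follows essentially the same route as the paper: run \Cref{alg:delphic-stable-sketch-agg}, use \Cref{lem:exact-stable-simulation} to identify its counter vector with the expanded-singleton-stream sketch for every seed, and inherit correctness from the singleton stable-sketch guarantee, which needs no frequency bound; space is then the seed plus $d$ counters, and update time is $d$ oracle calls per set. Your extra bookkeeping (counters bounded by $\streamlen|\univ|2^b$, expanded stream length at most $\streamlen|\univ|$, so the finite-precision parameters stay polylogarithmic) spells out what the paper leaves to ``the usual finite-precision implementation.''
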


\begin{proof}
By \Cref{lem:exact-stable-simulation}, for every fixed seed $\sigma$, \Cref{alg:delphic-stable-sketch-agg} produces exactly the same sketch vector as \Cref{alg:singleton-stable-sketch} would produce on the expanded singleton stream. The expanded singleton stream has frequency vector $\f{\cdot}$, and hence its target value is exactly $\F{k}(\stream)=\sum_{x\in\univ}\f{x}^k$. Therefore the correctness guarantee follows from the correctness of the finite-precision stable sketch on singleton streams.

The space used is the space needed to store the seed and the $d$ counters, plus lower-order bookkeeping. The update uses one aggregate query per sketch coordinate. Thus, with oracle access to $\mathsf{AGG}_w$, the update time is polynomial in the sketch dimension and the precision parameters, and hence polynomial in $\log|\univ|$, $\log \streamlen$, $\error^{-1}$, and $\log(1/\conf)$ for the usual finite-precision implementation.
\end{proof}

The theorem above assumes access to the value oracle $\mathsf{AGG}_w$. We next relate this oracle to a threshold-counting class. First observe that since $|S_r|\le 2^n$ and $|w_{\sigma,j}(x)|\le 2^b$, the aggregate value lies in the interval $[-2^{n+b},2^{n+b}]$. Therefore a value query to $\mathsf{AGG}_w$ can be recovered exactly from $O(n+b)$ queries to $\mathsf{THR}\text{-}\mathsf{AGG}_w$ by binary search.

We now define the relevant complexity classes. Let $\#\mathrm{NTISP}(\mathrm{poly},\mathrm{lin})$ be the class of functions $g:\{0,1\}^*\to\mathbb N$ for which there is a nondeterministic Turing machine $M$ such that $g(z)$ is the number of accepting computation paths of $M$ on input $z$, and on inputs of length $N$, the machine $M$ runs in time $N^{O(1)}$ and space $O(N)$. Define $\mathrm{GapNTISP}(\mathrm{poly},\mathrm{lin})$ to be the class of differences $g_1-g_2$, where $g_1,g_2\in\#\mathrm{NTISP}(\mathrm{poly},\mathrm{lin})$.

\begin{definition}[The class $\mathrm{LinPP}$]
\label{def:linpp}
A language $L$ belongs to $\mathrm{LinPP}$ if there exists $h\in\mathrm{GapNTISP}(\mathrm{poly},\mathrm{lin})$ such that, for every input $z$, we have $z\in L$ if and only if $h(z)\ge 0$.
\end{definition}

Thus $\mathrm{LinPP}$ is the polynomial-time, linear-space analogue of a PP or gap-counting threshold class.

\begin{lemma}[Threshold aggregates are in $\mathrm{LinPP}$]
\label{lem:threshold-aggregates-in-linpp}
Assume that membership $x\in S_r$ is decidable in deterministic polynomial time and linear space from $(r,x)$, and that $w_{\sigma,j}(x)$ is computable in deterministic polynomial time and linear space. Then $\mathsf{THR}\text{-}\mathsf{AGG}_w$ belongs to $\mathrm{LinPP}$.
\end{lemma}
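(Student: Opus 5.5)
To place $\mathsf{THR}\text{-}\mathsf{AGG}_w$ in $\mathrm{LinPP}$, I will write the threshold quantity $\sum_{x\in S_r}w_{\sigma,j}(x)-T$ as $g_1(z)-g_2(z)$, where $z=(r,\sigma,j,T)$ and $g_1,g_2\in\#\mathrm{NTISP}(\mathrm{poly},\mathrm{lin})$. Then $h:=g_1-g_2\in\mathrm{GapNTISP}(\mathrm{poly},\mathrm{lin})$ satisfies $z\in\mathsf{THR}\text{-}\mathsf{AGG}_w$ iff $h(z)\ge 0$, which is exactly \Cref{def:linpp}.

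The main step is splitting the weights into positive and negative parts. Set $w^+(x)=\max(w_{\sigma,j}(x),0)$ and $w^-(x)=\max(-w_{\sigma,j}(x),0)$, so both are nonnegative integers at most $2^b$. I then define a nondeterministic machine $M_1$ on input $z$ as follows. It guesses $x\in\{0,1\}^n$ bit by bit, writing it on the work tape, which costs $O(n)$ space. It decides $x\in S_r$ deterministically in polynomial time and linear space and rejects if $x\notin S_r$. It computes $w_{\sigma,j}(x)$ in binary, again in polynomial time and linear space. Finally it guesses $y\in\{0,1\}^b$ and accepts iff $y<w^+(x)$, read as an integer. The number of accepting paths of $M_1$ is then exactly $\sum_{x\in S_r}w^+(x)$. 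The machine $M_2$ is the same except that it uses $w^-(x)$. To absorb $T$, I will add branches: $M_2$ gets an extra first nondeterministic bit, and on that branch it guesses $y$ of length $|T|$ and accepts iff $y<T$ when $T>0$. If $T<0$, the symmetric gadget goes into $M_1$ with $|T|$. This gives $g_1-g_2=\sum_{x\in S_r}w_{\sigma,j}(x)-T$. Because every branch is deterministic after its guesses, distinct guesses give distinct accepting paths, so counting paths matches the sums exactly.

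The remaining work is resource accounting, and it is the only delicate point. The input length is $N=|r|+|\sigma|+\log d+|T|$, and I take $n$ and $b$ to be $O(N)$, for instance because $x$ and the precision are encoded in, or bounded by, the input parameters, or by padding. Under that assumption the guessed strings, the membership computation and the weight computation all fit in $O(N)$ space and $N^{O(1)}$ time, by the hypotheses of the lemma. Comparing $y$ with $w^{\pm}(x)$ or $T$ is linear-time and linear-space. I would state explicitly that "linear space in $|r|+n$" is linear in $N$, since $n\le N$ can be ensured by including $1^n$ or an equivalent length in $r$. With that, $g_1,g_2\in\#\mathrm{NTISP}(\mathrm{poly},\mathrm{lin})$ and the lemma follows.
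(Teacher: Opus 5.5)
Your proposal is correct and follows essentially the same route as the paper: the paper also splits $w_{\sigma,j}$ and $T$ into positive and negative parts, defines $A=\sum_{x\in S_r}w^+_{\sigma,j}(x)+T^-$ and $B=\sum_{x\in S_r}w^-_{\sigma,j}(x)+T^+$, realizes each by a machine that guesses $x$ and an auxiliary counter $u$ (with a separate branch type for the threshold term), and concludes from $A-B\ge 0$. Your explicit observation that $n$ and $b$ must be $O(N)$ for the linear-space bound is something the paper only asserts implicitly (it says storing $x$, $u$ and the input fits in linear space), so stating it is a welcome tightening rather than a deviation.
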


\begin{proof}
Write $w_{\sigma,j}(x)=w^+_{\sigma,j}(x)-w^-_{\sigma,j}(x)$, where $w^+_{\sigma,j}(x)=\max(w_{\sigma,j}(x),0)$ and $w^-_{\sigma,j}(x)=\max(-w_{\sigma,j}(x),0)$. Similarly write $T=T^+-T^-$, where $T^+=\max(T,0)$ and $T^-=\max(-T,0)$.

The inequality $\sum_{x\in S_r}w_{\sigma,j}(x)\ge T$ is equivalent to \[\sum_{x\in S_r}w^+_{\sigma,j}(x)+T^- \ge \sum_{x\in S_r}w^-_{\sigma,j}(x)+T^+.\]

Define two functions \[A(r,\sigma,j,T)=\sum_{x\in S_r}w^+_{\sigma,j}(x)+T^-\quad \text{and}\quad B(r,\sigma,j,T)=\sum_{x\in S_r}w^-_{\sigma,j}(x)+T^+.\]

We show that $A$ and $B$ belong to $\#\mathrm{NTISP}(\mathrm{poly},\mathrm{lin})$. For $A$, a nondeterministic machine has two types of accepting branches. In the first type, it guesses $x\in\univ$ and an auxiliary integer $u$, accepts if $x\in S_r$ and $u<w^+_{\sigma,j}(x)$, and rejects otherwise. For each fixed $x\in S_r$, this contributes exactly $w^+_{\sigma,j}(x)$ accepting branches. In the second type, it guesses an auxiliary integer $u$ and accepts if $u<T^-$. This contributes exactly $T^-$ accepting branches. Hence the total number of accepting paths is exactly $A(r,\sigma,j,T)$.

The machine uses only polynomial time and linear space: it stores $x$, $u$, the input, and the workspace needed to decide membership and compute $w_{\sigma,j}(x)$. The bit length of $u$ is at most the bit length of the aggregate bound, which is $O(n+b+\log |T|)$. Therefore this is a $\#\mathrm{NTISP}(\mathrm{poly},\mathrm{lin})$ computation. The same argument gives $B\in\#\mathrm{NTISP}(\mathrm{poly},\mathrm{lin})$.

Thus $A-B\in\mathrm{GapNTISP}(\mathrm{poly},\mathrm{lin})$, and \[(r,\sigma,j,T)\in\mathsf{THR}\text{-}\mathsf{AGG}_w \quad \text{if and only if} \quad A(r,\sigma,j,T)-B(r,\sigma,j,T)\ge 0.\]
Therefore $\mathsf{THR}\text{-}\mathsf{AGG}_w\in\mathrm{LinPP}$.
\end{proof}

\begin{theorem}[Collapse theorem]
\label{thm:linpp-collapse}
Fix $0<k\le 2$. Suppose $\mathrm{LinPP}\subseteq\mathrm{DTISP}(\mathrm{poly},\mathrm{LINSPACE})$. Then unrestricted Delphic $\F{k}$ estimation has a one-pass randomized algorithm with space and per-set update time polynomial in $\log|\univ|$, $\log \streamlen$, $\error^{-1}$, and $\log(1/\conf)$, up to the standard finite-precision parameters of the underlying singleton-stream $k$-stable sketch.
\end{theorem}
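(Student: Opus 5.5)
The plan is to combine \Cref{thm:oracle-upper-bound-delphic-fk} with \Cref{lem:threshold-aggregates-in-linpp}. The algorithm itself will be \Cref{alg:delphic-stable-sketch-agg}, with each oracle call $\mathsf{AGG}_w(r_i,\sigma,j)$ replaced by a deterministic subroutine. Correctness is immediate from \Cref{lem:exact-stable-simulation}: for every fixed seed the counter vector coincides with that of the singleton stable sketch on the expanded stream. Hence the $(\error,\conf)$ guarantee of the finite-precision stable sketch transfers verbatim. The whole content therefore lies in showing that the replacement subroutine meets the stated space and update-time budget.

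The subroutine is built in three steps. First, by \Cref{lem:threshold-aggregates-in-linpp}, the language $\mathsf{THR}\text{-}\mathsf{AGG}_w$ lies in $\mathrm{LinPP}$. Second, the hypothesis $\mathrm{LinPP}\subseteq\mathrm{DTISP}(\mathrm{poly},\mathrm{LINSPACE})$ yields a deterministic machine $D$ deciding $(r,\sigma,j,T)\in\mathsf{THR}\text{-}\mathsf{AGG}_w$. On inputs of length $N$, this machine runs in time $N^{O(1)}$ and space $O(N)$. Third, we recover the value $\mathsf{AGG}_w(r_i,\sigma,j)$ exactly by binary search over $T\in[-2^{n+b},2^{n+b}]$, as noted before \Cref{def:linpp}. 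This costs $O(n+b)$ calls to $D$, and only $O(n+b)$ extra bits are needed to maintain the search interval.

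Next we bound the input length to $D$. It is $N=|r_i|+|\sigma|+\log d+O(n+b)$, where $n=\log|\univ|$. For the standard finite-precision stable sketch, $|\sigma|$, $d$, and $b$ are all polynomial in $\log|\univ|$, $\log\streamlen$, $\error^{-1}$, and $\log(1/\conf)$. Here $b$ accounts for the discretization precision needed so that counters bounded by $\streamlen\cdot 2^{n+b}$ still decode correctly. The set representations have $|r_i|$ polynomial in $n$, as in the Delphic setting. Consequently each oracle replacement takes $\mathrm{poly}(N)$ time and $O(N)$ workspace. Since that workspace is reused across the $d$ coordinates and across sets, the per-set update time is $d\cdot O(n+b)\cdot\mathrm{poly}(N)$. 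The persistent space is the seed plus $d$ counters of $O(n+b+\log\streamlen)$ bits each, plus $O(N)$ scratch space. All of these quantities are polynomial in the required parameters.

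The main obstacle is bookkeeping rather than any conceptual step. We must check that the seed $\sigma$ can be fed to $D$ as part of the input while keeping $N$ polylogarithmic, which requires a derandomized seed of polylogarithmic length, as in the standard Indyk/Kane--Nelson--Woodruff implementation. We must also check that the linear-space, polynomial-time computability of $w_{\sigma,j}$ and of membership from $(r,x)$, which are the hypotheses of \Cref{lem:threshold-aggregates-in-linpp}, hold for that implementation. I would state these as the ``standard finite-precision parameters'' qualifier already present in the theorem, and then conclude.
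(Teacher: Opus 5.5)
Your proposal is correct and follows the paper's proof essentially step for step. You place $\mathsf{THR}\text{-}\mathsf{AGG}_w$ in $\mathrm{LinPP}$, use the collapse to decide it deterministically in polynomial time and linear space, recover each aggregate value by an $O(n+b)$-step binary search, and run the aggregate-oracle stable sketch, whose correctness follows from the exact-simulation lemma. Your explicit accounting of the query length and of the need for a polylogarithmic seed $\sigma$ just makes concrete what the paper folds into its ``standard finite-precision parameters'' qualifier.
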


\begin{proof}
By \Cref{lem:threshold-aggregates-in-linpp}, the threshold aggregate language $\mathsf{THR}\text{-}\mathsf{AGG}_w$ belongs to $\mathrm{LinPP}$. Under the assumed collapse $\mathrm{LinPP}\subseteq\mathrm{DTISP}(\mathrm{poly},\mathrm{LINSPACE})$, this threshold language can be decided deterministically in polynomial time and linear space.

A value aggregate $\mathsf{AGG}_w(r,\sigma,j)$ lies in $[-2^{n+b},2^{n+b}]$, and hence can be recovered exactly using $O(n+b)$ threshold queries by binary search. Each such query has length polynomial in $|r|$, $|\sigma|$, $n$, $b$, and $\log(1/\error)+\log(1/\conf)+\log \streamlen$. Therefore, under the collapse assumption, each aggregate value can be computed in polynomial time and linear space.

Now run \Cref{alg:delphic-stable-sketch-agg}, computing each call to $\mathsf{AGG}_w$ by this deterministic simulation. By \Cref{lem:exact-stable-simulation}, the resulting algorithm computes exactly the same sketch vector as \Cref{alg:singleton-stable-sketch} would compute on the expanded singleton stream. Hence it inherits the stable-sketch correctness guarantee. Since the number of sketch coordinates and the precision are polynomial in the usual parameters, and since the set representations have length polynomial in $\log|\univ|$ for the Delphic families considered here, the resulting space and per-set update time are polynomial in $\log|\univ|$, $\log \streamlen$, $\error^{-1}$, and $\log(1/\conf)$.
\end{proof}

\begin{corollary}[Lower bounds imply a threshold-counting separation]
\label{cor:lower-bound-implies-linpp-separation}
Fix $0<k\le 2$. Suppose one proves that unrestricted Delphic $\F{k}$ estimation cannot be solved by any one-pass randomized streaming algorithm with space and per-set update time polynomial in $\log|\univ|$, $\log \streamlen$, $\error^{-1}$, and $\log(1/\conf)$, for the same representation class of Delphic sets. Then \[\mathrm{LinPP}\not\subseteq\mathrm{DTISP}(\mathrm{poly},\mathrm{LINSPACE}).\]
\end{corollary}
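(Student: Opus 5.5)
This corollary is the contrapositive of \Cref{thm:linpp-collapse}, and the plan is to argue it that way. Fix $0<k\le 2$ and suppose, for contradiction, that $\mathrm{LinPP}\subseteq\mathrm{DTISP}(\mathrm{poly},\mathrm{LINSPACE})$. I will then check that the hypotheses of \Cref{thm:linpp-collapse} hold for the same representation class of Delphic sets that the assumed lower bound refers to. This requires two things: membership $x\in S_r$ is decidable from $(r,x)$ in polynomial time and linear space, which is the standing assumption of this section; and the finite-precision $k$-stable weights $w_{\sigma,j}$ are computable in polynomial time and linear space with precision $b$ polynomial in the parameters, which is part of \Cref{def:finite-precision-linear-sketch}.

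With these in place, \Cref{lem:threshold-aggregates-in-linpp} puts $\mathsf{THR}\text{-}\mathsf{AGG}_w$ in $\mathrm{LinPP}$. The assumed collapse then makes it deterministically decidable in polynomial time and linear space. Binary search over $[-2^{n+b},2^{n+b}]$ recovers each value $\mathsf{AGG}_w(r,\sigma,j)$ with $O(n+b)$ such queries. Plugging this into \Cref{alg:delphic-stable-sketch-agg} and invoking \Cref{lem:exact-stable-simulation} gives a one-pass randomized algorithm. Its space and per-set update time are polynomial in $\log|\univ|$, $\log\streamlen$, $\error^{-1}$ and $\log(1/\conf)$, so it is exactly the conclusion of \Cref{thm:linpp-collapse}. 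This contradicts the assumed lower bound, and therefore $\mathrm{LinPP}\not\subseteq\mathrm{DTISP}(\mathrm{poly},\mathrm{LINSPACE})$.

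The main point needing care is that the lower bound and the algorithm from \Cref{thm:linpp-collapse} must live in the same model. Specifically, three matches have to hold.
\begin{enumerate}
\item The representation class must be the same, which the corollary's phrasing stipulates.
\item The polynomial dependence must be the same. This means the finite-precision parameters $b$ and $d$ of the stable sketch, together with the seed length $|\sigma|$, must be polynomial in $\log|\univ|$, $\log\streamlen$, $\error^{-1}$ and $\log(1/\conf)$.
\item The "polynomial time, linear space" simulation must translate to polylogarithmic update time. Here I would note that query lengths are polynomial in $|r|$, $n$, $b$ and $|\sigma|$, and that $|r|=\mathrm{poly}(\log|\univ|)$ for the Delphic families considered.
\end{enumerate}
I would state these matchings explicitly and keep everything else a direct citation.
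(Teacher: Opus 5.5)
Your proposal is correct and follows the paper's argument. The paper proves the corollary in two lines: it assumes $\mathrm{LinPP}\subseteq\mathrm{DTISP}(\mathrm{poly},\mathrm{LINSPACE})$, invokes \Cref{thm:linpp-collapse} to obtain the polylogarithmic one-pass algorithm, and contradicts the assumed lower bound. Re-deriving the theorem's internals is not needed, but your explicit check that $b$, $d$, and the seed length are polynomial in the stated parameters usefully spells out the ``up to the standard finite-precision parameters'' caveat that the paper leaves implicit in \Cref{thm:linpp-collapse}.
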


\begin{proof}
Assume, toward contradiction, that $\mathrm{LinPP}\subseteq\mathrm{DTISP}(\mathrm{poly},\mathrm{LINSPACE})$. Then \Cref{thm:linpp-collapse} gives a one-pass randomized algorithm for unrestricted Delphic $\F{k}$ estimation with space and per-set update time polynomial in $\log|\univ|$, $\log \streamlen$, $\error^{-1}$, and $\log(1/\conf)$. This contradicts the assumed lower bound. Therefore $\mathrm{LinPP}\not\subseteq\mathrm{DTISP}(\mathrm{poly},\mathrm{LINSPACE})$.
\end{proof}

\section{Conclusion}

In this paper, we introduced a novel framework for estimating non-integer frequency moments $\F{k}$ for $k \in (0,1)$ and related Bernstein-type statistics in the Delphic set-stream model under bounded frequency assumptions. Motivated by practical applications such as sensor coverage in IoT systems and Klee's measure problems, we addressed the longstanding challenge of achieving low space and update times without enumerating large sets. Our core insight leverages multi-rate sampling to construct the Expected Support Polynomial (ESP), which we interpret as a complementary Laplace transform of the frequency distribution measure. By reducing the problem to distinct-count ($\F{0}$) estimations on sampled substreams and employing controlled numerical integration (via the trapezoidal method), we achieve the first one-pass algorithms with space and update time polynomial in $\maxfreq$ and $\error^{-1}$. In the regime where $\maxfreq = \mathrm{polylog}(|\univ|, \streamlen)$ (where $\streamlen$ is the stream length), this yields polylogarithmic bounds in the universe size and stream length, marking a significant advancement over prior techniques that either required per-element operations or failed to handle the Delphic model's update-time constraints.

\paragraph{}

Complementary to the algorithms, we also showed that showing an unconditional (that is, independent of $\tau$) lower bound of polylog resources seems to be hard as it implies a linear-space threshold-counting complexity class separation.

\paragraph{}
Our approach unifies the estimation of a broad class of statistics, including Saturated Richness (SR), Smoothed Log-Frequency Aggregate (SLFA), and a class of Bernstein functions with non-negative Lévy densities. Rather than treating the underlying $F_0$ estimators as strict black boxes, our reductions deeply integrate their structural properties - specifically their precise sampling complexities and structure of querying the sets - to rigorously simulate subsampling from the stream with independence at the element level. We provided rigorous analyses of approximation guarantees, including error bounds from discretization and noise. Empirical motivations, such as real-time monitoring of coverage redundancy, underscore the practical relevance of our results, particularly in bounded-frequency settings arising from rate limits or physical constraints.
\paragraph{}
While our work establishes the feasibility of efficient estimation in this model, several intriguing directions remain for future exploration and advancements; particularly in handling unbounded frequencies and tighter complexities. Below, we outline several open problems and directions for future research.

\begin{enumerate}

\item \emph{Removing the Bounded-Frequency Assumption and Lower bounds}: Can we design algorithms for estimating $\F{k}$ for $k \in (0,1)$ in general Delphic set streams without assuming $\maxfreq = O(\mathrm{polylog}(|\univ|, \streamlen)$? What are the fundamental space lower bounds in the unbounded-frequency case? Can one possibly use communication complexity or information-theoretic arguments tailored to set oracles? This question is now particularly tantalizing in view of the complexity class separation reduction.

\item \emph{Improving the dependence on $\tau$ and $\varepsilon$}: Can we improve the dependency on $\maxfreq$ and $\error$ for $\F{k}$ when $k\in (0,1)$ and other statistics? Maybe by handling the singularities better or by using targeted quadratures for the particular weight functions? Perhaps using adaptive quadrature or machine learning-based approximations in streaming contexts?


\item \emph{Connections to Other Transforms}: Is it possible to use other integral transforms (e.g., Fourier or Mellin) for estimating frequency-based statistics? And identifying the classes those transforms apply to?

\item \emph{Developing efficient algorithms for estimating the inverse Laplace transform in set streams}: As hinted in our discussion of Bernstein representations, numerical Laplace inversion currently do not have much theoretical guarantees. Can we come up with efficient streaming friendly inversion algorithms that are approximately stable to handle a wider class of integral transforms?

\item \emph{Distributed streaming}: An interesting line of work would be to combine our approach with distributed streaming models (e.g., multi-party or coordinator models) for Delphic sets, addressing scalability in massive datasets.

\end{enumerate}



\newpage
\begingroup
\bibliographystyle{alpha}
\bibliography{references.bib}
\endgroup

\newpage

\appendix

\section{Some proofs left from the main body}

\subsection{Proof of \Cref{lem:expectedpoly}}\label{subsec:expectedpoly}

We restate the lemma here for convenience.

\begin{lemma}
If $\mathsf Y$ be the $\F{0}$ moment of the sub-stream $\stream_{\throwprob}$ that is obtained by \emph{element-level subsampling}: for each set $S_i$ in $\stream$ and each element $x\in S_i$, independently retaining $x$ in $S_i$ with probability $(1-\throwprob)$, then 
    \begin{equation*}
        \mathbb{E}[\mathsf Y] = \sum_{x\in \univ}(1-\throwprob^{\freq{x}})
    \end{equation*}
where $\f{x}(\stream)$ is the frequency of $x$ in the stream $\stream$.
\end{lemma}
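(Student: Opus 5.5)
The plan is to write $\mathsf Y$ as a sum of indicator random variables, one per universe element, and then apply linearity of expectation. Concretely, for each $x\in\univ$ let $\mathbbm{1}_x$ be the indicator of the event that $x$ survives in at least one thinned set, i.e.\ $x\in\bigcup_i S_i^{(\throwprob)}$. By definition of $\F{0}$ of the sub-stream $\stream_{\throwprob}$,
\[
\mathsf Y=\sum_{x\in\univ}\mathbbm{1}_x ,
\qquad\text{so}\qquad
\mathbb E[\mathsf Y]=\sum_{x\in\univ}\mathbb P\Bigl(x\in\textstyle\bigcup_i S_i^{(\throwprob)}\Bigr).
\]
Linearity requires no independence across elements, so the only remaining work is computing each survival probability.

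For a fixed $x$, let $J_x=\{i\in[\streamlen]: x\in S_i\}$, so $|J_x|=\freq{x}$. An element $x$ can appear in $S_i^{(\throwprob)}$ only if $i\in J_x$. For each such $i$, $x$ is retained independently with probability $1-\throwprob$. Hence $x$ is absent from the union exactly when all $\freq{x}$ independent coins say ``discard''. This event has probability $\throwprob^{\freq{x}}$, so $\mathbb P(\mathbbm{1}_x=1)=1-\throwprob^{\freq{x}}$. Substituting into the sum gives the claimed identity.

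The edge cases are handled as follows. Elements with $\freq{x}=0$ contribute $1-\throwprob^0=0$, consistent with their never appearing. At $\throwprob=0$ we use the convention $0^0=1$, which recovers $\F{0}(\stream)$. There is no real obstacle here; the only point needing care is that the coins for a fixed $x$ across different sets $S_i$ are mutually independent, which is exactly what the true model stipulates. I would also note the connection to the Remark after \Cref{thm:delphicsimulation}: the same expectation holds in the simulated model because the oracle responses have the same joint distribution.
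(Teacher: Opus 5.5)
Your proposal is correct and matches the paper's proof: both write $\mathsf Y$ as a sum of per-element survival indicators, note that $x$ is absent exactly when all $\freq{x}$ independent coins land ``discard'' (probability $\throwprob^{\freq{x}}$), and apply linearity of expectation. Elements with $\freq{x}=0$ contribute zero under the convention $0^0=1$. The only cosmetic difference is that you sum over $\univ$ directly, whereas the paper sums over the distinct elements and then pads with those zero terms.
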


\begin{proof}
    Let $\mathbbm 1_x$ be the indicator random variable that is $1$ if $x$ appears in the thinned sub-stream $\stream_{\throwprob}$ and $0$ otherwise. Under element-level subsampling, $x$ has $\freq{x}$ independent coins (one for each set containing $x$), each landing ``keep'' with probability $(1-\throwprob)$. Thus $x$ fails to appear if and only if all $\freq{x}$ coins land ``discard'', so
    \begin{align*}
    \mathbb{P}(\mathbbm 1_x=1) &= 1 - \mathbb{P}(\text{all } \freq{x} \text{ coins discard}) 
    = 1 - \throwprob^{\freq{x}}
    \end{align*} 
    and thus $\mathbbm 1_x\sim\Ber(p_x)$ with $p_x=1-\throwprob^{\freq{x}}$. Thus 
    $\mathsf Y = \F{0}(\stream_{\throwprob}) = \sum_{x\in \F{0}(S)} \mathbbm 1_x.$
    By the linearity of expectation, this implies that\footnote{We may sometimes abuse notation to write $\F{0}(\stream)$ to denote the set of all distinct elements of the stream $\stream$ as well as the size of this set when there is no confusion.} 
    $\mathbb E[\mathsf Y]=\sum_{x\in \F{0}(\stream)}(1-\throwprob^{\f{x}}).$ Now if $x\notin \F{0}(\stream)$, then $\f{x}=0$ and hence\footnote{We take the convention that $0^0=1$.} $(1-\throwprob^{\f{x}})=0$ and thus we can write
    \begin{equation*}
        \mathsf Y=\sum_{x\in \F{0}(\stream)}(1-\throwprob^{\f{x}})+0=\sum_{x\in \univ}(1-\throwprob^{\f{x}})
    \end{equation*}
\end{proof}

\subsection{Proof of \Cref{thm:delphicsimulation}}\label{subsec:simulationappendix}

We restate the lemma for convenience.

\begin{theorem}[Simulation of element-level subsampling via Delphic oracles]\label{thm:delphicsimulationappendix}
Let $S\subseteq\univ$ be a Delphic set with $|S|=n$, and let $p=1-\throwprob\in[0,1]$. Consider the element-level subsampled set $S'=\{x\in S: B_x=1\}$ where $\{B_x\}_{x\in S}$ are independent $\Ber(p)$ random variables. This is the true model. Let the simulated model be defined as above.

Let $\mathcal{A}$ be a (possibly randomized, adaptive) algorithm that interacts with $S'$ through the above three oracle types and satisfies the following two properties:
\begin{enumerate}
    \item[(P1)] $\mathcal{A}$ makes at most one membership query per element $x\in\univ$, and asks at most $T$ many membership queries.
    \item[(P2)] $\mathcal{A}$ completes all membership queries before invoking the size oracle, (possibly) invokes the size oracle before any sampling query, asks at most $T$ sampling queries in total, and the number of distinct samples obtained is at most the answer of the size oracle.
\end{enumerate}
Then:
\begin{enumerate}
    \item[(i)] The joint distribution of the membership answers and the size $K$ observed by $\mathcal A$ under the simulated oracles is identical to that under the true oracles of $S'$.
    \item[(ii)] Conditional on the membership answers and the size query answer, the joint distribution of the entire sequence of answers from the simulated sampling oracle is identical to that obtained by repeated uniform sampling from the fixed true set $S'$. In particular, any procedure for obtaining distinct samples by rejecting duplicates has the same distribution in the two models, and the simulation never returns more distinct elements than the size oracle answer.
    \item[(iii)] The simulated membership and size oracles each require exactly one Delphic oracle call on $S$ plus $O(1)$ additional work, taking $O(\log|\univ|)$ time. Moreover, for any $\delta\in(0,1)$, over the entire sampling phase the simulated sampling oracle uses
    \[ O\!\left(T\left(\log T+\log\frac1\delta\right)\right) \]
    Delphic sampling calls with probability at least $1-\delta$. Consequently, the entire sampling phase takes
    \[ O\!\left(T\left(\log T+\log\frac1\delta\right)\log|\univ|\right) \]
    time with probability at least $1-\delta$. It uses $O(T\log|\univ|)$ bits for storing $Q$ and $D$, up to lower-order counters.
\end{enumerate}
\end{theorem}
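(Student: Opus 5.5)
The plan is to couple the two models on a common probability space and verify each part in order of the query phases.

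For (i), I will reveal the true Bernoullis lazily. In the true model, when $\mathcal A$ queries $x$, the answer is $\textsc{No}$ if $x\notin S$ and $B_x$ otherwise. By (P1) each $B_x$ is looked at at most once, so, conditional on the adaptive query history, each answer is a fresh $\Ber(p)$ coin. This is exactly the simulated membership oracle's $C_x$. Hence, by induction on the number of queries, the joint law of the transcript of membership answers is identical in both models.

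By (P2), the size query comes after all membership queries. At that point the true size is
\[
|S'| \;=\; r_+ \;+\; \sum_{x\in S\setminus Q} B_x ,
\]
and the unqueried $B_x$ are independent of the transcript. The second term is therefore $\mathrm{Bin}(|S|-r,p)$, independent of everything seen so far. This is the binomial decomposition (\Cref{lem:binomialdecompositionappendix}), and it matches $K=r_++K'$ with fresh $K'$.

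For (ii), I condition on the transcript and on $K$. In the true model, the unqueried survivors $S'\setminus Q$ form a uniformly random $K'$-subset of $S\setminus Q$, by exchangeability of the i.i.d. coins given their sum. A uniform sample from $S'$ lies in $Q_{\mathrm{yes}}$ with probability $r_+/K$. Otherwise it is a uniform element of the random subset. I then reveal that subset lazily: previously revealed members $D$ are hit with probability $|D|/K$, and with probability $(K'-|D|)/K$ the sample is a new member. By symmetry, a new member is uniform on $S\setminus(Q\cup D)$.

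This is exactly the simulated rule. I will prove it by induction on the sampling queries, showing that the joint law of the sequence $(D$ after each step, returned element$)$ agrees. The step I expect to be the main obstacle is making the exchangeability and lazy-revelation argument rigorous under adaptivity. I would write the conditional probability of each next answer given the full past in both models and check that they coincide; the marginal then follows. The fact that the number of distinct samples never exceeds $K$ is immediate, since the new-element branch has probability $0$ once $|D|=K'$.

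For (iii), membership and size each cost one Delphic call plus a coin flip or binomial draw; I treat the binomial draw as $O(1)$ work, as is assumed. For sampling, each rejection trial succeeds with probability
\[
\frac{|S|-|Q|-|D|}{|S|}.
\]
This ratio is bounded below when $|Q|+|D|\le 2T$ is small relative to $|S|$. When it is not, $|S|=O(T)$ and I can instead enumerate the set via samples. The honest route is a case split at $|S|\ge 4T$: in that case each trial succeeds with probability at least $1/2$, and a Chernoff or geometric tail bound over $T$ calls gives $O(T+\log(1/\delta))$ trials.

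When $|S|<4T$, the successes behave like a coupon-collector process, needing $O(T\log T+T\log(1/\delta))$ trials with probability $1-\delta$. Taking the worst case gives the stated $O(T(\log T+\log(1/\delta)))$ bound on Delphic calls. Multiplying by $O(\log|\univ|)$ per call gives the time bound, and storing $Q\cup D$ (at most $2T$ elements) costs $O(T\log|\univ|)$ bits.
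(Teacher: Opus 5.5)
Your proposal is correct and follows the paper's proof essentially step for step. For (i) you use deferred revelation of the coins together with the binomial decomposition (\Cref{lem:binomialdecompositionappendix}). For (ii) you use the invariant that the unrevealed unqueried survivors form a uniformly random $(K'-|D|)$-subset of $S\setminus(Q\cup D)$. For (iii) you use the same case split at $|S|\ge 4T$, with a geometric tail bound on one side and a coupon-collector bound on the other.

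One point worth making explicit in (iii) is why the coupon-collector bound applies. If you concatenate all Delphic draws into one i.i.d.\ sequence, the accepted draws are exactly the first appearances of elements of $S\setminus Q$. So the total number of draws is at most the time needed to see all of $S\setminus Q$.
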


To prove this, we need a small lemma first.

\begin{lemma}[Binomial decomposition]\label{lem:binomialdecompositionappendix}
    Let $n\in\mathbb{N}$, $p\in [0,1]$, and let $I\subseteq [n]$ with $|I|=r$. Suppose $\{B_i\}_{i\in [n]}$ are independent $\Ber(p)$ random variables, and let $K=\sum_{i=1}^n B_i\sim\mathrm{Bin}(n,p)$. Then, conditional on the values $\{B_i\}_{i\in I}$, the remaining sum $\sum_{i\notin I}B_i$ is independent of $\{B_i\}_{i\in I}$ and satisfies
    \begin{equation*}
        \sum_{i\notin I}B_i \sim \mathrm{Bin}(n-r,\, p).
    \end{equation*}
    Consequently, letting $r_+=\sum_{i\in I}B_i$, we have
    \begin{equation*}
        K \;\Big|\; \{B_i\}_{i\in I} \;=\; r_+ + \mathrm{Bin}(n-r,\, p)
    \end{equation*}
    and in particular $K\mid \{B_i\}_{i\in I}$ depends on $\{B_i\}_{i\in I}$ only through $r_+$.
\end{lemma}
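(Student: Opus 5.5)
The plan is to prove the binomial decomposition lemma directly from independence of the Bernoulli family, with no heavy machinery. Write $K=\sum_{i\in I}B_i+\sum_{i\notin I}B_i=r_++K'$, where $K':=\sum_{i\notin I}B_i$. The first step is to observe that $K'$ is a measurable function of the subfamily $\{B_i\}_{i\notin I}$ only. By mutual independence of $\{B_i\}_{i\in[n]}$, the two subfamilies indexed by $I$ and by $[n]\setminus I$ are independent (grouping of independent random variables). Hence $K'$ is independent of $\{B_i\}_{i\in I}$. Consequently, conditioning on any realisation $\{B_i=b_i\}_{i\in I}$ (with positive probability, or trivially when $p\in\{0,1\}$) leaves the law of $K'$ unchanged.

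Next I identify that law. $K'$ is a sum of $n-r$ independent $\Ber(p)$ variables, so it is $\mathrm{Bin}(n-r,p)$. This can be checked by the standard count: for $0\le s\le n-r$,
\[
\mathbb P(K'=s)=\sum_{J\subseteq[n]\setminus I,\ |J|=s}p^{s}(1-p)^{n-r-s}=\binom{n-r}{s}p^s(1-p)^{n-r-s}.
\]
Combining this with the first step gives, for every fixed $b=(b_i)_{i\in I}$ with $r_+=\sum_{i\in I} b_i$,
\[
\mathbb P\bigl(K=r_++s \mid \{B_i\}_{i\in I}=b\bigr)=\mathbb P(K'=s)=\binom{n-r}{s}p^s(1-p)^{n-r-s}.
\]
This is exactly the statement $K\mid\{B_i\}_{i\in I}=r_++\mathrm{Bin}(n-r,p)$. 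The right-hand side depends on $b$ only through the shift $r_+$, which yields the final claim.

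There is no real obstacle here. The only point needing care is the conditioning when some conditioning events have probability zero (for instance $p=0$ or $p=1$). I would handle this by stating the identity for all $b$ with positive probability, and noting that in the degenerate cases both sides are point masses that agree.
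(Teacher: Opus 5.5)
Your proposal is correct and follows essentially the same route as the paper: independence of the subfamilies indexed by $I$ and $[n]\setminus I$, identification of $\sum_{i\notin I}B_i$ as $\mathrm{Bin}(n-r,p)$, and the shift $K=r_++\sum_{i\notin I}B_i$. Your explicit binomial count and your restriction of the conditioning to realisations of positive probability are small refinements the paper leaves implicit.
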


\begin{proof}
    Since the $\{B_i\}_{i\in[n]}$ are mutually independent, the sub-collections $\{B_i\}_{i\in I}$ and $\{B_i\}_{i\notin I}$ are independent. Hence conditioning on any realization of $\{B_i\}_{i\in I}$ does not affect the joint distribution of $\{B_i\}_{i\notin I}$, which remains a product of independent $\Ber(p)$ variables. Their sum $\sum_{i\notin I}B_i$ is therefore $\mathrm{Bin}(n-r,p)$, independent of $\{B_i\}_{i\in I}$. Since $K=\sum_{i\in I}B_i+\sum_{i\notin I}B_i=r_++\sum_{i\notin I}B_i$, the result follows.
\end{proof}

Now we give the proof of \Cref{thm:delphicsimulationappendix}.

\begin{proof}[Proof of \Cref{thm:delphicsimulation}]
    \textbf{(i): Joint distribution of membership answers and size.}

    Fix an arbitrary adaptive strategy of $\mathcal{A}$ that satisfies (P1) and (P2). Suppose $\mathcal{A}$ queries the membership of elements $x_1,\dots,x_r\in S$ (the queries for elements not in $S$ return \textsc{No} deterministically in both models, so we ignore them). 
    
    \emph{True model.} For each $x_i\in S$, the membership answer is $B_{x_i}\sim\Ber(p)$, and the answers $B_{x_1},\dots,B_{x_r}$ are mutually independent (since these are distinct elements by (P1)). After observing these answers, the true size of $S'$ is $|S'|=\sum_{x\in S}B_x$. By \Cref{lem:binomialdecompositionappendix} applied with $I=\{x_1,\dots,x_r\}$, we have
    \begin{equation}\label{eq:trueconditionalsize}
        |S'|\;\Big|\; B_{x_1},\dots,B_{x_r} \;=\; r_+ + \mathrm{Bin}(n-r,\,p)
    \end{equation}
    where $r_+=\sum_{i=1}^r B_{x_i}$, and the $\mathrm{Bin}(n-r,p)$ term is independent of $B_{x_1},\dots,B_{x_r}$.
    
    \emph{Simulated model.} The membership answers are $C_{x_1},\dots,C_{x_r}$, which are independent $\Ber(p)$ by construction. The size returned is $K=r_++K'$ with $K'\sim\mathrm{Bin}(n-r,p)$ independent of $C_{x_1},\dots,C_{x_r}$.
    
    The joint distributions thus satisfy
    \begin{equation*}
        (C_{x_1},\dots,C_{x_r},\,K) \;\stackrel{d}{=}\; (B_{x_1},\dots,B_{x_r},\, |S'|)
    \end{equation*}
    since in both cases the first $r$ components are i.i.d.\ $\Ber(p)$ and the size, conditioned on these components, has the same distribution $r_++\mathrm{Bin}(n-r,p)$ by \eqref{eq:trueconditionalsize}. This holds for every adaptive choice of which elements to query (since the choice depends only on prior answers, and both models produce the same distribution of prior answers by induction on the number of queries).

    \textbf{(ii): Sampling oracle.}

    We prove this by evaluating the conditional distributions in each model separately and showing they are mathematically identical. Let $Q$ be the set of membership queries asked.

    \emph{True Model:} In the true model, the thinned set $S'$ is formed upfront via the independent coins $\{B_x\}_{x \in S}$. When $\mathcal{A}$ makes its sequence of membership queries $Q$ (with $r=|Q|$), the oracle simply reveals the realized values of $\{B_x\}_{x \in Q}$. When $\mathcal{A}$ queries the size, it observes $|S'| = \sum_{x \in S} B_x$. To evaluate the distribution of a subsequent sample, we must condition on the exact information transcript observed by $\mathcal{A}$ up to this point. We condition on the queries $Q$ and their answers $\{B_x\}_{x \in Q}$, which fixes the subset of queried survivors $Q^* = \{x \in Q : B_x = 1\}$ and its size $r_+ = |Q^*|$. Since $|S'| = r_+ + \sum_{x \in S \setminus Q} B_x$, conditioning on the observed size $|S'|$ is equivalent to further conditioning on the number of unqueried survivors $K' = \sum_{x \in S \setminus Q} B_x$. Let $\mathcal{E}_{\mathrm{True}}$ denote this joint conditioning on the observed transcript. If $K = r_+ + K' = 0$, the true sampling oracle deterministically returns $\bot$. Assume $K \ge 1$.

    The true thinned set is $S' = Q^* \cup (S' \setminus Q)$. The true sampling oracle draws uniformly from $S'$. For any $s \in S$, the conditional probability of outputting $s$ is:
    \begin{itemize}
    \item \textbf{Case 1 ($s \in Q^*$):} Conditioned on $\mathcal{E}_{\mathrm{True}}$, $s$ is deterministically in $S'$. Since $|S'| = K$, we have $\mathbb{P}_{\mathrm{True}}(\text{output } s \mid \mathcal{E}_{\mathrm{True}}) = \frac{1}{K}$.
    \item \textbf{Case 2 ($s \in S \setminus Q$):} By \Cref{lem:binomialdecompositionappendix}, conditioned on $\mathcal{E}_{\mathrm{True}}$, the unqueried survivors $S' \setminus Q$ form a uniformly random subset of $S \setminus Q$ of size $K'$. By symmetry, the probability that a specific $s \in S \setminus Q$ is included in $S'$ is $\frac{K'}{n-r}$. If included, it is sampled with probability $\frac{1}{K}$. Thus, $\mathbb{P}_{\mathrm{True}}(\text{output } s \mid \mathcal{E}_{\mathrm{True}}) = \frac{K'}{n-r} \cdot \frac{1}{K}$.
    \item \textbf{Case 3 ($s \in Q \setminus Q^*$):} Conditioned on $\mathcal{E}_{\mathrm{True}}$, $s \notin S'$, so $\mathbb{P}_{\mathrm{True}}(\text{output } s \mid \mathcal{E}_{\mathrm{True}}) = 0$.
    \end{itemize}

    \emph{Simulated Model and repeated samples:} Let $D\subseteq S\setminus Q$ be the set of distinct elements from the unqueried part that have been revealed by previous sampling queries; initially $D=\varnothing$. We maintain the invariant that, conditioned on the entire transcript so far, the still-unrevealed unqueried survivors form a uniformly random $(K'-|D|)$-subset of $S\setminus(Q\cup D)$. This holds initially by \Cref{lem:binomialdecompositionappendix} and symmetry.

    Suppose the invariant holds and write $d=|D|$. In the true model, every $s\in Q^*\cup D$ is returned by the next uniform sample with probability $1/K$, while for every $s\in S\setminus(Q\cup D)$,
    \[
        \mathbb{P}_{\mathrm{True}}(\text{output }s\mid\text{transcript})
        =
        \frac{K'-d}{n-r-d}\cdot\frac1K.
    \]
    The simulated oracle gives exactly these probabilities: it chooses $Q_{\mathrm{yes}}\cup D$ with probability $(r_++d)/K$ and otherwise chooses $S\setminus(Q\cup D)$ with probability $(K'-d)/K$, sampling uniformly from the chosen pool. If a previously unseen element $s$ is returned, adding $s$ to $D$ preserves the invariant by symmetry; if an already revealed element is returned, the invariant is unchanged. Hence, by induction, the complete sequence of sampling-oracle answers has exactly the same joint distribution in the true and simulated models.

    

    \textbf{(iii): Complexity.}

    The membership oracle makes one Delphic membership query on $S$, taking $O(\log|\univ|)$ time, plus $O(1)$ additional work. The size oracle makes one Delphic size query on $S$ and samples one binomial random variable, so it also takes $O(\log|\univ|)$ time up to constant additional work.
    
    We now consider the entire sampling phase. Each sampling query that returns an element already contained in $Q_{\mathrm{yes}}\cup D$ takes $O(1)$ additional work. Rejection sampling from the original set $S$ is needed only when the oracle exposes a previously unseen element from $S\setminus(Q\cup D)$; whenever this happens, the newly exposed element is added to $D$. Since $\mathcal A$ asks at most $T$ sampling queries, at most $T$ such new elements can ever be exposed.
    
    Let $r=|Q|$. Suppose first that $n\ge 4T$. Throughout the sampling phase we have $r+|D|\le 2T$, and therefore a draw from $S$ is accepted with probability
    \[ \frac{n-r-|D|}{n}\ge \frac12. \]
    Thus the total number of Delphic samples needed to expose all new elements during the sampling phase is
    \[ O\!\left(T+\log\frac1\delta\right) \]
    with probability at least $1-\delta$.
    
    Now suppose that $n<4T$. The number of Delphic samples required to expose all elements that can possibly enter $D$ is dominated by the time needed to see all elements of $S\setminus Q$ by uniform sampling from $S$. For any $q\ge 0$, the probability that some element of $S\setminus Q$ has not appeared after $q$ draws is at most
    \[ n e^{-q/n}. \]
    Hence, with probability at least $1-\delta$, all such elements have been seen after
    \[ O\!\left(n\left(\log n+\log\frac1\delta\right)\right) = O\!\left(T\left(\log T+\log\frac1\delta\right)\right) \]
    Delphic samples.
    
    Combining the two cases, over the entire sampling phase the simulator uses
    \[ O\!\left(T\left(\log T+\log\frac1\delta\right)\right) \]
    Delphic sampling calls with probability at least $1-\delta$. Since every Delphic sampling call takes $O(\log|\univ|)$ time, the entire sampling phase takes
    \[ O\!\left(T\left(\log T+\log\frac1\delta\right)\log|\univ|\right) \]
    time with probability at least $1-\delta$.
    
    Finally, $|Q|\le T$ and $|D|\le T$, so the simulator stores at most
    $O(T)$ universe elements, requiring $O(T\log|\univ|)$ bits up to
    lower-order counters.
    
\end{proof}

\subsection{Proof of convergence and evaulation in \Cref{lem:Fkidentityintegral}}

We restate the result again for convenience.

\begin{lemma}\label{lem:Fkintegralidentityandconveregence}
    Let $k\in(0,1)$. Consider the integral
    
    \begin{equation*}
        I(x):=\int_0^\infty (1-e^{tx})t^{-k-1}\ \mathrm dt
    \end{equation*}
    Then the integral $I(x)$ converges and equals $x^k\frac{\Gamma(1-k)}{k}$.
\end{lemma}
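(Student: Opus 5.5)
Note first that the integrand as typed, $(1-e^{tx})$, is a typo. The intended integrand, used everywhere else (in \Cref{lem:Fkidentityintegral} it is applied with $1-e^{-t\f{x}}$), is $(1-e^{-tx})t^{-k-1}$ with $x\ge 0$; with $e^{+tx}$ the integral diverges for $x>0$. I will prove the statement for $I(x)=\int_0^\infty(1-e^{-tx})t^{-k-1}\,\mathrm dt$. The case $x=0$ is trivial, since both sides are $0$, so I fix $x>0$.

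\emph{Convergence.} Split at $t=1$. Near the origin, $0\le 1-e^{-tx}\le tx$, so the integrand is bounded by $x\,t^{-k}$, which is integrable on $(0,1]$ because $k<1$. At infinity, $0\le 1-e^{-tx}\le 1$, so the integrand is bounded by $t^{-k-1}$, which is integrable on $[1,\infty)$ because $k>0$. Since the integrand is nonnegative, $I(x)$ converges absolutely.

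\emph{Scaling.} Substituting $u=tx$ gives $\mathrm dt=\mathrm du/x$ and $t^{-k-1}=x^{k+1}u^{-k-1}$. Hence $I(x)=x^k I(1)$, and it remains to show $I(1)=\Gamma(1-k)/k$.

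\emph{Evaluating $I(1)$ by parts.} Take $g(t)=1-e^{-t}$ and $h'(t)=t^{-k-1}$, so $h(t)=-t^{-k}/k$. Then
\[
I(1)=\Bigl[-\tfrac{1}{k}(1-e^{-t})t^{-k}\Bigr]_0^\infty+\tfrac1k\int_0^\infty e^{-t}t^{-k}\,\mathrm dt .
\]
The boundary term needs care:
\begin{itemize}
\item at $0$, $(1-e^{-t})t^{-k}\sim t^{1-k}\to0$ because $k<1$;
\item at $\infty$, $(1-e^{-t})t^{-k}\le t^{-k}\to0$ because $k>0$.
\end{itemize}
So the boundary term vanishes. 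The remaining integral is $\int_0^\infty t^{(1-k)-1}e^{-t}\,\mathrm dt=\Gamma(1-k)$ by the definition of $\Gamma$ in \Cref{sec:preliminaries}, valid since $1-k>0$. This gives $I(1)=\Gamma(1-k)/k$ and hence $I(x)=x^k\Gamma(1-k)/k$.

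The only real obstacle is justifying the integration by parts on the improper integral. To do it carefully, I would integrate by parts on $[\eta,R]$ and then let $\eta\downarrow0$ and $R\to\infty$. The boundary limits computed above, together with the absolute convergence already established, make this passage to the limit legitimate. An alternative that avoids by-parts entirely is Tonelli: write $1-e^{-t}=\int_0^t e^{-s}\,\mathrm ds$, swap the order of integration (everything is nonnegative), and use $\int_s^\infty t^{-k-1}\,\mathrm dt=s^{-k}/k$, which again yields $\frac1k\int_0^\infty e^{-s}s^{-k}\,\mathrm ds=\Gamma(1-k)/k$. I would probably present this Tonelli route, since it handles the justification cleanly.
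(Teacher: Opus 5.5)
Your proof is correct and follows the paper's argument: the same split at $t=1$ for convergence, the same scaling $u=tx$, and the same Tonelli swap. The paper's representation $1-e^{-u}=u\int_0^1 e^{-us}\,\mathrm ds$ becomes your $\int_0^u e^{-w}\,\mathrm dw$ under $w=us$; the only difference is that the paper's ordering puts the Gamma integral inside and $\int_0^1 s^{k-1}\,\mathrm ds=1/k$ outside, the reverse of yours. You are also right that $(1-e^{tx})$ is a sign typo, and the paper's own proof works with $1-e^{-tx}$.
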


\begin{proof}
    We prove convergence first. Split the integral at $t=1$. So

    \begin{equation*}
        I(x)=\int_0^1(1-e^{-tx})t^{-k-1}\ \mathrm dt+\int_1^\infty (1-e^{-tx})t^{-k-1}\ \mathrm dt
    \end{equation*}
    For $0<t\le 1$, we have $1-e^{-tx}\le tx$ and thus $(1-e^{-tx})t^{-k-1}\le xt^{-k}$. Hence

    \begin{equation*}
        \int_0^1(1-e^{-tx})t^{-k-1}\ \mathrm dt\le x\int_0^1t^{-k}\ \mathrm dt=\frac{x}{1-k}<\infty
    \end{equation*}
    For $t\ge 1$, we have $0\le 1-e^{-tx}\le 1$ and hence $(1-e^{tx})t^{-k-1}\le t^{-k-1}$ and hence

    \begin{equation*}
        \int_1^\infty(1-e^{-tx})t^{-k-1}\ \mathrm dt\le \int_1^\infty t^{-k-1}\ \mathrm dt=\frac1k<\infty
    \end{equation*}
    and thus $I(x)<\infty$ and hence converges.

    \paragraph{} Now we need to compute the value of the integral. Let us do a change of variables to $u=tf$ and use $1-e^{-u}=u\int_0^1e^{-us}\ \mathrm ds$ to get

    \begin{equation*}
        I(x)=x^k\int_0^\infty (1-e^{-u})u^{-k-1}\ \mathrm du=x^k\int_0^\infty\int_0^1e^{-us}u^{-k}\ \mathrm ds\mathrm du
    \end{equation*}
    Since $u^{-k}e^{-us}$ is non-negative on $(u,s)\in (0,\infty)\times (0,1)$, Tonelli's theorem applies and we can interchange the order of integration to get

    \begin{equation*}
        I(x)=x^k\int_0^1\int_0^\infty e^{-us}u^{-k}\ \mathrm du\mathrm ds
    \end{equation*}
    Letting $v=us$ and noting that $1-k>0$, we can write the inner integral as $\int_0^\infty e^{-us}u^{-k}\ \mathrm du=s^{k-1}\int_0^\infty e^{-v}v^{-k}\ \mathrm dv=s^{k-1}\Gamma(1-k)$, so we have

    \begin{equation*}
        I(x)=x^k\Gamma(1-k)\int_0^1s^{k-1}\ \mathrm ds=x^k\Gamma(1-k)\frac1k
    \end{equation*}
\end{proof}

\subsection{Bound used in \Cref{thm:AlgorithmEstimateKcorrectnessandspace}}

\begin{lemma}\label{lem:doublederivativebound}
    Let $f(t)=\espoly(t)t^{-k-1}$ with $k\in (0,1)$. Then

    \begin{equation*}
        |f''(t)|\le \frac1{L^{k+3}}\sum_{x\in\F{0}}\left[(k+1)(k+2)+\frac{2(k+1)}e+\frac4{e^2}\right]
    \end{equation*}
\end{lemma}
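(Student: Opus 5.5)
The plan is to differentiate $f(t)=\espoly(t)t^{-k-1}$ twice with the product rule and bound each resulting term elementwise, using the fact that the only relevant range of $t$ is the integration window $[L,U]$ (the lemma is applied at $\xi\in(L,U)$ in \Cref{lem:Emidbound}). I will therefore read the statement as holding for all $t\ge L$. Since $\espoly(t)=\sum_{x\in\F{0}}g_x(t)$ with $g_x(t)=1-e^{-\f{x}t}$, linearity reduces everything to bounding, for a single $\lambda=\f{x}\ge 1$, the second derivative of $h_\lambda(t)=(1-e^{-\lambda t})t^{-k-1}$.

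First I will write
\begin{equation*}
h_\lambda''(t)=g''(t)t^{-k-1}-2(k+1)g'(t)t^{-k-2}+(k+1)(k+2)g(t)t^{-k-3},
\end{equation*}
where $g=g_x$ satisfies $g'(t)=\lambda e^{-\lambda t}$ and $g''(t)=-\lambda^2e^{-\lambda t}$. By the triangle inequality, $|h_\lambda''(t)|$ is at most the sum of the absolute values of the three terms, and I will bound them one at a time.

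For the third term I use $0\le g\le 1$. For the other two I use the elementary maximizations $\sup_{u\ge0}ue^{-u}=1/e$ and $\sup_{u\ge 0}u^2e^{-u}=4/e^2$ with $u=\lambda t$. These give $\lambda e^{-\lambda t}\le 1/(et)$ and $\lambda^2e^{-\lambda t}\le 4/(e^2t^2)$, so that
\begin{equation*}
|h_\lambda''(t)|\le t^{-k-3}\left[(k+1)(k+2)+\frac{2(k+1)}{e}+\frac{4}{e^2}\right].
\end{equation*}
The point of this step is that the bound is uniform in $\lambda$, so no dependence on $\maxfreq$ appears.

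Finally, since $t^{-k-3}$ is decreasing, for $t\ge L$ I will replace $t^{-k-3}$ by $L^{-k-3}$ and sum over $x\in\F{0}$, which gives exactly the claimed bound. The only delicate point is the domain: the bound blows up as $t\to0$, so it genuinely requires $t\ge L$. This is exactly how the lemma is used in \Cref{lem:Emidbound}, and I would make that hypothesis explicit. The rest is routine calculus.
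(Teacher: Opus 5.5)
Your proof is correct and is the paper's argument: the same product-rule expansion and triangle inequality, the bounds $\sup_{u\ge 0}ue^{-u}=1/e$ and $\sup_{u\ge 0}u^2e^{-u}=4/e^2$ applied with $u=\f{x}t$, and then $t^{-k-3}\le L^{-k-3}$. Stating the hypothesis $t\ge L$ explicitly is a worthwhile clarification, since the lemma as written omits it and the paper's proof uses it silently at exactly that last step.
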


To prove this, we need another result.

\begin{lemma}\label{lem:expbounds}
    For any $a\ge 0$, the following bounds hold

    \begin{equation*}
        \sup_{x\ge 0}xe^{-ax}=\frac1{ae},\ \sup_{x\ge 0}x^2e^{-ax}=\frac4{e^2a^2}
    \end{equation*}
\end{lemma}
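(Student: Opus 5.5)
Both identities reduce to a one-variable calculus problem for $g(x)=x^m e^{-ax}$ with $m\in\{1,2\}$ on $[0,\infty)$. The plan is to find the unique interior critical point, check that it is a maximum, and evaluate $g$ there. The degenerate case $a=0$ is set aside at the start, since then both suprema are $+\infty$, matching the formal value of the right-hand side. So assume $a>0$ from now on.

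\textbf{First identity.} Take $g(x)=xe^{-ax}$. Differentiating gives $g'(x)=e^{-ax}(1-ax)$. Hence $g'>0$ on $[0,1/a)$ and $g'<0$ on $(1/a,\infty)$. Also $g(0)=0$ and $g(x)\to0$ as $x\to\infty$. Therefore the supremum is attained at $x=1/a$, where
\[
g(1/a)=\frac{1}{a}\,e^{-1}=\frac{1}{ae}.
\]

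\textbf{Second identity.} Take $g(x)=x^2e^{-ax}$. Differentiating gives $g'(x)=xe^{-ax}(2-ax)$. This vanishes at $x=0$ and at $x=2/a$. Moreover $g'>0$ on $(0,2/a)$ and $g'<0$ on $(2/a,\infty)$. So the maximum is at $x=2/a$, where
\[
g(2/a)=\frac{4}{a^2}\,e^{-2}=\frac{4}{e^2a^2}.
\]

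Neither step presents a real obstacle; the only point needing care is the sign analysis of $g'$. That analysis is what shows the critical point is a global maximum on $[0,\infty)$ rather than just a local one. Together with the decay $g(x)\to0$ at infinity, it excludes the supremum being approached only in the limit.
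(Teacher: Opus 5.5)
Your proposal is correct and takes the same approach as the paper. In both, you differentiate $x^m e^{-ax}$, locate the critical point at $1/a$ (resp.\ $2/a$), and evaluate there, using $g(0)=0$ and the decay at infinity. Your explicit sign analysis of $g'$ and your separate treatment of the degenerate case $a=0$ make the argument slightly more careful than the paper's, which implicitly divides by $a$.
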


\begin{proof}
    Let $f(x)=xe^{-ax}$. Then the derivative of the function is $f'(x)=e^{-ax}(1-ax)$ and this vanishes at $x=1/a$. Since $f(0)=0, \lim_{x\to\infty}f(x)=0$ and $f(1/a)=1/ea$, thus $\sup_{x\ge 0}xe^{-ax}=1/ea$.

    Similarly, let $h(x)=x^2e^{-ax}$. Then the derivative of the function is $h'(x)=e^{-ax}(2x-ax^2)$ and this vanishes at $x=0$ and $x=2/a$. Now since $h(0)=0,\lim_{x\to 0}h(x)=0$ and $h(2/a)=4/e^2a^2$, thus $\sup_{x\ge 0}x^2e^{-ax}=4/e^2a^2$.
\end{proof}

\begin{proof}[Proof of \Cref{lem:doublederivativebound}]
    Let us first compute $f''(t)$. We have

    \begin{equation*}
        f''(t)=\frac{\espoly''(t)}{t^{k+1}}-2(k+1)\frac{\espoly'(t)}{t^{k+2}}+(k+1)(k+2)\frac{\espoly(t)}{t^{k+3}}
    \end{equation*}
    and hence

    \begin{align*}
        &|f''(t)|\le \left|\frac{\espoly''(t)}{t^{k+1}}\right|+2(k+1)\left|\frac{\espoly'(t)}{t^{k+2}}\right|+(k+1)(k+2)\left|\frac{\espoly(t)}{t^{k+3}}\right|\\
        &\le \frac1{t^{k+3}}\sum_{x\in\F{0}}\left[|\f{x}^2e^{-\f{x}t}t^2|+2(k+1)|\f{x}e^{-\f{x}}t|+(k+1)(k+2)|1-e^{-\f{x}}|\right]\\
        &\le\frac{1}{L^{k+3}}\sum_{x\in\F{0}}\left[\f{x}^2e^{-\f{x}t}t^2+2(k+1)\f{x}e^{-\f{x}}t+(k+1)(k+2)|\right] \qquad (\text{Since $|1-e^{-t\f{x}}|\le 1$ and $\f{x},e^{-\f{x}},t\ge 0$})\\
        &\le \frac1{L^{k+3}}\sum_{x\in\F{0}}\left[\frac4{e^2t^2}t^2+2(k+1)\frac1{et}t+(k+1)(k+2)\right] \qquad (\text{From \Cref{lem:expbounds} with $a=t$ and $x=\f{x}$})\\
        &=\frac1{L^{k+3}}\sum_{x\in\F{0}}\left[\frac4{e^2}+\frac2e(k+1)+(k+1)(k+2)\right]
    \end{align*}
\end{proof}

\subsection{Bound used in \Cref{lem:tailpartFk}}

\begin{lemma}\label{lem:Gammabound}
    For every $x\in(0,1)$ we have $\displaystyle \Gamma(x)\ge\frac{1}{2x}$.
\end{lemma}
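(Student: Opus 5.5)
We plan to reduce to a lower bound on $\Gamma$ over $(1,2)$ via the functional equation $\Gamma(x+1)=x\Gamma(x)$, which holds for $x>0$ by integrating $\int_0^\infty t^{x}e^{-t}\,\mathrm dt$ by parts. This gives $\Gamma(x)=\Gamma(1+x)/x$. The claim $\Gamma(x)\ge\frac1{2x}$ is therefore equivalent to $\Gamma(1+x)\ge\frac12$ for every $x\in(0,1)$. This is consistent with the constant $c=\inf_{y\in(1,2]}\Gamma(y)\approx 0.8856$ used in \Cref{thm:AlgorithmEstimateKcorrectnessandspace}. We prefer a self-contained, if weaker, bound over quoting the numerical value of the minimum of $\Gamma$.

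To get $\Gamma(1+x)\ge\frac12$, we will view $\Gamma(1+x)=\int_0^\infty t^{x}e^{-t}\,\mathrm dt=\mathbb E[T^{x}]$, where $T\sim\mathrm{Exp}(1)$ (the measure $e^{-t}\,\mathrm dt$ is a probability measure). We write $T^{x}=e^{x\ln T}$ and apply Jensen's inequality to the convex function $\exp$. This gives
\[\Gamma(1+x)\ge \exp\bigl(x\,\mathbb E[\ln T]\bigr).\]
The identity $\mathbb E[\ln T]=\int_0^\infty \ln t\, e^{-t}\,\mathrm dt=\Gamma'(1)=\psi(1)=-\gamma$ follows from differentiating under the integral sign, which is justified by dominated convergence near $s=1$. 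It is also exactly the stated fact $\psi(1)=H_0-\gamma$ from \Cref{sec:preliminaries}. Hence $\Gamma(1+x)\ge e^{-\gamma x}\ge e^{-\gamma}$ for $x\in(0,1)$.

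It remains to check the numerical inequality $e^{-\gamma}\ge\frac12$, i.e.\ $\gamma\le\ln 2\approx0.693$. This holds because $\gamma\approx0.5772$. If a fully elementary justification is wanted, note that $H_n-\ln n$ decreases to $\gamma$, and at $n=2$ it already equals $1.5-\ln 2\approx 0.807$. That is not yet below $\ln 2$, so we would instead use $\gamma<H_n-\ln(n+1)+\frac1{n+1}$ or simply evaluate at a moderate $n$, e.g.\ $n=4$, where $H_4-\ln 4\approx 0.697$. This numerical certification is the only potentially fiddly step. Failing that, we would quote $\gamma\approx0.5772$ directly.

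Combining the steps gives $\Gamma(x)=\Gamma(1+x)/x\ge e^{-\gamma}/x\ge\frac1{2x}$, as claimed.
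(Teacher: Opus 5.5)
Your proof is correct, but it gets the bound on $(1,2)$ by a different route from the paper. The paper cites convexity of $\Gamma$ and uses the tangent line at $y=2$, namely $\Gamma(y)\ge 1+(1-\gamma)(y-2)$. On $[1,2]$ this line is minimized at $y=1$ with value $\gamma$, so the paper needs the numerical fact $\gamma>\frac12$.

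You instead apply Jensen to $\mathbb E[e^{x\ln T}]$ with $T\sim\mathrm{Exp}(1)$. This is exactly the tangent-line inequality for $\ln\Gamma$ at $1$ (log-convexity), but you derive it in one line from the integral representation rather than citing convexity of $\Gamma$. The comparison is as follows:
\begin{itemize}
\item Your bound $\Gamma(1+x)\ge e^{-\gamma x}$ is tight as $x\to 0$. The paper's linear bound $\gamma+(1-\gamma)x$ is tight at $x=1$.
\item Your worst-case constant $e^{-\gamma}\approx 0.561$ is slightly weaker than the paper's $\gamma\approx 0.577$. Both comfortably exceed $\frac12$.
\item The numerical input changes from a lower bound on $\gamma$ to the upper bound $\gamma<\ln 2$.
\item Both arguments rest on the same evaluation, $\psi(1)=-\gamma$, equivalently $\psi(2)=1-\gamma$.
\end{itemize}

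One correction to your optional elementary certification. $H_4-\ln 4\approx 0.6970$ is \emph{larger} than $\ln 2\approx 0.6931$, so $n=4$ does not certify $\gamma<\ln 2$. Take $n=5$ instead: $H_5-\ln 5=\frac{137}{60}-\ln 5\approx 0.674<\ln 2$, and the monotone decrease of $H_n-\ln n$ gives the claim. Your alternative $H_n-\ln(n+1)+\frac{1}{n+1}$ is just $H_{n+1}-\ln(n+1)$, the same decreasing sequence shifted by one, so it adds nothing new. Quoting $\gamma\approx 0.5772$ directly is also fine, and matches the level of rigor of the paper's own ``$\gamma>1/2$''.
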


\begin{proof}
    Let $y=x+1\in(1,2)$. It suffices to show $\Gamma(y)\ge\frac12$ for $y\in(1,2)$ because $x\Gamma(x)=\Gamma(x+1)=\Gamma(y)$, so 
    
    \begin{equation*}
        \Gamma(x)\ge\frac{1}{2x}\iff 2\Gamma(y)\ge 1
    \end{equation*}
    The Gamma function is convex on $(0,\infty)$, hence for any point $a$ we have $\Gamma(y)\ge L_a(y)$ where $L_a$ is the tangent line at $a$. Take $a=2$. Then 
    
    \begin{equation*}
        L_2(y)=\Gamma(2)+\Gamma'(2)(y-2)=1+(1-\gamma)(y-2)
    \end{equation*}
    because $\Gamma(2)=1$ and $\Gamma'(2)=\Gamma(2)\psi(2)=1\cdot(1-\gamma)$ where $\psi$ is the digamma function and $\psi(2)=H_1-\gamma=1-\gamma$. For $y\in[1,2]$ the line $L_2$ achieves its minimum at $y=1$, and 
    
    \begin{equation*}
        L_2(1)=1+(1-\gamma)(-1)=\gamma>1/2
    \end{equation*}
    Therefore for all $y\in[1,2]$, 
    
    \begin{equation*}
        \Gamma(y)\ge L_2(y)\ge L_2(1)=\gamma>1/2
    \end{equation*}
    which proves $\Gamma(y)\ge 1/2$, hence $\Gamma(x)\ge 1/(2x)$ for $x\in(0,1)$.
\end{proof}

\subsection{Computation of double derivative in \Cref{lem:ItrapIestboundSR}}

We restate the computation as a lemma.

\begin{lemma}\label{lem:SRfdoublederivbound}
    Let $f(t)=\espoly(t)re^{-rt}$ for some $r>0$. Then

    \begin{equation*}
        |f''(t)|\le |\F{0}|r(r^2+(r+\maxfreq)^2)
    \end{equation*}
\end{lemma}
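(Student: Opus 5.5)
The plan is to write $f$ as a product $f(t)=\espoly(t)\,g(t)$ with $g(t)=re^{-rt}$, expand $f''$ by the Leibniz rule, and bound each of the three terms separately using the frequency bound $\f{x}\le\maxfreq$. This mirrors the argument already used for \Cref{lem:EtrapboundGB}, specialized to the explicit density $re^{-rt}$, and should give a slightly sharper constant.

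First I record the derivatives. From $\espoly(t)=\sum_{x\in\F{0}}(1-e^{-t\f{x}})$ we get
\[
\espoly'(t)=\sum_{x\in\F{0}}\f{x}e^{-t\f{x}},\qquad \espoly''(t)=-\sum_{x\in\F{0}}\f{x}^2e^{-t\f{x}} .
\]
For $g$ we have $g'=-r^2e^{-rt}$ and $g''=r^3e^{-rt}$.

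Rather than bounding the three Leibniz terms independently, I will keep the per-element structure. Writing $\lambda=\f{x}$, each element contributes
\[
re^{-rt}\Bigl[-\lambda^2e^{-\lambda t}-2r\lambda e^{-\lambda t}+r^2(1-e^{-\lambda t})\Bigr]
= re^{-rt}\Bigl[r^2-(\lambda+r)^2e^{-\lambda t}\Bigr].
\]
So $f''(t)=re^{-rt}\sum_{x\in\F{0}}\bigl[r^2-(\f{x}+r)^2e^{-\f{x}t}\bigr]$.

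For each $x$, the bracket is a difference of two nonnegative quantities. Its absolute value is therefore at most the larger of the two, and in particular at most their sum:
\[
\bigl|r^2-(\f{x}+r)^2e^{-\f{x}t}\bigr|\le r^2+(\f{x}+r)^2\le r^2+(r+\maxfreq)^2 .
\]
The last step uses $e^{-\f{x}t}\le 1$ for $t\ge0$ together with $\f{x}\le\maxfreq$. Finally, $re^{-rt}\le r$ for $t\ge0$, and summing over the $|\F{0}|$ elements gives
\[
|f''(t)|\le |\F{0}|\,r\bigl(r^2+(r+\maxfreq)^2\bigr).
\]

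The only mildly delicate step is noticing that combining the terms into $r^2-(\lambda+r)^2e^{-\lambda t}$ yields exactly the stated constant. A cruder term-by-term bound gives $\maxfreq^2+2r\maxfreq+r^2$ plus an extra $r^2$, which is the same quantity. So either route works; no real obstacle is expected beyond careful bookkeeping and the restriction $t\ge 0$.
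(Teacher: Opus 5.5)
Your proof is correct and follows the paper's argument exactly: apply the Leibniz rule, combine each element's contribution into $re^{-rt}\bigl[r^2-(\f{x}+r)^2e^{-\f{x}t}\bigr]$, then bound the bracket by $r^2+(\f{x}+r)^2\le r^2+(r+\maxfreq)^2$ using $e^{-\f{x}t}\le 1$, $re^{-rt}\le r$ for $t\ge 0$, and $\f{x}\le\maxfreq$. One small correction to your closing aside: the term-by-term bound (using $0\le 1-e^{-\f{x}t}\le 1$) gives $|\F{0}|\,r(r+\maxfreq)^2$ with no extra $r^2$, and your own ``larger of the two'' observation gives this same sharper bound, so the stated inequality follows a fortiori either way.
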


\begin{proof}
    The double derivative of $f$ can be computed as

    \begin{align*}
        f''(t)&=re^{-rt}[\espoly''(t)-2r\espoly'(t)+r^2\espoly(t)]\\
        &=re^{-rt}\sum_{x\in\F{0}}[-\f{x}^2e^{-\f{x}t}-2r\f{x}e^{-\f{x}t}+r^2(1-e^{-\f{x}t})]\\
        &=re^{-rt}\sum_{x\in\F{0}}[r^2-(\f{x}+r)^2e^{-\f{x}t}]
    \end{align*}
    where we used the fact that $\espoly(t)=\sum_{x\in\F{0}}(1-e^{-\f{x}t}), \espoly'(t)=\sum_{x\in\F{0}}\f{x}e^{-\f{x}t}$ and $\espoly''(t)=\sum_{x\in\F{0}}-\f{x}^2e^{-\f{x}t}$.

    Now using the triangle inequality and the fact that $0<e^{-x}\le 1$, the absolute value of the double derivative can be upper bounded as

    \begin{equation*}
        |f''(t)|\le r\sum_{x\in\F{0}}[r^2+(\f{x}+r)^2]\le r\sum_{x\in\F{0}}[r^2+(\maxfreq+r)^2]=|\F{0}|r(r^2+(r+\maxfreq)^2)
    \end{equation*}
\end{proof}

\subsection{Proof of \Cref{lem:SLFAfderivativebound}}\label{lem:SLFAfderivativeboundappendix}

We restate the lemma here for convenience.

\begin{lemma}
    Let $f(t)=\espoly(t)\frac{e^{-t}}{t}$. Then $|f''(t)|\le |\F{0}|\tau(1+\tau)^2$ for $t\in [0,T]$ for any $T>0$.
\end{lemma}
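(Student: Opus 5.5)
The plan is to work one element at a time and to rewrite each summand of $f$ as a Laplace-type integral over a bounded interval. This removes the apparent singularity of $e^{-t}/t$ at $0$ and makes the second derivative trivial to bound. Since $\espoly(t)=\sum_{x\in\F{0}}(1-e^{-t\f{x}})$, we have $f(t)=\sum_{x\in\F{0}}g_{\f{x}}(t)$, where for an integer $a=\f{x}\in[1,\maxfreq]$
\begin{equation*}
    g_a(t)=(1-e^{-at})\frac{e^{-t}}{t}=\frac{e^{-t}-e^{-(a+1)t}}{t}.
\end{equation*}
By the triangle inequality it suffices to show $|g_a''(t)|\le a(1+a)^2$ for all $t\ge 0$. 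Summing over $x\in\F{0}$ and using $\f{x}\le\maxfreq$ then gives $|f''(t)|\le|\F{0}|\maxfreq(1+\maxfreq)^2$ on any $[0,T]$.

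The key step is the identity
\begin{equation*}
    g_a(t)=\int_1^{a+1}e^{-st}\ \mathrm ds \qquad (t>0).
\end{equation*}
It holds because $\int_1^{a+1}e^{-st}\,\mathrm ds=\frac{e^{-t}-e^{-(a+1)t}}{t}$. The right-hand side is also defined at $t=0$, where it equals $a$. This agrees with the continuous extension $f(0)=\F{1}$ used in the proof of \Cref{thm:estimateSLFAcorrectness}, so the extended $f$ is exactly $\sum_x\int_1^{\f{x}+1}e^{-st}\,\mathrm ds$ on $[0,\infty)$.

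Since the integration range is compact and the integrand $e^{-st}$ is smooth in $(s,t)$, we may differentiate under the integral sign twice, including at $t=0$ as a one-sided derivative. This gives
\begin{equation*}
    g_a''(t)=\int_1^{a+1}s^2e^{-st}\ \mathrm ds.
\end{equation*}
For $t\ge 0$ we have $0<e^{-st}\le 1$, so
\begin{equation*}
    0\le g_a''(t)\le\int_1^{a+1}s^2\ \mathrm ds\le a\,(a+1)^2,
\end{equation*}
bounding $s^2$ by its maximum $(a+1)^2$ over an interval of length $a$.

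The only real obstacle is the behaviour at $t=0$. A naive product-rule expansion of $\espoly(t)e^{-t}/t$ produces terms like $\espoly(t)/t^3$ that individually blow up as $t\to 0$. Such an expansion would force a lower cutoff like $L$ in the $\F{k}$ case, which the algorithm $\estimateSLFA$ does not use. The integral representation sidesteps this entirely, so I would commit to it rather than bounding the Taylor-cancellation terms by hand. The bound obtained is in fact uniform in $t\ge 0$, independent of $T$, and slightly stronger than needed, since $g_a''\ge 0$.
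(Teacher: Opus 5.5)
Your proof is correct. It reaches the same identity as the paper by a cleaner route.

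The paper computes $h_{\f{x}}''$ directly by the quotient rule and writes it as $\bigl(H(1)-H(1+\f{x})\bigr)/t^3$ with $H(u)=(t^2u^2+2tu+2)e^{-tu}$. It then applies the mean value theorem with $H'(u)=-t^3u^2e^{-tu}$ to cancel the $t^{-3}$, giving $h_{\f{x}}''(t)=\f{x}c^2e^{-ct}$ for some $c\in(1,1+\f{x})$. The point $t=0$ is handled separately by computing $\lim_{t\downarrow 0}h_{\f{x}}''(t)=\f{x}+\f{x}^2+\f{x}^3/3$ and extending right-continuously.

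Your formula $g_a''(t)=\int_1^{a+1}s^2e^{-st}\,\mathrm ds$ is exactly the fundamental-theorem-of-calculus form of that step; the paper's $\f{x}c^2e^{-ct}$ is its mean-value form. So the naive expansion you worried about is in fact what the paper does, rescued by the MVT cancellation. You instead obtain the formula by differentiating under an integral over a compact range, never producing the singular terms.

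What your route buys is that $t=0$ needs no separate treatment. The representation shows the extended $f$ is smooth on $[0,\infty)$, which is exactly the $C^2([0,U])$ hypothesis the trapezoidal error formula needs in the $\SLFA$ analysis. The paper's limit computation identifies the value at $0$, but it leaves implicit the standard argument that the extension is actually twice differentiable there.

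Your route also shows immediately that $g_a''$ is positive and decreasing in $t$, so the sharp constant is $g_a''(0)=\bigl((a+1)^3-1\bigr)/3$, about a third of the $a(a+1)^2$ both proofs settle for. The paper's route gives an explicit closed form for $h''$, but nothing the argument needs beyond what your integral already provides.
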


\begin{proof}
    For a fixed $t\ge 0$ and $\f{x}\ge 1$, let

    \begin{equation*}
        h_{\f{x}}(t)=\frac{(1-e^{-\f{x}t})e^{-t}}{t}=\frac{e^{-t}-e^{-(\f{x}+1)t}}{t}
    \end{equation*}
    Let $s=1+\f{x}$. Then we have

    \begin{equation*}
        h''_{\f{x}}(t)=\frac{e^{-t}(t^2+2t+2)-e^{-st}(s^2t^2+2st+2)}{t^3}
    \end{equation*}
    Let $H(x)=(t^2x^2+2tx+2)e^{-tx}$. Then $h''_{\f{x}}(t)=\frac{H(1)-H(s)}{t^3}$. Since for any fixed $t>0$, $H(x)$ is smooth in $x$, so by the mean value theorem there exists $c\in (1,s)$ such that

    \begin{equation*}
        H(1)-H(s)=H'(x)(1-s)=-\f{x}H'(c)
    \end{equation*}
    Now $H'(x)=-t^3x^2e^{-xt}$ and thus for some $c\in (1,1+\f{x})$, $H(1)-H(s)=\f{x}t^3c^2e^{-ct}$ implying that $h_{\f{x}}''(t)=\f{x}c^2e^{-ct}$ for some $c\in (1,1+\f{x})$. Also, $\lim_{t\downarrow 0}h_{\f{x}}''(t)=\f{x}+\f{x}^2+\frac13\f{x}^3$, so we extend $h_{\f{x}}(t)$ to $0$ right-continuously with this value.

    From the above, $|h_{\f{x}}''(t)|=\f{x}c^2e^{-ct}$ for $t>0$. Since $c\in (1,1+\f{x}), e^{-ct}\le 1$ and $\f{x}\le \tau$, it follows that for any $t>0$

    \begin{equation*}
        |h_{\f{x}}''(t)|\le \f{x}(1+\f{x})^2\le \tau(1+\tau)^2
    \end{equation*}
    At $t=0$, we have 
    
    \begin{equation*}
        |h_{\f{x}}''(0)|=\f{x}+\f{x}^2+\frac13\f{x}^3\le \f{x}+2\f{x}^2+\f{x}^3=\f{x}(1+\f{x})^2\le \tau(1+\tau)^2
    \end{equation*}
    Thus $|h_{\f{x}}''(t)|\le \tau(1+\tau)^2$ for all $t\in [0,T], T>0$. Since $f(t)=\espoly(t)\frac{e^{-t}}{t}=\sum_{x\in\F{0}}h_{\f{x}}(t)$, we have

    \begin{equation*}
        |f''(t)|\le \sum_{x\in\F{0}}|h''_{\f{x}}(t)|\le |\F{0}|\tau(1+\tau)^2
    \end{equation*}
\end{proof}
\section{$\F{k}$ for $k>1$}

In this section, we show how the sampling algorithm of AMS \cite{alon1996space} can be modified to work for Delphic sets to estimate $\F{k}$ when $k>1$, and then show that our analytic approach also gives us an algorithm for the same but with a little worse space bounds when $k\in\mathbb N$.

\subsection{$\F{k}$ when $k>1$ using AMS but for Delphic}\label{sec:Fkamsanalytic}

We modify the AMS algorithm \cite{alon1996space} for $\F{k}$. Keep two variables $X$ and $C=0$. Let the stream be $\langle S_1, S_2,\dots,S_n\rangle$. Let $M_t=\sum_{j=1}^{t}|S_t|$. In the $t+1$-th update $S_{t+1}$, do the following. Flip a coin with probability $|S_{t+1}|/M_{t+1}$. If it lands head, sample an element uniformly at random from $S_{t+1}$, replace $X$ with that element and set $C=1$. If it lands tails, then check if $X\in S_{t+1}$; if yes then increase $C$ by $1$, if no then get the next update. At the end output $\hat X=m(C^k-(C-1)^k)$, where $m=\sum_{x\in\univ}\f{x}$.

\paragraph{}We show that $\mathbb E[\hat X]=\F{k}$. Let us condition on the sampled element $X=x$. Given $X=x$, the tail count $C$ is uniform in $\{1,2,\dots,\f{x}\}$. Then

\begin{equation*}
    \mathbb E[\hat X\mid X=x]=\frac1{\f{x}}\sum_{c=1}^{\f{x}}m(c^{k}-(c-1)^k)=m\frac{\f{x}^k}{\f{x}}=m\f{x}^{k-1}
\end{equation*}
Thus

\begin{equation*}
    \mathbb E[\hat X]=\sum_{x\in\univ}\mathbb E[\hat X\mid X=x]\mathbb P(X=x)=\sum_{x\in\univ}m\f{x}^{k-1}\frac{\f{x}}{m}=\sum_{x\in\univ}\f{x}^k=\F{k}
\end{equation*}
Similarly, for the variance, we have that

\begin{align*}
    \mathbb E[\hat X^2]&=m\sum_{x\in\univ}\sum_{c=1}^{\f{x}}(c^k-(c-1)^k)^2\\&\le m\sum_{x\in\univ}\sum_{c=1}^{\f{x}}kc^{k-1}(c^k-(c-1)^k)\\&\le m\sum_{x\in\univ}k\f{x}^{2k-1}\\&\le km\sum_{x\in\univ}\maxfreq^{k-1}\f{x}^k\\&=km\maxfreq^{k-1}\F{k}
\end{align*}
Now since $k\ge 1$, we have $m=\F{1}\le \F{k}$ and thus $\mathbb E[\hat X^2]\le k\maxfreq^{k-1}\F{k}^2$. Thus the variance is

\begin{equation*}
    \mathrm{Var}(\hat X)=\mathbb E[\hat X^2]-(\mathbb E[\hat X])^2\le \mathbb E[X^2]\le k\maxfreq^{k-1}\F{k}^2
\end{equation*}
Hence using Chebyschev, we get

\begin{equation*}
    \mathbb P(|\hat X-\F{k}|>\error\F{k})\le\frac{k\maxfreq^{k-1}\F{k}^2}{\error^2\F{k}^2}=\frac{k\maxfreq^{k-1}}{\error^2}
\end{equation*}
So by median of means, we can get an $(\error,\conf)$-estimate with $\tilde O_k(\tau^{k-1}\error^{-2}\log(1/\conf))$ space and update time.



\subsection{$\F{k}$ when $k\in\mathbb N$ using analytic techniques}\label{sec:FkN}

In this part, we show how to compute $\F{k}$ when $k\in \mathbb N$. The driving observations for our algorithm are the following two lemmas. To emphasize that we are doing integral moments, we write it as $\F{n}$ with $n\in\mathbb N$. But first, we need a small preliminary.

\subsubsection{Evaluating $(-1)^{n-1}\espoly^{(n)}$ at $0$}\label{subsec:appendixFkintegeranalytic}

 Here we mention the scheme used to estimate $(-1)^{n-1}\espoly^{(n)}$ at any point $t\in (0,\infty)$. The term $(-1)^{n-1}$ is there in front to make the term positive. We will use a forward difference technique to estimate this since $e^{-t}$ runs from $0$ to $\infty$.

The \textit{forward finite difference} scheme used to approximate $(-1)^{n-1}\espoly^{(n)}(0)$ is given by the fact that

\begin{equation*}
    (-1)^{n-1}\espoly^{(n)}(A)=(-1)^{n-1}\left[\frac1{h^n}\left(\sum_{i=0}^n(-1)^{n-i}\binom{n}{i}\espoly(A+ih)\right)+R^+_n\right]
\end{equation*}
with the error as

\begin{equation*}
    R^+_n=\frac{-h}{(n+1)!}\sum_{j=0}^n(-1)^{n-j}\binom{n}{j}j^{n+1}\espoly^{(n+1)}(\xi_j)
\end{equation*}
for some $\xi_j\in(A,A+jh)$ for all $j$\footnote{This follows from Taylor's theorem in \cite{atkinson2008introduction}}.
Putting in $A=0$ gives

\begin{equation*}
    (-1)^{n-1}\espoly^{(n)}(0)=(-1)^{n-1}\left[\frac1{h^n}\left(\sum_{i=0}^n(-1)^{n-i}\binom{n}{i}\espoly(ih)\right)+R^+_n\right]
\end{equation*}
with

\begin{equation*}
    R^+_n=\frac{-h}{(n+1)!}\sum_{j=0}^n(-1)^{n-j}\binom{n}{j}j^{n+1}\espoly^{(n+1)}(\xi_j)
\end{equation*}
for some $\xi_j\in(0,jh)$ for all $j$.

\subsection{$\F{k}$ estimation for $k\in\mathbb N$}

We provide a little intuition for the identity we will be using.
\paragraph{\textsc{Intuition:}} Let us go back to our jar of marbles. Let us also think of our $t$ as probe scales, trying to probe \textit{how many} marbles of a particular type there are by thinning the jar with probability $e^{-t}$. For a very small probe, the probability probability that the $j$-th marble of type $x$ survives is $1-e^{-t}\sim t$. Then what is the probability that at least one of type $x$ survives? Standard inclusion-exclusion tells us that it is

\begin{equation*}
    P\sim\binom{\f{x}}{1}t-\binom{\f{x}}{2}t^2+\binom{\f{x}}{3}t^3-\cdots
\end{equation*}
Note that each of these terms are appromately proportional to $\f{x}^n$ respectively. Taking the expectation now gives us that the terms are approximately $\F{n}$ respectively. The left hand side is $\espoly$ and the derivatives of $\espoly$ exactly peel off these terms from the \textit{power series expansion} of $\espoly$.

Another intuition that helps is to think of each derivative operation as a labelled (infinitessimal) probe; $n$ differentiations are $n$ labelled probes. For a fixed marble type $x$ there are $\f{x}$ many distinguishable occurences; each probe independently chooses an occurence to act on. So there are $\f{x}^n$ possible ordered choices for the probes, and summing over them gives us the moments. This is a repackaging of the above idea, but is more combinatorial to think in terms of.

We now make this formal.

\begin{lemma}\label{lem:espolynthderivative}
    The $n$-th derivative of $\espoly$ is given by

    \begin{equation*}
        \espoly^{(n)}(t)=(-1)^{n-1}\sum_{x\in\F{0}}\f{x}^ne^{-t\f{x}}
    \end{equation*}
\end{lemma}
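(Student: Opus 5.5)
The plan is to differentiate the defining finite sum term by term and then run a short induction on $n$. By definition, $\espoly(t)=\sum_{x\in\F{0}}(1-e^{-t\f{x}})$. The sum ranges over the finite set of distinct stream elements, so it has at most $|\univ|$ terms. Each term is an entire function of $t$. Hence $\espoly$ is smooth on $\mathbb R$, and its $n$-th derivative is the sum of the $n$-th derivatives of the individual terms. No interchange-of-limits issue arises, because the sum is finite.

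Next, fix $x$ and set $g_x(t)=1-e^{-t\f{x}}$. For $n\ge 1$ the claim is that
\[g_x^{(n)}(t)=(-1)^{n-1}\f{x}^n e^{-t\f{x}}.\]
I would prove this by induction on $n$.

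\emph{Base case.} For $n=1$, the derivative of the constant $1$ vanishes, and the derivative of $-e^{-t\f{x}}$ is $\f{x}e^{-t\f{x}}$. This matches the formula, since $(-1)^0=1$.

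\emph{Inductive step.} Suppose the formula holds for $n$. Differentiating $(-1)^{n-1}\f{x}^n e^{-t\f{x}}$ once more brings down a factor $-\f{x}$. The result is $(-1)^{n}\f{x}^{n+1}e^{-t\f{x}}$, which is the formula at $n+1$.

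Summing over $x\in\F{0}$ then gives
\[\espoly^{(n)}(t)=(-1)^{n-1}\sum_{x\in\F{0}}\f{x}^n e^{-t\f{x}}.\]
We may also sum over all of $\univ$ if we prefer: elements with $\f{x}=0$ contribute $0^n=0$ for $n\ge1$.

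There is no real obstacle here. The only point needing care is the base case: the constant $1$ disappears after the first derivative, so the formula is claimed only for $n\ge 1$ and not for $n=0$. As an immediate corollary, setting $t=0$ recovers $\F{n}=(-1)^{n-1}\espoly^{(n)}(0)$, as used earlier in the paper.
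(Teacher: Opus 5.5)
Your proof is correct and follows essentially the same route as the paper, which also differentiates the finite sum term by term and inducts on $n$. The only difference is that the paper runs the induction on the whole sum rather than on each summand, which is immaterial.
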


\begin{proof}
    The proof is via induction. For $n=1$, we have \begin{equation}
        \espoly'(t)=\ddt{t}\sum_{x\in\F{0}}(1-e^{-t\f{x}})=\sum_{x\in\F{0}}\ddt{t}(1-e^{-t\f{x}})=\sum_{x\in\F{0}}\f{x}e^{-t\f{x}}=(-1)^{1-1}\sum_{x\in\F{0}}\f{x}^1e^{-t\f{x}}
    \end{equation}
    and hence the base case holds. Assume it is true for $n$. Then for $n+1$, we have

    \begin{align*}
        \espoly^{(n)}(t)&=\ddt{t}\espoly^{(n-1)}(t)=\ddt{t}(-1)^{n-1}\sum_{x\in\F{0}}\f{x}^ne^{-t\f{x}}=(-1)^{n-1}\sum_{x\in\F{0}}\f{x}^n\ddt{t}e^{-t\f{x}}\\
        &=(-1)^{n-1}\sum_{x\in\F{0}}\f{x}^n(-\f{x})e^{-t\f{x}}=(-1)^{n}\sum_{x\in\F{0}}\f{x}^{n+1}e^{-t\f{x}}=(-1)^{n+1-1}\sum_{x\in\F{0}}\f{x}^{n+1}e^{-t\f{x}}
    \end{align*}
    and thus by induction, the lemma holds.
\end{proof}

\begin{lemma}\label{lem:espolyderivat0}
    The following identity holds,

    \begin{equation*}
        \F{n}=(-1)^{n-1}\espoly^{(n)}(0)
    \end{equation*}
\end{lemma}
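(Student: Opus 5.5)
The plan is to obtain the identity directly from \Cref{lem:espolynthderivative} by evaluating the $n$-th derivative at $t=0$. Setting $t=0$ in the closed form gives
\begin{equation*}
    \espoly^{(n)}(0)=(-1)^{n-1}\sum_{x\in\F{0}}\f{x}^n e^{0}=(-1)^{n-1}\sum_{x\in\F{0}}\f{x}^n .
\end{equation*}
Multiplying both sides by $(-1)^{n-1}$ and using $(-1)^{2(n-1)}=1$ yields $(-1)^{n-1}\espoly^{(n)}(0)=\sum_{x\in\F{0}}\f{x}^n$.

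The remaining step is to identify this sum with $\F{n}$. By \Cref{def:Fkdefinition}, $\F{n}=\sum_{x\in\univ}\f{x}^n$. Every $x\notin\F{0}$ has $\f{x}=0$, so it contributes $0^n=0$ because $n\ge 1$. Hence the sum over the support equals the sum over $\univ$, which is the remark made right after \Cref{def:Fkdefinition}.

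The only point that needs care, and even it is routine, is the legitimacy of evaluating at the endpoint $t=0$. The function $\espoly(t)=\sum_{x\in\F{0}}(1-e^{-t\f{x}})$ is a finite sum of entire functions of $t$. It therefore extends smoothly to a neighbourhood of $0$, and its one-sided derivatives at $0$ agree with the formula in \Cref{lem:espolynthderivative}. Termwise differentiation is trivially justified because the sum is finite (at most $|\univ|$ terms).

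I expect no real obstacle. The content lies entirely in \Cref{lem:espolynthderivative}, and one should simply read its induction as valid for all real $t$, including $t=0$.
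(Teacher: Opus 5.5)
Your proposal is correct and matches the paper's proof: both evaluate the closed form from \Cref{lem:espolynthderivative} at $t=0$ and identify $\sum_{x\in\F{0}}\f{x}^n$ with $\F{n}$. The paper leaves implicit your remarks that $\espoly$ extends smoothly to $t=0$ as a finite sum of entire functions and that $0^n=0$ for $n\ge 1$; they are harmless.
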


\begin{proof}
    We have, dy definition of $\espoly$,

    \begin{align*}
        \espoly^{(n)}(0)=(-1)^{n-1}\sum_{x\in\F{0}}\f{x}^ne^{-0\f{x}}=(-1)^{n-1}\sum_{x\in\F{0}}\f{x}^n=(-1)^{n-1}\F{n}
    \end{align*}
\end{proof}

We now provide an algorithm to estimate $\F{n}$ called $\estimateN$. \Cref{thm:AlgorithmEstimateNcorrectnessandspace} then proves the correctness of the algorithm along with the space and update time complexity.

    \SetAlgoNoLine%
    \begin{algorithm}[htb]
    \DontPrintSemicolon%
    \caption{$\estimateN(\stream, n, \error, \conf)$}
    \label{alg:estimateN}
    \SetKwInOut{Input}{Input}
    \SetKwInOut{Output}{Output}
    
    \Input{stream $\stream$, moment order $n\ge1$, overall error $\error$, overall failure probability $\conf$\;}
    \Output{$(\error,\conf)$-approximation of the $n$-th frequency moment $\F{n}$ of $\stream$\;}
    
    \BlankLine
    $h\gets\dfrac{\error}{\maxfreq2^{n+1}ne^{n-1}}$\;
    $\varepsilon_{\mathrm{eval}}\gets\dfrac{\error^{n}}{\maxfreq^{n-1}2^{n^2+n-2}n^ne^{(n-1)^2}}$\;
    $\delta_{\mathrm{per}} \gets \dfrac{\delta}{n+1}$\;
    \BlankLine
    \For{$i \gets 0$ \KwTo $n$}{
      $t_i \gets i\cdot h$\;
      $\alpha_i \gets \exp(-t_i)$\;
      $p_{i h} \gets\ \evaluate(\stream,\ \alpha_i,\ \varepsilon_{\mathrm{eval}},\ \delta_{\mathrm{per}})$\;
    }
    \BlankLine
    \Return
    $\displaystyle 
    (-1)^{\,n-1}\,\frac{1}{h^{n}}\sum_{i=0}^{n}(-1)^{\,n-i}\binom{n}{i}\,p_{i h}$\;
    \end{algorithm}

\begin{theorem}\label{thm:AlgorithmEstimateNcorrectnessandspace}
    The algorithm $\estimateN$ outputs an $(\varepsilon,\delta)$-approximation of $\F{n}$ with $n\in\mathbb N$. $\estimateN$ takes $\tilde O\left(\frac{(\log|\univ|+\log(1/\delta))\maxfreq^{4n-3}2^{4(n^2+n)+n+2}n^{4n+1}e^{4(n-1)^2+n-1}}{\varepsilon^{4n}}\log^4|\univ|\right)$ space and update time.
\end{theorem}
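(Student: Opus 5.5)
The proof splits the error of $\estimateN$ into two pieces. The first is the deterministic truncation error $R^+_n$ of the forward finite difference scheme of \Cref{subsec:appendixFkintegeranalytic}. The second is the stochastic error from replacing each $\espoly(ih)$ by the output $p_{ih}$ of $\evaluate$. By \Cref{lem:espolyderivat0}, $\F{n}=(-1)^{n-1}\espoly^{(n)}(0)$, so the output minus $\F{n}$ equals
\[
(-1)^{n-1}\Big[h^{-n}\sum_{i=0}^n(-1)^{n-i}\binom{n}{i}\big(p_{ih}-\espoly(ih)\big)-R^+_n\Big].
\]
I will show each bracketed term is at most $(\error/2)\F{n}$ in absolute value. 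The total failure probability is at most $\conf$ by a union bound over the $n+1$ calls, each with $\delta_{\mathrm{per}}=\conf/(n+1)$; \Cref{cor:espolyestimatecorollary} applies for $i\ge1$, and $i=0$ returns $0$ exactly.

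\textbf{Truncation term.} By \Cref{lem:espolynthderivative}, $|\espoly^{(n+1)}(\xi)|\le\sum_x \f{x}^{n+1}\le \maxfreq\F{n}$ for every $\xi\ge0$. Then
\[
|R^+_n|\le \frac{h}{(n+1)!}\sum_{j}\binom{n}{j}j^{n+1}\maxfreq\F{n}\le \frac{h\,2^n n^{n+1}}{(n+1)!}\maxfreq\F{n}.
\]
Using $n^{n+1}/(n+1)!\le n e^{n-1}$ (Stirling-type), this is at most $h\maxfreq 2^n n e^{n-1}\F{n}$. The choice $h=\error/(\maxfreq2^{n+1}ne^{n-1})$ makes it $\le(\error/2)\F{n}$.

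\textbf{Evaluation noise.} With every call successful, $|p_{ih}-\espoly(ih)|\le\varepsilon_{\mathrm{eval}}\espoly(ih)\le \varepsilon_{\mathrm{eval}}\,ih\,\F{1}$, since $1-e^{-u}\le u$. Hence the noise term is at most $h^{-n}\varepsilon_{\mathrm{eval}}\F{1}\,h\sum_i\binom{n}{i}i\le \varepsilon_{\mathrm{eval}}h^{1-n}n2^{n}\F{1}$, and $\F{1}\le\F{n}$ because frequencies are integers. Substituting $h^{1-n}=(\maxfreq 2^{n+1}ne^{n-1})^{n-1}\error^{1-n}$, the chosen
\[
\varepsilon_{\mathrm{eval}}=\frac{\error^n}{\maxfreq^{n-1}2^{n^2+n-2}n^ne^{(n-1)^2}}
\]
is designed to make this $\le(\error/2)\F{n}$. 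The main obstacle is checking that the exponents of $2$, $n$ and $e$ line up. If they are off by a constant, I would use the cruder bound $\espoly(ih)\le\F{0}\le\F{n}$ instead, or absorb the slack into $\varepsilon_{\mathrm{eval}}$.

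\textbf{Complexity.} Make $n+1$ calls to $\evaluate$ with error $\varepsilon_{\mathrm{eval}}$ at $\throwprob=e^{-ih}$. By \Cref{cor:espolyestimatecorollary}, each costs $\tilde O\big(\varepsilon_{\mathrm{eval}}^{-4}(1-e^{-h})^{-1}\log^4|\univ|\cdot\mathrm{polylog}\big)$. Here $(1-e^{-h})^{-1}\le 1+1/h=O(\maxfreq2^{n+1}ne^{n-1}/\error)$. Multiplying gives $\varepsilon_{\mathrm{eval}}^{-4}\propto \maxfreq^{4n-4}2^{4(n^2+n)-8}n^{4n}e^{4(n-1)^2}\error^{-4n}$. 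Combined with the $1/h$ factor and $n+1$ calls, this yields the stated $\maxfreq^{4n-3}2^{4(n^2+n)+n+2}n^{4n+1}e^{4(n-1)^2+n-1}\error^{-4n}$, up to the logarithmic factors, which are absorbed into $\log|\univ|+\log(1/\conf)$ and the tilde.
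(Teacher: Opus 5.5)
Your decomposition is the same as the paper's: the finite-difference remainder is bounded via $\sup_{\xi\ge0}|\espoly^{(n+1)}(\xi)|\le\maxfreq\F{n}$, the noise via $\espoly(ih)\le ih\F{1}$ and $\F{1}\le\F{n}$, then a union bound, then the cost from \Cref{cor:espolyestimatecorollary}. Your truncation bound $(\error/2)\F{n}$ is correct.

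The step you flagged and left unchecked, however, does not close with the algorithm's $\varepsilon_{\mathrm{eval}}$. From $h^{-1}=\maxfreq2^{n+1}ne^{n-1}/\error$ you get $\varepsilon_{\mathrm{eval}}h^{1-n}=2^{1-n}\error/n$. Hence $\varepsilon_{\mathrm{eval}}h^{1-n}\sum_i\binom{n}{i}i=\varepsilon_{\mathrm{eval}}h^{1-n}n2^{n-1}=\error$ exactly, and your looser $n2^n$ gives $2\error$. So the noise term alone uses the whole budget, not $\error/2$.

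This is not an artifact of loose estimates. For $n=1$ the algorithm has $\varepsilon_{\mathrm{eval}}=\error$ and returns $p_h/h$. Since $\espoly(h)<h\F{1}$, the guarantee of $\evaluate$ permits $p_h/h<(1-\error)\F{1}$. The paper's displayed bound has the same defect: its second term already equals $\error$.

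Your first fallback would make things worse. Bounding $\espoly(ih)\le\F{0}$ throws away the factor $ih$ that cancels one power of $h^{-n}$, and the resulting relative bound is at least $\error/(nh)=\maxfreq2^{n+1}e^{n-1}$, which is useless. The correct repair is your second option: use $\varepsilon_{\mathrm{eval}}/4$, that is, denominator $2^{n^2+n}$ instead of $2^{n^2+n-2}$. Then your own bound gives $(\error/2)\F{n}$. The extra $4^4=2^8$ is exactly the gap between the $2^{4(n^2+n)-8}$ in $\varepsilon_{\mathrm{eval}}^{-4}$ and the $2^{4(n^2+n)}$ inside the stated $2^{4(n^2+n)+n+2}$; the remaining $2^{n+2}$ comes from $1+1/h\le 2/h$.

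The complexity count also drops a factor. The bound $(1-e^{-h})^{-1}\le 2/h$ contributes $\error^{-1}$ together with $\maxfreq2^{n+2}ne^{n-1}$. Multiplying by $\varepsilon_{\mathrm{eval}}^{-4}\propto\error^{-4n}$ gives $\error^{-(4n+1)}$, not $\error^{-4n}$. The paper's own proof in fact ends with $\error^{-(4n+1)}$. The exponent in the statement does not follow from \Cref{cor:espolyestimatecorollary}, so you should report $\error^{-(4n+1)}$ rather than claim the multiplication recovers the stated bound.
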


To prove this theorem, we will show that the output of $\estimateN$ is an $(\error,\conf)$-approximation of $(-1)^{n-1}\espoly^{(n)}(0)$. Then we are done by \Cref{lem:espolyderivat0}.

\begin{lemma}
        Let $Z_n$ be the output of \Cref{alg:estimateN}. Then $Z_n$ is an $(\varepsilon,\delta)$-approximation of $(-1)^{n-1}\espoly^{(n)}(0)$.
\end{lemma}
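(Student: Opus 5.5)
The plan is to write the output as $Z_n=(-1)^{n-1}h^{-n}\sum_{i=0}^n(-1)^{n-i}\binom{n}{i}p_{ih}$ and compare it with the exact forward difference. Write
\[
D_h:=h^{-n}\sum_{i=0}^n(-1)^{n-i}\binom{n}{i}\espoly(ih),
\]
so that by the finite-difference identity of \Cref{subsec:appendixFkintegeranalytic} with $A=0$ we have $(-1)^{n-1}\espoly^{(n)}(0)=(-1)^{n-1}(D_h+R_n^+)$. The triangle inequality then gives
\[
|Z_n-\F{n}|\le |R_n^+|+h^{-n}\sum_{i=0}^n\binom{n}{i}|p_{ih}-\espoly(ih)|,
\]
and I will bound each term by a constant fraction of $\error\F{n}$ (aiming for $\error/2$ each). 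The probabilistic part is handled up front: the $n+1$ calls to $\evaluate$ each fail with probability $\delta/(n+1)$. By \Cref{cor:espolyestimatecorollary} and a union bound, with probability at least $1-\delta$ every $p_{ih}$ is an $\varepsilon_{\mathrm{eval}}$-relative approximation of $\espoly(ih)$. The $i=0$ term is exact, since $\espoly(0)=0$ and $\evaluate$ returns $0$ there.

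\textbf{Truncation term.} By \Cref{lem:espolynthderivative}, $|\espoly^{(n+1)}(\xi)|\le\sum_x\f{x}^{n+1}\le\maxfreq\F{n}$ for every $\xi\ge0$. Hence
\[
|R_n^+|\le \frac{h}{(n+1)!}\,n^{n+1}\,2^n\,\maxfreq\,\F{n}.
\]
Using $n^{n+1}/(n+1)!\le e^{n}$ (Stirling, or the elementary bound $n^n/n!\le e^n$), this is at most $h\,2^n e^n\maxfreq\F{n}$. The stated choice $h=\error/(\maxfreq 2^{n+1}ne^{n-1})$ makes this at most about $\error\F{n}/2$. If the constants come out slightly off (the $e^{n}$ versus $e^{n-1}$ factor), I will absorb the discrepancy via the extra factor $n$ in the denominator of $h$.

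\textbf{Noise term.} For each $i\ge1$ I use $\espoly(ih)\le ih\,\F{1}\le nh\,\F{n}$, which holds because $1-e^{-u}\le u$ and $\F{1}\le\F{n}$ for integer frequencies. The noise is therefore at most
\[
h^{-n}\,2^n\,\varepsilon_{\mathrm{eval}}\,nh\,\F{n}=2^n n\,\varepsilon_{\mathrm{eval}}\,h^{1-n}\F{n}.
\]
Substituting $h^{1-n}=(\maxfreq 2^{n+1}ne^{n-1}/\error)^{n-1}$ and the stated $\varepsilon_{\mathrm{eval}}$ reduces the bound to $O(\error)\F{n}$. The powers of $\maxfreq$, $n$ and $e$ cancel exactly, and the power of $2$ leaves a small absolute constant.

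\textbf{Main obstacle.} The hard part is this bookkeeping: the $h^{-n}$ amplification has to be exactly offset by the tiny $\varepsilon_{\mathrm{eval}}$, with the constant landing at $\le\error/2$. If the crude bound $\sum_i\binom ni=2^n$ loses a factor of $2$, I would sharpen it using the actual growth $\espoly(ih)\le ih\F{1}$ inside the sum, or simply note that the final constant can be fixed by rescaling $\varepsilon_{\mathrm{eval}}$ by a constant, which does not affect the asymptotic complexity. Summing the two bounds gives $|Z_n-\F{n}|\le\error\F{n}$ with probability at least $1-\delta$, and \Cref{lem:espolyderivat0} then identifies the target as $\F{n}$.
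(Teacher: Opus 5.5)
Your decomposition is exactly the paper's: the Taylor remainder $R_n^+$ of the forward difference, controlled by $|\espoly^{(n+1)}|\le\maxfreq\F{n}$, plus the propagated noise, controlled by $\espoly(ih)\le ih\F{1}\le ih\F{n}$, with a union bound over the calls to $\evaluate$. The step that fails is the one you flag as the main obstacle, and it cannot be repaired with the constants of \Cref{alg:estimateN}.

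From the definitions, $\varepsilon_{\mathrm{eval}}h^{1-n}=\error\,2^{1-n}/n$ exactly. So your noise bound $2^n n\,\varepsilon_{\mathrm{eval}}h^{1-n}\F{n}$ equals $2\error\F{n}$. The sharpening you suggest, $\sum_{i}i\binom{n}{i}=n2^{n-1}$ (which is what the paper uses), only reduces it to $\error\F{n}$, not $\error\F{n}/2$. Adding the truncation term, the two bounds sum to $\error\bigl(1+\frac{1}{2(n+1)}\bigr)\F{n}>\error\F{n}$. The paper's closing claim that ``the total error comes out to at most $\varepsilon$'' has the same overshoot.

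Separately, your estimate $n^{n+1}/(n+1)!\le e^{n}$ makes the truncation term $\frac{e\error}{2n}\F{n}$. That exceeds $\error\F{n}/2$ for $n\le 2$, and the factor $n$ does not absorb it when $n=1$. Use $n!\ge e(n/e)^n$ instead. This gives $n^{n+1}/(n+1)!\le\frac{n}{n+1}e^{n-1}$ and hence $|R_n^+|\le\frac{\error}{2(n+1)}\F{n}$.

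The overshoot is not slack in the estimates. Take $n=1$ and all $\f{x}=1$. Then $\varepsilon_{\mathrm{eval}}=\error$, $h=\error/(4\maxfreq)$, and $Z_1=p_h/h$. The output $p_h=(1-\error)\espoly(h)$ is permitted by the guarantee of $\evaluate$, and it gives $Z_1=(1-\error)\frac{1-e^{-h}}{h}\F{1}$. Its relative error is $1-(1-\error)\frac{1-e^{-h}}{h}\approx\error+(1-\error)\frac h2$, which is strictly larger than $\error$. So no argument using only the $(\varepsilon_{\mathrm{eval}},\delta/(n+1))$ guarantee can prove the lemma with the parameters exactly as written.

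Your fallback of shrinking $\varepsilon_{\mathrm{eval}}$ by a constant is therefore necessary, not optional. Replacing it by $\varepsilon_{\mathrm{eval}}/4$ (or $\varepsilon_{\mathrm{eval}}/2$ if you use the exact sum) makes the noise term at most $\error\F{n}/2$. The corrected truncation bound is at most $\error\F{n}/4$. The lemma then holds for this modified algorithm with the same asymptotic complexity, and you should say explicitly that this is the statement you are proving.
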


\begin{proof}
    Let

    \begin{equation*}
        \Delta_h^n\espoly(x)=\sum_{i=0}^n(-1)^{n-i}\binom{n}{k}\espoly(x+ih)
    \end{equation*}
    This implies that $(-1)^{n-1}\espoly^{(n)}(0)=\frac{\Delta_h^n\espoly(0)}{h^n}+(-1)^{n-1}R_n^+$ and hence

    \begin{align*}
        \bigg|(-1)^{n-1}\espoly^{(n)}(A)-(-1)^{n-1}\frac{\Delta_h^n\espoly(A)}{h^n}\bigg|
        &=|R_n^+|\\
        &\le\frac{h|\espoly^{(n+1)}(A)|}{(n+1)!}\sum_{j=0}^n\binom{n}{j}j^{n+1}\\
        &\le\frac{h|\espoly^{(n+1)}(A)|}{(n+1)!}2^nn^{n+1}
    \end{align*}
    Since $|\espoly^{(n+1)}(0)|\le\maxfreq|\espoly^{(n)}(0)|$, we have

    \begin{equation*}
        \bigg|(-1)^{n-1}\espoly^{(n)}(0)-(-1)^{n-1}\frac{\Delta_h^n\espoly(0)}{h^n}\bigg|\le\frac{h\maxfreq|\espoly^{(n)}(0)|}{(n+1)!}2^nn^{n+1}
    \end{equation*}
    This is a bound on the discretization relative error.
    
    Now $p_{ih}=\espoly(ih)(1+\theta_i)$ with $|\theta_i|\le\varepsilon_{\mathrm{eval}}$ for each $i$, and hence

    \begin{equation*}
        Z_n=(-1)^{n-1}\frac1{h^n}\Delta_h^n\espoly(0)+\frac1{h^n}E_{\mathrm{noise}},\qquad E_{\mathrm{noise}}=\sum_{i=0}^n(-1)^{n-i}\binom{n}{i}\espoly(ih)\theta_i
    \end{equation*}
    and thus by the triangle inequality

    \begin{equation*}
        |E_{\mathrm{noise}}|\le\varepsilon_{\mathrm{eval}}\sum_{i=0}^n\binom{n}{i}\espoly(ih)
    \end{equation*}
    This is a bound on the noise-induced relative error. Putting the two error sources together, we get

    \begin{equation*}
        \frac{|Z_n-(-1)^{n-1}\espoly^{(n)}(0)|}{|\espoly^{(n)}(0)|}\le \frac{h\maxfreq}{(n+1)!}2^nn^{n+1}+\varepsilon_{\mathrm{eval}}\frac{\sum_{i=0}^n\binom{n}{i}\espoly(ih)}{h^n|\espoly^{(n)}(0)|}
    \end{equation*}
    Now 
    
    \begin{equation*}
        \espoly(ih)=\sum_{x\in\F{0}}(1-e^{-\f{x}ih})\le \sum_{x\in\F{0}}ih\f{x}\le ihm
    \end{equation*}
    We also have from \Cref{lem:espolynthderivative}
    
    \begin{equation*}
        |\espoly^{(n)}(0)|=\sum_{x\in\F{0}}\f{x}^ne^{-\f{x}0}=\sum_{x\in\F{0}}\f{x}\f{x}^{n-1}\ge \sum_{x\in\F{0}}\f{x}=m
    \end{equation*}
    Plugging these in, we get

    \begin{equation*}
        \frac{|Z_n-(-1)^{n-1}\espoly^{(n)}(0)|}{|\espoly^{(n)}(0)|}\le \frac{h\maxfreq}{(n+1)!}2^nn^{n+1}+\varepsilon_{\mathrm{eval}}\frac{n2^{n-1}}{h^{n-1}}
    \end{equation*}
    So choosing 
    
    \begin{equation*}
        h=\frac{\error}{\maxfreq2^{n+1}ne^{n-1}},\qquad \varepsilon_{\mathrm{eval}}=\frac{\error^{n}}{\maxfreq^{n-1}2^{n^2+n-2}n^ne^{(n-1)^2}}
    \end{equation*}
    as in the algorithm, the total error comes out to at most $\varepsilon$ (here we have used the fact that $(n+1)!\ge n!\ge e(n/e)^n$). Since each of the $p_ih=\espoly(ih)(1+\theta_i)$ fails with probability at most $\conf/(n+1)$, by a union bound, the whole estimator outputs a value not in the correct range with probability at most $\conf$.





    There are $(n+1)$ such instances. The instance corresponding to $i=0$ returns $\espoly(0)=0$ exactly. For every $i\ge1$, we have
    $t_i=ih\ge h$, and hence by \Cref{cor:espolyestimatecorollary},
    \[ \frac{1}{1-e^{-t_i}}\le 1+\frac1h\le\frac{2^{n+2}n e^{n-1}\maxfreq}{\error}.\]
    The additional logarithmic dependence on this quantity is absorbed in the $\tilde O_n(\cdot)$ notation. Plugging in the values of
    $\error_{\mathrm{eval}}$ and $h$, the total space and update time is
    \[ \tilde O_n\left(\error^{-(4n+1)}\maxfreq^{4n-3}(\log|\univ|+\log(1/\conf))\log^4|\univ|\right). \]

\end{proof}
\section{Deeper reasons for \Cref{lem:Fkidentityintegral,lem:SRintegralidentity,lem:SLFAintegralidentity}}\label{sec:deeperreasonsappendix}

We start with reminding ourselves the relation between a Bernstein function and the inverse Laplace transform from \Cref{thm:BernsteininverseLaplace}; namely, if $\varphi$ is a Bernstein function with no drift and no killing term, then

\begin{equation}\label{eqn:reminderbernsteininverseLaplace}
    \varphi(t)=\int_0^\infty (1-e^{-ts})\frac1s\mathcal L^{-1}\left\{\varphi'(t)\right\}(s)\ \mathrm ds
\end{equation}
We can recover \Cref{lem:Fkidentityintegral} by computing this when $\varphi(t)=t^k$ with $k\in (0,1)$.

\begin{lemma}\label{lem:bernsteink}
    The function $\varphi(t)=t^k$ is a Bernstein function. It's L\'evy-Khintchine representation is given by

    \begin{equation*}
        t^k=\int_0^\infty (1-e^{-ts})s^{-k-1}\ \mathrm ds
    \end{equation*}
\end{lemma}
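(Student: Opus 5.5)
The plan is to verify the two parts separately: first that $\varphi(t)=t^k$ satisfies \Cref{def:Bernsteinfunction}, and then to obtain its L\'evy density from \Cref{thm:BernsteininverseLaplace}, i.e.\ from \eqref{eqn:reminderbernsteininverseLaplace}. I expect the identity to come out with the normalising constant $k/\Gamma(1-k)$, so that the density is $\frac{k}{\Gamma(1-k)}s^{-k-1}$. This is the form in \Cref{tab:bernsteinfunctionsanddensities}, and it is also what \Cref{lem:Fkintegralidentityandconveregence} forces. The displayed formula should be read with that constant, since without it the right-hand side equals $\frac{\Gamma(1-k)}{k}t^k$, not $t^k$.

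\emph{Bernstein property.} $\varphi$ is continuous and nonnegative on $\mathbb R_{\ge0}$. For $j\ge1$ we have
\[\varphi^{(j)}(t)=k(k-1)\cdots(k-j+1)\,t^{k-j}.\]
Since $k\in(0,1)$, the factor $k$ is positive and each of the $j-1$ factors $k-1,\dots,k-j+1$ is negative. Hence the sign of $\varphi^{(j)}$ is $(-1)^{j-1}$, so $(-1)^j\varphi^{(j)}\le0$ for $t>0$, as \Cref{def:Bernsteinfunction} requires. Also $\varphi(0)=0$ and $\varphi(t)/t\to0$ as $t\to\infty$. So in \Cref{thm:bernsteinstheorem} there is no killing term and no drift, and \eqref{eqn:reminderbernsteininverseLaplace} applies.

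\emph{Computing the density.} We have $\varphi'(t)=k\,t^{k-1}$. Using the standard transform $\mathcal L[s^{a-1}](t)=\Gamma(a)t^{-a}$ with $a=1-k>0$, we get $\mathcal L^{-1}\{t^{k-1}\}(s)=s^{-k}/\Gamma(1-k)$. Then \Cref{thm:BernsteininverseLaplace} gives
\[w(s)=\frac1s\cdot\frac{k\,s^{-k}}{\Gamma(1-k)}=\frac{k}{\Gamma(1-k)}s^{-k-1}.\]
It remains to check the L\'evy measure condition. We have $\int_0^1 s\cdot s^{-k-1}\,\mathrm ds<\infty$ because $1-k>0$, and $\int_1^\infty s^{-k-1}\,\mathrm ds<\infty$ because $k>0$.

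\emph{Main obstacle.} \Cref{thm:BernsteininverseLaplace} was derived assuming $\int_0^\infty s\,w(s)\,\mathrm ds<\infty$, and this fails here because $s\cdot s^{-k-1}$ is not integrable at infinity. The weaker assumption mentioned in the paper, local integrability of $s\,w(s)$, is enough to justify differentiating under the integral for each fixed $t>0$. To be safe, I would not rely on this formal route alone. Instead, I would confirm the identity directly with \Cref{lem:Fkintegralidentityandconveregence}, which gives
\[\int_0^\infty(1-e^{-ts})s^{-k-1}\,\mathrm ds=\frac{\Gamma(1-k)}{k}t^k\]
by Tonelli's theorem. Multiplying both sides by $k/\Gamma(1-k)$ yields the L\'evy--Khintchine representation rigorously, and this in turn recovers \Cref{lem:Fkidentityintegral}.
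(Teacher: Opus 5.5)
Your proposal is correct and follows the paper's proof. Both establish the sign pattern of $\varphi^{(j)}$, show that the killing and drift terms vanish, and obtain the density $\frac{k}{\Gamma(1-k)}s^{-k-1}$ by inverting $\varphi'(t)=kt^{k-1}$ through \Cref{thm:BernsteininverseLaplace}; you are right that the displayed identity needs the factor $k/\Gamma(1-k)$, which is exactly how the paper's own proof concludes. Your additional direct check via \Cref{lem:Fkintegralidentityandconveregence} is a sound safeguard, because here $\int_0^\infty s\,w(s)\,\mathrm ds=\infty$, so the inversion step is justified only under the weaker local-integrability condition the paper mentions.
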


\begin{proof}
    We first prove that $\varphi$ is a Bernstein function. We have $-\varphi'(t)=-kt^{k-1}\le 0$. Now we need to show that $(-1)^n\varphi^{(n)}(t)\le 0$. The base case of $n=0$ is true. For the induction step, note that
    
    \begin{align*}
        (-1)^{n+1}\varphi^{(n+1)}(t)
        &=(-1)^{n+1}k(k-1)\cdots(k-n)t^{k-(n+1)}\\
        &=(-1)^nk(k-1)\cdots(k-(n-1))t^{k-n}(-1)(k-n)t^{-1}\\
        &=(-1)^n\varphi^{(n)}(t)(n-k)t^{-1}
    \end{align*}
    Now since $(n-k)t^{-1}>0$ and by the induction hypothesis we have $(-1)^n\varphi^{(n)}(t)\le 0$, we obtain $ (-1)^{n+1}\varphi^{(n+1)}\le 0$ showing that $\varphi$ is a Bernstein function.

    Now we show the killing and drift term of $\varphi$ is $0$. The proof is simply to note that $\lim_{t\to 0^+}\varphi(t)=0$ and thus $a=0$. Similarly, since $\lim_{t\to\infty}\varphi'(t)=\lim_{t\to\infty}kt^{k-1}=0$ because $k-1<0$, we have $b=0$.

    Finally, we need to compute the inverse Laplace transform of $\varphi'(t)$. Note that $\varphi'(t)=kt^{k-1}$. By linearity and standard rules of Laplace inversion
    
    \begin{equation*}
        \mathcal L^{-1}\{kt^{k-1}\}(s)=k\mathcal L^{-1}\{t^{k-1}\}(s)=k\mathcal L\{1/t^{1-k}\}(s)=k(s^{-k}/\Gamma(1-k))=\frac{ks^{-k}}{\Gamma(1-k)}
    \end{equation*}
    where in the first equality we used linearity and in the second equality we used the fact that $\mathcal L^{-1}\{1/t^a\}(s)=s^{a-1}/\Gamma(a)$. Plugging this into \Cref{eqn:reminderbernsteininverseLaplace}, we get

    \begin{equation*}
        t^k=\frac{k}{\Gamma(1-k)}\int_0^\infty (1-e^{-ts})s^{-k-1}\ \mathrm ds
    \end{equation*}
\end{proof}

Similarly, we can recover \Cref{lem:SRintegralidentity} by considering $\varphi(t)=\frac{t}{r+t}$.

\begin{lemma}
    The function $\varphi(t)=\frac{t}{r+t}$ is a Bernstein function. It's L\'evy-Khintchine representation is given by

    \begin{equation*}
        t^k=\int_0^\infty (1-e^{-ts})re^{-rs}\ \mathrm ds
    \end{equation*}
\end{lemma}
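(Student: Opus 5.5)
We will follow the template of the proof of \Cref{lem:bernsteink} and read the left-hand side of the displayed representation as $\varphi(t)=\frac{t}{r+t}$; the ``$t^k$'' is evidently a copy of the previous lemma. The proof has three steps: verify that $\varphi$ is Bernstein in the sense of \Cref{def:Bernsteinfunction}, show the killing and drift terms vanish, and identify the L\'evy density via \Cref{thm:BernsteininverseLaplace}. We then cross-check the density by direct integration.

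\textbf{Step 1 (Bernstein property).} Write $\varphi(t)=1-r(r+t)^{-1}$, which is continuous and nonnegative on $\mathbb R_{\ge 0}$. By induction on $n\ge 1$ we get $\varphi^{(n)}(t)=(-1)^{n-1}\,n!\,r\,(r+t)^{-n-1}$: the case $n=1$ is $\varphi'(t)=r(r+t)^{-2}$, and each further derivative multiplies by $-(n+1)(r+t)^{-1}$. Hence $(-1)^n\varphi^{(n)}(t)=-n!\,r\,(r+t)^{-n-1}\le 0$ for all $t>0$ and $n\ge1$, which is exactly the sign condition of \Cref{def:Bernsteinfunction}.

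\textbf{Step 2 (no killing, no drift).} We have $\lim_{t\to0^+}\varphi(t)=0$, so $a=0$. We also have $\lim_{t\to\infty}\varphi'(t)=\lim_{t\to\infty} r(r+t)^{-2}=0$, so $b=0$. Therefore \Cref{eqn:reminderbernsteininverseLaplace} applies.

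\textbf{Step 3 (L\'evy density).} We need $\mathcal L^{-1}\{\varphi'\}$ for $\varphi'(t)=r/(r+t)^2$. The standard pair $\mathcal L\{s e^{-rs}\}(t)=\int_0^\infty s e^{-(r+t)s}\,\mathrm ds=(r+t)^{-2}$ gives $\mathcal L^{-1}\{\varphi'\}(s)=r s e^{-rs}$. Then \Cref{thm:BernsteininverseLaplace} yields $w(s)=\frac1s\cdot rse^{-rs}=re^{-rs}$.

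The one delicate point is that this step silently relies on uniqueness of the inverse Laplace transform and on the interchange of differentiation and integration used to derive \Cref{thm:BernsteininverseLaplace}. To avoid depending on these, we will close with a direct verification, exactly as in the proof of \Cref{lem:SRintegralidentity}:
\[
\int_0^\infty(1-e^{-ts})re^{-rs}\,\mathrm ds = 1-\frac{r}{r+t}=\frac{t}{r+t}.
\]
Both integrals here converge absolutely because $r,t>0$. This computation establishes the representation unconditionally, and the Laplace-inversion route then serves as the conceptual explanation of where the density comes from.
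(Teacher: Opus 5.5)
Your proposal is correct and follows the paper's proof step for step. It uses the same $n$-th derivative formula, the same vanishing of the killing and drift terms, and the same inversion $\mathcal L^{-1}\{r/(r+t)^2\}(s)=rse^{-rs}$ giving $w(s)=re^{-rs}$, and you rightly read the ``$t^k$'' as a typo for $\frac{t}{r+t}$. Your closing direct integration is a worthwhile addition (the paper does this only separately, in \Cref{lem:SRintegralidentity}), since it makes the representation independent of the uniqueness and interchange issues behind \Cref{thm:BernsteininverseLaplace}.
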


\begin{proof}
    We first prove that $\varphi$ is a Bernstein function. We start by noting that $\varphi(t)=\frac{t}{r+t}=1-r(r+t)^{-1}$. We need to show that $(-1)^n\varphi^{(n)}(t)\le 0$. The $n$-th derivative is easily seen to be $\varphi^{(n)}(t)=(-1)^{n-1}n!r(r+t)^{-n-1}$. Thus $(-1)^n\varphi^{(n)}(t)=-n!r(r+t)^{-n-1}\le 0$ showing that $\varphi$ is a Bernstein function.

    Now we show the killing and drift term of $\varphi$ is $0$. The proof is simply to note that $\lim_{t\to 0^+}\varphi(t)=0$ and thus $a=0$. Similarly since $\lim_{t\to\infty}\varphi'(t)=\lim_{t\to\infty}\frac{r}{(r+t)^2}=0$, we have $b=0$.

    Finally, we need to compute the inverse Laplace transform of $\varphi'(t)$. Note that $\varphi'(t)=\frac{r}{(r+t)^2}$. By linearity, frequency-shift property and standard rules of Laplace inversion
    
    \begin{equation*}
        \mathcal L^{-1}\left\{\frac{r}{(r+t)^2}\right\}(s)=r\mathcal L^{-1}\{(t+r)^{-2}\}(s)=re^{-rs}\mathcal L\{1/t^2\}(s)=re^{-rs}(s/\Gamma(2))=rse^{-rs}
    \end{equation*}
    where in the first equality we used linearity, in the second equality we used the frequency-shift property, and in the third equality we used the fact that $\mathcal L^{-1}\{1/t^a\}(s)=s^{a-1}/\Gamma(a)$. Plugging this into \Cref{eqn:reminderbernsteininverseLaplace}, we get

    \begin{equation*}
        t^k=\int_0^\infty (1-e^{-ts})re^{-rs}\ \mathrm ds
    \end{equation*}
\end{proof}

And finally, we can recover \Cref{lem:SLFAintegralidentity} by considering $\varphi(t)=\ln(1+t)$.

\begin{lemma}
    The function $\varphi(t)=\ln(1+t)$ is a Bernstein function. It's L\'evy-Khintchine representation is given by

    \begin{equation*}
        t^k=\int_0^\infty (1-e^{-ts})\frac{e^{-s}}{s}\ \mathrm ds
    \end{equation*}
\end{lemma}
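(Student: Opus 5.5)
We follow the same template as the two preceding lemmas: first verify that $\varphi(t)=\ln(1+t)$ is a Bernstein function, then show that its killing and drift terms vanish, and finally read off the L\'evy density from \Cref{thm:BernsteininverseLaplace} via \Cref{eqn:reminderbernsteininverseLaplace}. (The displayed formula is meant to read $\ln(1+t)=\int_0^\infty(1-e^{-ts})\frac{e^{-s}}{s}\,\mathrm ds$.)

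\emph{Bernstein property.} Clearly $\varphi$ is continuous and non-negative on $\mathbb R_{\ge0}$. For $n\ge1$ the derivatives are
\[
\varphi^{(n)}(t)=(-1)^{n-1}(n-1)!\,(1+t)^{-n},
\]
which follows by a one-line induction from $\varphi'(t)=(1+t)^{-1}$. Hence $(-1)^n\varphi^{(n)}(t)=-(n-1)!(1+t)^{-n}\le 0$ for all $t>0$, exactly as \Cref{def:Bernsteinfunction} requires.

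\emph{Killing and drift.} We have $\lim_{t\to0^+}\varphi(t)=\ln 1=0$, so $a=0$. We also have $\lim_{t\to\infty}\varphi'(t)=\lim_{t\to\infty}(1+t)^{-1}=0$, so $b=0$. This justifies applying \Cref{eqn:reminderbernsteininverseLaplace} with no extra linear or constant terms.

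\emph{L\'evy density.} We compute $\mathcal L^{-1}\{(1+t)^{-1}\}(s)$ using the frequency-shift rule together with $\mathcal L^{-1}\{t^{-1}\}(s)=1$ (the case $a=1$ of $\mathcal L^{-1}\{t^{-a}\}(s)=s^{a-1}/\Gamma(a)$). This gives $\mathcal L^{-1}\{\varphi'\}(s)=e^{-s}$. Dividing by $s$ yields the density $w(s)=e^{-s}/s$, and substituting into \Cref{eqn:reminderbernsteininverseLaplace} gives the claimed representation.

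\emph{Main obstacle and cross-check.} The only delicate point is that \Cref{thm:BernsteininverseLaplace} was derived by differentiating under the integral sign, which assumes local integrability of $s\,w(s)=e^{-s}$. Here that holds trivially, but as a safeguard we would verify the final identity directly, independently of the inverse-Laplace heuristic. The Feynman-trick computation already given in the proof of \Cref{lem:SLFAintegralidentity} does this: with $I(x)=\int_0^\infty(1-e^{-xs})e^{-s}/s\,\mathrm ds$, it shows $I'(x)=1/(1+x)$ and $I(0)=0$, hence $I(x)=\ln(1+x)$. Differentiating under the integral there is legitimate by dominated convergence, since $\partial_x$ of the integrand is $e^{-(x+1)s}\le e^{-s}$, which is integrable.
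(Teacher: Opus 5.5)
Your proposal is correct and follows the paper's proof essentially step for step: you show the sign pattern $(-1)^n\varphi^{(n)}(t)=-(n-1)!(1+t)^{-n}\le 0$, note the vanishing killing and drift terms, and compute $\mathcal L^{-1}\{(1+t)^{-1}\}(s)=e^{-s}$ to get the density $e^{-s}/s$ (and you are right that the left-hand side should read $\ln(1+t)$, not $t^k$). Your added direct check via the Feynman-trick computation of \Cref{lem:SLFAintegralidentity}, with the domination $e^{-(x+1)s}\le e^{-s}$, is a sound and worthwhile supplement, since it proves the representation outright instead of relying on the inverse-Laplace derivation, which already presupposes that such a representation exists.
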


\begin{proof}
    We first prove that $\varphi$ is a Bernstein function. We need to show that $(-1)^n\varphi^{(n)}(t)\le 0$. The $n$-th derivative is easily seen to be $\varphi^{(n)}(t)=(-1)^{n-1}(n-1)!(1+t)^{-n}$. Thus $(-1)^n\varphi^{(n)}(t)=-(n-1)!(1+t)^{-n}\le 0$ showing that $\varphi$ is a Bernstein function.

    Now we show the killing and drift term of $\varphi$ is $0$. The proof is simply to note that $\lim_{t\to 0^+}\varphi(t)=0$ and thus $a=0$. Similarly since $\lim_{t\to\infty}\varphi'(t)=\lim_{t\to\infty}\frac{1}{1+t}=0$, we have $b=0$.

    Finally, we need to compute the inverse Laplace transform of $\varphi'(t)=\frac1{1+t}$. Note that $\varphi'(t)=\frac{r}{(r+t)^2}$. This is a standard Laplace inverse formula $\mathcal L\left\{\frac1{t+a}\right\}(s)=e^{-as}$. This gives us
    
    \begin{equation*}
        \mathcal L^{-1}\left\{\frac1{1+t}\right\}(s)=e^{-s}
    \end{equation*}
    Plugging this into \Cref{eqn:reminderbernsteininverseLaplace}, we get

    \begin{equation*}
        t^k=\int_0^\infty (1-e^{-ts})re^{-rs}\ \mathrm ds
    \end{equation*}
\end{proof}

\end{document}